\theoremstyle{plain}
\newtheorem{theorem}{Theorem}[section]
\newtheorem{proposition}[theorem]{Proposition}
\newtheorem{lemma}[theorem]{Lemma}
\newtheorem{corollary}[theorem]{Corollary}
\theoremstyle{definition}
\newtheorem{definition}[theorem]{Definition}
\newtheorem{assumption}[theorem]{Assumption}
\theoremstyle{remark}
\newtheorem{remark}[theorem]{Remark}
\newtheorem{remarks}[theorem]{Remarks}
\newtheorem{example}[theorem]{Example}
\newtheorem{examples}[theorem]{Examples}
\numberwithin{equation}{section}
\DeclareMathOperator{\real}{Re}
\DeclareMathOperator{\mysign}{sign}
\begin{document}
\allowdisplaybreaks
\title{A complete classification of threshold properties for one-dimensional discrete Schr\"{o}dinger
operators
}
\author{Kenichi {\scshape Ito}\footnote{Graduate School of Pure and Applied Sciences, University of Tsukuba,
1-1-1 Tennodai, Tsukuba Ibaraki, 305-8571 Japan. 
E-mail: \texttt{ito-ken@math.tsukuba.ac.jp}. 
Partially supported by JSPS Wakate (B) 21740090 (2009--2012), 25800073 (2013--2016). Partially supported by the Danish Council for Independent Research $|$ Natural Sciences, Grants 09-065927 and 11-106598}
\and
Arne {\scshape Jensen}\footnote{Department of Mathematical Sciences,
Aalborg University, Fredrik Bajers Vej 7G, DK-9220 Aalborg \O{}, Denmark.
E-mail: \texttt{matarne@math.aau.dk}. Partially supported by the Danish Council for Independent Research $|$ Natural Sciences, Grants 09-065927 and 11-106598}
}
%\date{AJ version 23 July 2013}
\date{}
\maketitle

\begin{abstract}
We consider the discrete one-dimensional Schr\"{o}dinger operator $H=H_0+V$, where $(H_0x)[n]=-(x[n+1]+x[n-1]-2x[n])$ and $V$ is a self-adjoint operator on $\ell^2(\mathbb{Z})$ with a decay property given by $V$ extending to a compact operator from $\ell^{\infty,-\beta}(\mathbb{Z})$ to $\ell^{1,\beta}(\mathbb{Z})$ for some $\beta\geq1$. We give a complete description of the solutions to $Hx=0$, and $Hx=4x$, $x\in\ell^{\infty,-\beta}(\mathbb{Z})$. Using this description we give asymptotic expansions of the resolvent of $H$ at the two thresholds $0$ and $4$. One of the main results is a precise correspondence between the solutions to $Hx=0$ and the leading coefficients in the asymptotic expansion of the resolvent around $0$. 
For the resolvent expansion we implement the expansion scheme of Jensen-Nenciu \cite{JN0, JN1} in the full generality.
\end{abstract}

\tableofcontents
\section{Introduction}
In this paper we investigate the threshold properties of the resolvent of a one-dimensional discrete Schr\"{o}dinger operator with a large class of interactions. We do not discuss the general spectral properties of this class of operators, although we note that it is known that the absolutely continuous spectrum equals the interval $[0,4]$ and the singular continuous spectrum is absent under some additional assumptions on the interaction, see Remark~\ref{absence}. The eigenvalues all have finite multiplicities and there are no accumulation points for the eigenvalues. For the thresholds the last result is a consequence of the results obtained here.

In Section~\ref{sect11} of this introduction we describe the setting and define the class of interactions. We then state the main results in a simplified form. The complete statements are rather complicated and require a substantial amount of preparation. In Section~\ref{sect12} we outline the strategy of proof. In Section~\ref{sect13} we comment on the literature.

In Section~\ref{12.11.8.1.0} we prepare the results needed to obtain the asymptotic resolvent expansions. In Section~\ref{12.11.8.2.58} we describe an (iterated) inversion procedure and define some intermediate operators. Section~\ref{12.12.19.2.5} is devoted to a detailed analysis of the (generalized) eigenspaces associated with the thresholds, and an analysis of the intermediate operators, needed in the statement of the full results. Our full results on the asymptotic expansion of the resolvent are given in detail in Section~\ref{12.12.17.20.56}. The Appendices contain results on the threshold $4$ and a number of explicit examples of interactions.

\subsection{Setting and overview of results}\label{sect11}

We introduce the setting of the paper,
and  give a quick and self-contained overview of the main results
in a slightly weaker form.
The refined versions of the results
require long preparations.
They are mostly gathered in Section~\ref{12.12.17.20.56}.

Let $H_0$ be the positive one-dimensional discrete Laplacian, i.e., 
for any sequence $x\colon \mathbb Z\to \mathbb C$ we define
$H_0x\colon \mathbb Z\to \mathbb C$ by
\begin{equation*}
(H_0x)[n]=-(x[n+1]+x[n-1]-2x[n]).
\end{equation*}
The restriction of $H_0$ to the Hilbert space
\begin{equation*}
\mathcal H=
\ell^2(\mathbb{Z})=\bigl\{x\colon \mathbb{Z}\to \mathbb{C};\ 
\|x\|^2=\sum_{n\in \mathbb{Z}} |x[n]|^2<\infty\bigr\}
\end{equation*}
defines a bounded self-adjoint operator.
We call it the \textit{free Schr\"odinger operator}, and denote it by the same notation $H_0$.
The operator $H_0$ has an explicit spectral representation employing the Fourier series.
As will be seen in Section~\ref{12.11.8.1.0},
we have the spectrum:
\begin{equation*}
\sigma(H_0)=\sigma_{\mathrm{ac}}(H_0)=[0,4],
\end{equation*}
and the points $\lambda=0,4$ are the thresholds.
It is not difficult to compute explicit asymptotic expansions of
the \textit{free resolvent} $R_0(z)=(H_0-z)^{-1}$ around these thresholds.

The purpose of the present paper
is to provide asymptotic expansions of $R(z)=(H-z)^{-1}$ for perturbed $H=H_0+V$ 
around these thresholds, and investigate the relation between
the coefficients and the generalized eigenspaces.
As we will see in the Appendix~\ref{13.3.23.11.49},
the expansion around the threshold $\lambda=4$ 
is reduced to the one around $\lambda=0$,
whence in the present paper we shall consider only  $\lambda=0$ in detail.

Let us first fix our class of perturbations.
We introduce
for $s\in \mathbb R$
\begin{align*}
\mathcal L^s&=
\ell^{1,s}(\mathbb{Z})=\bigl\{x\colon \mathbb{Z}\to \mathbb{C};\ 
\|x\|_{1,s}=\sum_{n\in\mathbb{Z}}(1+n^2)^{s/2}|x[n]|<\infty\bigr\},\\
(\mathcal L^s)^*
&=\ell^{\infty,-s}(\mathbb{Z})
=\bigl\{x\colon \mathbb{Z}\to \mathbb{C};\ 
\|x\|_{\infty,-s}=\sup_{n\in\mathbb{Z}}(1+n^2)^{-s/2}|x[n]|<\infty\bigr\}.
\end{align*}
The superscript $s$ is dropped when $s=1$:
$\mathcal L=\mathcal L^1$, $\mathcal L^*=(\mathcal L^1)^*$.
We denote the set of all bounded operators 
from a general Banach space $\mathcal K$ to another 
$\mathcal K'$ by $\mathcal B(\mathcal K,\mathcal K')$,
and abbreviate $\mathcal B(\mathcal K)=\mathcal B(\mathcal K,\mathcal K)$
and ${\mathcal B}^s=\mathcal B(\mathcal L^s,(\mathcal L^s)^*)$.
We replace $\mathcal B$ by $\mathcal C$ when considering 
those for  compact operators.

\begin{assumption}\label{12.11.8.1.9}
Let $V\in {\mathcal B}(\mathcal H)$ be self-adjoint,
and assume that there exist a real number $\beta\ge 1$,
a Hilbert space ${\mathcal K}$,
an injective operator $v\in \mathcal B({\mathcal K},\mathcal L^\beta)
\cap \mathcal C({\mathcal K},\mathcal L)$
and a self-adjoint unitary operator
$U\in \mathcal B({\mathcal K})$ such that 
\begin{equation*}
V=vUv^*\in \mathcal B((\mathcal L^{\beta})^*,\mathcal L^\beta)
\cap 
\mathcal C(\mathcal L^*,\mathcal L).
\end{equation*}
\end{assumption}
\begin{remark}
\label{absence}
Using the results in \cite{BMS} one can show that under the above assumption with $\beta\geq2$ the singular continuous spectrum of $H=H_0+V$ is empty and $\sigma_{\rm ac}(H)=[0,4]$. Eigenvalues of $H$ have finite multiplicity and have no accumulation point. This last result is a consequence of the resolvent expansions obtained here. We expect that these results hold under the weaker assumption $\beta\geq1$.
\end{remark}

The possible range of $\beta\ge 1$ is subject to further restriction
according to how high an order we require for the expansions,
and hence
we shall mention an appropriate range of $\beta$ each time it is referred to.
Though formulated abstractly, Assumption~\ref{12.11.8.1.9} 
includes plenty of examples and, in particular,
allows non-local potentials. See Section~\ref{localV} for examples of potentials
satisfying Assumption~\ref{12.11.8.1.9}.
We also note that our class of interactions is additive in the sense that if $V_1$ and $V_2$ satisfy Assumption~\ref{12.11.8.1.9} for some $\beta$, then $V_1+V_2$ satisfies the assumption for this $\beta$, as can be seen by a straightforward direct sum construction.

Under Assumption~\ref{12.11.8.1.9} the perturbed Schr\"odinger operator 
$H=H_0+V$ is bounded and self-adjoint on $\mathcal H$,
and we can expand the perturbed resolvent
\begin{equation*}
R(z)=(H-z)^{-1}
\end{equation*}
as follows.
For $z\in\mathbb{C}\setminus [0,\infty)$
we take the determination of the square root with 
$\mathop{\mathrm{Im}}\sqrt{z}>0$.
\begin{theorem}\label{12.12.17.20.37}
Suppose $\beta\ge 4$ in Assumption~\ref{12.11.8.1.9},
and let $N\in [-2,\beta-6]$ be any integer.
Then 
as $z\to 0$ in $\mathbb C\setminus [0,\infty)$
the resolvent $R(z)$ has the asymptotic expansion 
in the uniform topology of $\mathcal B^{N+4}$:
\begin{equation*}
R(z)=\sum_{j=-2}^Nz^{j/2} G_j+\mathcal O(z^{(N+1)/2}),\quad
G_j\in \mathcal B^{j+3},
\end{equation*}
and the coefficients $G_j$ can be computed explicitly.
\end{theorem}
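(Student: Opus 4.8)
The plan is to reduce the resolvent expansion to an inversion problem on the auxiliary space $\mathcal K$ through the factorization $V=vUv^{*}$, and then to run the iterated Jensen--Nenciu inversion set up in Sections~\ref{12.11.8.1.0} and~\ref{12.11.8.2.58}. First I would record the expansion of the free resolvent $R_0(z)=(H_0-z)^{-1}$ about $z=0$. Since $H_0$ is diagonalized by the Fourier series on $[0,4]$, its resolvent has a convergent expansion $R_0(z)=\sum_{k\ge-1}z^{k/2}g_k$ in half-integer powers of $z$, with leading singularity of order $z^{-1/2}$ whose coefficient $g_{-1}$ is the rank-one operator built from the constant (resonance) function; the coefficient of $z^{k/2}$ maps $\mathcal L^{s}\to(\mathcal L^{s})^{*}$ with $s$ increasing linearly in $k$, so higher coefficients act between better spaces but demand more decay. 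These facts I take from Section~\ref{12.11.8.1.0}.

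Next, using $U^{-1}=U$ and the injectivity of $v$, I would write the symmetrized resolvent identity
\begin{equation*}
R(z)=R_0(z)-R_0(z)\,v\,M(z)^{-1}\,v^{*}R_0(z),\qquad
M(z)=U+v^{*}R_0(z)v\in\mathcal B(\mathcal K),
\end{equation*}
valid for $z\in\mathbb C\setminus[0,\infty)$. Compactness of $v\in\mathcal C(\mathcal K,\mathcal L)$ makes $M(z)$ a compact perturbation of the unitary $U$, hence Fredholm, and boundedness of $v$ into $\mathcal L^\beta$ transfers the expansion of $R_0(z)$, after sandwiching between $v^{*}$ and $v$, to $M(z)=z^{-1/2}M_{-1}+M_0+z^{1/2}M_1+\cdots$ in $\mathcal B(\mathcal K)$ up to an order controlled by $\beta$. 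The obstruction to a Neumann-series inversion is that the leading coefficient is singular: $M_{-1}=v^{*}g_{-1}v$ is finite rank, and $M_0=U+v^{*}g_0v$ need not be invertible on its orthogonal complement.

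The heart of the proof is therefore inverting $M(z)$, and I expect this to be the main obstacle. Here I would invoke the iterated Feshbach/Schur-complement procedure of Section~\ref{12.11.8.2.58}: decompose $\mathcal K$ along the kernel of the leading singular block, invert the nondegenerate block by a Neumann series, and pass to the Schur complement on the complementary subspace, which is again a $z^{1/2}$-expansion but with reduced leading singularity. Iterating finitely many times---each stage peeling off one layer of the generalized threshold eigenspaces analyzed in Section~\ref{12.12.19.2.5}---yields $M(z)^{-1}=\sum_{j\ge j_0}z^{j/2}\widetilde M_j$, whose leading order $j_0$ (possibly negative) and low-order coefficients are dictated by the solution structure of $Hx=0$. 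The delicate point is that $M(z)^{-1}$ may itself carry a singular ($z^{-1/2}$) term, yet such terms annihilate the one-dimensional range of the flanking factor $v^{*}g_{-1}$ and so contribute nothing overly singular to the product; controlling this interplay, and proving the remainder bounds in the correct operator topology, is exactly where the full generality of the Jensen--Nenciu scheme is needed.

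Finally I would substitute $M(z)^{-1}$ into the symmetrized identity and multiply out the three $z^{1/2}$-expansions (of $R_0(z)v$, of $M(z)^{-1}$, and of $v^{*}R_0(z)$) as a Cauchy product, collecting the coefficient of each $z^{j/2}$ to define $G_j$; the cancellations just described ensure the expansion begins at $j=-2$, i.e.\ at worst $z^{-1}$. The mapping property $G_j\in\mathcal B^{j+3}$ follows by tracking weights through the product: the outer factors $v\in\mathcal B(\mathcal K,\mathcal L^\beta)$ and $v^{*}\in\mathcal B((\mathcal L^\beta)^{*},\mathcal K)$ supply $\beta$ units of decay on each side while each free-resolvent coefficient consumes weight in proportion to its order, and the constraints $\beta\ge4$ and $-2\le N\le\beta-6$ are exactly what is required for all coefficients through order $N$, and the remainder, to be bounded in $\mathcal B^{N+4}$. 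The remainder estimate $\mathcal O(z^{(N+1)/2})$ is then inherited from the remainders in the free expansion and in the inversion of $M(z)$, and the explicit formulas for the $G_j$ drop out of the same Cauchy product once the $\widetilde M_j$ are known.
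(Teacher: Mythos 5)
Your proposal follows essentially the same route as the paper's proof: the symmetrized identity $R=R_0-R_0vM^{-1}v^*R_0$ of Proposition~\ref{12.11.13.19.5}, the expansions of $R_0$ and $M$ in powers of $\sqrt{z}$, the iterated Jensen--Nenciu inversion of Section~\ref{12.11.8.2.58} (terminating after finitely many steps by the analysis of Section~\ref{12.12.19.2.5}), and then the Cauchy-product substitution in which the cancellation identities (\ref{12.11.30.6.34}) --- your observation that the singular parts of $M^{-1}$ annihilate the rank-one range of $v^*G_{-1}^0$ --- keep the expansion down to order $z^{-1}$ and yield $G_j\in\mathcal B^{j+3}$. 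The only minor imprecision is that in the worst case $M(z)^{-1}$ carries a $z^{-1}$ (not merely $z^{-1/2}$) singularity, but since all singular terms are flanked by the projections $Q$, $S$, $T$, which are orthogonal to $v^*\mathbf 1$, your cancellation mechanism applies verbatim.
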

\begin{remark}\label{13.3.4.21.20}
The lower bound of $\beta$ and the operator classes of $G_j$
can be refined if we know \emph{a priori} the threshold type given in Definition~\ref{13.8.4.20.7} below.
See Section~\ref{12.12.17.20.56} for these refinements.
In particular, if the threshold is regular, it suffices to assume only $\beta\ge 2$.
In the later sections we shall use the variable $\kappa$
given by (\ref{12.11.8.9.30}) below,
and define the coefficients $G_j$ with respect to the expansion in $\kappa$.
Hence, they are different by factors.
\emph{We keep the convention of} Theorem~\ref{12.12.17.20.37} \emph{only in this subsection.}
\end{remark}

We shall investigate the first few coefficients in terms of 
the \textit{(generalized) eigenfunctions}.
Define the \textit{(generalized) eigenspace} for the threshold
$\lambda=0$ and set its dimension as 
\begin{equation*}
\widetilde{\mathcal E}=\{\Psi\in (\mathcal L^{\beta})^*;\ H\Psi=0\},\quad
\widetilde d=\dim\widetilde{\mathcal E},
\end{equation*}
respectively. 
Then we can show that the eigenspace is finite-dimensional, 
and the eigenfunctions 
have special asymptotics at infinity.
Define the sequences
$\mathbf 1,\mbox{\boldmath$\sigma$}\in (\mathcal L^0)^*$
and
$\mathbf n, |\mathbf n|\in \mathcal L^*$ by 
\begin{equation*}
\mathbf{1}[n]=1,\quad
\bm{\sigma}[n]
=
\begin{cases}
\pm 1 &\text{if $\pm n> 0$,}\\
\phantom{\pm}0&\text{if $\phantom{\pm}n=0$,}
\end{cases}
\quad
\mathbf n [n]=n,\quad
|\mathbf n|[n]=|n|,
\end{equation*}
respectively.
\begin{theorem}\label{13.3.7.13.33}
Suppose $\beta\ge 1$ in Assumption~\ref{12.11.8.1.9}.
Then 
\begin{align*}
\widetilde{\mathcal E}&\subset 
\mathbb C\mathbf n\oplus\mathbb C|\mathbf n|\oplus
\mathbb C\mathbf 1\oplus\mathbb C\bm{\sigma}
\oplus
\mathcal L^{\beta-2},\quad
\widetilde d<\infty
.
\end{align*}
\end{theorem}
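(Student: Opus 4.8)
The plan is to convert the eigenvalue equation into a discrete Poisson equation whose right-hand side decays much faster than $\Psi$ itself. The decisive observation is that, by Assumption~\ref{12.11.8.1.9}, $V\in\mathcal B\bigl((\mathcal L^{\beta})^*,\mathcal L^\beta\bigr)$; hence for any $\Psi\in\widetilde{\mathcal E}$ the sequence $f:=V\Psi$ lies in $\mathcal L^\beta=\ell^{1,\beta}$, even though a priori $\Psi$ only lies in $(\mathcal L^\beta)^*$ and may grow. Writing out $H\Psi=0$, i.e.\ $H_0\Psi=-V\Psi$, this says
\begin{equation*}
\Psi[n+1]-2\Psi[n]+\Psi[n-1]=f[n],\qquad f\in\ell^{1,\beta},\ \beta\ge1.
\end{equation*}
So the theorem reduces to an asymptotics-and-regularity statement for the inverse of the discrete second-difference operator acting on a rapidly decaying source.

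First I would invert $H_0$ on such sources. The sequence $E[n]=\tfrac12|n|$ satisfies $H_0E=-\delta_0$, so $\Psi_{\mathrm p}:=E*f$, i.e.\ $\Psi_{\mathrm p}[n]=\tfrac12\sum_{m}|n-m|\,f[m]$, is a particular solution; the sum converges because $f\in\ell^{1,1}$. Since the homogeneous equation $H_0\Psi=0$ has solution space $\mathbb C\mathbf 1\oplus\mathbb C\mathbf n$, every $\Psi\in\widetilde{\mathcal E}$ equals $c_0\mathbf 1+c_1\mathbf n+\Psi_{\mathrm p}$ for suitable constants $c_0,c_1$, and these two homogeneous solutions already supply the $\mathbb C\mathbf 1$ and $\mathbb C\mathbf n$ components of the asserted decomposition.

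Next I would read off the asymptotics of $\Psi_{\mathrm p}$. With $S:=\sum_m f[m]$ and $M:=\sum_m m\,f[m]$ finite (again $\beta\ge1$), splitting the defining sum at $m\le n$ versus $m>n$ yields
\begin{equation*}
\Psi_{\mathrm p}[n]=\tfrac12 S\,|n|-\tfrac12 M\,\bm{\sigma}[n]+\rho[n],
\end{equation*}
where $\rho[n]$ is the distance-weighted tail of $f$, namely $\sum_{m>n}(m-n)f[m]$ for $n>0$ and $\sum_{m<n}(n-m)f[m]$ for $n<0$. Thus $\Psi_{\mathrm p}$ contributes exactly the coefficient $\tfrac12 S$ of $|\mathbf n|$ and $-\tfrac12 M$ of $\bm{\sigma}$, so all four leading terms are accounted for. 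It remains to show $\rho\in\mathcal L^{\beta-2}=\ell^{1,\beta-2}$, which I would obtain from a discrete Hardy/Schur estimate: bounding $|\rho[n]|\le\sum_{|m|>|n|}(|m|-|n|)|f[m]|$ and interchanging summation reduces everything to the weighted inequality $\sum_{0<n<m}n^{\beta-2}(m-n)\lesssim m^\beta$. I expect this remainder estimate to be the main obstacle, and in particular the endpoint $\beta=1$ to be delicate: there the weighted sum picks up a logarithmic factor and the naive Schur bound fails, so placing $\rho$ in $\mathcal L^{-1}$ will require a sharper, non-pointwise argument rather than the crude kernel bound that suffices when $\beta>1$.

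Finally, for $\widetilde d<\infty$ I would exploit the compactness in Assumption~\ref{12.11.8.1.9} via a Fredholm argument on $\mathcal K$. Set $\phi:=v^*\Psi\in\mathcal K$. If $\phi=0$ then $V\Psi=vU\phi=0$, so $H_0\Psi=0$ and $\Psi\in\mathbb C\mathbf 1\oplus\mathbb C\mathbf n$; hence $\ker\bigl(v^*|_{\widetilde{\mathcal E}}\bigr)$ is at most two-dimensional. Applying $v^*$ to $\Psi=c_0\mathbf 1+c_1\mathbf n+E*(vU\phi)$ gives
\begin{equation*}
(I-K)\phi=c_0\,v^*\mathbf 1+c_1\,v^*\mathbf n,\qquad K\phi:=v^*\bigl(E*(vU\phi)\bigr),
\end{equation*}
and $K$ is compact on $\mathcal K$: indeed $vU\in\mathcal B(\mathcal K,\mathcal L)$, the convolution $g\mapsto E*g$ maps $\mathcal L=\ell^{1,1}$ boundedly into $\mathcal L^*=\ell^{\infty,-1}$ by $|n-m|\le(1+|n|)(1+|m|)$, and $v^*\in\mathcal C(\mathcal L^*,\mathcal K)$ since $v\in\mathcal C(\mathcal K,\mathcal L)$. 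As $I-K$ is then Fredholm of index zero, the preimage under $I-K$ of the at most two-dimensional space $\mathbb C\,v^*\mathbf 1+\mathbb C\,v^*\mathbf n$ is finite-dimensional; this bounds $\dim v^*\widetilde{\mathcal E}$, and combined with the two-dimensional kernel bound it gives $\widetilde d\le 2+\dim v^*\widetilde{\mathcal E}<\infty$.
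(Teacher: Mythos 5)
Your argument is, modulo packaging, the paper's own proof, and for $\beta>1$ it is complete. The splitting of $\Psi_{\mathrm p}=E*f$ into $\tfrac12 S\,|\mathbf n|-\tfrac12 M\,\bm{\sigma}+\rho$ plus the Schur-type tail bound is exactly Lemma~\ref{12.11.24.18.24} (the paper's kernel is $G_0^0(n)=-\tfrac12|n|=-E[n]$); the paper then merely routes this computation through the operators $z$, $w$ of Proposition~\ref{12.12.27.16.6} into Corollary~\ref{13.1.16.2.51}. Your Fredholm step is likewise Proposition~\ref{13.1.16.1.59} in disguise: since $E*(vU\phi)=-G_0^0vU\phi$ and $U^2=1_{\mathcal K}$, your operator is $I-K=(U+v^*G_0^0v)U=M_0U$, so the compactness you invoke for $K$ is precisely the compactness of $v^*G_0^0v$ that drives the paper's argument. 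One genuine (minor) dividend of your version: you get $\widetilde d<\infty$ independently of the asymptotic decomposition, whereas the paper deduces it from $\widetilde d\le d_0+4\le\widetilde d_{\mathrm{qs}}+4$, which presupposes the decomposition.

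The gap you flag at $\beta=1$ is real, and you should know it cannot be closed by a ``sharper, non-pointwise argument'': the claim is false there, both for general moment-free $f\in\ell^{1,1}$ and for the theorem itself. Concretely, let $f[k]=k^{-2}(\log k)^{-2}$ for $k\ge2$ and correct $f$ at $k=0,1$ so that $\langle\mathbf 1,f\rangle=\langle\mathbf n,f\rangle=0$; then $f\in\ell^{1,1}$, but Fubini gives $\sum_{n\ge2}(1+n^2)^{-1/2}\rho[n]=\sum_k f[k]\sum_{n=2}^{k}(1+n^2)^{-1/2}(k-n)\asymp\sum_k f[k]\,k\log k=\sum_k(k\log k)^{-1}=\infty$, so $\rho\notin\mathcal L^{-1}$ --- exactly the logarithm you predicted. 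Moreover this $f$ arises from an admissible potential: set $\Psi=-G_0^0f$ (bounded, vanishing for $n<0$) and $V=\langle f,\Psi\rangle^{-1}\langle f,\cdot\rangle f$; summation by parts yields $\langle f,\Psi\rangle=-\sum_n\bigl|(G_0^0f)[n+1]-(G_0^0f)[n]\bigr|^2\neq0$, so $V$ is a rank-one potential satisfying Assumption~\ref{12.11.8.1.9} with $\beta=1$, and $H\Psi=H_0\Psi+V\Psi=-f+f=0$ while $\Psi\notin\mathbb C\mathbf n\oplus\mathbb C|\mathbf n|\oplus\mathbb C\mathbf 1\oplus\mathbb C\bm{\sigma}\oplus\mathcal L^{-1}$. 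Note that the paper shares this defect rather than repairing it: the proof of Lemma~\ref{12.11.24.18.24} asserts the uniform bound $\sum_{n\ge 0}(1+n^2)^{(s-2)/2}\bigl|\sum_{k\ge n}(k-n)x[k]\bigr|\le C\|x\|_{1,s}$, which fails at $s=1$ (take $x=\delta_K$: the left side is $\asymp K\log K$, the right side $\asymp K$). So your diagnosis of the endpoint is sharper than the paper's treatment; the statement should be read with $\beta>1$, or with the remainder space logarithmically weakened at $\beta=1$.
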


In the classification of the singular part of the 
resolvent expansion
the following subspaces are canonical.
We set
\begin{align}
\mathcal E &=\widetilde{\mathcal E}\cap \bigl(
\mathbb C\mathbf 1\oplus\mathbb C\bm{\sigma}
\oplus\mathcal L^{\beta-2}\bigr),\quad
d=\dim \mathcal E;\label{13.3.7.13.34}\\
E&=\widetilde{\mathcal E}\cap \mathcal L^{\beta-2},\quad
d_0=\dim E.\label{13.3.7.13.35}
\end{align}
Obviously, we have $E\subset \mathcal E\subset \widetilde{\mathcal E}$ and 
$d_0\le d\le \widetilde d\le d_0+4$, which will be refined soon.
Now let us introduce the notion of the \textit{regular} and \textit{exceptional properties} for the threshold (similar to the terminology in \cite{JK} for continuous Schr\"{o}dinger operators in dimension three):
\begin{definition}\label{13.8.4.20.7}
The threshold $\lambda=0$ is said to be 
\begin{enumerate}
\item 
a \textit{regular point}, 
if $\mathcal E= E= \{0\}$;
\item
an \textit{exceptional point of the first kind}, 
if $\mathcal E\supsetneq E= \{0\}$;
\item
an \textit{exceptional point of the second kind}, 
if 
$\mathcal E=E\supsetneq \{0\}$;
\item
an \textit{exceptional point of the third kind}, 
if 
$\mathcal E\supsetneq E\supsetneq \{0\}$.
\end{enumerate}
\end{definition}
These properties \emph{characterize and are characterized by} 
the expansion coefficients $G_{-2}$ and $G_{-1}$ of the singular part:
\begin{theorem}\label{13.3.4.21.10}
Suppose $\beta\ge 4$ in Assumption~\ref{12.11.8.1.9}.
Then $\widetilde d=d_0+2$
and there exist bases 
$\Psi_j\in E$, $j=1,\dots, d_0$,
and 
$\Psi_j\in \mathcal E/E$, $j=d_0+1,\dots, d$,
such that
\begin{align}
G_{-2}=-\sum_{j=1}^{d_0}\langle \Psi_j,\cdot\rangle \Psi_j,\quad  
G_{-1}\equiv i\sum_{j=d_0+1}^{d}\langle \Psi_j,\cdot\rangle \Psi_j
\mod \langle E,\cdot \rangle E,
\label{13.3.11.2.46}
\end{align} 
where $\langle E,\cdot \rangle E\subset \mathcal B((\mathcal L^{\beta-2})^*,
\mathcal L^{\beta-2})$ is the subspace 
spanned by the operators of the form $\langle \Psi,\cdot \rangle \Psi'$
with 
$\Psi,\Psi'\in E$.
Furthermore, one can choose $\Psi_j\in E$, $j=1,\dots, d$,
to be orthonormal, 
and hence $-G_{-2}$ is the orthogonal projection onto $E$.
\end{theorem}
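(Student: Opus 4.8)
The plan is to establish this result through the resolvent expansion machinery (the Jensen--Nenciu scheme) referenced in the abstract, connecting the singular coefficients $G_{-2}$ and $G_{-1}$ to the threshold eigenspaces $E \subset \mathcal{E}$. First I would set up the expansion of the sandwiched resolvent. Writing $M(z) = U + v^* R_0(z) v$ (the operator on $\mathcal{K}$ whose inverse, via the symmetrized resolvent formula $R(z) = R_0(z) - R_0(z)v\,M(z)^{-1}v^*R_0(z)$, controls the threshold behavior), I would substitute the known asymptotic expansion of the free resolvent $R_0(z)$ around $\lambda=0$ in powers of $\kappa = \sqrt{z}$ (or the variable defined in (\ref{12.11.8.9.30})). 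The free expansion has a leading singular term of order $\kappa^{-1}$ coming from the $1/\sqrt{z}$ behavior characteristic of one-dimensional Laplacians, which propagates the $z^{-1}$ and $z^{-1/2}$ singularities into $R(z)$.

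\medskip

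Next I would carry out the iterated inversion of $M(z)$ described in Section~\ref{12.12.19.2.5}. The key structural input is that the kernel and cokernel data of the leading terms of $M(z)$ are in bijective correspondence with the threshold solution spaces. Concretely, the map $\Phi \mapsto v\Phi$ (suitably interpreted) should identify the relevant finite-dimensional kernels at each stage of the inversion with $E$ and with $\mathcal{E}/E$, by the detailed eigenspace analysis already carried out (Theorem~\ref{13.3.7.13.33} gives the asymptotic decomposition, and (\ref{13.3.7.13.34})--(\ref{13.3.7.13.35}) define the filtration). Tracking how these finite-rank projections appear as the coefficients of the negative powers of $\kappa$ in $M(z)^{-1}$, I would read off that the $z^{-1}$ coefficient is built from the genuine $\ell^2$-eigenfunctions (the space $E$), while the $z^{-1/2}$ coefficient is built from the additional threshold resonances (represented by $\mathcal{E}/E$). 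The identity $\widetilde{d} = d_0 + 2$ should emerge from counting: the four candidate directions $\mathbf{n}, |\mathbf{n}|, \mathbf{1}, \bm{\sigma}$ in Theorem~\ref{13.3.7.13.33} are cut down by the two scalar constraints forcing $H\Psi=0$ asymptotically, leaving exactly two ``extra'' dimensions beyond $E$.

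\medskip

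To get the precise rank-one decomposition with the correct signs and the factor $i$ in (\ref{13.3.11.2.46}), I would choose bases adapted to the inversion: orthonormalize the $\ell^2$-eigenfunctions $\Psi_1,\dots,\Psi_{d_0}$ spanning $E$, so that substituting back through $R(z) = R_0(z) - R_0(z)v\,M(z)^{-1}v^*R_0(z)$ converts the leading projection in $M(z)^{-1}$ into $\sum_j \langle \Psi_j,\cdot\rangle\Psi_j$, yielding $G_{-2} = -\sum_{j=1}^{d_0}\langle\Psi_j,\cdot\rangle\Psi_j$, which is manifestly minus the orthogonal projection onto $E$. The factor $i$ in $G_{-1}$ arises from the branch choice $\mathop{\mathrm{Im}}\sqrt{z}>0$: the half-integer power $\kappa$ carries the imaginary unit into the resonance term, and the congruence $\mod \langle E,\cdot\rangle E$ reflects that cross terms between genuine eigenfunctions and the resonance directions contribute at order $z^{-1/2}$ but are absorbed into the ambiguity allowed by the quotient.

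\medskip

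I expect the main obstacle to be the bookkeeping in the iterated inversion: showing that the singular coefficients land in the asserted operator classes $\mathcal{B}^{j+3}$ requires careful control of the weighted-space mapping properties at each inversion stage, and proving that the resonance contribution in $G_{-1}$ is \emph{exactly} $i\sum_{j=d_0+1}^d \langle\Psi_j,\cdot\rangle\Psi_j$ modulo the $E$-terms — rather than some other combination — demands identifying the intermediate operators' kernels with $\mathcal{E}/E$ precisely and verifying that the $\ell^2$-orthogonality extends so that all of $\Psi_1,\dots,\Psi_d$ can be taken orthonormal. This last compatibility (orthonormality across both blocks $E$ and $\mathcal{E}/E$ simultaneously) is where I would concentrate the most care, since it ties together the spectral-projection interpretation of $-G_{-2}$ with the normalization used for the $G_{-1}$ resonance term.
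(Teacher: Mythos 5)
Your route is the paper's own: the symmetrized resolvent formula of Proposition~\ref{12.11.13.19.5}, the iterated inversion scheme of Section~\ref{12.11.24.5.48}, and the identification of the inversion kernels with the threshold eigenspaces. So the issue is whether the steps you defer as ``obstacles'' can be closed the way you indicate, and at the two decisive points they cannot. First, the dimension identity $\widetilde d=d_0+2$. Your counting argument (``four asymptotic directions cut down by two scalar constraints'') is not a proof for the class of interactions allowed here. Pairing $H\Psi=0$ against $\mathbf 1$ and $\mathbf n$ yields relations such as $2c_2=\langle\mathbf 1,V\Psi\rangle$ (with $c_2$ the coefficient of $|\mathbf n|$), which involve the whole eigenfunction through $V\Psi$, not merely the four asymptotic coefficients, so they do not cut down a four-dimensional space by two; moreover, for non-local $V$ the equation $H\Psi=0$ is not a second-order difference equation, so the \emph{existence} of two independent solutions beyond $E$ (the lower bound $\widetilde d\ge d_0+2$) is just as much at issue as the upper bound, and $d_0$ itself can be any finite number, cf.\ Appendix~\ref{13.3.23.21.47}. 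The paper obtains $\widetilde d=d_0+2$ only from the complete classification, Propositions~\ref{12.12.19.6.30}--\ref{12.12.19.6.33}, where in every case an explicit two-element basis of $\widetilde{\mathcal E}/E$ is exhibited among $\Psi_1^0,\Psi_2^0,\Psi_3,\dots,\Psi_6$. Related to this, the identification map is not $\Phi\mapsto v\Phi$ (which lands in $\mathcal L^\beta$ and cannot produce resonances) but the map $z$ of (\ref{12.12.27.16.4}), essentially $-G_0^0v$ corrected by $\Psi_1^0,\Psi_2^0$ components, together with its partial inverse $w=Uv^*$ as in Corollary~\ref{13.1.16.2.51}; this is exactly the content your proposal treats as a black box.

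Second, ``orthonormalize and substitute back'' does not by itself give the projection property of $-G_{-2}$, nor the exact form of $G_{-1}$. What the scheme produces is $G_{-2}=-zr_0^\dagger z^*$ (in the $\kappa$-convention, see (\ref{13.8.4.20.31})); that this is exactly minus the orthogonal projection onto $E=z(T\mathcal K)$ is equivalent to the identity $r_0=-Tz^*zT$ of (\ref{13.3.8.16.2}), i.e.\ $r_0$ is minus the Gram matrix of $z|_{T\mathcal K}$, whence $-G_{-2}=zT(Tz^*zT)^\dagger Tz^*$. That identity is not bookkeeping: it rests on Lemma~\ref{12.12.27.15.3}, the relation $\langle x_1,G_2^0x_2\rangle=-\langle G_0^0x_1,G_0^0x_2\rangle$ for $x_j\perp\mathbf 1,\mathbf n$, and this is one of the places where $\beta\ge 4$ genuinely enters. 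Without it (or an independent argument, e.g.\ that once the expansion of Theorem~\ref{12.12.1.12.41} exists, $\kappa^2(H+\kappa^2)^{-1}$ converges strongly to the $\ell^2$-eigenprojection by the spectral theorem, combined with $\widetilde{\mathcal E}\cap\ell^2=E$ from Theorem~\ref{13.3.7.13.33}), you only know $G_{-2}$ is some rank-$d_0$ operator with range in $E$. Likewise, that the resonance part of $G_{-1}$ is exactly $i\sum_j\langle\Psi_j,\cdot\rangle\Psi_j$, and not some other Hermitian combination, is read off in the paper from the explicit pseudoinverse $q_0^\dagger$ in (\ref{13.8.3.3.33})--(\ref{13.8.3.3.34}) together with Tables~\ref{13.3.6.0.29a} and \ref{13.3.6.0.29}; the representatives are the specifically normalized vectors $\Psi_3,\Psi_4$, not an arbitrary basis of $\mathcal E/E$. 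Finally, your concern about orthonormality ``across both blocks'' is moot: elements of $\mathcal E/E$ are bounded, non-$\ell^2$ sequences (spanned by $\mathbf 1$, $\bm{\sigma}$ asymptotics), so no $\ell^2$-orthonormalization is available or needed for them; the orthonormality assertion concerns only the $E$-block, which is precisely what makes $-G_{-2}$ the orthogonal projection.
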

\begin{remark}\label{13.3.7.2.49}
If the spaces $E$ or $\mathcal E/E$ are trivial, 
we interpret the corresponding operators 
to the right of (\ref{13.3.11.2.46}) as $0$, respectively.
A complete classification of dimensions of the eigenspaces is obtained in 
Section~\ref{12.12.19.2.2}.
The lower bound of $\beta$ can also be refined, cf.\ Section~\ref{12.12.17.20.56}.
\end{remark}

Next, we consider $G_0$. 
We note that we need at least $\beta\ge 5$ to have well-defined $G_0$,
if we do not know \emph{a priori} the threshold type. 
The coefficient $G_0$ is actually the \textit{Green operator} for $H$.
Let us denote the identity operator by $1_{\mathcal L}=\mathrm{id}_{\mathcal L}$.
\begin{theorem}\label{13.3.4.21.11}
Suppose $\beta\ge 5$ in Assumption~\ref{12.11.8.1.9}. Then 
\begin{align*}
HG_0=G_0H=1_{\mathcal L}+G_{-2}.
\end{align*}
\end{theorem}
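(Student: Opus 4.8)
The plan is to read the identity off from the fundamental resolvent relation $(H-z)R(z)=R(z)(H-z)=1$, which holds on $\mathcal H$ for $z\in\mathbb C\setminus[0,\infty)$, by substituting the expansion of Theorem~\ref{12.12.17.20.37} and matching the coefficient of $z^0$. Writing the expansion as $R(z)=z^{-1}G_{-2}+z^{-1/2}G_{-1}+G_0+\mathcal O(z^{1/2})$ and applying $H-z$ from the left, I would use that $H$ is independent of $z$ to distribute $(H-z)$ over the finitely many displayed terms, obtaining
\begin{align*}
(H-z)R(z)=z^{-1}HG_{-2}+z^{-1/2}HG_{-1}+\bigl(HG_0-G_{-2}\bigr)+\mathcal O(z^{1/2}),
\end{align*}
where the contribution $-z\cdot z^{-1}G_{-2}=-G_{-2}$ lands at order $z^0$, while the terms $-z^{1/2}G_{-1}$ and $-zG_0$ generated by the $-z$ factor are absorbed into the remainder. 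Since the left-hand side equals $1$, uniqueness of the asymptotic expansion forces $HG_{-2}=0$, $HG_{-1}=0$, and $HG_0-G_{-2}=1$, the last of which is the asserted identity $HG_0=1_{\mathcal L}+G_{-2}$.

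The vanishing of the two singular coefficients under $H$ is moreover a direct consequence of Theorem~\ref{13.3.4.21.10}: since $G_{-2}=-\sum_{j=1}^{d_0}\langle\Psi_j,\cdot\rangle\Psi_j$ with $\Psi_j\in E$ and $H\Psi_j=0$, and since $G_{-1}$ is, modulo $\langle E,\cdot\rangle E$, built from generalized eigenfunctions that also satisfy $H\Psi_j=0$, both $HG_{-2}$ and $HG_{-1}$ vanish; this independently confirms the consistency of the coefficient matching at orders $z^{-1}$ and $z^{-1/2}$. The adjoint identity $G_0H=1_{\mathcal L}+G_{-2}$ I would obtain in exactly the same way from $R(z)(H-z)=1$, applying $H-z$ from the right; alternatively one may take adjoints, using the self-adjointness of $H$ together with the symmetry relations among the $G_j$ inherited from $R(z)^*=R(\bar z)$.

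The technical heart — and the reason $\beta\ge5$ is imposed — is to make each of these manipulations legitimate in the weighted spaces rather than merely on $\mathcal H$. First, $\beta\ge5$ guarantees that $G_0$ is well-defined and that the expansion extends to and includes the order-$z^0$ term (Theorem~\ref{12.12.17.20.37} and its refinement in Section~\ref{12.12.17.20.56}). Second, I must check that $H-z$ may be applied term by term and preserves the order of the remainder: $H_0$ maps each $(\mathcal L^s)^*$ boundedly into itself because it is a banded (tridiagonal) operator, while $V=vUv^*$ maps $(\mathcal L^\beta)^*$ into $\mathcal L^\beta$, so $V$ acts on the ranges of the $G_j$ provided these ranges lie in $(\mathcal L^\beta)^*$, which holds since $G_j\in\mathcal B^{j+3}$ with $j+3\le\beta$ for the relevant $j$; the factor $-z$ only raises orders, so $(H-z)\mathcal O(z^{1/2})=\mathcal O(z^{1/2})$ in the appropriate topology. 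The one point demanding genuine care is the precise space in which the final identity holds: $HG_0$ and $1_{\mathcal L}+G_{-2}$ must be compared on a common domain, and I would verify that, on the domain dictated by $G_0\in\mathcal B^3$, all three operators $HG_0$, $1_{\mathcal L}$ and $G_{-2}$ land in a single weighted space, so that the coefficient comparison is a bona fide operator identity there. This bookkeeping, rather than any analytic difficulty, is the main obstacle.
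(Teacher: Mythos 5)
Your approach --- extracting the identity from $(H-z)R(z)=R(z)(H-z)=1_{\mathcal H}$ by matching coefficients of $z^{-1}$, $z^{-1/2}$, $z^{0}$ --- is genuinely different from the paper's. The paper proves the statement case by case in Section~\ref{12.12.17.20.56} (Theorems~\ref{13.3.11.4.29}, \ref{13.3.11.4.30}, \ref{13.3.11.4.31} and \ref{13.3.11.4.32}), by direct computation with the \emph{explicit formulas} for $G_0$, such as (\ref{13.1.19.16.23}) and (\ref{13.3.24.22.7}), using the algebraic identities $Sm_0=0$, $Tm_1=0$, $TM_0=0$, Lemma~\ref{13.1.18.6.0} and Lemma~\ref{12.12.27.15.3} to exhibit the cancellations. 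Your outline is sound as far as it goes, and your side remarks are correct: $H$ is bounded on $(\mathcal L^s)^*$ for $s\le\beta$, the remainder is stable under application of $H-z$, and $HG_{-2}=HG_{-1}=0$ is consistent with Theorem~\ref{13.3.4.21.10} since the ranges of $G_{-2}$ and $G_{-1}$ consist of threshold eigenfunctions. But there is a genuine gap, and it sits exactly at the advertised hypothesis $\beta\ge5$.

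To match the coefficient of $z^0$ you need the expansion $R(z)=z^{-1}G_{-2}+z^{-1/2}G_{-1}+G_0+o(1)$ in a suitable operator topology, i.e.\ Theorem~\ref{12.12.17.20.37} with $N\ge0$; but that theorem requires $N\le\beta-6$, hence $\beta\ge6$. When only $\beta\ge5$ is assumed and the threshold type is not known a priori --- in particular in the case where $P$, $m_0$ and $q_0$ are all non-invertible, governed by Theorem~\ref{12.12.1.12.41}, whose range is likewise $N\in[-2,\beta-6]$ --- the available expansion stops at $N=-1$: $R(z)=z^{-1}G_{-2}+z^{-1/2}G_{-1}+\mathcal O(1)$. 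From this you can match the orders $z^{-1}$ and $z^{-1/2}$, but an $\mathcal O(1)$ remainder cannot be separated from the constant term, so no information about $HG_0$ follows. Indeed, at $\beta=5$ the paper defines $G_0$ by the explicit formula rather than as an expansion coefficient with controlled remainder (cf.\ Remark~\ref{13.3.11.5.22} and Remarks~\ref{13.3.21.16.40}); this is precisely why it resorts to the direct computation, which needs only well-definedness of that formula. Note also that the threshold types you cannot reach at $\beta=5$ (those of Theorem~\ref{12.12.1.12.41}) are exactly the ones with $E\neq\{0\}$, i.e.\ $G_{-2}\neq0$ --- the only cases in which the assertion says more than $HG_0=G_0H=1_{\mathcal L}$. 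As written, your argument proves the theorem for $\beta\ge6$, and at $\beta=5$ only for the types covered by the refined expansions of Theorems~\ref{12.12.1.12.38}--\ref{12.12.1.12.40}; closing the gap at $\beta=5$ requires either a sharper remainder estimate than the paper provides or the paper's computational identification of $G_0$.
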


A detail expression for $G_0$ is possible,
and the expression is affected by the eigenspaces again.
In particular, we need to take care of eigenfunctions with certain 
\textit{quasi-symmetric} asymptotics at infinity:
\begin{definition}\label{13.3.7.13.36}
Define the \textit{quasi-symmetric eigenspace} 
and set its dimension as
\begin{align*}
\widetilde{\mathcal E}_{\mathrm{qs}}&=\widetilde{\mathcal E}
\cap \bigl(
\mathbb C|\mathbf n|\oplus\mathbb C\mbox{\boldmath$\sigma$}
\oplus
\mathcal L^{\beta-2}\bigr),\quad 
\widetilde d_{\mathrm{qs}}=\dim\widetilde{\mathcal E}_{\mathrm{qs}}.
\end{align*}
\end{definition}

It follows directly from the definition that
$E\subset \widetilde{\mathcal E}_{\mathrm{qs}}$ and 
$d_0\le \widetilde d_{\mathrm{qs}}
\le d_0+2$.

In the following theorem we use the result from Section~\ref{12.12.19.1.51} that $R_0(z)=(H_0-z)^{-1}$ has an asymptotic expansion of the form
\begin{equation}
R_0(z)=\sum_{j=-1}^Nz^{j/2}G_j^0+\mathcal{O}(z^{(N+1)/2}),
\end{equation}
with the same convention as in the statement of Theorem~\ref{12.12.17.20.37}.
\begin{theorem}\label{13.3.4.21.12}
Suppose $\beta\ge 5$ in Assumption~\ref{12.11.8.1.9}. Then 
the following three conditions are equivalent:
\begin{enumerate}
\item\label{13.3.4.18.2} The quasi-symmetric eigenspace is trivial, i.e., 
$\widetilde{\mathcal E}_{\mathrm{qs}}=\{0\}$;
\item\label{13.3.4.18.3} The operator $1_{\mathcal L^*}+G_0^0V$
is invertible in $\mathcal B(\mathcal L^*)$;
\item\label{13.3.4.18.4} The operator $1_{\mathcal L}+VG_0^0$
is invertible in $\mathcal B(\mathcal L)$.
\end{enumerate}
Furthermore, in the affirmative case, $d+1\le \widetilde d\le d+2$, and 
there exist $\Psi_j\in \widetilde{\mathcal E}$, 
$j=d+1,\dots, \widetilde d$ such that 
\begin{align*}
G_0&=(1_{\mathcal L^*}+G_0^0V)^{-1}G_0^0
+\sum_{j=d+1}^{\widetilde d}\langle \Psi_j,\cdot\rangle \Psi_j\\
&=G_0^0(1_{\mathcal L}+VG_0^0)^{-1}
+\sum_{j=d+1}^{\widetilde d}\langle \Psi_j,\cdot\rangle \Psi_j.
\end{align*}
\end{theorem}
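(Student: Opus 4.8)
The plan is to reduce the whole statement to two explicit facts about the free coefficients, both read off from the computation of $R_0(z)$ in Section~\ref{12.12.19.1.51}: the free Green operator satisfies $H_0G_0^0=1_{\mathcal L}$ (the free analogue of Theorem~\ref{13.3.4.21.11}, with vanishing residue since $H_0$ has no threshold $\ell^2$-solutions), $G_{-1}^0$ is a nonzero scalar multiple of $\langle\mathbf 1,\cdot\rangle\mathbf 1$, and $G_0^0$ has the kernel $G_0^0[m,n]=-\tfrac12|m-n|$. Granting these, I would establish (\ref{13.3.4.18.2})$\Leftrightarrow$(\ref{13.3.4.18.3}) by identifying $\ker(1_{\mathcal L^*}+G_0^0V)$ with $\widetilde{\mathcal E}_{\mathrm{qs}}$, derive (\ref{13.3.4.18.3})$\Leftrightarrow$(\ref{13.3.4.18.4}) from an algebraic duality, obtain the dimension count from Theorem~\ref{13.3.4.21.10}, and extract $G_0$ from the regularized resolvent identity by a rank-one (Sherman--Morrison) treatment of the single singular direction.

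For (\ref{13.3.4.18.2})$\Leftrightarrow$(\ref{13.3.4.18.3}), first note $G_0^0\in\mathcal B(\mathcal L,\mathcal L^*)$ (immediate from the kernel bound $|m-n|\le(1+|m|)(1+|n|)$), so $G_0^0V\in\mathcal C(\mathcal L^*)$ and, by the Fredholm alternative, $1_{\mathcal L^*}+G_0^0V$ is invertible on $\mathcal L^*$ iff it is injective. I then claim $\ker(1_{\mathcal L^*}+G_0^0V)=\widetilde{\mathcal E}_{\mathrm{qs}}$. If $x\in\mathcal L^*$ and $x=-G_0^0Vx$, then $Vx\in\mathcal L^\beta\subset\mathcal L$ and $H_0x=-H_0G_0^0Vx=-Vx$, so $Hx=0$ and $x\in\widetilde{\mathcal E}$; by Theorem~\ref{13.3.7.13.33} it decomposes along $\mathbf n,|\mathbf n|,\mathbf 1,\bm{\sigma}$ and $\mathcal L^{\beta-2}$, and expanding $x[m]=\tfrac12\sum_n|m-n|(Vx)[n]$ with $|m-n|=|m|-\operatorname{sign}(m)\,n+o(1)$ shows that the $\mathbf n$- and $\mathbf 1$-coefficients vanish (the kernel has no constant part), whence $x\in\widetilde{\mathcal E}_{\mathrm{qs}}$. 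Conversely, given $\Psi\in\widetilde{\mathcal E}_{\mathrm{qs}}\subset\mathcal L^*$, set $\Psi'=-G_0^0V\Psi$; then $H_0(\Psi-\Psi')=0$, so $(\Psi-\Psi')[n]=an+b$ is linear, while $\Psi$ (by quasi-symmetry) and $\Psi'$ (by the same kernel estimate) both have vanishing $\mathbf n$- and $\mathbf 1$-coefficients, forcing $a=b=0$ and $(1_{\mathcal L^*}+G_0^0V)\Psi=0$. Thus the kernel is exactly $\widetilde{\mathcal E}_{\mathrm{qs}}$, and (\ref{13.3.4.18.2})$\Leftrightarrow$(\ref{13.3.4.18.3}) follows.

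The equivalence (\ref{13.3.4.18.3})$\Leftrightarrow$(\ref{13.3.4.18.4}) is the standard fact that $1_{\mathcal L^*}+AB$ is invertible iff $1_{\mathcal L}+BA$ is, applied to $A=G_0^0\in\mathcal B(\mathcal L,\mathcal L^*)$ and $B=V\in\mathcal B(\mathcal L^*,\mathcal L)$, with inverse $(1_{\mathcal L}+VG_0^0)^{-1}=1_{\mathcal L}-V(1_{\mathcal L^*}+G_0^0V)^{-1}G_0^0$. For the dimension count in the affirmative case, $\widetilde{\mathcal E}_{\mathrm{qs}}=\{0\}$ and $E\subset\widetilde{\mathcal E}_{\mathrm{qs}}$ give $d_0=0$, hence $\widetilde d=d_0+2=2$ by Theorem~\ref{13.3.4.21.10}; moreover any element of $\mathcal E$ with zero $\mathbf 1$-coefficient lies in $\widetilde{\mathcal E}\cap(\mathbb C\bm{\sigma}\oplus\mathcal L^{\beta-2})\subset\widetilde{\mathcal E}_{\mathrm{qs}}=\{0\}$, so the $\mathbf 1$-coefficient is injective on $\mathcal E$ and $d\le1$, which yields $d+1\le\widetilde d\le d+2$.

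Finally, for the formula I would write $R(z)=(1_{\mathcal L^*}+R_0(z)V)^{-1}R_0(z)$ and split off the only singular contribution, $1_{\mathcal L^*}+R_0(z)V=\mathcal A(z)+z^{-1/2}G_{-1}^0V$, where $\mathcal A(z)$ is regular in $z^{1/2}$ with $\mathcal A(0)=1_{\mathcal L^*}+G_0^0V$ invertible by (\ref{13.3.4.18.3}) and $G_{-1}^0V$ is rank one. A Sherman--Morrison update inverts $1_{\mathcal L^*}+R_0(z)V$, and collecting the $z^0$ coefficient of the product with $R_0(z)$ produces the principal term $(1_{\mathcal L^*}+G_0^0V)^{-1}G_0^0$, equal to $G_0^0(1_{\mathcal L}+VG_0^0)^{-1}$ by the push-through identity, plus a finite-rank remainder built from $(1_{\mathcal L^*}+G_0^0V)^{-1}\mathbf 1$ and the rank-one correction. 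The main obstacle is this last step: one must check that the remainder is symmetric and equals $\sum_{j=d+1}^{\widetilde d}\langle\Psi_j,\cdot\rangle\Psi_j$ for genuine eigenfunctions $\Psi_j\in\widetilde{\mathcal E}$. Here the key observation is that $\phi=(1_{\mathcal L^*}+G_0^0V)^{-1}\mathbf 1$ solves $H\phi=0$ (apply $H_0$ to $\phi+G_0^0V\phi=\mathbf 1$ and use $H_0\mathbf 1=0$), so such vectors indeed lie in $\widetilde{\mathcal E}$; the remaining work is to identify their span with the $(\widetilde d-d)$-dimensional complement represented by $\widetilde{\mathcal E}/\mathcal E$ and to keep careful track of the branch of $\sqrt z$ and the scalar constants so that the coefficients assemble into the stated symmetric rank-one sum.
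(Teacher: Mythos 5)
Your reduction of the three equivalences and of the dimension count is correct, and it is a more direct route than the paper's. The identification $\ker(1_{\mathcal L^*}+G_0^0V)=\widetilde{\mathcal E}_{\mathrm{qs}}$ is exactly the paper's characterization (\ref{13.3.6.9.36}) in Proposition~\ref{13.1.16.1.59}, but you obtain it straight from the kernel representation of $G_0^0$ (Lemma~\ref{12.11.24.18.24}) and the uniqueness of linear solutions of $H_0y=0$, bypassing the auxiliary space $\mathcal K$, the operators $z,w$ and $\ker M_0$ entirely; compactness plus the Fredholm alternative and the standard $1+AB$ versus $1+BA$ duality then give (\ref{13.3.4.18.2})$\Leftrightarrow$(\ref{13.3.4.18.3})$\Leftrightarrow$(\ref{13.3.4.18.4}). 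Your count ($E\subset\widetilde{\mathcal E}_{\mathrm{qs}}$ forces $d_0=0$, hence $\widetilde d=2$ by Theorem~\ref{13.3.4.21.10}, and injectivity of the $\mathbf 1$-coefficient on $\mathcal E$ forces $d\le 1$) is also complete and correct.

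The gap is in the last step, which is where the real content of the theorem lies. Your plan — factor $1_{\mathcal L^*}+R_0(z)V=\mathcal A(z)+z^{-1/2}G_{-1}^0V$ and invert by Sherman--Morrison — is genuinely different from the paper, which never inverts $1_{\mathcal L^*}+R_0(z)V$ directly but instead runs the iterated inversion scheme and then, case by case, computes $(1_{\mathcal L^*}+G_0^0V)G_0$ and identifies the finite-rank remainder through the explicit vectors $\Psi_3,\Psi_5,\Psi_6$ and the identities (\ref{13.3.9.14.47}) (Theorems~\ref{13.3.11.4.29}, \ref{13.3.11.4.30}, \ref{13.3.11.4.31}). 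Your route would, if completed, have the advantage of avoiding that case analysis; but as written it stops at the crucial point. Concretely: (i) you never control the Sherman--Morrison denominator $D(z)=1+\tfrac{i}{2}z^{-1/2}\langle V\mathbf 1,\mathcal A(z)^{-1}\mathbf 1\rangle$ as $z\to 0$. If $s(0)=\langle V\mathbf 1,(1_{\mathcal L^*}+G_0^0V)^{-1}\mathbf 1\rangle\neq 0$ everything converges, and indeed the identity $G_0^0V\phi=\mathbf 1-\phi$ for $\phi=(1_{\mathcal L^*}+G_0^0V)^{-1}\mathbf 1$ collapses the two cross terms into the single dyad $s(0)^{-1}\langle\phi,\cdot\rangle\phi$, so together with your (correct and key) observation $H\phi=0$ the non-resonance case closes. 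But when $s(0)=0$ — the resonance case $d=1$, which the affirmative case permits — you must still show that $D(z)$ has a \emph{nonzero finite} limit, i.e.\ exclude a $z^{-1}$ singularity of $R(z)$; this needs an input you never invoke, e.g.\ $E=\{0\}$ together with $\epsilon R(-\epsilon)\to -P_{\ker H}$ from the spectral theorem (the paper's substitute is the a priori bound $M(\kappa)^{-1}=\mathcal O(\kappa^{-2})$). (ii) In that same resonance case the $z^0$ coefficient picks up additional contributions from $z^{1/2}G_1^0$ and from the subleading terms of $\mathcal A(z)^{-1}$ and of $D(z)^{-1}$, and you give no structural reason why these assemble into $\sum_j\langle\Psi_j,\cdot\rangle\Psi_j$ with $\Psi_j\in\widetilde{\mathcal E}$ — you defer exactly this (``the remaining work is\dots''). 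There is also a technical point left open: the expansion of $\mathcal A(z)$ converges only in weighted norms of the type $\mathcal B(\mathcal L^*,(\mathcal L^{2})^*)$, so the functional-analytic setting for the Neumann/Sherman--Morrison inversion must be set up with care. So the proposal proves the equivalences and the dimension bounds, but for the expansion formula it is an outline with a genuine missing step rather than a proof.
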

\begin{remark}
We do not assert that $\{\Psi_j\}$ is a basis of $\widetilde{\mathcal E}/\mathcal E$, 
and in general it is not.  
Even when $\widetilde{\mathcal E}_{\mathrm{qs}}$ is nontrivial,
there exist alternative modified expressions for $G_0$. 
See Section~\ref{12.12.17.20.56}.
The lower bound of $\beta$ can also be refined.
\end{remark}

We finally note that 
in the case where $V$ is multiplicative, 
further dimensional relations hold true:
$\widetilde d=2$, $d\le 1$ and $d_0=0$.
In particular, the space $E$ is always trivial 
for the multiplicative potentials,
while the dimension $d_0=\dim E$ can be an arbitrary finite number in general 
for non-local potentials.
We shall see such examples in Appendix~\ref{13.3.23.21.47}.

The relations between 
the first few coefficients and the eigenspaces 
are mediated by certain operators,
which we name $P$, $m_0$, $q_0$ and $r_0$, cf.\ 
Section~\ref{12.11.24.5.48}.
As mentioned at the beginning,
if we formulate our statement directly in terms of 
these \textit{intermediate operators},
we can have more refined statements than the ones above, 
which is one of our goals.

\subsection{Strategy}\label{sect12}
We fix the determination of the square root with 
$\mathop{\mathrm{Im}}\sqrt{z}>0$ 
for $z\in\mathbb{C}\setminus [0,\infty)$, 
as in the previous subsection,
and introduce the new variable $\kappa$:
\begin{align}
\kappa=-i\sqrt{z};\quad z=-\kappa^2.\label{12.11.8.9.30}
\end{align}
We use the two variables $z$ and $\kappa$ interchangeably, e.g., $R(z)=R(\kappa)$, without comment.
Our first reduction is given in the following proposition,
where the expansion of $R(\kappa)$ is reduced to that 
of $R_0(\kappa)$.
Define the operator
$M(\kappa)\in\mathcal B(\mathcal K)$
for $\real\kappa>0$ by
\begin{align}
M(\kappa)=U+v^*R_0(\kappa)v.\label{12.11.8.8.1}
\end{align}
Due to the decay assumption on $V$ there exists a $\kappa_0>0$ such that for $\real\kappa\in(0,\kappa_0)$ we have that $z=-\kappa^2$ is in the resolvent set of $H$.
\begin{proposition}\label{12.11.13.19.5}
For any $0<\real\kappa<\kappa_0$ the operator $M(\kappa)$
is invertible in $\mathcal B(\mathcal K)$,
and 
\begin{align}
&R(\kappa)=R_0(\kappa)
-R_0(\kappa)vM(\kappa)^{-1}v^*R_0(\kappa),\label{12.11.8.4.1}\\
&M(\kappa)^{-1}=U-Uv^*R(\kappa)vU.\label{12.11.13.19.3}
\end{align}
\end{proposition}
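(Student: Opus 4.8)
The plan is to prove Proposition~\ref{12.11.13.19.5} by the standard resolvent-factorization argument adapted to the factorization $V=vUv^*$. First I would fix $\kappa$ with $0<\real\kappa<\kappa_0$, so that $z=-\kappa^2$ lies in the resolvent set of $H$ and $R(\kappa)=(H-z)^{-1}$ exists as a bounded operator on $\mathcal H$, as does $R_0(\kappa)=(H_0-z)^{-1}$. The point of entry is the second resolvent identity $R=R_0-R_0VR=R_0-R_0 v U v^* R$, together with its mirror form $R=R_0-R\,VR_0=R_0-R v U v^* R_0$. These follow formally from $H=H_0+V$ by multiplying $R_0(H-z)=1+R_0V$ and rearranging; I would just record them.

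Next I would establish invertibility of $M(\kappa)=U+v^*R_0(\kappa)v$ in $\mathcal B(\mathcal K)$. The idea is to produce an explicit inverse rather than argue abstractly. Multiplying the resolvent identity $R=R_0-R_0vUv^*R$ on the left by $v^*$ and on the right by $v$ gives $v^*Rv=v^*R_0v-v^*R_0vU\,v^*Rv$, i.e.\ $(1_{\mathcal K}+v^*R_0vU)\,v^*Rv=v^*R_0v$. Since $U$ is a self-adjoint unitary, $U^2=1_{\mathcal K}$, so $M(\kappa)U=(U+v^*R_0v)U=1_{\mathcal K}+v^*R_0vU$. Hence the candidate identity is $M(\kappa)^{-1}=U-Uv^*R(\kappa)vU$, which is exactly (\ref{12.11.13.19.3}); I would verify both $M(\kappa)\bigl(U-Uv^*RvU\bigr)=1_{\mathcal K}$ and the reversed product $\bigl(U-Uv^*RvU\bigr)M(\kappa)=1_{\mathcal K}$ by direct computation, expanding each product and collapsing the cross terms using the two resolvent identities together with $U^2=1_{\mathcal K}$. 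For instance, $M(\kappa)\bigl(U-Uv^*RvU\bigr)=1_{\mathcal K}+v^*R_0vU-(U+v^*R_0v)Uv^*RvU$, and the bracketed piece $U+v^*R_0v$ times $Uv^*RvU$ should, after inserting $v^*Rv=v^*R_0v-v^*R_0vU v^*Rv$, telescope down to $v^*R_0vU$, leaving $1_{\mathcal K}$. The reversed product is handled symmetrically using the mirror resolvent identity. Having both one-sided inverses equal to the same bounded operator $U-Uv^*RvU$ shows $M(\kappa)$ is invertible with that inverse.

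With $M(\kappa)^{-1}$ in hand, I would derive (\ref{12.11.8.4.1}) by substituting back. Starting from $R=R_0-R_0vUv^*R$ and the relation $Uv^*Rv=v^*R_0v-M(\kappa)^{-1}v^*R_0v$-type rearrangement, the cleanest route is to insert the established formula: from $(1_{\mathcal K}+v^*R_0vU)v^*Rv=v^*R_0v$ and $1_{\mathcal K}+v^*R_0vU=M(\kappa)U$ we get $v^*Rv=U^{-1}M(\kappa)^{-1}v^*R_0v=UM(\kappa)^{-1}v^*R_0v$, hence $Uv^*Rv=M(\kappa)^{-1}v^*R_0v$. Plugging this into $R=R_0-R_0v\,(Uv^*R)$, written as $R=R_0-R_0v\,Uv^*R$ and noting $Uv^*R$ acting through $v$ reproduces $R_0vM(\kappa)^{-1}v^*R_0$, yields $R(\kappa)=R_0(\kappa)-R_0(\kappa)vM(\kappa)^{-1}v^*R_0(\kappa)$, which is (\ref{12.11.8.4.1}).

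I expect the main obstacle to be purely bookkeeping: keeping the factors of $U$ (and the repeated use of $U^2=1_{\mathcal K}$) in the correct places while expanding the products, and making sure every operator composition is between matching spaces so that the boundedness of each term is legitimate. There is no analytic difficulty here beyond the already-granted fact that $z$ lies in the resolvent set for $\real\kappa<\kappa_0$; the content is entirely the algebraic telescoping of the resolvent identities. I would take care to verify \emph{both} one-sided products for the inverse (rather than invoking a left-right symmetry heuristically), since $M(\kappa)$ and $R(\kappa)$ are genuine operators and the equality of the two computations is what certifies two-sided invertibility.
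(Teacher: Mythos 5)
Your overall strategy coincides with the paper's: verify by direct computation that $U-Uv^*R(\kappa)vU$ is a two-sided inverse of $M(\kappa)$, using the two resolvent identities $R=R_0-R_0vUv^*R=R_0-RvUv^*R_0$ together with $U^2=1_{\mathcal K}$, and then obtain (\ref{12.11.8.4.1}) by substitution. Your verification of (\ref{12.11.13.19.3}), including the insistence on checking both one-sided products, is correct and is exactly the computation the paper declares ``straightforward''.

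The derivation of (\ref{12.11.8.4.1}), however, does not go through as written. What you establish is $Uv^*R(\kappa)v=M(\kappa)^{-1}v^*R_0(\kappa)v$, which carries a factor of $v$ on the right, and you then substitute it into $R=R_0-R_0v\,(Uv^*R)$, where $Uv^*R$ has no such trailing $v$. Since $v$ is only assumed injective, its range need not be dense in $\mathcal H$ (for a finite-rank $V$ it is finite dimensional), so agreement of $Uv^*R$ and $M(\kappa)^{-1}v^*R_0$ after composition with $v$ does not imply $Uv^*R=M(\kappa)^{-1}v^*R_0$, which is what the substitution requires; the phrase ``acting through $v$'' does not supply the missing argument. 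The repair is one line, by either of two routes. Either prove the $v$-free identity directly: multiplying $R=R_0-R_0vUv^*R$ by $v^*$ on the left only gives $v^*R+v^*R_0vUv^*R=v^*R_0$, i.e.\ $\bigl(1_{\mathcal K}+v^*R_0vU\bigr)v^*R=v^*R_0$; since $1_{\mathcal K}+v^*R_0vU=M(\kappa)U$, this yields $Uv^*R=M(\kappa)^{-1}v^*R_0$, which plugs into $R=R_0-R_0v(Uv^*R)$ as you intended. Or follow the paper's route: substitute the mirror identity $R=R_0-RvUv^*R_0$ into the last factor of $R=R_0-R_0vUv^*R$, which yields $R=R_0-R_0v\bigl[U-Uv^*RvU\bigr]v^*R_0$; this produces exactly the sandwiched object $v^*Rv$, so your already-proved formula (\ref{12.11.13.19.3}) inserts directly.
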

\begin{proof}
By the definition (\ref{12.11.8.8.1})
and Assumption~\ref{12.11.8.1.9}
it is straightforward to see that
\begin{align*}
M(\kappa)\bigl[U-Uv^*R(\kappa)vU\bigr]
=\bigl[U-Uv^*R(\kappa)vU\bigr]M(\kappa)
=1_{{\mathcal K}}.
\end{align*}
Hence (\ref{12.11.13.19.3}) is verified.
As for (\ref{12.11.8.4.1}), we use the identities
\begin{align*}
R(\kappa)=R_0(\kappa)-R(\kappa)VR_0(\kappa)
=R_0(\kappa)-R_0(\kappa)VR(\kappa)
\end{align*}
to obtain 
\begin{align*}
R(\kappa)=R_0(\kappa)-R_0(\kappa)
v\bigl[U-Uv^*R(\kappa)vU\bigr]v^*R_0(\kappa).
\end{align*}
Then it suffices to substitute (\ref{12.11.13.19.3}) into the above equation.
\end{proof}
The expansion of $R_0(\kappa)$ is straightforward and
will be given explicitly in Section~\ref{12.12.19.1.51}.
Then by (\ref{12.11.8.8.1}) we can also expand $M(\kappa)$,
cf.\ Section~\ref{12.11.8.3.0}.
Hence by (\ref{12.11.8.4.1})
it suffices to have an expansion scheme
that expands $M(\kappa)^{-1}$ in terms of the expansion 
coefficients of $M(\kappa)$.
We will actually borrow the the scheme from \cite{JN0,JN1},
however, we emphasize that we will implement it under the most general assumption 
on perturbations.
The first step of the scheme will be given in Section~\ref{12.12.19.1.54}.
The full expansion scheme consists of iterated
applications of the inversion formula of Section~\ref{12.12.19.1.54}, 
and the outline will be given 
in Section~\ref{12.11.24.5.48}, proving 
that the iteration stops after a finite number of steps, 
as well as introducing the intermediate 
operators $P$, $m_0$, $q_0$ and $r_0$.
We will investigate the relation between these intermediate operators
and the eigenspaces in Section~\ref{12.12.19.2.5}.
An alternative proof of finiteness of iteration is also given here.
The main refined results will be stated and proved 
in Section~\ref{12.12.17.20.56}.

\subsection{Comments on the literature}\label{sect13}
We are not aware of any previous complete treatments of the resolvent expansion problem  or the associated complete classification of the threshold (generalized) eigenspaces, for the general class of interactions considered here. 

For the case of discrete one-dimensional Schr\"{o}dinger operators with multiplicative potentials there are results on the threshold expansions of the resolvent in the generic case in \cite{PS} and in the general case in \cite{C}.

The types of resolvent expansions considered here were first obtained for continuous Schr\"{o}dinger operators in \cite{rauch,JK,M}.

\section{Expansions of $R_0(\kappa)$ and $M(\kappa)$}\label{12.11.8.1.0}

\subsection{Expansion of $R_0(\kappa)$}\label{12.12.19.1.51}
For $x\in \mathcal H$ and $f\in L^2(\mathbb T)$ we use the following conventions for the Fourier transform 
${\mathcal F}\colon \mathcal H\to L^2(\mathbb T)$,
$\mathbb T=\mathbb R/2\pi \mathbb Z$.
\begin{align*}
({\mathcal F}x)(\theta)=(2\pi)^{-1/2}\sum_{n\in\mathbb{Z}}e^{-in\theta}x[n],\quad
({\mathcal F}^{-1}f)[n]=(2\pi)^{-1/2}\int_{\mathbb T} 
e^{in\theta}f(\theta)\,d\theta.
\end{align*}
Then we have 
\begin{align*}
{\mathcal F}(H_0x)(\theta)=(2-2\cos\theta)({\mathcal F}x)(\theta)
=\bigl(4\sin^2\tfrac{\theta}{2}\bigr)({\mathcal F}x)(\theta),
\end{align*}
which shows that the spectrum of $H_0$ is purely absolutely continuous and equals $[0,4]$.
It follows that, if we denote $R_0(z)=(H_0-z)^{-1}$, then
\begin{align}
{\mathcal F}(R_0(z)x)(\theta)
=\frac{({\mathcal F}x)(\theta)}{4\sin^2(\theta/2)-z},\quad z\in \mathbb{C}\setminus [0,4].
\label{12.12.27.12.55}
\end{align}
For $z\in \mathbb{C}\setminus [0,4]$ sufficiently close to $0$
let us change the variable from $z$ to $\phi$
through the correspondence 
\begin{align}
z=4\sin^2\tfrac{\phi}{2},\quad \mathop{\mathrm{Im}}\phi>0.\label{11.1.26.15.17}
\end{align}
With the correspondence (\ref{11.1.26.15.17})
the resolvent $R_0(z)$ is represented by the integral kernel,
which is a convolution with the function
\begin{align}
R_0(z;n)=\frac{ie^{i\phi |n|}}{2\sin \phi}.\label{11.1.26.16.23}
\end{align}
We now combine (\ref{11.1.26.16.23}), (\ref{11.1.26.15.17}),
and (\ref{12.11.8.9.30}) to expand $R_0(\kappa)=R_0(z)$.
\begin{proposition}\label{12.11.9.6.23}
Let $N\ge -1$ be any integer.
Then as $\kappa\to 0$ with $\real\kappa>0$, 
the resolvent $R_0(\kappa)$ has the expansion
in $\mathcal B^{N+2}$:
\begin{align}
R_0(\kappa)=\sum_{j=-1}^N\kappa^jG_j^0+{\mathcal O}(\kappa^{N+1}),
\quad G_j^0\in \mathcal B^{j+1},
\label{12.12.1.7.7}
\end{align}
and the coefficients $G_j^0$ are given explicitly 
as convolution operators with polynomials $G^0_j(n)$ of 
degree $j+1$ in $|n|$.
For instance,
\begin{align}
\begin{split}
&G^0_{-1}(n)=\tfrac{1}{2},\quad
G^0_0(n)=-\tfrac{1}{2}|n|,\quad
G^0_1(n)=\tfrac{1}{4}|n|^2-\tfrac{1}{16},\\
&G^0_2(n)=-\tfrac{1}{12}|n|^3+\tfrac{1}{12}|n|,\quad
G^0_3(n)=\tfrac{1}{48}|n|^4-\tfrac{5}{96}|n|^2+\tfrac{3}{256}.
\end{split}
\label{12.12.19.8.26}
\end{align}
\end{proposition}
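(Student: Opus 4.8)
The plan is to reduce everything to the explicit convolution kernel \eqref{11.1.26.16.23} and to factor it into a scalar prefactor times a factor of the form $t^{|n|}$, where $t=e^{i\phi}$. First I would rewrite the kernel purely in terms of $\kappa$. Combining \eqref{11.1.26.15.17} with \eqref{12.11.8.9.30} gives $\cos\phi=1+\kappa^2/2$, so $w=e^{i\phi}$ solves $w^2-(2+\kappa^2)w+1=0$; the root with $|w|<1$, which is the correct one since $\real\kappa>0$ forces $\operatorname{Im}\phi>0$, is $t(\kappa)=\bigl((\kappa/2)+\sqrt{1+\kappa^2/4}\bigr)^{-2}$. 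A short computation then yields $2\sin\phi=i\kappa\sqrt{4+\kappa^2}$ and hence the clean formula $R_0(\kappa;n)=t(\kappa)^{|n|}\big/\bigl(\kappa\sqrt{4+\kappa^2}\bigr)$. Both $t(\kappa)$ and $s:=\log t(\kappa)$ are analytic in a fixed disk $|\kappa|<\rho$ independent of $n$, with $t(0)=1$ and $s=-\kappa+O(\kappa^2)$, while the scalar prefactor $A(\kappa)=1/(\kappa\sqrt{4+\kappa^2})=\tfrac12\kappa^{-1}(1+\kappa^2/4)^{-1/2}$ is meromorphic with only a simple pole at $0$.

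The heart of the argument is a uniform-in-$n$ expansion of $t(\kappa)^{|n|}=e^{|n|s}$. Writing Taylor's formula for the exponential with integral remainder, $e^{|n|s}=\sum_{k=0}^{K}\frac{(|n|s)^k}{k!}+\frac{|n|^{K+1}}{K!}\int_0^1(1-u)^K s^{K+1}e^{|n|su}\,du$, I would exploit the crucial fact that $\real\kappa>0$ forces $|t|<1$, i.e.\ $\real s\le0$, so that $|e^{|n|su}|=|t|^{|n|u}\le1$ and the remainder is bounded by $|n|^{K+1}|s|^{K+1}/(K+1)!\le C(1+|n|)^{K+1}|\kappa|^{K+1}$. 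Substituting the scalar expansion of $s=\log t(\kappa)$ in $\kappa$ into the polynomial $\sum_{k\le K}(|n|s)^k/k!$ and collecting powers of $\kappa$ then gives $t(\kappa)^{|n|}=\sum_{j=0}^{K}\kappa^j p_j(|n|)+O\bigl((1+|n|)^{K+1}|\kappa|^{K+1}\bigr)$, where each $p_j$ is a polynomial in $|n|$ of degree exactly $j$, its top term $\frac{(-1)^j}{j!}|n|^j$ coming from $k=j$.

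Multiplying this by the Laurent expansion of the scalar $A(\kappa)$ and collecting terms up to order $\kappa^N$ (taking $K=N+1$) produces $R_0(\kappa;n)=\sum_{j=-1}^{N}\kappa^j G_j^0(n)+O\bigl((1+|n|)^{N+2}|\kappa|^{N+1}\bigr)$, where $G_j^0(n)=\sum_{i+l=j}a_i\,p_l(|n|)$ is a polynomial in $|n|$ whose degree is $j+1$, attained through the singular contribution $i=-1$, $l=j+1$. The explicit low-order coefficients in \eqref{12.12.19.8.26} then follow by simply carrying the two scalar expansions out a few steps.

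Finally I would convert the pointwise kernel bounds into operator bounds. Using the elementary inequality $(1+|n-m|)^s\le(1+|n|)^s(1+|m|)^s$, a convolution operator whose kernel is dominated by $C(1+|n|)^s$ maps $\mathcal L^s=\ell^{1,s}$ boundedly into $(\mathcal L^s)^*=\ell^{\infty,-s}$, i.e.\ it lies in $\mathcal B^s$ with norm $\le C'$. Hence each $G_j^0$, a degree-$(j+1)$ polynomial kernel, lies in $\mathcal B^{j+1}$, and the remainder, whose kernel is $O\bigl((1+|n|)^{N+2}|\kappa|^{N+1}\bigr)$, has $\mathcal B^{N+2}$-norm $O(|\kappa|^{N+1})$; since $\mathcal B^{j+1}\subset\mathcal B^{N+2}$ for $j\le N$, this is exactly the asserted expansion \eqref{12.12.1.7.7}. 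The main obstacle is the uniform control of the remainder of $t(\kappa)^{|n|}$: a naive Cauchy estimate on a circle around $\kappa=0$ fails because $|t(\kappa)|>1$ on part of any such contour, so the growth $|t|^{|n|}\to\infty$ would destroy uniformity in $n$. The resolution is precisely to use the integral form of Taylor's remainder together with the sign condition $\real s\le0$ valid on the physical region $\real\kappa>0$, which keeps $|e^{|n|su}|\le1$ and thereby yields the clean weight $(1+|n|)^{K+1}$.
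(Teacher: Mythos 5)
Your proof is correct, and it follows exactly the route the paper intends---combining the kernel formula (\ref{11.1.26.16.23}) with the changes of variables (\ref{11.1.26.15.17}) and (\ref{12.11.8.9.30})---which the paper declares straightforward and omits entirely. What you supply beyond the paper is precisely the implicit detail that makes the claim rigorous: the uniform-in-$n$ control of the remainder of $t(\kappa)^{|n|}$ via the integral form of Taylor's theorem together with $\real s\le 0$ (rather than a Cauchy estimate, which would fail), and the Peetre-type inequality $(1+|n-m|)^s\le(1+|n|)^s(1+|m|)^s$ turning the kernel bounds into $\mathcal B^{j+1}$ and $\mathcal B^{N+2}$ operator bounds.
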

\begin{proof}
The proof is straightforward and is omitted.
\end{proof}

\subsection{Expansion of $M(\kappa)$}\label{12.11.8.3.0}

By (\ref{12.11.8.8.1}) and Proposition~\ref{12.11.9.6.23}
we have the expansion of $M(\kappa)$: 
\begin{proposition}\label{12.11.27.11.18}
Suppose $\beta\ge 1$ in Assumption~\ref{12.11.8.1.9},
and let $N\in [-1,\beta-2]$ be any integer.
Then as $\kappa\to 0$ with $\real\kappa>0$, 
the operator $M(\kappa)$ has the expansion in  $\mathcal B(\mathcal K)$:
\begin{align}
M(\kappa)=\sum_{j=-1}^N\kappa^jM_j+\mathcal O(\kappa^{N+1}),
\label{11.1.27.14.40}
\end{align}
where the coefficients $M_j\in \mathcal B(\mathcal K)$ are given by 
\begin{align}
M_0=U+v^*G_0^0v,\quad
M_j=v^*G_j^0v\mbox{ for }j\neq 0.\label{11.2.1.1.40}
\end{align}
\end{proposition}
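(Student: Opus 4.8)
The plan is to derive the expansion of $M(\kappa)$ directly from the definition (\ref{12.11.8.8.1}) and the already-established expansion of the free resolvent in Proposition~\ref{12.11.9.6.23}. The key observation is that $M(\kappa)=U+v^*R_0(\kappa)v$ sandwiches the free resolvent between the operator $v$ and its adjoint $v^*$, so I can simply conjugate the expansion (\ref{12.12.1.7.7}) of $R_0(\kappa)$ by $v$. The constant operator $U$ contributes only to the zeroth-order term, which accounts for the case distinction in (\ref{11.2.1.1.40}).

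First I would verify that the sandwiching operation $A\mapsto v^*Av$ maps the space in which the free resolvent expansion holds into $\mathcal{B}(\mathcal{K})$. By Proposition~\ref{12.11.9.6.23} the coefficients satisfy $G_j^0\in\mathcal{B}^{j+1}=\mathcal{B}(\mathcal{L}^{j+1},(\mathcal{L}^{j+1})^*)$, and the expansion holds in the topology of $\mathcal{B}^{N+2}$. Since Assumption~\ref{12.11.8.1.9} gives $v\in\mathcal{B}(\mathcal{K},\mathcal{L}^\beta)$ and hence $v^*\in\mathcal{B}((\mathcal{L}^\beta)^*,\mathcal{K})$, the composition $v^*G_j^0v$ is a bounded operator on $\mathcal{K}$ precisely when $j+1\le\beta$, i.e.\ $j\le\beta-1$. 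This is exactly the constraint that forces the range $N\in[-1,\beta-2]$: the remainder term lives in $\mathcal{B}^{N+2}$, so conjugation by $v$ requires $N+2\le\beta$.

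Next I would apply $v^*(\cdot)v$ termwise to (\ref{12.12.1.7.7}). Linearity gives
\begin{equation*}
v^*R_0(\kappa)v=\sum_{j=-1}^N\kappa^j\,v^*G_j^0v+v^*\mathcal{O}(\kappa^{N+1})v,
\end{equation*}
and the remainder is controlled because $v^*(\cdot)v$ is a fixed bounded map from $\mathcal{B}^{N+2}$ into $\mathcal{B}(\mathcal{K})$, so the $\mathcal{O}(\kappa^{N+1})$ estimate in $\mathcal{B}^{N+2}$ transfers to an $\mathcal{O}(\kappa^{N+1})$ estimate in $\mathcal{B}(\mathcal{K})$. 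Adding the constant term $U$ to the $j=0$ coefficient then yields $M_0=U+v^*G_0^0v$ and $M_j=v^*G_j^0v$ for $j\neq0$, which is (\ref{11.2.1.1.40}). Since $\real\kappa>0$ matches the regime $\real\kappa>0$ in which $M(\kappa)$ and $R_0(\kappa)$ are defined, the hypotheses align with no further work.

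This proof is essentially a routine transfer of an existing expansion through a bounded conjugation, so there is no substantial obstacle; the only point requiring genuine care is the bookkeeping of the weighted spaces, namely checking that $v^*G_j^0v\in\mathcal{B}(\mathcal{K})$ holds for all $j$ in the stated range and that this is what pins down the upper bound $N\le\beta-2$. That matching of indices is the one step I would state explicitly rather than omit, since it is where the decay hypothesis $\beta\ge1$ on $v$ enters and determines how many terms the expansion can carry.
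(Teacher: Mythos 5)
Your proof is correct and is exactly the argument the paper has in mind: the paper omits the proof of this proposition, stating only that it is "straightforward by (\ref{12.11.8.8.1}) and Proposition~\ref{12.11.9.6.23}," which is precisely the conjugation-by-$v$ argument you carry out, including the weighted-space bookkeeping ($v\in\mathcal B(\mathcal K,\mathcal L^\beta)$, $G_j^0\in\mathcal B^{j+1}$, remainder in $\mathcal B^{N+2}$) that pins down the range $N\le\beta-2$.
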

\begin{proof}
The proof is straightforward by (\ref{12.11.8.8.1}) and Proposition~\ref{12.11.9.6.23},
and is omitted.
\end{proof}

\section{Expansion scheme of Jensen-Nenciu}\label{12.11.8.2.58}
\subsection{Inversion formula}\label{12.12.19.1.54}
In Section~\ref{12.11.8.3.0} we obtained the expansion of $M(\kappa)$.
Based on this expansion we next want to expand $M(\kappa)^{-1}$. 
In this subsection we provide an
 inversion formula in a 
general setting, adapted from \cite[Corollary 2.2]{JN0}.
We consider the following condition:

\begin{assumption}\label{12.11.9.1.54}
Let ${\mathcal K}$ be a Hilbert space and 
$A(\kappa)$ a family of bounded operators on ${\mathcal K}$
with $\kappa\in D\subset \mathbb{C}\setminus \{0\}$.
Suppose that
\begin{enumerate}
\item
The set $D\subset \mathbb{C}\setminus \{0\}$ 
is invariant under the complex conjugation and accumulates at $0\in \mathbb{C}$.
\item\label{12.12.19.2.58}
For each $\kappa\in D$ the operator $A(\kappa)$ satisfies 
$A(\kappa)^*=A(\overline{\kappa})$ and has a bounded inverse $A(\kappa)^{-1}
\in \mathcal B(\mathcal K)$.
\item
As $\kappa\to 0$ in $D$, the operator $A(\kappa)$ has an expansion
in the uniform topology of the operators at ${\mathcal K}$:
\begin{align}
A(\kappa)=A_0+\kappa \widetilde A_1(\kappa);\quad  \widetilde A_1(\kappa)
={\mathcal O}(1).\label{12.11.9.3.40}
\end{align}
\item
The spectrum of $A_0$ 
does not accumulate at $0\in\mathbb{C}$ as a set.
\end{enumerate}
\end{assumption}

If the leading operator $A_0$ is invertible in $\mathcal B(\mathcal K)$, the Neumann series 
provides an inversion formula for the expansion of $A(\kappa)^{-1}$:
\begin{align*}
A(\kappa)^{-1}=\sum_{j=0}^\infty (-1)^j\kappa^j
A_0^{-1}\bigl[\widetilde A_1(\kappa)A_0^{-1}\bigr]^j.
\end{align*}
The problem is when $A_0$ has no bounded inverse, and in that case,
loosely speaking, we need the help of the lower order remainder term.
Recall that 
by Assumption~\ref{12.11.9.1.54}.\ref{12.12.19.2.58} 
we can still invert $A(\kappa)$ in $\mathcal B(\mathcal K)$ for any $\kappa\in D$.
But the inverse would have norm of order $\kappa^{-1}$ or worse
due to  the remainder.
This is where the negative exponents in the expansion
of $A(\kappa)^{-1}$
could come from.

Before stating the formula rigorously we introduce some terminology.
We define the \textit{pseudoinverse} $a^{\dagger}$ for 
a complex number $a\in \mathbb C$ by
\begin{align}
a^\dagger
=\left\{\begin{array}{ll}
0&\mbox{if $a=0$},\\
a^{-1} & \mbox{if $a\neq 0$}.
\end{array}\right.
\label{13.3.5.22.28}
\end{align} 
Let $\mathcal K'\subset \mathcal K$ be a closed subspace.
We will always identify $\mathcal B(\mathcal K')$
with its embedding in $\mathcal B(\mathcal K)$ in the standard way.
For an operator $A\in \mathcal B(\mathcal K')\subset 
\mathcal B(\mathcal K)$ 
we say that $A$ is \textit{invertible in $\mathcal B(\mathcal K')$} 
if there exists an operator $A^\dagger \in \mathcal B(\mathcal K')$
such that $A^\dagger A=AA^\dagger=1_{\mathcal K'}$,
which we identify with the orthogonal projection onto $\mathcal K'\subset \mathcal K$ as noted.
For a general self-adjoint operator $A$ on $\mathcal K$
we abuse the notation $A^\dagger$ also to denote the operator 
defined by the usual operational calculus 
for the function (\ref{13.3.5.22.28}).
The operator $A^\dagger$ for a self-adjoint operator $A$
belongs to $B(\mathcal K)$ 
if and only if the spectrum of $A$ 
does not accumulate at $0$ as a set,
and in such a case the above two $A^\dagger$ coincide.
In either case we call $A^\dagger$ the \textit{pseudoinverse} of $A$.

\begin{proposition}\label{12.11.9.3.28}
Suppose Assumption~\ref{12.11.9.1.54}.
Let $Q$ be the orthogonal projection onto $\mathop{\mathrm{Ker}} A_0$,
and define the operator $a(\kappa)\in\mathcal B(Q{\mathcal K})$
by
\begin{align}
\begin{split}
a(\kappa)&=\tfrac{1}{\kappa}
\bigl\{1_{Q\mathcal K}-Q(Q+A(\kappa))^{-1}Q\bigr\}\\
&=\sum_{j=0}^\infty (-1)^j\kappa^jQ\widetilde A_1(\kappa)
\bigl[(Q+A_0)^{-1}\widetilde A_1(\kappa)\bigr]^{j}Q.
\end{split}
\label{12.11.9.3.30}
\end{align}
Then $a(\kappa)$ is bounded in ${\mathcal B}(Q{\mathcal K})$ 
as $\kappa\to 0$ in $D$.
Moreover, for each $\kappa\in D$ sufficiently close to $0$ 
the operator $a(\kappa)$ is invertible in $\mathcal B(Q{\mathcal K})$, and 
\begin{align}
A(\kappa)^{-1}
=(Q+A(\kappa))^{-1}
+\tfrac{1}{\kappa}(Q+A(\kappa))^{-1}a(\kappa)^\dagger(Q+A(\kappa))^{-1}.
\label{12.11.9.5.30}
\end{align}
\end{proposition}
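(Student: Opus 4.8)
The plan is to treat the formula as a Schur-complement (Feshbach) inversion with respect to the orthogonal splitting $\mathcal K=Q\mathcal K\oplus(1-Q)\mathcal K$, reducing the inversion of $A(\kappa)$ to an inversion on the subspace $Q\mathcal K$. First I would record the preliminaries forced by Assumption~\ref{12.11.9.1.54}. Letting $\kappa\to0$ in $A(\kappa)^*=A(\overline\kappa)$ together with the expansion (\ref{12.11.9.3.40}) shows $A_0=A_0^*$, so since $\sigma(A_0)$ does not accumulate at $0$, the projection $Q$ onto $\mathop{\mathrm{Ker}}A_0$ is the spectral projection for $\{0\}$ and $Q+A_0$ is invertible with $(Q+A_0)^{-1}=Q+A_0^\dagger$. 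In particular $Q(Q+A_0)^{-1}=(Q+A_0)^{-1}Q=Q$, since $A_0^\dagger$ has range in $(1-Q)\mathcal K$. Writing $B(\kappa)=Q+A(\kappa)=(Q+A_0)+\kappa\widetilde A_1(\kappa)$, the remainder $\kappa\widetilde A_1(\kappa)\to0$ makes $B(\kappa)$ invertible for $\kappa\in D$ small by a Neumann series.

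Next I would establish the two expressions for $a(\kappa)$ and its boundedness simultaneously. Expanding $B(\kappa)^{-1}$ as a Neumann series in $\kappa$ and sandwiching it between the projections $Q$, the identity $Q(Q+A_0)^{-1}=(Q+A_0)^{-1}Q=Q$ collapses the zeroth-order term to $Q$, so that $Q-QB(\kappa)^{-1}Q$ has no constant term and equals $\kappa$ times the convergent series on the right of (\ref{12.11.9.3.30}). Dividing by $\kappa$ yields both the series formula and the fact that $a(\kappa)=\mathcal O(1)$ in $\mathcal B(Q\mathcal K)$ as $\kappa\to0$.

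The heart of the argument is the invertibility of $a(\kappa)$, and this is where the standing hypothesis that $A(\kappa)$ itself is invertible (Assumption~\ref{12.11.9.1.54}.\ref{12.12.19.2.58}) enters. From $B(\kappa)-A(\kappa)=Q$ I get the resolvent-type identity $A(\kappa)^{-1}-B(\kappa)^{-1}=A(\kappa)^{-1}QB(\kappa)^{-1}=B(\kappa)^{-1}QA(\kappa)^{-1}$, whence $QA(\kappa)^{-1}QB(\kappa)^{-1}Q=QA(\kappa)^{-1}Q-QB(\kappa)^{-1}Q$. Setting $b=1_{Q\mathcal K}+QA(\kappa)^{-1}Q\in\mathcal B(Q\mathcal K)$ and recalling $\kappa a(\kappa)=1_{Q\mathcal K}-QB(\kappa)^{-1}Q$, a short computation with this identity gives $b\,(\kappa a(\kappa))=(\kappa a(\kappa))\,b=1_{Q\mathcal K}$. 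Hence $\kappa a(\kappa)$, and therefore $a(\kappa)$, is invertible in $\mathcal B(Q\mathcal K)$ for $\kappa\neq0$ small, with $a(\kappa)^\dagger=a(\kappa)^{-1}=\kappa\bigl(1_{Q\mathcal K}+QA(\kappa)^{-1}Q\bigr)$. The main obstacle is exactly locating this inverse; the key point is that one need not construct it from the expansion of $a(\kappa)$ but can read it off from the given $A(\kappa)^{-1}$ via the Schur-complement relation.

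Finally I would verify (\ref{12.11.9.5.30}). With $a(\kappa)^\dagger$ now known explicitly, I substitute it into the right-hand side to get $B(\kappa)^{-1}+\tfrac1\kappa B(\kappa)^{-1}a(\kappa)^\dagger B(\kappa)^{-1}=B(\kappa)^{-1}+B(\kappa)^{-1}\bigl(Q+QA(\kappa)^{-1}Q\bigr)B(\kappa)^{-1}$, and collapse this using $A(\kappa)^{-1}=B(\kappa)^{-1}+A(\kappa)^{-1}QB(\kappa)^{-1}$ to obtain $A(\kappa)^{-1}$ exactly. Alternatively, treating $a(\kappa)^\dagger$ abstractly via $a(\kappa)^\dagger a(\kappa)=a(\kappa)a(\kappa)^\dagger=Q$ and $Qa(\kappa)^\dagger=a(\kappa)^\dagger Q=a(\kappa)^\dagger$, one checks directly that $A(\kappa)$ times the right-hand side, and the right-hand side times $A(\kappa)$, both equal $1_{\mathcal K}$, using only $QB(\kappa)^{-1}Q=Q-\kappa a(\kappa)$ and the projection relations; I would present whichever of these verifications turns out shorter. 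This completes the proof.
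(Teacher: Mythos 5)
Your proposal is correct and follows essentially the same route as the paper: the paper's proof consists precisely of noting that $Q+A(\kappa)$ is invertible for small $\kappa$, then asserting the identities $(Q+QA(\kappa)^{-1}Q)a(\kappa)=a(\kappa)(Q+QA(\kappa)^{-1}Q)=\tfrac1\kappa Q$ and the two-sided verification of (\ref{12.11.9.5.30}), with the computations omitted as straightforward. Your resolvent-identity derivation of $a(\kappa)^{-1}=\kappa\bigl(1_{Q\mathcal K}+QA(\kappa)^{-1}Q\bigr)$ and the subsequent substitution supply exactly those omitted computations, so the proposal is a correct, fleshed-out version of the paper's own argument.
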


Before the proof let us describe the inversion procedure briefly.
We may assume that the leading operator $A_0$ has a nontrivial kernel.
Then by (\ref{12.11.9.5.30}) and the boundedness of $a(\kappa)$
we can see that 
negative powers of $\kappa$, at least $\kappa^{-1}$,
show up in the expansion of $A(\kappa)^{-1}$.
Assume that we have a higher-order 
expansion of $A(\kappa)$ than (\ref{12.11.9.3.40}),
i.e., $\widetilde A_1(\kappa)=A_1+\kappa \widetilde A_2(\kappa)$,
$\widetilde A_2(\kappa)={\mathcal O}(1)$.
Then the expansion of $a(\kappa)$ follows from (\ref{12.11.9.3.30}):
\begin{align*}
a(\kappa)
=a_0+\kappa\widetilde a_1(\kappa);\quad 
a_0=QA_1Q,\quad
\widetilde a_1(\kappa)={\mathcal O}(1).%\label{12.11.9.3.33}
\end{align*}
If the leading operator $a_0$ is invertible in $\mathcal B(Q\mathcal K)$,
then  substitution of the Neumann series for $a(\kappa)^\dagger$ 
into (\ref{12.11.9.5.30}) yields the expansion of $A(\kappa)^{-1}$.
Otherwise,
by applying Proposition~\ref{12.11.9.3.28} for $a(\kappa)$ again
we obtain the expansion of $a(\kappa)^\dagger$, 
and find that $A(\kappa)^{-1}$ has 
at least a $\kappa^{-2}$ singularity in its expansion.
We repeat this argument.
The iteration procedure stops when we encounter 
a leading operator with trivial kernel.
The asymptotic expansion 
of $A(\kappa)$ is then completed,
and the number of the iterations
gives the worst order of 
the negative powers of $\kappa$.

We note that the sequence of the kernels of the leading operators
is monotone non-increasing, however, the inversion procedure may not stop after a finite number of steps.
But in our application
this procedure always stops after a few steps, due to the self-adjointness of $H$. 
We can also prove this finiteness in a direct way.
We will see this in Sections~\ref{12.11.24.5.48} and 
\ref{12.12.19.2.2}, respectively.
\begin{proof}[Proof of Proposition~\ref{12.11.9.3.28}]
We first note that for each
$\kappa\in D$ close to $0$ the operator 
$Q+A(\kappa)$ has a bounded inverse,
and hence $a(\kappa)$ is well-defined.
Then the assertion follows if we can verify
\begin{align*}
(Q+QA(\kappa)^{-1}Q)a(\kappa)=a(\kappa)(Q+QA(\kappa)^{-1}Q)
=\tfrac1{\kappa} Q,
\end{align*}
and
\begin{align*}
1_{{\mathcal K}}&=\Bigl[(Q+A(\kappa))^{-1}
+\tfrac{1}{\kappa}(Q+A(\kappa))^{-1}a(\kappa)^\dagger(Q+A(\kappa))^{-1}\Bigr]
A(\kappa)\\
&=A(\kappa)\Bigl[(Q+A(\kappa))^{-1}
+\tfrac{1}{\kappa}(Q+A(\kappa))^{-1}a(\kappa)^\dagger(Q+A(\kappa))^{-1}\Bigr].
\end{align*}
But these are straightforward and we omit the computations.
\end{proof}

\subsection{Finiteness of inversion iteration}\label{12.11.24.5.48}

In this subsection we prove that the inversion iteration, when applied to $M(\kappa)$,  stops after a finite number of steps when $\beta\ge 4$.
We also introduce the intermediate operators $P$, $m_0$, $q_0$ 
and $r_0$, outlining the procedure of
the expansion for the resolvent $R(\kappa)$.
The detailed computations 
will be given in Section~\ref{12.12.17.20.56}.

Let $P\in\mathcal B(\mathcal K)$ be the orthogonal projection
onto the subspace spanned by 
$v^*\mathbf 1\in \mathcal K$.
We can write it as
\begin{align}
P
=\tfrac{\gamma}2\langle \Phi_1, \cdot\rangle \Phi_1;
\quad 
\gamma =2\|\Phi_1\|^{\dagger 2},\ 
\Phi_1=v^*\mathbf1.
\label{12.11.24.22.25}
\end{align} 
In order to ensure the finiteness of the iteration procedure it suffices to 
assume $\beta\ge 4$, at worst.
This lower bound of $\beta$ can be improved if we come across 
invertible $P$, $m_0$ or $q_0$ before $r_0$,
but for the moment we let $\beta\ge 4$.
Under this assumption we can write (\ref{11.1.27.14.40}) as 
\begin{align}
M(\kappa)=\tfrac{\gamma^\dagger}{\kappa}
P+M_0+\kappa M_1+\kappa^2M_2+{\mathcal O}(\kappa^3).\label{12.11.13.6.20}
\end{align}
Note that, as a consequence of (\ref{12.11.13.19.3}) and the self-adjointness
of $H$, we already know that $M(\kappa)^{-1}$ satisfies, at worst,
\begin{align}
M(\kappa)^{-1}={\mathcal O}(\kappa^{-2}).\label{12.11.24.4.17}
\end{align}
If $P$ is invertible in $\mathcal B(\mathcal K)$, i.e., $\Phi_1\neq0$ and
$\mathcal K=\mathbb C \Phi_1$, 
we can use the Neumann series to 
compute the inverse $M(\kappa)^{-1}$.
Then the inversion procedure for $M(\kappa)^{-1}$ stops, and $R(\kappa)$ can be expanded by using Proposition~\ref{12.11.13.19.5}.
Note that in this case we in fact need only $\beta\ge 1$.
Hence we may assume that $P$ is not invertible.
Apply Proposition~\ref{12.11.9.3.28} for $\kappa M(\kappa)$,
and then we obtain
\begin{align}
\begin{split}
M(\kappa)^{-1}
=\kappa(Q+\kappa M(\kappa))^{-1}
+(Q+\kappa M(\kappa))^{-1}m(\kappa)^\dagger(Q+\kappa M(\kappa))^{-1},
\end{split}
\label{12.11.11.12.35}
\end{align}
where $Q=1_{\mathcal K}-P$ and $m(\kappa)\in\mathcal B(Q\mathcal K)$ has an expansion of the form:
\begin{align}
m(\kappa)=m_0+\kappa m_1+\kappa^2m_2+\mathcal O(\kappa^3),\quad
m_j\in\mathcal B(Q\mathcal K).\label{12.11.27.8.11}
\end{align}
The explicit expressions for $m_j$ 
are listed in Lemma~\ref{12.12.19.4.4}.
Now it is reduced to expand $m(\kappa)^{\dagger}$
in $\mathcal B(\mathcal K)$.
Note that from (\ref{12.11.11.12.35}) and (\ref{12.11.24.4.17}) 
it follows 
\begin{align}
m(\kappa)^{\dagger}
=QM(\kappa)^{-1}Q-\kappa Q
={\mathcal O}(\kappa^{-2}).\label{12.11.13.21.51}
\end{align}
Similarly to the above, if $m_0$ is invertible in $\mathcal B(Q\mathcal K)$, then $m(\kappa)^{\dagger}$ can 
be computed by the Neumann series, in fact with $\beta\ge 2$. 
Hence we may assume $m_0$ is not invertible in $\mathcal B(Q\mathcal K)$,
and then by Proposition~\ref{12.11.9.3.28}
\begin{align}
m(\kappa)^{\dagger}
=(S+m(\kappa))^{\dagger}+\tfrac1{\kappa}(S+m(\kappa))^{\dagger}
q(\kappa)^{\dagger}(S+m(\kappa))^{\dagger},\label{12.11.27.6.30}
\end{align}
where $S\in\mathcal B(\mathcal K)$ is the orthogonal projection onto 
$Q\mathcal K\cap\mathop{\mathrm{Ker}}m_0$ 
and 
$q(\kappa)\in \mathcal B(S\mathcal K)$ has an expansion of the form:
\begin{align}
q(\kappa)=q_0+\kappa q_1+\mathcal O(\kappa^2),\quad
q_j\in \mathcal B(S\mathcal K).
\label{12.12.1.1.50}
\end{align}
Next, we expand $q(\kappa)^{\dagger}$ in $\mathcal B(\mathcal K)$.
Similarly to the above by (\ref{12.11.27.6.30}) and (\ref{12.11.13.21.51})
we have 
\begin{align}
q(\kappa)^{\dagger}=\kappa (Sm(\kappa)^{\dagger}S-S)
=\mathcal O(\kappa^{-1}).
\label{12.11.27.4.31}
\end{align}
Assuming $q_0$ is not invertible in $\mathcal B(S\mathcal K)$, we have by Proposition~\ref{12.11.9.3.28}
\begin{align}
q(\kappa)^{\dagger}
&=(T+q(\kappa))^{\dagger}+\tfrac1{\kappa}(T+q(\kappa))^{\dagger}
r(\kappa)^{\dagger}(T+q(\kappa))^{\dagger},
\label{12.11.27.6.43}
\end{align}
where $T\in\mathcal B(\mathcal K)$ is the orthogonal projection onto 
$S\mathcal K\cap \mathop{\mathrm{Ker}}q_0$
and $r(\kappa)\in \mathcal B(T\mathcal K)$ has an expansion of the form:
\begin{align*}
r(\kappa)=r_0+\mathcal O(\kappa),\quad r_0\in \mathcal B(T\mathcal K).
\end{align*}
By (\ref{12.11.27.6.43}) and (\ref{12.11.27.4.31}) we have 
\begin{align*}
r(\kappa)^{\dagger}=\kappa(Tq(\kappa)^{\dagger}T-T)=\mathcal O(1).
\end{align*}
This implies that $r_0$ has to be invertible
in $\mathcal B(T\mathcal K)$,
and now the iteration stops.
Note that we can also show the finiteness by directly computing $r_0$, cf.\ Corollary~\ref{12.12.28.10.55}.

Finally for  later use we collect some of coefficients of 
$m(\kappa)$, $q(\kappa)$
and $r(\kappa)$.
\begin{lemma}\label{12.12.19.4.4}
One has the explicit formulas: For $\beta\ge 1$
\begin{align*}
m_0&=QM_0Q;\\
\intertext{For $\beta\ge 2$}
m_1&=QM_1Q-QM_0(Q+\gamma P)M_0Q,\\
q_0&=Sm_1S;\\
\intertext{For $\beta\ge 3$}
m_2&=QM_2Q-QM_0(Q+\gamma P)M_1Q
-QM_1(Q+\gamma P)M_0Q\\
&\phantom{={}}+QM_0(Q+\gamma P)
M_0(Q+\gamma P)M_0Q,\\
q_1&=Sm_2S-Sm_1(S+m_0^\dagger)m_1S,\\
r_0&=Tq_1T.
\end{align*}
\end{lemma}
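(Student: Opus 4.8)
The plan is to read every coefficient off the Neumann-type series (\ref{12.11.9.3.30}) of Proposition~\ref{12.11.9.3.28} and match equal powers of $\kappa$. The formula is to be applied three times in succession, exactly along the reduction (\ref{12.11.11.12.35})--(\ref{12.11.27.6.43}): first to $\kappa M(\kappa)$ to produce $m_0,m_1,m_2$, then to $m(\kappa)$ to produce $q_0,q_1$, and finally to $q(\kappa)$ to produce $r_0$. Beyond routine bookkeeping, the only inputs are the explicit inverses of the three ``shifted'' operators that occur inside the series, each of which I would compute from self-adjointness and the algebra of the relevant orthogonal projections.

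For the first application I would write (\ref{12.11.13.6.20}) as $\kappa M(\kappa)=\gamma^\dagger P+\kappa\widetilde A_1(\kappa)$ with $\widetilde A_1(\kappa)=M_0+\kappa M_1+\kappa^2M_2+\mathcal O(\kappa^3)$, so that the leading operator is $A_0=\gamma^\dagger P$ with $\mathop{\mathrm{Ker}}A_0=Q\mathcal K$. Since $P$ and $Q=1_{\mathcal K}-P$ are complementary orthogonal projections and $\gamma^\dagger\gamma=1$ whenever $\Phi_1\neq0$, direct multiplication gives $(Q+\gamma^\dagger P)(Q+\gamma P)=1_{\mathcal K}$, i.e.\ $(Q+A_0)^{-1}=Q+\gamma P$. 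Substituting into (\ref{12.11.9.3.30}),
\begin{align*}
m(\kappa)=\sum_{j=0}^\infty(-1)^j\kappa^jQ\widetilde A_1(\kappa)\bigl[(Q+\gamma P)\widetilde A_1(\kappa)\bigr]^jQ,
\end{align*}
and collecting the terms of order $\kappa^0,\kappa^1,\kappa^2$ (the contributions of $j=0,1,2$) yields $m_0,m_1,m_2$ precisely as stated.

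The second and third applications are structurally identical. Applying Proposition~\ref{12.11.9.3.28} to $m(\kappa)=m_0+\kappa m_1+\kappa^2m_2+\mathcal O(\kappa^3)$ on $Q\mathcal K$, the leading operator is $m_0$ and the kernel projection is $S$; because $m_0$ is self-adjoint one has $m_0m_0^\dagger=1_{Q\mathcal K}-S$, hence $(S+m_0)^{-1}=S+m_0^\dagger$. Feeding this into the series and collecting the $\kappa^0$ and $\kappa^1$ terms gives $q_0=Sm_1S$ and $q_1=Sm_2S-Sm_1(S+m_0^\dagger)m_1S$. The third application, to $q(\kappa)=q_0+\kappa q_1+\mathcal O(\kappa^2)$ on $S\mathcal K$, uses $(T+q_0)^{-1}=T+q_0^\dagger$ in the same way, and its $\kappa^0$ term is $r_0=Tq_1T$.

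I expect the genuinely delicate point to be the accounting of orders and the matching regularity thresholds, rather than the algebra. Each coefficient is meaningful only once the operator it is extracted from has been expanded to the corresponding order, which via Proposition~\ref{12.11.27.11.18} forces a lower bound on $\beta$: $m_0=QM_0Q$ needs only $M_0$ (so $\beta\ge1$), $m_1$ and $q_0$ bring in $M_1$ (so $\beta\ge2$), and $m_2,q_1,r_0$ bring in $M_2$ (so $\beta\ge3$). I would justify these ranges by checking that $G_j^0\in\mathcal B^{j+1}$ makes $M_j=v^*G_j^0v\in\mathcal B(\mathcal K)$ well-defined exactly when $\beta\ge j+1$. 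The subtle part is that reading off, say, $m_2$ as a genuine $\kappa^2$-coefficient requires the expansion of $M(\kappa)$ carried one order further than the $\mathcal O$-remainder of Proposition~\ref{12.11.27.11.18} supplies at $\beta\ge3$; I would handle this by recording the slightly sharper little-$o$ form of the expansion, which suffices to identify the coefficients at the stated regularity. The remaining work is simply the lengthy but mechanical expansion of the three Neumann series, best organized by first writing down $Q\widetilde A_1(\kappa)Q$ together with the singly- and doubly-sandwiched products before assembling the coefficients.
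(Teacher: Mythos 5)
Your proposal is correct and follows exactly the paper's route: the paper's proof is the one-line remark that the coefficients ``are computed from the formula (\ref{12.11.9.3.30})'', and your three iterated applications of Proposition~\ref{12.11.9.3.28} with the explicit inverses $(Q+\gamma^{\dagger}P)^{-1}=Q+\gamma P$, $(S+m_0)^{-1}=S+m_0^{\dagger}$, $(T+q_0)^{-1}=T+q_0^{\dagger}$ are precisely that computation spelled out. Your remark on the $\beta$-thresholds also matches the paper's own Remark following the lemma (the stated bounds are the least ones making the definitions meaningful, with one extra order of decay needed later to dominate error terms).
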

\begin{proof}
These coefficients are computed from the formula (\ref{12.11.9.3.30}).
\end{proof}
\begin{remark}
These formulas will be used in Section~\ref{12.12.17.20.56}.
The above lower bounds for $\beta$ are the least ones needed 
for the definitions to make sense.
In the actual expansions where these coefficients appear
we need to increase the lower bound by $1$ in order to dominate the error terms.
\end{remark}

\section{Intermediate operators}\label{12.12.19.2.5}

In Section~\ref{12.11.24.5.48} we saw that 
the singular part of the expansion of $R(\kappa)$
depends heavily on the 
operators $P$, $m_0$, $q_0$ and $r_0$.
In this section, as a preliminary step 
before the detailed computations of
the expansion, 
we investigate these operators $P$, $m_0$, $q_0$, $r_0$
and, in particular, their kernels,
cf.\ Propositions~\ref{12.12.19.6.30}--\ref{12.12.19.6.33}.

\subsection{Characterization of eigenspaces}\label{12.11.24.5.28}

\begin{lemma}\label{12.11.24.18.24}
For any $x\in\mathcal L^s$, $s\ge 1$,
the sequence $G_0^0x\in \mathcal L^*$
has the representations:
\begin{align}
(G_0^0x)[n]&=
-\tfrac{n}{2}\langle \mathbf 1,x\rangle
+\tfrac{1}{2}\langle \mathbf n,x\rangle
-\sum_{k\ge n}(k-n)x[k]\label{12.11.24.16.17}\\
&=
\tfrac{n}{2}\langle \mathbf 1,x\rangle
-\tfrac{1}{2}\langle \mathbf n,x\rangle
-\sum_{k\le n}(n-k)x[k].\label{12.11.24.16.18}
\end{align}
Moreover, $G_0^0x\in(\mathcal L^0)^*$ if and only 
if $\langle \mathbf 1,x\rangle=0$,
and $G_0^0x\in \mathcal L^{s-2}$ 
if and only if $\langle \mathbf 1,x\rangle=\langle \mathbf n,x\rangle=0$.
\end{lemma}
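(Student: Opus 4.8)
The plan is to compute $G_0^0x$ directly from the explicit convolution kernel and then read off the decay criteria. Recall from Proposition~\ref{12.11.9.6.23} that $G_0^0$ acts as convolution with $G_0^0(n)=-\tfrac12|n|$. Thus for $x\in\mathcal L^s$ with $s\ge1$ (so that $x$ has enough decay for all the sums below to converge absolutely), I would write
\[
(G_0^0x)[n]=-\tfrac12\sum_{k\in\mathbb Z}|n-k|\,x[k].
\]
The core idea is to split $|n-k|$ according to the sign of $n-k$ and then rewrite the resulting expressions so that the "non-local, non-decaying" parts are collected into the inner products $\langle\mathbf 1,x\rangle$ and $\langle\mathbf n,x\rangle$, leaving a genuinely decaying tail sum.

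**For the first representation (\ref{12.11.24.16.17})**, I would treat the two regimes $k<n$ and $k\ge n$ separately, writing $|n-k|=n-k$ in the former and $|n-k|=k-n$ in the latter. The key algebraic manipulation is to express $\sum_{k<n}(n-k)x[k]$ in terms of the full sums $\langle\mathbf 1,x\rangle=\sum_k x[k]$ and $\langle\mathbf n,x\rangle=\sum_k k\,x[k]$ by adding and subtracting the complementary tail $\sum_{k\ge n}$. Concretely, using $\sum_{k<n}(n-k)x[k]=\sum_k(n-k)x[k]-\sum_{k\ge n}(n-k)x[k]=n\langle\mathbf 1,x\rangle-\langle\mathbf n,x\rangle+\sum_{k\ge n}(k-n)x[k]$, and combining with the $k\ge n$ contribution, one arrives at (\ref{12.11.24.16.17}) after the factor $-\tfrac12$ is distributed. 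Formula (\ref{12.11.24.16.18}) follows by the symmetric computation, or alternatively by subtracting (\ref{12.11.24.16.17}) from the identity $n\langle\mathbf 1,x\rangle-\langle\mathbf n,x\rangle=-\sum_k(n-k)x[k]$; I would verify that the two representations agree, which amounts to the trivial identity $\sum_k(n-k)x[k]=n\langle\mathbf 1,x\rangle-\langle\mathbf n,x\rangle$.

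**For the decay characterizations**, I would read them off from (\ref{12.11.24.16.17})--(\ref{12.11.24.16.18}). The two inner-product terms contribute $-\tfrac{n}{2}\langle\mathbf 1,x\rangle+\tfrac12\langle\mathbf n,x\rangle$, which grows linearly in $n$ unless its coefficient vanishes, while the tail sum $\sum_{k\ge n}(k-n)x[k]$ decays as $n\to+\infty$ and, by the symmetric expression, $\sum_{k\le n}(n-k)x[k]$ decays as $n\to-\infty$. Hence $G_0^0x$ is bounded (in $(\mathcal L^0)^*$) precisely when the linear-in-$n$ term is absent, i.e.\ $\langle\mathbf 1,x\rangle=0$; granting this, the surviving term is the constant $\tfrac12\langle\mathbf n,x\rangle$, so membership in $\mathcal L^{s-2}$ (which requires decay, not merely boundedness) forces $\langle\mathbf n,x\rangle=0$ as well, and conversely when both vanish I would estimate the tail sums to confirm the $\mathcal L^{s-2}$ decay rate using $x\in\mathcal L^s$.

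**The main obstacle** I anticipate is the quantitative tail estimate in the final step: one must show that when $\langle\mathbf 1,x\rangle=\langle\mathbf n,x\rangle=0$ the remaining sum $\sum_{k\ge n}(k-n)x[k]$ actually lies in $\mathcal L^{s-2}$, i.e.\ that $\sum_n(1+n^2)^{(s-2)/2}\bigl|\sum_{k\ge n}(k-n)x[k]\bigr|<\infty$. This requires bounding $(k-n)\le k$ (for $k\ge n\ge0$) and interchanging the order of summation to reduce the weighted tail norm to $\|x\|_{1,s}$, losing exactly two powers of the weight — which is why the target space is $\mathcal L^{s-2}$ rather than $\mathcal L^s$. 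The algebraic rewriting in the first two steps is routine once the sign-splitting is organized correctly; the bookkeeping with the two inner products is the only place where a sign slip could creep in, so I would double-check consistency by comparing the $n\to+\infty$ behaviour predicted by (\ref{12.11.24.16.17}) against the $n\to-\infty$ behaviour predicted by (\ref{12.11.24.16.18}).
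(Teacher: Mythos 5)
Your proposal matches the paper's own proof essentially step for step: the paper likewise starts from the convolution kernel $-\tfrac12|n-k|$ split at $k=n$, extracts the moments $\langle \mathbf 1,x\rangle$ and $\langle \mathbf n,x\rangle$ to obtain (\ref{12.11.24.16.17})--(\ref{12.11.24.16.18}), and proves the decay characterizations by the same interchange-of-summation estimate $\sum_{n\ge 0}(1+n^2)^{(s-2)/2}\bigl|\sum_{k\ge n}(k-n)x[k]\bigr|\le C\|x\|_{1,s}$ combined with the observation that the linear and sign-type terms (equivalently $|\mathbf n|$ and $\bm{\sigma}$) do not lie in $\mathcal L^{s-2}$. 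There are no substantive differences in approach.
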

\begin{proof}
The identities (\ref{12.11.24.16.17}) and (\ref{12.11.24.16.18})
follow immediately from the formula
\begin{align*}
(G^0_0x)[n]
&=-\tfrac{1}{2}\sum_{k\ge n}(k-n)x[k]
-\tfrac{1}{2}\sum_{k\le n}(n-k)x[k].
\end{align*}
As for the last assertions, we note that 
\begin{align*}
\sum_{n\ge 0}(1+n^2)^{(s-2)/2}\Bigl|\sum_{k\ge n}(k-n)x[k]\Bigr|
\le C\|x\|_{1,s}<\infty.
\end{align*}
This implies
that the last summation of (\ref{12.11.24.16.17})
belongs to $\ell^{1,s-2}(\mathbb Z_+)$, and similarly for
that of (\ref{12.11.24.16.18}).
Since we have 
$\mbox{\boldmath$\sigma $},|\mathbf n|\notin \mathcal L^{s-2}$ for 
$s\ge 1$, we are done.
\end{proof}

\begin{lemma}\label{13.1.18.6.0}
The compositions $H_0G_0^0$ and $G_0^0H_0$ are 
the identity on $\mathcal L$:
\begin{align}
H_0G_0^0=G_0^0H_0=1_{\mathcal L}.
\label{13.1.18.20.17}
\end{align}
Moreover, the composition $G_0^0H_0$ is 
well-defined also on the extended space
$\mathbb C\mathbf n\oplus 
\mathbb C|\mathbf n|\oplus
\mathbb C\mathbf 1\oplus
\mathbb C\mbox{\boldmath$\sigma $}
\oplus\mathcal L$, and 
coincides with the projection $\Pi$ given as follows:
\begin{align}
G_0^0H_0=\Pi \colon 
&
\mathbb C\mathbf n\oplus 
\mathbb C|\mathbf n|\oplus
\mathbb C\mathbf 1\oplus
\mathbb C\mbox{\boldmath$\sigma$}
\oplus\mathcal L
\to 
\mathbb C|\mathbf n|\oplus
\mathbb C\mbox{\boldmath$\sigma$}\oplus
\mathcal L.\label{13.1.18.20.18}
\end{align}
\end{lemma}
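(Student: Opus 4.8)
The plan is to establish the two assertions in turn, beginning with the identity $H_0G_0^0=G_0^0H_0=1_{\mathcal L}$ on $\mathcal L$ and then extending $G_0^0H_0$ to the larger space and identifying it with the projection $\Pi$. For the first identity, I would work directly with the explicit kernel
\begin{align*}
(G_0^0x)[n]=-\tfrac12\sum_{k\ge n}(k-n)x[k]-\tfrac12\sum_{k\le n}(n-k)x[k]
\end{align*}
from Lemma~\ref{12.11.24.18.24}, and simply apply the difference operator $H_0$ to it. The key computation is to evaluate the discrete second difference $-\bigl((G_0^0x)[n+1]+(G_0^0x)[n-1]-2(G_0^0x)[n]\bigr)$; the second differences of the piecewise-linear kernel $|k-n|$ produce a telescoping that collapses to a single point mass at $k=n$, yielding $x[n]$. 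This is the heart of the matter but is a routine finite computation once the kernel is written out. For $G_0^0H_0=1_{\mathcal L}$ on $\mathcal L$ I would exploit the self-adjointness / symmetry of the convolution kernel together with the fact that $H_0$ maps $\mathcal L$ into a subspace on which $G_0^0$ is defined, so that the two orderings agree by a transpose argument; alternatively one computes it directly in the same way.

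For the extension in \eqref{13.1.18.20.18}, the first step is to check that $G_0^0H_0$ makes sense on each of the generating sequences $\mathbf n,|\mathbf n|,\mathbf 1,\bm\sigma$ and on $\mathcal L$. The point is that although $\mathbf n,|\mathbf n|,\mathbf 1,\bm\sigma\notin\mathcal L$, their images under $H_0$ do lie in $\mathcal L$ — in fact $H_0$ annihilates the affine part $\mathbf n$ and $\mathbf 1$ entirely (these are the two linearly independent solutions of $H_0x=0$), and maps $|\mathbf n|$ and $\bm\sigma$ to finitely supported sequences, since the second difference of a piecewise-linear sequence is supported where the slope changes. Thus $H_0$ sends the whole extended space into $\mathcal L$, and $G_0^0H_0$ is well-defined there.

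The main work is then to verify that on this extended domain $G_0^0H_0$ acts as the claimed projection $\Pi$, namely that it kills $\mathbf n$ and $\mathbf 1$ while fixing $|\mathbf n|$, $\bm\sigma$, and $\mathcal L$. That $\mathbf n,\mathbf 1\in\operatorname{Ker}H_0$ disposes of those two summands immediately. For $|\mathbf n|$ and $\bm\sigma$ I would compute $H_0|\mathbf n|$ and $H_0\bm\sigma$ explicitly — each is a small combination of $\delta$-sequences near $n=0$ — apply the kernel of $G_0^0$ to these, and use the moment conditions $\langle\mathbf 1,\cdot\rangle$ and $\langle\mathbf n,\cdot\rangle$ from Lemma~\ref{12.11.24.18.24} to see that the linear-growth terms that $G_0^0$ could a priori produce in fact cancel, recovering $|\mathbf n|$ and $\bm\sigma$ up to elements already in the range. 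On $\mathcal L$ the action is the identity by the first part of the lemma. The one subtlety, and the step I expect to be the main obstacle, is the bookkeeping at the origin: because $G_0^0$ is only defined up to the $\operatorname{Ker}H_0$ ambiguity $\mathbb C\mathbf 1\oplus\mathbb C\mathbf n$, I must track the constant and linear correction terms carefully and confirm via the moment identities that $G_0^0H_0|\mathbf n|$ returns exactly $|\mathbf n|$ (not $|\mathbf n|$ plus a spurious affine piece), and likewise for $\bm\sigma$; getting these finitely many boundary coefficients right is where the computation is most delicate, but it reduces to checking the moments of the finitely supported sequences $H_0|\mathbf n|$ and $H_0\bm\sigma$.
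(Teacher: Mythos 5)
Your proposal is correct and takes essentially the same route as the paper's own proof: verify $H_0G_0^0=G_0^0H_0=1_{\mathcal L}$ by direct computation with the explicit kernel of $G_0^0$ from Lemma~\ref{12.11.24.18.24}, then note $H_0\mathbf n=H_0\mathbf 1=0$ and compute $G_0^0H_0|\mathbf n|=|\mathbf n|$, $G_0^0H_0\bm{\sigma}=\bm{\sigma}$ to get the projection property. The paper merely asserts these computations; your write-up fills in exactly the details it omits (the telescoping of second differences of $|k-n|$, and the fact that $H_0|\mathbf n|$ and $H_0\bm{\sigma}$ are finitely supported sequences to which the kernel applies and returns $|\mathbf n|$ and $\bm{\sigma}$ with no spurious affine part).
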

\begin{remark}
Lemmas~\ref{12.11.24.18.24} and \ref{13.1.18.6.0} in particular imply that 
for $s\ge 1$
\begin{align}
\begin{split}
\mathbb C|\mathbf n|\oplus
\mathbb C\mbox{\boldmath$\sigma$}\oplus
\mathcal L^s
\subset 
G_0^0(\mathcal L^s)
\subset 
\mathbb C|\mathbf n|\oplus
\mathbb C\mbox{\boldmath$\sigma$}\oplus
\mathcal L^{s-2}.
\end{split}\label{13.1.16.1.52}
\end{align}
Thus, the sequences belonging to the image $G_0^0(\mathcal L)$
have quasi-symmetric asymptotics at infinity.
\end{remark}
\begin{proof}
For example by Lemma~\ref{12.11.24.18.24} 
we can easily verify $H_0G_0^0x=G_0^0H_0=x$ for 
$x\in\mathcal L$,
and this implies (\ref{13.1.18.20.17}).
We can also compute 
\begin{align*}
H_0\mathbf n=H_0\mathbf 1=0,\quad
G_0^0H_0|\mathbf n|=|\mathbf n|,\quad
G_0^0H_0\mbox{\boldmath$\sigma$}=\mbox{\boldmath$\sigma$},
\end{align*}
and thus the second assertion (\ref{13.1.18.20.18}) follows.
\end{proof}

We set 
\begin{align}
\begin{split}
\Psi_1^0&= \mathbf 1,\quad
\Psi_2^0=\mathbf n-\langle v^*\mathbf 1,v^*\mathbf n\rangle\|v^*\mathbf 1\|^{\dagger 2} \mathbf 1,\\
\Phi_1&=v^*\Psi_1^0,\quad
\Phi_1^*=\|\Phi_1\|^{\dagger 2}\Phi_1,\\
\Phi_2&=v^*\Psi_2^0=Qv^*\mathbf n,\quad
\Phi_2^*=\|\Phi_2\|^{\dagger 2}\Phi_2,
\end{split}\label{13.3.20.2.33}
\end{align}
where $\Phi_1$ is the same as that in (\ref{12.11.24.22.25}),
and define $\widetilde P,\widetilde Q\in \mathcal B(\mathcal K)$ by
\begin{align*}
\widetilde P
=\langle \Phi_1^*,\cdot\rangle \Phi_1
+\langle \Phi_2^*,\cdot\rangle \Phi_2,\quad 
\widetilde Q=1_{\mathcal K}-\widetilde P.
\end{align*}
Obviously, $\widetilde P$ is the orthogonal projection onto 
the subspace $\mathbb C\Phi_1\oplus \mathbb C\Phi_2\subset \mathcal K$.
We define the operators $w\in\mathcal B((\mathcal L^\beta)^*,\mathcal K)$
and $z\in \mathcal B(\mathcal K, \mathcal L^*)$ by 
\begin{align}
w=Uv^*,\quad 
z
=\langle M_0\Phi_1^*,\cdot\rangle \Psi_1^0
+\langle M_0\Phi_2^*,\cdot\rangle \Psi_2^0
-G_0^0v.
\label{12.12.27.16.4}
\end{align}
\begin{proposition}\label{12.12.27.16.6}
Suppose $\beta\ge 1$ in Assumption~\ref{12.11.8.1.9}.
Then 
\begin{align}
w\circ z|_{\mathop{\mathrm{Ker}}\widetilde QM_0}
=1_{\mathop{\mathrm{Ker}}\widetilde QM_0},\label{12.11.26.19.10}
\end{align}
and 
\begin{align}
z^{-1}(\widetilde{\mathcal E})&=\mathop{\mathrm{Ker}} \widetilde QM_0,
&
\mathop{\mathrm{Ker}}w|_{\widetilde{\mathcal E}}
&=\mathop{\mathrm{Ker}} v^*|_{\mathbb C\Psi_1^0\oplus \mathbb C\Psi_2^0},
\label{12.11.26.19.0}\\
z^{-1}(\mathcal E)
&=\mathop{\mathrm{Ker}} P\cap \mathop{\mathrm{Ker}}QM_0,&
\mathop{\mathrm{Ker}}w|_{\mathcal E}
&=\mathop{\mathrm{Ker}} v^*|_{\mathbb C\Psi_1^0},
\label{12.11.26.19.1}\\
z^{-1}(E)
&=\mathop{\mathrm{Ker}} \widetilde P\cap \mathop{\mathrm{Ker}}M_0,&
\mathop{\mathrm{Ker}}w|_{E}
&=\{0\},
\label{12.11.27.2.28}\\
z^{-1}(\widetilde{\mathcal E}_{\mathrm{qs}})
&=\mathop{\mathrm{Ker}}M_0,&
\mathop{\mathrm{Ker}}w|_{\widetilde{\mathcal E}_{\mathrm{qs}}}
&=\{0\}.
\label{12.11.27.2.29}
\end{align}
\end{proposition}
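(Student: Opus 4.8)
The plan is to reduce everything to two master identities, valid for every $u\in\mathcal K$, after which all eight assertions follow by elementary bookkeeping. I would first compute $v^*z$. Using $v^*\Psi_1^0=\Phi_1$, $v^*\Psi_2^0=\Phi_2$, the relation $v^*G_0^0v=M_0-U$ read off from (\ref{11.2.1.1.40}), and the self-adjointness of $M_0$ (so that $\langle M_0\Phi_j^*,u\rangle=\langle\Phi_j^*,M_0u\rangle$ and the rank-two part of $z$ collapses to $\widetilde PM_0u$), one obtains
\[
v^*zu=\widetilde PM_0u-(M_0-U)u=Uu-\widetilde QM_0u.
\]
Applying $U$ and using $U^2=1_{\mathcal K}$ gives $wzu=Uv^*zu=u-U\widetilde QM_0u$, which is exactly $1_{\mathcal K}$ on $\mathop{\mathrm{Ker}}\widetilde QM_0$; this is (\ref{12.11.26.19.10}).

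For the second identity I would compute $Hzu$. Since $H_0\Psi_1^0=H_0\Psi_2^0=0$ (both $\mathbf 1,\mathbf n\in\mathop{\mathrm{Ker}}H_0$) and $H_0G_0^0=1_{\mathcal L}$ by Lemma~\ref{13.1.18.6.0} applied to $vu\in\mathcal L$, we get $H_0zu=-vu$, while $Vzu=vUv^*zu=vu-vU\widetilde QM_0u$, whence
\[
Hzu=-vU\widetilde QM_0u.
\]
Because $v$ is injective, $U$ invertible, and $zu\in\mathcal L^*\subset(\mathcal L^\beta)^*$ for every $u$, this yields $zu\in\widetilde{\mathcal E}\iff\widetilde QM_0u=0$, i.e.\ the first half of (\ref{12.11.26.19.0}).

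For the three finer preimages I would extract the coefficients of $zu$ along $\mathbf n,|\mathbf n|,\mathbf 1,\bm{\sigma}$ in the decomposition of Theorem~\ref{13.3.7.13.33}. The rank-two part of $z$ feeds only $\mathbf n$ and $\mathbf 1$, while the asymptotics of $-G_0^0(vu)$ follow from Lemma~\ref{12.11.24.18.24}: the leading part is $\tfrac12\langle\Phi_1,u\rangle|\mathbf n|-\tfrac12\langle v^*\mathbf n,u\rangle\bm{\sigma}$ with tail in $\mathcal L^{\beta-2}$. Hence the four coefficients are $\langle M_0\Phi_2^*,u\rangle$, $\tfrac12\langle\Phi_1,u\rangle$, $\langle M_0\Phi_1^*,u\rangle-c\langle M_0\Phi_2^*,u\rangle$, and $-\tfrac12\langle v^*\mathbf n,u\rangle$, where $c=\langle v^*\mathbf 1,v^*\mathbf n\rangle\|v^*\mathbf 1\|^{\dagger2}$. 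For $u\in\mathop{\mathrm{Ker}}\widetilde QM_0$ (where $M_0u\in\mathbb C\Phi_1\oplus\mathbb C\Phi_2$), membership of $zu$ in $\mathcal E$, $\widetilde{\mathcal E}_{\mathrm{qs}}$, or $E$ is equivalent to vanishing of the appropriate subset of these coefficients; translating back via self-adjointness and the identity $\langle v^*\mathbf n,u\rangle=\langle\Phi_2,u\rangle$ whenever $Pu=0$ (valid since $\Phi_2=Qv^*\mathbf n$ and $Q$ is self-adjoint) reproduces $\mathop{\mathrm{Ker}}P\cap\mathop{\mathrm{Ker}}QM_0$, $\mathop{\mathrm{Ker}}M_0$, and $\mathop{\mathrm{Ker}}\widetilde P\cap\mathop{\mathrm{Ker}}M_0$, the first halves of (\ref{12.11.26.19.1}), (\ref{12.11.27.2.29}), (\ref{12.11.27.2.28}). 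The four $\mathop{\mathrm{Ker}}w|$ statements rest on a single observation: for $\Psi\in\widetilde{\mathcal E}$, $w\Psi=Uv^*\Psi=0$ forces $v^*\Psi=0$, hence $V\Psi=0$ and $H_0\Psi=H\Psi-V\Psi=0$; solving $H_0\Psi=0$ shows $\Psi$ is affine, so $\Psi\in\mathbb C\mathbf 1\oplus\mathbb C\mathbf n=\mathbb C\Psi_1^0\oplus\mathbb C\Psi_2^0$. Intersecting with the asymptotic constraints of $\mathcal E$ (no $\mathbf n$), and of $E$ and $\widetilde{\mathcal E}_{\mathrm{qs}}$ (no $\mathbf n$ and no $\mathbf 1$) yields $\mathop{\mathrm{Ker}}v^*|_{\mathbb C\Psi_1^0}$, $\{0\}$, and $\{0\}$.

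The main obstacle is establishing the two master identities, and in particular tracking the self-adjointness flips together with $v^*G_0^0v=M_0-U$ so that the rank-two part of $z$ collapses exactly to $\widetilde PM_0$. Once $v^*z=U-\widetilde QM_0$ and $Hz=-vU\widetilde QM_0$ are secured, the remainder is careful matching of the four asymptotic coefficients; the only genuinely delicate bookkeeping is the passage between $\langle v^*\mathbf n,u\rangle$ and $\langle\Phi_2,u\rangle$, and keeping the degenerate cases $\Phi_1=0$ or $\Phi_2=0$ under control, which the pseudoinverses in the definitions of $\Phi_j^*$ and $\gamma$ handle automatically.
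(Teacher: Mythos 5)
Your proposal is correct and follows essentially the same route as the paper's proof: your two master identities $v^*z=U-\widetilde QM_0$ (hence $w\circ z=1-U\widetilde QM_0$) and $Hz=-vU\widetilde QM_0$ are exactly the computations in the paper's Steps 1--2, your coefficient extraction along $\mathbf n,|\mathbf n|,\mathbf 1,\bm\sigma$ via Lemma~\ref{12.11.24.18.24} is the same bookkeeping the paper performs with the two representations of $z\Phi[n]$ in Steps 3--4, and your order-two difference-equation argument for the $\mathop{\mathrm{Ker}}w$ statements is identical to the paper's. No gaps.
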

\begin{remark}
We have not yet verified the asymptotics in Theorem~\ref{13.3.7.13.33},
however, definitions (\ref{13.3.7.13.34}), (\ref{13.3.7.13.35})
and Definition~\ref{13.3.7.13.36}
themselves make sense in any case.
The above $\mathcal E$, $E$ and $\widetilde{\mathcal E}_{\mathrm{qs}}$ are understood
in this way.
\end{remark}
\begin{proof}
\textit{Step 1.}
We can compute for $\Phi\in \mathop{\mathrm{Ker}}\widetilde QM_0$, using $v^{\ast}G_0^0v=M_0-U$,
\begin{align*}
w\circ z\Phi
&=\langle M_0\Phi_1^*,\Phi\rangle U\Phi_1
+\langle M_0\Phi_2^*,\Phi\rangle U\Phi_2+U(U-M_0)\Phi\\
&=U\widetilde PM_0\Phi+\Phi-UM_0\Phi\\
&=\Phi.
\end{align*}
This implies (\ref{12.11.26.19.10}).

\smallskip
\noindent
\textit{Step 2.}
Next, we prove (\ref{12.11.26.19.0}).
For $\Phi \in \mathcal K$ by using (\ref{13.1.18.20.17}) and $v^{\ast}G_0^0v=M_0-U$,
\begin{align*}
Hz\Phi
&=\langle M_0\Phi_1^*,\Phi\rangle vU\Phi_1
+\langle M_0\Phi_2^*,\Phi\rangle vU\Phi_2-v\Phi-vUv^*G^0_0v\Phi\\
&= vU\widetilde PM_0\Phi-vUM_0\Phi\\
&=-vU\widetilde QM_0\Phi.
\end{align*}
Hence, by the injectivity of $v$ 
it follows that $z\Phi\in \widetilde{\mathcal E}$ if and only if
$\Phi\in\mathop{\mathrm{Ker}}\widetilde QM_0$,
which implies the first identity of (\ref{12.11.26.19.0}).
As for the second identity,
we note that for any $\Psi\in (\mathcal L^{\beta})^*$
we have $\Psi\in \mathop{\mathrm{Ker}}w|_{\widetilde{\mathcal E}}$
if and only if 
\begin{align*}
H_0\Psi =0,\quad
v^*\Psi=0.
\end{align*}
Since $H_0\Psi =0$ gives a difference equation of order 2,
we can rephrase it as
$\Psi\in \mathbb C\Psi_1^0\oplus \mathbb C\Psi_2^0$.
Thus 
we obtain the second identity of (\ref{12.11.26.19.0}).

\smallskip
\noindent
\textit{Step 3.}
Let us prove (\ref{12.11.26.19.1}).
Let $\Phi\in \mathcal K$.
By Lemma~\ref{12.11.24.18.24} we have two expressions for $z\Phi$:
\begin{align}
\begin{split}
z\Phi[n]
&=
\bigl[\langle \Phi_2^*,M_0\Phi\rangle
+\tfrac{1}2\langle \Phi_1,\Phi\rangle\bigr]\Psi_2^0[n]
+\bigl[\langle \Phi_1^*,M_0\Phi\rangle
-\tfrac12\langle \Phi_2,\Phi\rangle\bigr]\Psi_1^0[n]
\\&\phantom{={}}
+\sum_{k\ge n} (k-n)(v\Phi)[k]\\
&=
\bigl[\langle \Phi_2^*,M_0\Phi\rangle
-\tfrac{1}2\langle \Phi_1,\Phi\rangle\bigr]\Psi_2^0[n]
+\bigl[\langle \Phi_1^*,M_0\Phi\rangle
+\tfrac12\langle \Phi_2,\Phi\rangle\bigr]\Psi_1^0[n]
\\&\phantom{={}}
+\sum_{k\le n} (n-k)(v\Phi)[k].
\end{split}\label{b11.1.30.19.38}
%\label{b11.1.30.19.39}
\end{align}
As in the proof of Lemma~\ref{12.11.24.18.24},
the two summations in (\ref{b11.1.30.19.38})
belong to $\ell^{1,\beta-2}(\mathbb Z_\pm)$, respectively.
This fact combined with the first identity of (\ref{12.11.26.19.0})
implies that $z\Phi\in \mathcal E$
if and only if 
\begin{align*}
\Phi\in\mathop{\mathrm{Ker}}\widetilde QM_0,\quad
\langle \Phi_2^*,M_0\Phi\rangle=\langle \Phi_1,\Phi\rangle =0.
\end{align*}
Hence the first identity of (\ref{12.11.26.19.1}) is obtained.
As for the second one we can proceed as in Step 2, and 
it is almost obvious.

\smallskip
\noindent
\textit{Step 4.}
The assertions (\ref{12.11.27.2.28}) and (\ref{12.11.27.2.29})
can be shown as in Step 3, using in particular (\ref{b11.1.30.19.38}).
We omit the details.
\end{proof}

The identity (\ref{12.11.26.19.10})
combined with the first of (\ref{12.11.26.19.0}) 
implies that the restrictions
\begin{align*}
z|_{\mathop{\mathrm{Ker}}\widetilde QM_0}
\colon\mathop{\mathrm{Ker}}\widetilde QM_0\to \widetilde{\mathcal E},\quad
w|_{\widetilde{\mathcal E}}\colon \widetilde{\mathcal E}\to \mathop{\mathrm{Ker}}\widetilde QM_0
\end{align*}
are injective and surjective, respectively. 
Hence, we have a very important corollary:

\begin{corollary}\label{13.1.16.2.51}
The eigenspaces are identified as follows:
\begin{align}
\widetilde{\mathcal E}&=
z(\mathop{\mathrm{Ker}} \widetilde QM_0)\oplus
\mathop{\mathrm{Ker}} v^*|_{\mathbb C\Psi_1^0\oplus \mathbb C\Psi_2^0},
\label{12.12.19.6.56}\\
\mathcal E
&=
z(\mathop{\mathrm{Ker}} P\cap \mathop{\mathrm{Ker}}  QM_0)
\oplus
\mathop{\mathrm{Ker}} v^*|_{\mathbb C\Psi_1^0},
\label{12.12.19.6.57}\\
E&
=z(\mathop{\mathrm{Ker}} \widetilde P\cap \mathop{\mathrm{Ker}}M_0),
\label{12.12.19.6.58}\\
\widetilde{\mathcal E}_{\mathrm{qs}}&
=z(\mathop{\mathrm{Ker}}M_0).
\label{12.12.19.6.59}
\end{align}
In particular, the eigenfunctions have the asymptotics:
\begin{align}
\widetilde{\mathcal E}&\subset 
\mathbb C\mathbf n\oplus
\mathbb C|\mathbf n|\oplus
\mathbb C\mathbf 1\oplus
\mathbb C\mbox{\boldmath$\sigma$}\oplus
\mathcal L^{\beta-2}.
\label{13.3.7.13.49}
\end{align}
\end{corollary}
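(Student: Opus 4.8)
The plan is to read all four identities (\ref{12.12.19.6.56})--(\ref{12.12.19.6.59}) as instances of one structural fact: if $z$ is a one-sided inverse of $w$ on a subspace, then the image of $z$ is a linear complement of $\mathop{\mathrm{Ker}} w$. Concretely, write $K=\mathop{\mathrm{Ker}}\widetilde QM_0$ and recall from (\ref{12.11.26.19.10}) that $w\circ z=1_K$, so that $z|_K$ is injective and $w$ restricts to the identity on $z(K)$. Proposition~\ref{12.12.27.16.6} supplies, for each $X\in\{\widetilde{\mathcal E},\mathcal E,E,\widetilde{\mathcal E}_{\mathrm{qs}}\}$, both the preimage $z^{-1}(X)$ and the kernel $\mathop{\mathrm{Ker}} w|_X$, and the whole task is to glue these into $X=z\bigl(z^{-1}(X)\bigr)\oplus\mathop{\mathrm{Ker}} w|_X$.

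The step that actually makes this run, and which simultaneously yields the asymptotics (\ref{13.3.7.13.49}), is a difference-equation analysis. For $\Psi\in\widetilde{\mathcal E}$ the eigenvalue equation gives $H_0\Psi=-V\Psi\in\mathcal L^\beta$, so $u:=\Psi-G_0^0(-V\Psi)$ satisfies $H_0u=H_0\Psi-H_0G_0^0(-V\Psi)=0$ by Lemma~\ref{13.1.18.6.0}; since the pointwise solutions of $H_0u=0$ on $\mathbb Z$ are the affine sequences, $u\in\mathbb C\mathbf 1\oplus\mathbb C\mathbf n$. Combined with $G_0^0(-V\Psi)\in\mathbb C|\mathbf n|\oplus\mathbb C\bm\sigma\oplus\mathcal L^{\beta-2}$, coming from (\ref{13.1.16.1.52}) and $V\Psi\in\mathcal L^\beta$, this is exactly (\ref{13.3.7.13.49}). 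The same computation shows $w(\widetilde{\mathcal E})\subset K$: using $M_0=U+v^*G_0^0v$ one rewrites $M_0(w\Psi)=M_0Uv^*\Psi=v^*\bigl(\Psi-G_0^0H_0\Psi\bigr)=v^*u$, and since $u\in\mathbb C\mathbf 1\oplus\mathbb C\mathbf n$ one gets $M_0w\Psi\in\mathbb C\Phi_1\oplus\mathbb C\Phi_2=\mathrm{Ran}\,\widetilde P$, hence $\widetilde QM_0w\Psi=0$.

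With $w(\widetilde{\mathcal E})\subset K$ in hand the assembly is formal. For $\Psi\in\widetilde{\mathcal E}$ I would write $\Psi=z(w\Psi)+\bigl(\Psi-z(w\Psi)\bigr)$: the first summand lies in $z(K)$ because $w\Psi\in K=z^{-1}(\widetilde{\mathcal E})$, while the second is annihilated by $w$, since $w\bigl(\Psi-z(w\Psi)\bigr)=w\Psi-w\Psi=0$, and so lies in $\mathop{\mathrm{Ker}} w|_{\widetilde{\mathcal E}}=\mathop{\mathrm{Ker}} v^*|_{\mathbb C\Psi_1^0\oplus\mathbb C\Psi_2^0}$ by the second identity of (\ref{12.11.26.19.0}); the two summands meet only in $0$ because $w$ is the identity on $z(K)$. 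This gives (\ref{12.12.19.6.56}). Each smaller space satisfies $X\subset\widetilde{\mathcal E}$, so $z^{-1}(X)\subset K$ and $z|_{z^{-1}(X)}$ stays injective, and one substitutes the matching pair from (\ref{12.11.26.19.1}), (\ref{12.11.27.2.28}), (\ref{12.11.27.2.29}); for $E$ and $\widetilde{\mathcal E}_{\mathrm{qs}}$ the kernel part is $\{0\}$ and the decomposition collapses to $X=z(z^{-1}(X))$.

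I expect the main obstacle to be the refined version of the containment $w(X)\subset z^{-1}(X)$ for the three smaller spaces, i.e., verifying that $w$ lands in $\mathop{\mathrm{Ker}} P\cap\mathop{\mathrm{Ker}}QM_0$, in $\mathop{\mathrm{Ker}}\widetilde P\cap\mathop{\mathrm{Ker}}M_0$, and in $\mathop{\mathrm{Ker}}M_0$, respectively. This is where the affine part $u=a\mathbf 1+b\mathbf n$ must be correlated with the projections: for $\Psi\in\mathcal E$ the coefficient $b$ vanishes, so $M_0w\Psi=a\Phi_1$ lies in $\mathbb C\Phi_1$ and one must further test $\Psi$ against $\mathbf 1$ (through $\langle\Phi_1,w\Psi\rangle=\langle\mathbf 1,V\Psi\rangle=-\langle\mathbf 1,H_0\Psi\rangle$, which vanishes by the matched slopes at $\pm\infty$) to land in $\mathop{\mathrm{Ker}} P$; the cases of $E$ and $\widetilde{\mathcal E}_{\mathrm{qs}}$ similarly require tracking which of $\mathbf 1,\mathbf n,|\mathbf n|,\bm\sigma$ survive in $\Psi$ and in $z(w\Psi)$ via the explicit form of $z$ in (\ref{12.12.27.16.4}) and the projection identity (\ref{13.1.18.20.18}). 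This bookkeeping, rather than any single estimate, is the delicate part, and it is essentially the content already packaged in Proposition~\ref{12.12.27.16.6}.
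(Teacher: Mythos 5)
Your argument is correct, and it rests on the same two pillars as the paper's: the retraction identity (\ref{12.11.26.19.10}) and the preimage/kernel identifications of Proposition~\ref{12.12.27.16.6}. The organization, however, is genuinely different, and the difference matters. The paper argues decomposition $\Rightarrow$ asymptotics: it calls (\ref{12.12.19.6.56})--(\ref{12.12.19.6.59}) direct consequences of the Proposition and then reads off (\ref{13.3.7.13.49}) from (\ref{12.12.19.6.56}), the formula (\ref{12.12.27.16.4}) for $z$, and (\ref{13.1.16.1.52}). You argue asymptotics $\Rightarrow$ containment $\Rightarrow$ decomposition: you first prove (\ref{13.3.7.13.49}) directly from the eigenvalue equation (writing $\Psi=u+G_0^0(-V\Psi)$ with $u$ affine), then use $M_0w\Psi=v^*u$ and pairings such as $\langle\Phi_1,w\Psi\rangle=-\langle\mathbf 1,H_0\Psi\rangle$ to obtain $w(X)\subset z^{-1}(X)$ for each eigenspace $X$, and only then assemble $X=z\bigl(z^{-1}(X)\bigr)\oplus\mathop{\mathrm{Ker}}w|_X$. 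That middle step is precisely what licenses the splitting: the decompositions do \emph{not} follow formally from the statements $w\circ z=1_K$, $z^{-1}(X)=K$, $\mathop{\mathrm{Ker}}w|_X=\cdots$ alone. Indeed, a finite-dimensional toy example with $\mathcal K=\mathrm{span}\{e_1,e_2\}$, $K=\mathbb C e_1$, $z(e_i)=f_i$, $X=\mathrm{span}\{f_1,f_3\}$, $w(f_1)=e_1$, $w(f_2)=w(f_3)=e_2$ satisfies all three statements while $z(K)\oplus\mathop{\mathrm{Ker}}w|_X=\mathbb C f_1\subsetneq X$. In the paper this point is carried by the assertion, made just before the corollary, that $w|_{\widetilde{\mathcal E}}$ maps $\widetilde{\mathcal E}$ \emph{into} (not merely onto) $\mathop{\mathrm{Ker}}\widetilde QM_0$, a fact which really comes out of computations like (\ref{b11.1.30.19.38}) inside the proof of the Proposition rather than out of its statement. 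So your route is more self-contained and makes explicit the one step the paper leaves implicit, at the cost of some bookkeeping; the paper's route is shorter and gets the asymptotics as a free by-product of the decomposition. The only inaccuracy is your closing remark that this bookkeeping is ``essentially the content already packaged in Proposition~\ref{12.12.27.16.6}'': that is true of the Proposition's proof, not of its statement --- but since you carry out the computation yourself, your argument has no gap.
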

\begin{proof}
The isomorphisms (\ref{12.12.19.6.56})--(\ref{12.12.19.6.59}) are direct
consequences of Proposition~\ref{12.12.27.16.6}.
The asymptotics (\ref{13.3.7.13.49}) follows from isomorphism 
(\ref{12.12.19.6.56}), the definition
(\ref{12.12.27.16.4}) of $z$ and (\ref{13.1.16.1.52}).
\end{proof}

We have yet another characterization for $\widetilde{\mathcal E}_{\mathrm{qs}}$
and also the finite dimensionality of eigenspaces:
\begin{proposition}\label{13.1.16.1.59}
Suppose $\beta\ge 1$ in Assumption~\ref{12.11.8.1.9}.
Then for the compact operators 
$G_0^0V\in \mathcal C(\mathcal L^*)$,
$VG_0^0\in\mathcal C(\mathcal L)$ and 
$v^*G_0^0v\in \mathcal C(\mathcal K)$
there exists a well-defined circular sequence of isomorphisms
between the finite dimensional vector spaces:
\begin{align}
\begin{split}
\cdots&
\stackrel{-G_0^0}{\to}
\mathop{\mathrm{Ker}}(1_{\mathcal L^*}+G_0^0V)
\stackrel{Uv^*}{\to}
\mathop{\mathrm{Ker}} M_0
\stackrel{v}{\to} 
\mathop{\mathrm{Ker}}(1_{\mathcal L}+VG_0^0)\\
&\stackrel{-G_0^0}{\to}
\mathop{\mathrm{Ker}}(1_{\mathcal L^*}+G_0^0V)
\stackrel{Uv^*}{\to}
\cdots .
\end{split}\label{13.1.16.0.14}
\end{align}
In particular, 
$\widetilde d<\infty$, and 
the quasi-symmetric eigenspace is characterized by  
\begin{align}
\widetilde{\mathcal E}_{\mathrm{qs}}
=-G_0^0v\mathop{\mathrm{Ker}}M_0
=\mathop{\mathrm{Ker}}(1_{\mathcal L^*}+G_0^0V)
=-G_0^0\mathop{\mathrm{Ker}}(1_{\mathcal L}+VG_0^0)
.
\label{13.3.6.9.36}
\end{align}
\end{proposition}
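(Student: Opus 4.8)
The plan is to prove finiteness first, then build the circular sequence of isomorphisms directly from the algebraic relations $V=vUv^*$, $M_0=U+v^*G_0^0v$ and $U^2=1_{\mathcal K}$, and finally read off the characterization of $\widetilde{\mathcal E}_{\mathrm{qs}}$ from Corollary~\ref{13.1.16.2.51}.

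For finite dimensionality I would note that $G_0^0\in\mathcal B^1=\mathcal B(\mathcal L,\mathcal L^*)$ by Proposition~\ref{12.11.9.6.23}, while $V\in\mathcal C(\mathcal L^*,\mathcal L)$ and $v\in\mathcal C(\mathcal K,\mathcal L)$ by Assumption~\ref{12.11.8.1.9}. Composing a compact operator with a bounded one shows $G_0^0V\in\mathcal C(\mathcal L^*)$, $VG_0^0\in\mathcal C(\mathcal L)$ and $v^*G_0^0v\in\mathcal C(\mathcal K)$. Writing $M_0=U(1_{\mathcal K}+Uv^*G_0^0v)$ exhibits $\mathop{\mathrm{Ker}}M_0=\mathop{\mathrm{Ker}}(1_{\mathcal K}+Uv^*G_0^0v)$, so by the Riesz--Fredholm theory the kernels of $1_{\mathcal L^*}+G_0^0V$, $1_{\mathcal L}+VG_0^0$ and $M_0$ are all finite dimensional; once the isomorphisms are in place this yields $\widetilde d<\infty$ through Corollary~\ref{13.1.16.2.51}.

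Next I would verify that the three displayed maps send kernel to kernel, tracking the ambient spaces $\mathcal L^*\xrightarrow{Uv^*}\mathcal K\xrightarrow{v}\mathcal L\xrightarrow{-G_0^0}\mathcal L^*$. For $\Psi\in\mathop{\mathrm{Ker}}(1_{\mathcal L^*}+G_0^0V)$ one has $G_0^0V\Psi=-\Psi$, hence $M_0Uv^*\Psi=U^2v^*\Psi+v^*G_0^0V\Psi=v^*\Psi-v^*\Psi=0$. For $\Phi\in\mathop{\mathrm{Ker}}M_0$ the relation $v^*G_0^0v\Phi=-U\Phi$ gives $VG_0^0v\Phi=vUv^*G_0^0v\Phi=-vU^2\Phi=-v\Phi$. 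For $\Psi\in\mathop{\mathrm{Ker}}(1_{\mathcal L}+VG_0^0)$ the relation $VG_0^0\Psi=-\Psi$ gives $G_0^0V(-G_0^0\Psi)=-G_0^0(VG_0^0\Psi)=G_0^0\Psi$, i.e.\ $-G_0^0\Psi\in\mathop{\mathrm{Ker}}(1_{\mathcal L^*}+G_0^0V)$. The same three kernel relations then show that each round trip is the identity: $(-G_0^0)\circ v\circ(Uv^*)\Psi=-G_0^0V\Psi=\Psi$, $(Uv^*)\circ(-G_0^0)\circ v\,\Phi=-Uv^*G_0^0v\Phi=U^2\Phi=\Phi$, and $v\circ(Uv^*)\circ(-G_0^0)\Psi=-VG_0^0\Psi=\Psi$. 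Since each of the three maps therefore admits the composition of the other two as a two-sided inverse, all three are isomorphisms, which is precisely the circular sequence (\ref{13.1.16.0.14}).

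Finally, by Corollary~\ref{13.1.16.2.51} we have $\widetilde{\mathcal E}_{\mathrm{qs}}=z(\mathop{\mathrm{Ker}}M_0)$. Since $M_0$ is self-adjoint (as $U^*=U$ and $G_0^0$ is the self-adjoint convolution by the real even sequence $-\tfrac12|n|$), for $\Phi\in\mathop{\mathrm{Ker}}M_0$ the coefficients $\langle M_0\Phi_j^*,\Phi\rangle=\langle\Phi_j^*,M_0\Phi\rangle$ in the definition (\ref{12.12.27.16.4}) of $z$ vanish, so $z\Phi=-G_0^0v\Phi$ and thus $\widetilde{\mathcal E}_{\mathrm{qs}}=-G_0^0v\mathop{\mathrm{Ker}}M_0$. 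Applying the isomorphism $v\colon\mathop{\mathrm{Ker}}M_0\to\mathop{\mathrm{Ker}}(1_{\mathcal L}+VG_0^0)$ rewrites this as $-G_0^0\mathop{\mathrm{Ker}}(1_{\mathcal L}+VG_0^0)$, and applying $-G_0^0\colon\mathop{\mathrm{Ker}}(1_{\mathcal L}+VG_0^0)\to\mathop{\mathrm{Ker}}(1_{\mathcal L^*}+G_0^0V)$ rewrites the latter as $\mathop{\mathrm{Ker}}(1_{\mathcal L^*}+G_0^0V)$, giving all equalities in (\ref{13.3.6.9.36}). I expect the only real care to be in the bookkeeping of the three ambient spaces so that every composition is legitimate, and in recording the self-adjointness of $M_0$ that collapses $z$ to $-G_0^0v$ on $\mathop{\mathrm{Ker}}M_0$; the kernel-to-kernel checks and round-trip identities are short algebraic manipulations once $V=vUv^*$ and $U^2=1_{\mathcal K}$ are inserted.
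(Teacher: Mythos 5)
Your proposal is correct and follows essentially the same route as the paper's proof: compactness plus Riesz--Fredholm theory for finiteness of the kernels, the kernel-to-kernel and round-trip computations (which the paper asserts but omits) for bijectivity of the circular sequence, and the collapse $z=-G_0^0v$ on $\mathop{\mathrm{Ker}}M_0$ together with Corollary~\ref{13.1.16.2.51} for the characterization (\ref{13.3.6.9.36}). One small point to tighten: your isomorphisms control $\dim\mathop{\mathrm{Ker}}M_0$, i.e.\ $\widetilde d_{\mathrm{qs}}$, whereas $\widetilde{\mathcal E}$ in Corollary~\ref{13.1.16.2.51} is $z(\mathop{\mathrm{Ker}}\widetilde QM_0)\oplus\mathop{\mathrm{Ker}}v^*|_{\mathbb C\Psi_1^0\oplus\mathbb C\Psi_2^0}$ with the larger space $\mathop{\mathrm{Ker}}\widetilde QM_0$, so to conclude $\widetilde d<\infty$ you still need either the paper's chain $\widetilde d\le d_0+4\le\widetilde d_{\mathrm{qs}}+4<\infty$ or the linear-algebra remark that $\mathop{\mathrm{Ker}}\widetilde QM_0/\mathop{\mathrm{Ker}}M_0$ injects under $M_0$ into the at most two-dimensional range of $\widetilde P$.
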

\begin{proof}
The compactness of the operators $G_0^0V$, $VG_0^0$ and $M_0$
is an immediate consequence of Assumption~\ref{12.11.8.1.9},
and hence the kernels in the sequence (\ref{13.1.16.0.14})
are of finite dimensions.
It is not difficult to show that the sequence (\ref{13.1.16.0.14})
is actually well-defined and circular,
and, furthermore, that any compositions of three adjacent operators
are the identities.
This implies that all operators in (\ref{13.1.16.0.14}) are bijective.
Then noting $z=-G_0^0v$ on $\mathop{\mathrm{Ker}}M_0$, 
we obtain the characterization (\ref{13.3.6.9.36})
by Corollary~\ref{13.1.16.2.51} and (\ref{13.1.16.0.14}),
and also $\widetilde d_{\mathrm{qs}}<\infty$.
Finally, $\widetilde d\le d_0+4\le \widetilde d_{\mathrm{qs}}+4<\infty$.
Hence we are done.
\end{proof}

We finally add a remark on the dimensions of eigenspaces for 
multiplicative potentials.
With it for multiplicative potentials
we can skip a part of the following subsection.
\begin{proposition}\label{13.3.25.13.0}
Suppose that $\beta\ge 1$ in Assumption~\ref{12.11.8.1.9},
and that $V$ is \emph{multiplicative}.
Then
\begin{align*}
\widetilde d\le 2,\quad d\le 1,\quad d_0=0.
\end{align*}
\end{proposition}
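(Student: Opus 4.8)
The plan is to exploit the special structure of multiplicative $V$, namely that $v$ and $U$ can be taken so that $\mathcal{K}=\ell^2(\mathrm{supp}\,V)$ and $v$ is (up to the self-adjoint unitary $U$) the inclusion of a coordinate subspace, so that $v^*$ acts by restriction of sequences to the support. The key observation is that for such $V$ the range of $v$ consists of finitely-or-compactly-supported-type sequences, and in particular any $\Phi\in\mathcal K$ gives $v\Phi$ supported on $\mathrm{supp}\,V$. The three dimension bounds $\widetilde d\le 2$, $d\le 1$, $d_0=0$ should all follow from the kernel characterizations in Corollary~\ref{13.1.16.2.51} together with a sharp analysis of how solutions of $H_0\Psi=0$ can match the two asymptotic half-lines.

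First I would establish $\widetilde d\le 2$ directly. By Theorem~\ref{13.3.7.13.33} (or Corollary~\ref{13.1.16.2.51}), any $\Psi\in\widetilde{\mathcal E}$ satisfies $H\Psi=0$, i.e.\ $H_0\Psi=-V\Psi$, where $V\Psi$ has support in $\mathrm{supp}\,V$. Outside the (convex hull of the) support, $\Psi$ solves the second-order difference equation $H_0\Psi=0$, whose solution space on each half-line $\{n\ge n_+\}$ and $\{n\le n_-\}$ is two-dimensional, spanned by the restrictions of $\mathbf 1$ and $\mathbf n$. Since $\Psi\in(\mathcal L^\beta)^*$ grows at most polynomially, both half-line solutions are admissible, but a counting/matching argument via the map $\Psi\mapsto(\Psi[n_+],\Psi[n_+ +1])$ shows the whole generalized eigenspace injects into a two-dimensional space of right-half-line data; this yields $\widetilde d\le 2$.

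Next I would obtain $d_0=0$ by showing $E=\{0\}$. By definition $E=\widetilde{\mathcal E}\cap\mathcal L^{\beta-2}$, so any $\Psi\in E$ is a genuine $\ell^2$-eigenfunction at the threshold $\lambda=0$, i.e.\ an honest eigenvalue at the bottom of the spectrum. For multiplicative $V$ the threshold $0$ cannot support an $\ell^2$ eigenfunction: a decaying solution of a second-order difference equation that vanishes (in the matching sense) at both ends of the support is forced to vanish identically, by the uniqueness of solutions to the associated initial-value problem together with the fact that $\mathbf 1,\mathbf n\notin\ell^2$. More cleanly, I would use the characterization $E=z(\mathop{\mathrm{Ker}}\widetilde P\cap\mathop{\mathrm{Ker}}M_0)$ from (\ref{12.12.19.6.58}) and argue that $\mathop{\mathrm{Ker}}\widetilde P\cap\mathop{\mathrm{Ker}}M_0=\{0\}$ for multiplicative $V$. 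Finally, $d\le 1$ would follow from combining $d_0=0$ with the asymptotic dichotomy: by (\ref{13.3.7.13.34}), $\mathcal E=\widetilde{\mathcal E}\cap(\mathbb C\mathbf 1\oplus\mathbb C\bm\sigma\oplus\mathcal L^{\beta-2})$, and for a multiplicative potential a nonzero element of $\mathcal E$ must be a bounded (rather than linearly growing) solution, which the one-dimensional matching argument pins down up to scalar.

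The main obstacle I anticipate is the $\ell^2$-nonexistence argument underlying $d_0=0$: while it is folklore that one-dimensional Schr\"odinger operators with multiplicative potentials have no threshold eigenvalue at the edge of the continuous spectrum, making this rigorous in the discrete, polynomially-weighted setting requires care. The cleanest route is likely through $\mathop{\mathrm{Ker}}\widetilde P\cap\mathop{\mathrm{Ker}}M_0=\{0\}$: one must show that for multiplicative $V$, every $\Phi\in\mathop{\mathrm{Ker}}M_0$ already fails the extra conditions defining $\widetilde P$, which amounts to verifying that the symmetric/antisymmetric moment conditions $\langle\Phi_1,\Phi\rangle=\langle\Phi_2,\Phi\rangle=0$ cannot hold simultaneously with $M_0\Phi=0$ unless $\Phi=0$. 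I expect this to reduce to a Wronskian-type nondegeneracy computation specific to the rank-one-in-each-coordinate structure of multiplicative $V$.
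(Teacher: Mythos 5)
Your bound $\widetilde d\le 2$ is correct and is exactly the paper's argument: for multiplicative $V$ the equation $H\Psi=0$ is a second-order linear recursion, so $\Psi\mapsto(\Psi[n_0],\Psi[n_0+1])$ is injective and the solution space is at most two-dimensional. The gap lies in $d_0=0$ and $d\le 1$. Both of your routes to $d_0=0$ are incomplete. The reasoning via ``vanishing at both ends of the support'' and ``outside the (convex hull of the) support'' implicitly assumes that $\mathrm{supp}\,V$ is finite, which Assumption~\ref{12.11.8.1.9} does not grant: a multiplicative $V$ need only satisfy a weighted summability condition and may be supported on all of $\mathbb Z$, in which case there is no region where $\Psi$ solves the free equation. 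Your fallback --- proving $\mathop{\mathrm{Ker}}\widetilde P\cap\mathop{\mathrm{Ker}}M_0=\{0\}$ --- is not a reduction at all: by (\ref{12.12.19.6.58}) and the injectivity of $z$ on $\mathop{\mathrm{Ker}}\widetilde QM_0$, that statement is \emph{equivalent} to $E=\{0\}$, and the ``Wronskian-type nondegeneracy computation'' you appeal to is only anticipated, never performed. Likewise, for $d\le 1$ the claim that ``the one-dimensional matching argument pins down up to scalar'' is not an argument: two bounded, linearly independent solutions produce no contradiction from matching alone.

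What is missing is the single analytic lemma on which the paper's proof rests: the space
$\widetilde{\mathcal E}^-=\widetilde{\mathcal E}\cap\bigl(\mathbb C(\mathbf n-|\mathbf n|)\oplus\mathbb C(\mathbf 1-\bm{\sigma})\oplus\mathcal L^{\beta-2}\bigr)$
of solutions decaying as $n\to+\infty$ is trivial. The paper proves this by writing any $\Psi\in\widetilde{\mathcal E}^-$ as the Volterra-type tail sum
\begin{align*}
\Psi[n]=\sum_{k\ge n}(k-n)V[k]\Psi[k],
\end{align*}
coming from $\Psi=zw\Psi$, cf.\ (\ref{b11.1.30.19.38}); choosing $n_0$ with $\sum_{n\ge n_0}n|V[n]|\le\tfrac12$ (possible since $V\in\mathcal L^\beta$, whether or not the support is finite); concluding by contraction that $\Psi$ vanishes on $[n_0,\infty)$; and then using the second-order recursion to get $\Psi\equiv 0$. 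This one statement yields both remaining claims: $E\subset\widetilde{\mathcal E}^-$ gives $d_0=0$, and if $\mathcal E$ contained two independent elements $a_i\mathbf 1+b_i\bm{\sigma}+(\text{decaying})$, some nontrivial combination would have vanishing limit at $+\infty$, hence lie in $\widetilde{\mathcal E}^-$, forcing $d\le 1$. Your proposal correctly locates the difficulty but supplies no such decay-to-vanishing argument, so the proof is incomplete precisely at its essential step.
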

\begin{proof}
For a multiplicative potential
the equation $H\Psi=0$ is a difference equation of order $2$,
and hence it is clear that $\widetilde d\le 2$.

For the remaining assertions it suffices to show that the subspace 
\begin{align*}
\widetilde{\mathcal E}^-
=\widetilde{\mathcal E}\cap 
\bigl(\mathbb C(\mathbf n-|\mathbf n|)\oplus
\mathbb C(\mathbf 1-\mbox{\boldmath$\sigma$})\oplus
\mathcal L^{\beta-2}\bigr)
\end{align*}
is actually trivial.
We let $\Psi\in \widetilde{\mathcal E}^-$ and write down the relation $\Psi=zw\Psi$
explicitly. Taking into account the asymptotics of $\Psi$ as $n\to \infty$,
we have to have 
\begin{align*}
\Psi[n]
=\sum_{k\ge n} (k-n)V[k]\Psi[k],
\end{align*}
cf.\ (\ref{b11.1.30.19.38}).
Note that by $V\in \mathcal L^\beta$ we can 
choose large $n_0\ge 0$ such that 
\begin{align*}
\sum_{n\ge n_0}n|V[n]|\le \tfrac12.
\end{align*}
Then we obtain for $n\ge n_0$
\begin{align*}
|\Psi[n]|\le \tfrac12 \sup_{k\ge n_0}|\Psi[k]|,
\end{align*}
which implies $\Psi[n]=0$ for $n\ge n_0$.
Since the equation $H\Psi=0$ is a difference equation of order $2$,
the above initial condition yields $\Psi=0$. Hence $\widetilde{\mathcal E}^-=\{0\}$, and we are done.
\end{proof}

\subsection{Eigenspaces and intermediate operators}\label{12.12.19.2.2}

In this subsection we provide a complete classification of the eigenspaces.
We can find explicit bases for them modulo $E$.

We shall investigate structures of the eigenspaces 
through the \textit{block decomposition argument} for $M_0$.
Let us decompose the action of $M_0$ into the 
blocks corresponding to the spatial decomposition:
\begin{align}
\mathcal K=\mathbb C \Phi_1\oplus S^\perp Q\mathcal K\oplus S\mathcal K,
\quad S^\perp=1_{\mathcal K}-S.
\label{13.3.9.1.16}
\end{align}

\begin{lemma}\label{13.3.9.3.47}
Suppose $\beta\ge 1$ in Assumption~\ref{12.11.8.1.9}.
Then the identities
\begin{align}
M_0^*=M_0,\quad
m_0^\dagger M_0Q=QM_0m_0^\dagger=S^\perp Q,\quad
SM_0Q=QM_0S=0
\label{13.3.20.4.5}
\end{align}
hold.
In addition, $M_0^\dagger M_0=M_0M_0^\dagger=S^\perp$
hold if and only if $\Phi_1=0$ or 
\begin{align}
 \langle \Phi_1,(M_0-M_0m_0^{\dagger}M_0)\Phi_1\rangle\neq 0.
\label{13.3.20.3.57}
\end{align}
\end{lemma}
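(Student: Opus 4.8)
The first identity is immediate: $U$ is self-adjoint and $G_0^0$ is convolution by the real even sequence $-\tfrac12|n|$, so $(G_0^0)^*=G_0^0$ and $M_0^*=(U+v^*G_0^0v)^*=U+v^*G_0^0v=M_0$. For the other two, recall that $m_0=QM_0Q$ is self-adjoint on $Q\mathcal K$ and $S$ projects orthogonally onto $\operatorname{Ker}m_0\cap Q\mathcal K$; hence $m_0S=Sm_0=0$, and the functional calculus on $Q\mathcal K$ gives $m_0^\dagger m_0=m_0m_0^\dagger=Q-S=S^\perp Q$. Since $S\le Q$ yields $SQ=S$, I get $SM_0Q=(SQ)M_0Q=Sm_0=0$, and $QM_0S=0$ by taking adjoints. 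Finally $m_0^\dagger=m_0^\dagger Q$, so $m_0^\dagger M_0Q=m_0^\dagger(QM_0Q)=m_0^\dagger m_0=S^\perp Q$, and $QM_0m_0^\dagger=(m_0^\dagger M_0Q)^*=S^\perp Q$. This settles the three displayed identities.

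For the equivalence I would first reduce it to a statement about $\operatorname{Ker}M_0$. Because $v^*G_0^0v$ is compact and $U$ is self-adjoint and unitary, $\sigma(M_0)$ is a compact perturbation of $\{\pm1\}$ and does not accumulate at $0$; thus $M_0^\dagger\in\mathcal B(\mathcal K)$ and, by the functional calculus, $M_0^\dagger M_0=M_0M_0^\dagger=1_{\mathcal K}-P_{\operatorname{Ker}M_0}$, the orthogonal projection onto $(\operatorname{Ker}M_0)^\perp$. Hence the asserted identity is equivalent to $\operatorname{Ker}M_0=S\mathcal K$. If $\Phi_1=0$, then $P=0$, $Q=1_{\mathcal K}$, $m_0=M_0$ and $S=P_{\operatorname{Ker}M_0}$, so $\operatorname{Ker}M_0=S\mathcal K$ holds automatically; this is the first alternative.

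Assume now $\Phi_1\ne0$, so $P$ is the rank-one projection onto $\mathbb C\Phi_1$, and work in the decomposition (13.3.9.1.16), $\mathcal K=\mathbb C\Phi_1\oplus S^\perp Q\mathcal K\oplus S\mathcal K$. Writing $R=S^\perp Q$, the identities just proved give the arrowhead structure $SM_0R=RM_0S=0$, $SM_0S=0$, $RM_0R=m_0$ (invertible on $R\mathcal K$), while for $s\in S\mathcal K$ one has $QM_0s=m_0s=0$, hence $M_0s=PM_0s=\tfrac\gamma2\langle SM_0\Phi_1,s\rangle\Phi_1$. I then solve $M_0\psi=0$ for $\psi=c\Phi_1+\rho+s$ block by block. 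The $S$-row reads $c\,SM_0\Phi_1=0$; the $R$-row gives $\rho=-c\,m_0^\dagger RM_0\Phi_1$; and, using $\tfrac\gamma2\|\Phi_1\|^2=1$ and the middle identity, the $\Phi_1$-row becomes $c\,N+\langle SM_0\Phi_1,s\rangle=0$ with $N=\langle\Phi_1,(M_0-M_0m_0^\dagger M_0)\Phi_1\rangle$. Reading this off: if $SM_0\Phi_1\ne0$, the $S$-row forces $c=0$, whence $\rho=0$ and $\operatorname{Ker}M_0=\{s\in S\mathcal K:\langle SM_0\Phi_1,s\rangle=0\}\subsetneq S\mathcal K$; if $SM_0\Phi_1=0$, the $\Phi_1$-row collapses to $c\,N=0$, so $\operatorname{Ker}M_0=S\mathcal K$ exactly when $N\ne0$ (and is strictly larger when $N=0$). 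Thus $\operatorname{Ker}M_0=S\mathcal K$ holds iff $SM_0\Phi_1=0$ \emph{and} $N\ne0$.

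The main obstacle is therefore to show that, for $\Phi_1\ne0$, the single scalar condition $N\ne0$ in the statement already forces $SM_0\Phi_1=0$, i.e.\ that $SM_0\Phi_1\ne0$ and $N\ne0$ cannot occur together. This is precisely the off-diagonal coupling between the line $\mathbb C\Phi_1$ and $\operatorname{Ker}m_0$, and it is \emph{not} a consequence of the block algebra: for an abstract self-adjoint $M_0$ these two conditions are independent (a $3\times3$ example with invertible $M_0$ but $\operatorname{Ker}m_0\ne\{0\}$ already separates them). The extra input must come from the concrete structure --- that $M_0$ is the zeroth coefficient of $M(\kappa)=U+v^*R_0(\kappa)v$ with singular leading term $M_{-1}=\gamma^\dagger P$, together with the self-adjointness of $H$ and the a priori bound $M(\kappa)^{-1}=\mathcal{O}(\kappa^{-2})$ of (12.11.24.4.17). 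I would use these to rule out the bad coexistence, equivalently to show $S\mathcal K\subseteq\operatorname{Ker}M_0$ whenever $N\ne0$. I expect this compatibility check to be the only non-routine step; the pseudoinverse bookkeeping in the earlier paragraphs is entirely mechanical.
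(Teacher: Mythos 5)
Your proof of \eqref{13.3.20.4.5} is correct (this is the part the paper dismisses as ``obvious from the definitions''), and so is your block-by-block determination of $\operatorname{Ker}M_0$: for $\Phi_1\neq0$ one has $M_0^\dagger M_0=M_0M_0^\dagger=S^\perp$, i.e.\ $\operatorname{Ker}M_0=S\mathcal K$, if and only if $SM_0\Phi_1=0$ \emph{and} \eqref{13.3.20.3.57} holds. This already gives the ``only if'' half of the Lemma. What you leave open is the ``if'' half, namely the implication $\eqref{13.3.20.3.57}\Rightarrow SM_0\Phi_1=0$, which you hope follows from the concrete structure of $M(\kappa)$ and the self-adjointness of $H$.

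It does not: that implication is false under Assumption~\ref{12.11.8.1.9}, so the gap cannot be closed---the ``if'' direction of the Lemma is itself false as stated. Take $\mathcal K=\mathbb C^2$ with orthonormal basis $f_1,f_2$, let $\delta_k$ denote the standard unit sequences, and set $vf_1=\delta_0$, $vf_2=\delta_1-\delta_2$, $U=\operatorname{diag}(1,-1)$; then $V=vUv^*$ satisfies Assumption~\ref{12.11.8.1.9} for every $\beta\ge1$, and $\Phi_1=v^*\mathbf 1=f_1$. Using $G_0^0(n)=-\tfrac12|n|$ one computes
\begin{equation*}
M_0=U+v^*G_0^0v
=\begin{pmatrix}1&0\\0&-1\end{pmatrix}
+\begin{pmatrix}0&\tfrac12\\[2pt] \tfrac12&1\end{pmatrix}
=\begin{pmatrix}1&\tfrac12\\[2pt] \tfrac12&0\end{pmatrix},
\end{equation*}
so $m_0=QM_0Q=0$ on $Q\mathcal K=\mathbb Cf_2$, hence $S=Q$, $S^\perp=P$ and $m_0^\dagger=0$. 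Then the left-hand side of \eqref{13.3.20.3.57} equals $\langle f_1,M_0f_1\rangle=1\neq0$, yet $M_0$ is invertible ($\det M_0=-\tfrac14$), so $M_0^\dagger M_0=1_{\mathcal K}\neq S^\perp$. This is exactly the coexistence you feared: $SM_0\Phi_1=\tfrac12f_2\neq0$ (i.e.\ $\Phi_3\neq0$) together with $\Delta\neq0$. Note that the paper's own Tables~\ref{13.3.6.0.28a} and \ref{13.3.6.0.29a} concede that this combination occurs ($\Phi_3\neq0$ with ``$\Delta$ arbitrary'' in Cases viii, xi, xii; the example above realizes Case xi of Table~\ref{13.3.6.0.28a}), so no appeal to the bound $M(\kappa)^{-1}=\mathcal O(\kappa^{-2})$ can rescue the step.

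The defect therefore lies in the Lemma, not in your analysis, and your proposal is in fact sharper than the paper's own proof. That proof opens by asserting that $M_0^\dagger M_0=M_0M_0^\dagger=S^\perp$ is \emph{equivalent} to invertibility of $S^\perp M_0S^\perp$ in $\mathcal B(S^\perp\mathcal K)$; only the forward implication of this assertion is true, and its converse fails precisely when $SM_0S^\perp\neq0$, which is the off-diagonal coupling your argument isolates. What is true---and what the later propositions actually use---is the Schur-complement statement of Remark~\ref{13.3.20.7.9}: for $\Phi_1\neq0$ the block $S^\perp M_0S^\perp$ is invertible in $\mathcal B(S^\perp\mathcal K)$ if and only if \eqref{13.3.20.3.57} holds. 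The pseudoinverse identities $M_0^\dagger M_0=M_0M_0^\dagger=S^\perp$ hold if and only if $\Phi_1=0$, or $\Phi_3=0$ together with \eqref{13.3.20.3.57}; this corrected statement is exactly what your block computation proves.
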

\begin{remark}\label{13.3.20.7.9}
Stated differently, 
Lemma~\ref{13.3.9.3.47} says 
that according to the decomposition (\ref{13.3.9.1.16}) 
$M_0$ has the following $3\times 3$ block matrix 
representation
\begin{align}
M_0=
\left(\begin{array}{ccc}
?&?&?\\
?&*&0\\
?&0&0
\end{array}\right),\label{13.3.9.1.17}
\end{align}
where $*$ and $?$ mean invertible and undetermined components, respectively.
In case (\ref{13.3.9.1.16}) has trivial subspaces as components
we understand the corresponding columns and rows in 
(\ref{13.3.9.1.17}) as eliminated.
When $\Phi_1\neq 0$, the upper left $2\times 2$ submatrix of (\ref{13.3.9.1.17})
is invertible if and only if (\ref{13.3.20.3.57}) holds.
\end{remark}
\begin{proof}
The identities in (\ref{13.3.20.4.5})
are obvious from the definitions.
Next, we note that the identities 
$M_0^\dagger M_0=M_0M_0^\dagger=S^\perp$
are equivalent to the invertibility of 
$S^\perp M_0S^\perp $
in $\mathcal B(S^\perp \mathcal K)$.
If $\Phi_1=0$, then it is obvious by the defintion of $S$ that 
$S^\perp M_0S^\perp $ is invertible in $\mathcal B(S^\perp \mathcal K)$.
Hence we may consider only the case $\Phi_1\neq 0$.
Then since the operator $S^\perp M_0S^\perp $ is expressed
by the upper left $2\times 2$ submatrix of (\ref{13.3.9.1.17}),
the assertion follows as an easy consequence of  
the Gaussian elimination for matrices,
cf.\ the Schur complement. We omit the details.
\end{proof}

Let us introduce a set of important vectors in addition to
(\ref{13.3.20.2.33}): 
\begin{align*}
\Psi_3&=z\Phi_3,\quad \Phi_3=2 SM_0\Phi_1^*,\quad \Phi_3^*=\|\Phi_3\|^{\dagger 2}\Phi_3\\
\Psi_4&=z\Phi_4,\quad \Phi_4=S\Phi_2
-\langle \Phi_3^*,\Phi_2\rangle \Phi_3,\quad \Phi_4^*=\|\Phi_4\|^{\dagger 2}\Phi_4,\\
\Psi_5&=z\Phi_5,\quad \Phi_5=(1_{\mathcal K}-m_0^\dagger M_0)\Phi_1^*,\\
\Psi_6&=z\Phi_6,\quad \Phi_6=m_0^\dagger \Phi_2
+2\langle \Phi_5-2\Delta \Phi_3^*,\Phi_2\rangle \Phi_3^*
+2\langle \Phi_3^*,\Phi_2\rangle \Phi_5,
\end{align*}
where 
\begin{align*}
\Delta&=\langle \Phi_1^*,M_0\Phi_5\rangle
=\|\Phi_1\|^{\dagger 4}\langle \Phi_1,(M_0-M_0m_0^{\dagger}M_0)\Phi_1\rangle.
\end{align*}
We will see that combinations of these vectors span the eigenspaces.
The behavior of these vectors under $M_0$ is essential in our argument because it determines
the block matrix components of $M_0$.
We can compute it here:
\begin{align}
\begin{split}
&M_0\Phi_3=\tfrac12\|\Phi_3\|^2\Phi_1,\quad 
M_0\Phi_4=0,\\
&M_0\Phi_5=\Delta \Phi_1+\tfrac12\Phi_3,\quad
M_0\Phi_6=\Phi_2-\Phi_4.
\end{split}\label{13.3.9.14.47b}
\end{align}
In particular, $M_0\Phi_5\in \mathbb C\Phi_1$ and $M_0\Phi_6\in\mathbb C\Phi_2$ if
and only if $\Phi_3=0$ and $\Phi_4=0$, respectively.
For later use in Section~\ref{12.12.17.20.56} we also note that 
\begin{align}
\begin{split}
(1_{\mathcal L^*}+G_0^0V)\Psi_3
&=
\tfrac12\|\Phi_3\|^2 \Psi_1^0,\\
(1_{\mathcal L^*}+G_0^0V)\Psi_5
&=
\Delta \Psi_1^0
+\tfrac12\langle \Phi_2^*,\Phi_3\rangle \Psi_2^0
-\tfrac12G_0^0vU\widetilde Q \Phi_3,\\
(1_{\mathcal L^*}+G_0^0V)\Psi_6
&=
\Psi_2^0
-\langle \Phi_2^*,\Phi_4\rangle \Psi_2^0+G_0^0vU\widetilde Q\Phi_4.
\end{split}\label{13.3.9.14.47}
\end{align}
In particular, $(1_{\mathcal L^*}+G_0^0V)\Psi_5=\Delta \Psi_1^0$ and 
$(1_{\mathcal L^*}+G_0^0V)\Psi_6=\Psi_2^0$ if $\Phi_3=0$ and $\Phi_4=0$, respectively.
Now we can state the main propositions of this subsection.
Their proofs will be given later.

\begin{proposition}\label{12.12.19.6.30}
Suppose $\beta\ge 1$ in Assumption~\ref{12.11.8.1.9}.
Then $P$ is invertible in $\mathcal B(\mathcal K)$
if and only if $\mathcal K=\mathbb C \Phi_1$.
Furthermore, if $P$ is invertible in $\mathcal B(\mathcal K)$,
bases of the eigenspaces are given 
as in Table~\ref{13.3.6.0.26} according to the Cases
defined in Table~\ref{13.3.6.0.26a},
where the entries with parentheses are automatically determined by the assumption and 
those without.
\begin{table}[h]
\begin{center}
\begin{tabular}{|l|ccccc|}
\hline
& $\Phi_1$ & $\Phi_2$ & $\Phi_3$ & $\Phi_4$ &  $\Delta$ \\ 
\hline
Case i.& $=0$&($=0$)&($=0$)&($=0$)&($=0$)\\ 
Case i\hspace{-.05em}i.& $\neq 0$& ($=0$)&($=0$)&($=0$)&$=0$\\ 
Case i\hspace{-.05em}i\hspace{-.05em}i.& $\neq 0$& ($=0$)&($=0$)&($=0$)&$\neq 0$\\ 
\hline
\end{tabular}
\caption{Definitions of Cases (Proposition~\ref{12.12.19.6.30}).}
\label{13.3.6.0.26a}
\end{center}
\end{table}
\begin{table}[h]
\begin{center}
\begin{tabular}{|l|ccccc|}
\hline
\rule{0cm}{0.5cm}& $\widetilde{\mathcal E}$ & $ \mathcal E$  & 
  $E$& $\widetilde{\mathcal E}_{\mathrm{qs}}$ & type\\ 
\hline
Case i.& $\{\Psi_1^0,\Psi_2^0\}$& $\{\Psi_1^0\}$&$\emptyset$&$\emptyset$&exceptional I\\ 
Case i\hspace{-.05em}i.& $\{\Psi_5,\Psi_2^0\}$& $\emptyset$&$\emptyset$&$\{\Psi_5\}$&regular\\ 
Case i\hspace{-.05em}i\hspace{-.05em}i.& $\{\Psi_5,\Psi_2^0\}$& $\emptyset$&$\emptyset$&$\emptyset$&regular\\ 
\hline
\end{tabular}
\caption{Bases of the resulting eigenspaces (Proposition~\ref{12.12.19.6.30}).}
\label{13.3.6.0.26}
\end{center}
\end{table}
\end{proposition}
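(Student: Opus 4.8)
The plan is to exploit the characterization of eigenspaces in Corollary~\ref{13.1.16.2.51} together with the block structure of $M_0$ from Lemma~\ref{13.3.9.3.47}. First I would settle the invertibility claim: by (\ref{12.11.24.22.25}), $P$ is the orthogonal projection onto $\mathbb{C}\Phi_1$, so $P$ is invertible in $\mathcal{B}(\mathcal{K})$ precisely when $P=1_{\mathcal{K}}$, i.e.\ when $\mathcal{K}=\mathbb{C}\Phi_1$. This already forces the rest of the decomposition (\ref{13.3.9.1.16}) to degenerate: since $Q=1_{\mathcal{K}}-P=0$, both $S^\perp Q\mathcal{K}$ and $S\mathcal{K}$ are trivial, and so $\Phi_2=Qv^*\mathbf{n}=0$, $\Phi_3=2SM_0\Phi_1^*=0$, $\Phi_4=0$ automatically. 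This explains the parenthesized entries in Table~\ref{13.3.6.0.26a}: under the standing hypothesis $\mathcal{K}=\mathbb{C}\Phi_1$, the vanishing of $\Phi_2,\Phi_3,\Phi_4$ is not an independent assumption but a consequence, and only the value of $\Delta=\langle\Phi_1^*,M_0\Phi_5\rangle$ remains free to distinguish the three cases.

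Next I would compute the eigenspaces in each case directly from the isomorphisms (\ref{12.12.19.6.56})--(\ref{12.12.19.6.59}). With $\mathcal{K}=\mathbb{C}\Phi_1$, the operator $M_0$ acts as a scalar on this one-dimensional space, and the identity $M_0\Phi_5=\Delta\Phi_1+\tfrac12\Phi_3=\Delta\Phi_1$ from (\ref{13.3.9.14.47b}) (using $\Phi_3=0$) is the key computation. In Case~i, where $\Phi_1=0$ and hence $\mathcal{K}=\{0\}$ is degenerate, I would instead read off the eigenspaces from the second identities in (\ref{12.11.26.19.0})--(\ref{12.11.27.2.29}): the kernel conditions on $w$ pick out $\mathbb{C}\Psi_1^0\oplus\mathbb{C}\Psi_2^0$ for $\widetilde{\mathcal{E}}$ and $\mathbb{C}\Psi_1^0$ for $\mathcal{E}$, while $E$ and $\widetilde{\mathcal{E}}_{\mathrm{qs}}$ are trivial, matching the "exceptional I" row. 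In Cases~ii and iii, where $\Phi_1\neq0$, I would use that $\mathrm{Ker}\,\widetilde{Q}M_0$ and $\mathrm{Ker}\,P$ etc.\ can be described via the scalar $\Delta$: when $\Delta=0$ the vector $\Phi_5$ lies in $\mathrm{Ker}\,M_0$, contributing $\Psi_5$ to $\widetilde{\mathcal{E}}_{\mathrm{qs}}$ (Case~ii), whereas $\Delta\neq0$ removes $\Psi_5$ from the quasi-symmetric part (Case~iii). In both subcases $\Psi_2^0$ survives in $\widetilde{\mathcal{E}}$ through $\mathrm{Ker}\,v^*|_{\mathbb{C}\Psi_2^0}$, and one verifies $\mathcal{E}=E=\{0\}$ so the threshold is regular.

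I expect the main obstacle to be bookkeeping the degenerate-subspace conventions cleanly: when $\mathcal{K}=\mathbb{C}\Phi_1$ several of the projections $Q,S,T$ and the vectors $\Phi_2,\Phi_3,\Phi_4$ collapse to zero, and one must check that the block-matrix language of Remark~\ref{13.3.20.7.9} and the pseudoinverse conventions of Section~\ref{12.12.19.1.54} still yield well-defined expressions (with the understanding, per Remark~\ref{13.3.20.7.9}, that empty rows and columns are eliminated). The honest verification is that the abstract isomorphisms of Corollary~\ref{13.1.16.2.51} specialize correctly once these degeneracies are imposed; the remaining dimension counts and the identification of the basis vectors $\Psi_5,\Psi_2^0$ via (\ref{13.3.9.14.47}) are then routine. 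The delicate point is confirming that $\Psi_5$ is genuinely nonzero and independent of $\Psi_2^0$ in Cases~ii and iii, which follows from injectivity of $z$ on the relevant kernels via (\ref{12.11.26.19.10}).
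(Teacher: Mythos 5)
Your proposal is correct and takes essentially the same route as the paper's proof: both settle the invertibility claim directly from $P$ being the orthogonal projection onto $\mathbb{C}\Phi_1$, note the resulting degeneracies ($\mathcal{K}=\{0\}$ or $\mathcal{K}=\mathbb{C}\Phi_1$, with $\Phi_2=\Phi_3=\Phi_4=0$), and then read off the eigenspaces from the isomorphisms of Corollary~\ref{13.1.16.2.51} together with the now scalar action of $M_0$ (via $M_0\Phi_5=\Delta\Phi_1$, cf.\ Lemma~\ref{13.3.9.3.47} and (\ref{13.3.9.14.47b})). The details you supply, such as using injectivity of $z$ on $\mathop{\mathrm{Ker}}\widetilde{Q}M_0$ to ensure $\Psi_5\neq0$ and its independence from $\Psi_2^0$, are exactly what the paper compresses into ``we can complete Table~\ref{13.3.6.0.26} with ease.''
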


\begin{proposition}\label{12.12.19.6.31}
Suppose that $\beta\ge 2$ in Assumption~\ref{12.11.8.1.9},
and that $P$ is not invertible in $\mathcal B(\mathcal K)$.
Then
\begin{align}
Q\mathcal K\cap \mathop{\mathrm{Ker}}m_0
=\mathop{\mathrm{Ker}} P\cap \mathop{\mathrm{Ker}}  QM_0
\cong \mathcal E
\big/(\mathop{\mathrm{Ker}} v^*|_{\mathbb C \Psi_1^0}).
\label{12.12.19.7.0}
\end{align}
Furthermore, if $m_0$ is invertible in $\mathcal B(Q\mathcal K)$,
bases of the eigenspaces are given 
as in Table~\ref{13.3.6.0.27} according to the Cases
defined in Table~\ref{13.3.6.0.27a},
where the entries with parentheses are automatically determined by the assumption and 
those without.
\begin{table}[h]
\begin{center}
\begin{tabular}{|l|ccccc|}
\hline
& $\Phi_1$ & $\Phi_2$ & $\Phi_3$ & $\Phi_4$ &  $\Delta$ \\ 
\hline
Case i. & $=0$ & $=0$ & ($=0$) & ($=0$) & ($=0$)\\ 
Case i\hspace{-.05em}i. & $\neq 0$& $= 0$&($=0$)&($=0$)&$=0$\\ 
Case i\hspace{-.05em}i\hspace{-.05em}i. & $\neq 0$& $= 0$&($=0$)&($=0$)&$\neq 0$\\ 
Case i\hspace{-.05em}v. & $=0$& $\neq 0$&($=0$)&($=0$)&($=0$)\\ 
Case v.& $\neq 0$& $\neq 0$&($=0$)&($=0$)&$=0$\\ 
Case v\hspace{-.05em}i. & $\neq 0$& $\neq 0$&($=0$)&($=0$)&$\neq 0$\\ 
\hline
\end{tabular}
\caption{Definitions of Cases (Proposition~\ref{12.12.19.6.31}).}
\label{13.3.6.0.27a}
\end{center}
\end{table}
\begin{table}[h]
\begin{center}
\begin{tabular}{|l|ccccc|}
\hline
\rule{0cm}{0.5cm}& $\widetilde{\mathcal E}$ & $\mathcal E$ &
 $E$ &$\widetilde{\mathcal E}_{\mathrm{qs}}$ & type \\ 
\hline
Case i.& $\{\Psi_1^0,\Psi_2^0\}$& $\{\Psi_1^0\}$&$\emptyset$&$\emptyset$&exceptional I\\ 
Case i\hspace{-.05em}i.& $\{\Psi_5,\Psi_2^0\}$& $\emptyset$&$\emptyset$&$\{\Psi_5\}$&regular\\ 
Case i\hspace{-.05em}i\hspace{-.05em}i.& $\{\Psi_5,\Psi_2^0\}$& $\emptyset$&$\emptyset$&$\emptyset$&regular\\ 
Case i\hspace{-.05em}v.& $\{\Psi_1^0,\Psi_6\}$& $\{\Psi_1^0\}$&$\emptyset$&$\emptyset$&exceptional I\\ 
Case v.& $\{\Psi_5,\Psi_6\}$& $\emptyset$&$\emptyset$&$\{\Psi_5\}$&regular\\ 
Case v\hspace{-.05em}i.& $\{\Psi_5,\Psi_6\}$& $\emptyset$&$\emptyset$&$\emptyset$&regular\\ 
\hline
\end{tabular}
\caption{Bases of the resulting eigenspaces (Proposition~\ref{12.12.19.6.31}).}
\label{13.3.6.0.27}
\end{center}
\end{table}
\end{proposition}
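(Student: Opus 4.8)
The plan is to reduce everything to the block structure of $M_0$ established in Lemma~\ref{13.3.9.3.47} and Remark~\ref{13.3.20.7.9}, and then read off the eigenspaces via the isomorphisms of Corollary~\ref{13.1.16.2.51}. First I would prove the identity (\ref{12.12.19.7.0}). The inclusion $Q\mathcal K\cap\mathop{\mathrm{Ker}}m_0=\mathop{\mathrm{Ker}}P\cap\mathop{\mathrm{Ker}}QM_0$ follows from $m_0=QM_0Q$ (Lemma~\ref{12.12.19.4.4}) together with the block relation $SM_0Q=QM_0S=0$ from (\ref{13.3.20.4.5}): on $Q\mathcal K$, $M_0\Phi=0$ is equivalent to $QM_0Q\Phi=0$ because the $S$-component drops out, so the kernel of $m_0$ inside $Q\mathcal K$ coincides with $\mathop{\mathrm{Ker}}QM_0$ restricted to $Q\mathcal K=\mathop{\mathrm{Ker}}P$. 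The isomorphism with $\mathcal E/(\mathop{\mathrm{Ker}}v^*|_{\mathbb C\Psi_1^0})$ is then immediate from (\ref{12.12.19.6.57}), since $z$ carries $\mathop{\mathrm{Ker}}P\cap\mathop{\mathrm{Ker}}QM_0$ isomorphically onto the first summand of $\mathcal E$.

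For the second half I assume $m_0$ invertible in $\mathcal B(Q\mathcal K)$, which means $S=0$, i.e.\ the third row and column of the block matrix (\ref{13.3.9.1.17}) are absent and $\Phi_3=\Phi_4=0$ automatically (these are $S$-vectors). Hence in every Case of Table~\ref{13.3.6.0.27a} the entries $\Phi_3,\Phi_4$ are forced to vanish, as indicated by the parentheses, and only the status of $\Phi_1,\Phi_2,\Delta$ remains free, giving the six listed Cases. I would then compute each eigenspace using Corollary~\ref{13.1.16.2.51}: $E=z(\mathop{\mathrm{Ker}}\widetilde P\cap\mathop{\mathrm{Ker}}M_0)$, and since $\Phi_3=\Phi_4=0$ the relations (\ref{13.3.9.14.47b}) show $M_0\Phi_5=\Delta\Phi_1$ and $M_0\Phi_6=\Phi_2$, so the generalized eigenfunctions are controlled entirely by whether $\Phi_1,\Phi_2,\Delta$ vanish. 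Going Case by Case I would check which of $\Psi_1^0,\Psi_2^0,\Psi_5,\Psi_6$ lie in $\mathop{\mathrm{Ker}}\widetilde QM_0$, in $\mathop{\mathrm{Ker}}P\cap\mathop{\mathrm{Ker}}QM_0$, in $\mathop{\mathrm{Ker}}\widetilde P\cap\mathop{\mathrm{Ker}}M_0$, and in $\mathop{\mathrm{Ker}}M_0$, translating through $z$ to obtain the columns $\widetilde{\mathcal E},\mathcal E,E,\widetilde{\mathcal E}_{\mathrm{qs}}$ of Table~\ref{13.3.6.0.27}; the reduced expressions (\ref{13.3.9.14.47}) for $(1_{\mathcal L^*}+G_0^0V)\Psi_j$ when $\Phi_3=\Phi_4=0$ confirm that these images indeed solve $H\Psi=0$ and have the stated asymptotics.

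The main obstacle I expect is the bookkeeping of \emph{when} a candidate vector is genuinely nonzero and linearly independent, rather than the kernel membership itself. For instance, in Cases i and iv one must verify that $\Psi_1^0\in\mathcal E$ but $\notin E$ (giving the exceptional-I classification), which requires knowing that $\Psi_1^0=\mathbf 1\notin\mathcal L^{\beta-2}$ together with the second identities of (\ref{12.11.26.19.1}) and (\ref{12.11.27.2.28}) concerning $\mathop{\mathrm{Ker}}w|_{\mathcal E}$ versus $\mathop{\mathrm{Ker}}w|_E$; and in the regular Cases one must confirm $\mathcal E=E=\{0\}$ by showing no nonzero combination survives in $\mathop{\mathrm{Ker}}P\cap\mathop{\mathrm{Ker}}QM_0$. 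The distinction between Cases ii/v ($\Delta=0$, quasi-symmetric space nontrivial via $\Psi_5$) and iii/vi ($\Delta\neq 0$, quasi-symmetric space trivial) hinges precisely on whether $\Psi_5\in\widetilde{\mathcal E}_{\mathrm{qs}}=z(\mathop{\mathrm{Ker}}M_0)$, i.e.\ whether $M_0\Phi_5=\Delta\Phi_1$ vanishes, so the $\Delta$-dichotomy is the crux of separating these rows. Once the invertibility of $m_0$ has been used to kill $S$, each Case is a finite linear-algebra verification, so I would present one representative Case in detail and indicate that the rest follow by the same reading of (\ref{13.3.9.14.47b}) and Corollary~\ref{13.1.16.2.51}.
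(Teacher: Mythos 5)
Your proposal is correct and follows essentially the same route as the paper: the identity (\ref{12.12.19.7.0}) is read off from $m_0=QM_0Q$ together with the isomorphism (\ref{12.12.19.6.57}), invertibility of $m_0$ forces $S=0$ (hence $\Phi_3=\Phi_4=0$ and $E=\{0\}$), and the table is then filled in case by case from Corollary~\ref{13.1.16.2.51}, Lemma~\ref{13.3.9.3.47} and (\ref{13.3.9.14.47b}), which is exactly the paper's block-decomposition argument. One cosmetic slip: for $\Phi\in Q\mathcal K$ the relevant equivalence is $m_0\Phi=0$ if and only if $QM_0\Phi=0$ (immediate from $m_0=QM_0Q$, since (\ref{12.12.19.7.0}) concerns $\mathop{\mathrm{Ker}}QM_0$ rather than $\mathop{\mathrm{Ker}}M_0$), not ``$M_0\Phi=0$ is equivalent to $QM_0Q\Phi=0$'' as you write mid-argument, though your concluding sentence states the correct fact.
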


\begin{proposition}\label{12.12.19.6.32}
Suppose that $\beta\ge 3$ in Assumption~\ref{12.11.8.1.9},
and that $P$ and $m_0$ are not invertible in 
$\mathcal B(\mathcal K)$ and $\mathcal B(Q\mathcal K)$, respectively.
Then
\begin{align}
&q_0=-\tfrac12\bigl[\langle S\Phi_2,\cdot\rangle S\Phi_2
+\langle \Phi_3,\cdot\rangle \Phi_3\bigr],
\label{12.12.19.8.41}\\
&S\mathcal K\cap \mathop{\mathrm{Ker}}q_0
=\mathop{\mathrm{Ker}} \widetilde P\cap \mathop{\mathrm{Ker}}M_0
\cong E.\label{12.12.19.8.42}
\end{align}
Furthermore, if $q_0$ is invertible in $\mathcal B(S\mathcal K)$,
bases of the eigenspaces are given 
as in Table~\ref{13.3.6.0.28} according to the Cases
defined in Table~\ref{13.3.6.0.28a},
where the entries with parentheses are automatically determined by the assumption and 
those without.
\begin{table}[h]
\begin{center}
\begin{tabular}{|l|ccccc|}
\hline
& $\Phi_1$ & $\Phi_2$ & $\Phi_3$ & $\Phi_4$ &  $\Delta$ \\ 
\hline
Case v\hspace{-.05em}i\hspace{-.05em}i.& $=0$& ($\neq 0$)&($=0$)&($\neq 0$)&($=0)$\\ 
Case v\hspace{-.05em}i\hspace{-.05em}i\hspace{-.05em}i.& ($\neq 0)$& $= 0$&$\neq 0$&$= 0$&arbitrary\\ 
Case i\hspace{-.05em}x.& $\neq 0$& ($\neq 0$)&$=0$&($\neq 0$)&$=0$\\ 
Case x.& $\neq 0$& ($\neq 0$)&$=0$&($\neq 0$)&$\neq 0$\\ 
Case x\hspace{-.05em}i.& ($\neq 0$)& $\neq 0$&$\neq 0$&$= 0$&arbitrary\\ 
Case x\hspace{-.05em}i\hspace{-.05em}i.& ($\neq 0)$& ($\neq 0$)&$\neq 0$&$\neq 0$&arbitrary\\ 
\hline
\end{tabular}
\caption{Definitions of Cases (Proposition~\ref{12.12.19.6.32}).}
\label{13.3.6.0.28a}
\end{center}
\end{table}
\begin{table}[h]
\begin{center}
\begin{tabular}{|l|ccccc|}
\hline
\rule{0cm}{0.5cm}&
$\widetilde{\mathcal E}$ & $ \mathcal E$ & $E$& $\widetilde{\mathcal E}_{\mathrm{qs}}$ & type \\ 
\hline
Case v\hspace{-.05em}i\hspace{-.05em}i.&
$\{\Psi_1^0,\Psi_4\}$& $\{\Psi_1^0,\Psi_4\}$&$\emptyset$&$\{\Psi_4\}$&exceptional I\\ 
Case v\hspace{-.05em}i\hspace{-.05em}i\hspace{-.05em}i.&
$\{\Psi_3,\Psi_2^0\}$& $\{\Psi_3\}$&$\emptyset$&$\emptyset$&exceptional I\\  
Case i\hspace{-.05em}x.&
$\{\Psi_5,\Psi_4\}$& $\{\Psi_4\}$&$\emptyset$&$\{\Psi_5,\Psi_4\}$&exceptional I\\  
Case x.&
$\{\Psi_5,\Psi_4\}$& $\{\Psi_4\}$&$\emptyset$&$\{\Psi_4\}$&exceptional I\\  
Case x\hspace{-.05em}i.&
$\{\Psi_3,\Psi_6\}$& $\{\Psi_3\}$&$\emptyset$&$\emptyset$&exceptional I\\ 
Case x\hspace{-.05em}i\hspace{-.05em}i.&
$\{\Psi_3,\Psi_4\}$& $\{\Psi_3,\Psi_4\}$&$\emptyset$&$\{\Psi_4\}$&exceptional I\\ 
\hline
\end{tabular}
\caption{Bases of the resulting eigenspaces (Proposition~\ref{12.12.19.6.32}).}
\label{13.3.6.0.28}
\end{center}
\end{table}
\end{proposition}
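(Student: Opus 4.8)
The plan is to prove the three assertions in sequence: the closed form (\ref{12.12.19.8.41}) for $q_0$, the kernel identity (\ref{12.12.19.8.42}), and finally the six-row table under the extra hypothesis that $q_0$ is invertible. For (\ref{12.12.19.8.41}) I would begin from Lemma~\ref{12.12.19.4.4}, which gives $q_0=Sm_1S$ with $m_1=QM_1Q-QM_0(Q+\gamma P)M_0Q$; since $S\le Q$ this reduces to $q_0=SM_1S-SM_0(Q+\gamma P)M_0S$. In the second term the identity $SM_0Q=0$ from Lemma~\ref{13.3.9.3.47} annihilates the $Q$-part, leaving $\gamma\,SM_0PM_0S$; substituting $P=\tfrac{\gamma}{2}\langle\Phi_1,\cdot\rangle\Phi_1$ together with $SM_0\Phi_1^*=\tfrac12\Phi_3$ and $\gamma=2\|\Phi_1\|^{\dagger 2}$ collapses this to $\tfrac12\langle\Phi_3,\cdot\rangle\Phi_3$. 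For the first term I would evaluate $SM_1S$ directly from the convolution kernel $G^0_1(n)=\tfrac14 n^2-\tfrac1{16}$ of $M_1=v^*G_1^0v$: for $\Phi,\Psi\in S\mathcal K\subset Q\mathcal K$ one has $\langle\mathbf 1,v\Phi\rangle=\langle\Phi_1,\Phi\rangle=0$ and likewise for $\Psi$, so expanding $(m-k)^2=m^2-2mk+k^2$ kills the $m^2$, $k^2$, and constant contributions and leaves only the cross term, which equals $-\tfrac12\langle\Psi,S\Phi_2\rangle\langle S\Phi_2,\Phi\rangle$ after using $\sum_k k(v\Phi)[k]=\langle v^*\mathbf n,\Phi\rangle=\langle S\Phi_2,\Phi\rangle$. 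Hence $SM_1S=-\tfrac12\langle S\Phi_2,\cdot\rangle S\Phi_2$, and adding the two pieces yields (\ref{12.12.19.8.41}).

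For (\ref{12.12.19.8.42}) the formula just obtained shows that $S\mathcal K\cap\mathop{\mathrm{Ker}}q_0$ is the set of $\Phi\in S\mathcal K$ with $\langle S\Phi_2,\Phi\rangle=\langle\Phi_3,\Phi\rangle=0$. I would identify this with $\mathop{\mathrm{Ker}}\widetilde P\cap\mathop{\mathrm{Ker}}M_0$ as follows. For $\Phi\in S\mathcal K$ one has $M_0\Phi\in\mathbb C\Phi_1$, and pairing against $\Phi_1^*$ gives $M_0\Phi=\tfrac12\langle\Phi_3,\Phi\rangle\Phi_1$; thus $M_0\Phi=0$ is equivalent to $\langle\Phi_3,\Phi\rangle=0$. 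Moreover $\widetilde P\Phi=0$ means $\Phi\perp\Phi_1,\Phi_2$, and orthogonality to $\Phi_1$ is automatic on $Q\mathcal K$, so it amounts to $\langle S\Phi_2,\Phi\rangle=0$. The reverse containment $\mathop{\mathrm{Ker}}\widetilde P\cap\mathop{\mathrm{Ker}}M_0\subset S\mathcal K$ is immediate, since such a vector lies in $Q\mathcal K$ and in $\mathop{\mathrm{Ker}}m_0$. The isomorphism with $E$ is then precisely (\ref{12.12.19.6.58}) of Corollary~\ref{13.1.16.2.51}, the map $z$ being injective there because $\mathop{\mathrm{Ker}}w|_E=\{0\}$ by (\ref{12.11.27.2.28}).

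For the tables I would first translate invertibility of $q_0$. Since $\Phi_4=S\Phi_2-\langle\Phi_3^*,\Phi_2\rangle\Phi_3$ is the component of $S\Phi_2$ orthogonal to $\Phi_3$, the range of $q_0$ is $\mathop{\mathrm{span}}\{\Phi_3,\Phi_4\}$, so $q_0$ is invertible in $\mathcal B(S\mathcal K)$ exactly when $\{\Phi_3,\Phi_4\}$ spans $S\mathcal K$; combined with the standing assumption that $m_0$ is not invertible (so $S\mathcal K\neq\{0\}$), this yields the six sign patterns of Table~\ref{13.3.6.0.28a}. In particular $S\mathcal K\cap\mathop{\mathrm{Ker}}q_0=\{0\}$, so $E=\{0\}$ in every case by (\ref{12.12.19.8.42}). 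To fill the remaining columns I would feed the relevant kernels into Corollary~\ref{13.1.16.2.51}: compute $\mathop{\mathrm{Ker}}\widetilde QM_0$, $\mathop{\mathrm{Ker}}P\cap\mathop{\mathrm{Ker}}QM_0$, and $\mathop{\mathrm{Ker}}M_0$ from the block representation (\ref{13.3.9.1.17}) of $M_0$ and the explicit images (\ref{13.3.9.14.47b}), transport them by $z$, and adjoin the flat contributions $\Psi_1^0$ (present exactly when $\Phi_1=0$) and $\Psi_2^0$ (present exactly when $\Phi_2=0$) coming from $\mathop{\mathrm{Ker}}v^*|_{\mathbb C\Psi_1^0\oplus\mathbb C\Psi_2^0}$. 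Since $z(S\mathcal K)\subset\mathcal E$ is nontrivial while $E=\{0\}$, each case is exceptional of the first kind, matching the last column.

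I expect the last step to be the main obstacle: one must show that the tabulated vectors are not merely contained in, but exhaust, the eigenspaces, i.e. that the block matrix (\ref{13.3.9.1.17}) contributes no additional kernel directions beyond those spanned by $\Phi_1,\Phi_3,\Phi_4$ (and $\Phi_5,\Phi_6$ through the $\Delta$-dependence). This is where Lemma~\ref{13.3.9.3.47} does the real work: the middle block $S^\perp QM_0S^\perp Q=m_0|_{S^\perp Q\mathcal K}$ is invertible, so a Schur-complement elimination forces the $S^\perp Q\mathcal K$-component of any kernel vector in terms of its $\mathbb C\Phi_1$ and $S\mathcal K$ components, reducing each kernel computation to the small corner involving $\Phi_1$ and $S\mathcal K$. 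Carefully tracking dimensions and linear independence in each of the six cases — in particular the way $\Delta=\langle\Phi_1^*,M_0\Phi_5\rangle$ separates Cases ix and x by deciding, via $(1_{\mathcal L^*}+G_0^0V)\Psi_5=\Delta\Psi_1^0$ and the characterization (\ref{13.3.6.9.36}), whether $\Psi_5$ is quasi-symmetric — is where the bulk of the remaining bookkeeping lies.
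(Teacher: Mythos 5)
Your proposal is correct and follows essentially the same route as the paper's proof: Lemma~\ref{12.12.19.4.4} together with the kernel formulas (\ref{12.12.19.8.26}) and $Sv^*\mathbf 1=Qv^*\mathbf 1=0$ for (\ref{12.12.19.8.41}), reduction to $\Phi\in S\mathcal K$ and the identification of the two orthogonality conditions for (\ref{12.12.19.8.42}), and the block decomposition of $M_0$ combined with Corollary~\ref{13.1.16.2.51}, Lemma~\ref{13.3.9.3.47} and (\ref{13.3.9.14.47b}) for the tables. The only point worth making explicit is that your step ``$q_0\Phi=0$ iff $\langle S\Phi_2,\Phi\rangle=\langle\Phi_3,\Phi\rangle=0$'' relies on both rank-one terms in (\ref{12.12.19.8.41}) carrying the same sign, so that it holds even when $S\Phi_2$ and $\Phi_3$ are linearly dependent --- a subtlety the paper flags explicitly.
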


\begin{proposition}\label{12.12.19.6.33}
Suppose that $\beta\ge 4$ in Assumption~\ref{12.11.8.1.9},
and that $P$, $m_0$ and $q_0$ are not invertible in 
$\mathcal B(\mathcal K)$, $\mathcal B(Q\mathcal K)$ 
and $\mathcal B(S\mathcal K)$, respectively. Then
\begin{align}
r_0=-Tz^*zT,
\label{13.3.8.16.2}
\end{align}
and $r_0$ is always invertible in $\mathcal B(T\mathcal K)$.
Bases of the eigenspaces are given 
as in Table~\ref{13.3.6.0.29} according to the Cases
defined in Table~\ref{13.3.6.0.29a},
where the entries with parentheses are automatically determined by the assumption and 
those without.
\begin{table}[h]
\begin{center}
\begin{tabular}{|l|ccccc|}
\hline
& $\Phi_1$ & $\Phi_2$ & $\Phi_3$ & $\Phi_4$ &  $\Delta$ \\ 
\hline
Case i.& $= 0$& $= 0$&($=0$)&$= 0$&($=0$)\\ 
Case i\hspace{-.05em}i.& $\neq 0$& $= 0$&$= 0$&$= 0$&$= 0$\\ 
Case i\hspace{-.05em}i\hspace{-.05em}i.& $\neq 0$& $= 0$&$=0$&$= 0$&$\neq 0$\\ 
Case i\hspace{-.05em}v.& $= 0$& $\neq 0$&($=0$)&$= 0$&($=0$)\\ 
Case v.& $\neq 0$& $\neq 0$&$= 0$&$= 0$&$= 0$\\ 
Case v\hspace{-.05em}i.& $\neq 0$& $\neq 0$&$=0$&$= 0$&$\neq 0$\\ 
Case v\hspace{-.05em}i\hspace{-.05em}i.& $=0$& ($\neq 0$)&($=0$)&$\neq 0$&($=0$)\\ 
Case v\hspace{-.05em}i\hspace{-.05em}i\hspace{-.05em}i.& ($\neq 0$)& $= 0$&$\neq 0$&$= 0$&arbitrary\\ 
Case i\hspace{-.05em}x.& $\neq 0$& ($\neq 0$)&$=0$&($\neq 0$)&$=0$\\ 
Case x.& $\neq 0$& ($\neq 0$)&$=0$&$\neq 0$&$\neq 0$\\ 
Case x\hspace{-.05em}i.& ($\neq 0$)& $\neq 0$&$\neq 0$&$= 0$&arbitrary\\ 
Case x\hspace{-.05em}i\hspace{-.05em}i.& ($\neq 0)$& ($\neq 0$)&$\neq 0$&$\neq 0$&arbitrary\\ 
\hline
\end{tabular}
\caption{Definitions of Cases (Proposition~\ref{12.12.19.6.33}).}
\label{13.3.6.0.29a}
\end{center}
\end{table}
\begin{table}[h]
\begin{center}
\begin{tabular}{|l|ccccc|}
\hline
\rule{0cm}{0.5cm}& $\widetilde{\mathcal E}/E$ & $ \mathcal E/E$
 &  $ E$ & $\widetilde{\mathcal E}_{\mathrm{qs}}/E$ & type\\ 
\hline
Case i.& $\{\Psi_1^0,\Psi_2^0\}$& $\{\Psi_1^0\}$&$0<\#<\infty$&$\emptyset$& exceptional I\hspace{-.1em}I\hspace{-.1em}I \\ 
Case i\hspace{-.05em}i.& $\{\Psi_5,\Psi_2^0\}$& $\emptyset$&$0<\#<\infty$&$\{\Psi_5\}$&exceptional I\hspace{-.1em}I\\ 
Case i\hspace{-.05em}i\hspace{-.05em}i.& $\{\Psi_5,\Psi_2^0\}$& $\emptyset$&$0<\#<\infty$&$\emptyset$&exceptional I\hspace{-.1em}I\\ 
Case i\hspace{-.05em}v.& $\{\Psi_1^0,\Psi_6\}$& $\{\Psi_1^0\}$&$0<\#<\infty$&$\emptyset$&exceptional I\hspace{-.1em}I\hspace{-.1em}I\\ 
Case v.& $\{\Psi_5,\Psi_6\}$& $\emptyset$&$0<\#<\infty$&$\{\Psi_5\}$&exceptional I\hspace{-.1em}I\\ 
Case v\hspace{-.05em}i.& $\{\Psi_5,\Psi_6\}$& $\emptyset$&$0<\#<\infty$&$\emptyset$&exceptional I\hspace{-.1em}I\\ 
Case v\hspace{-.05em}i\hspace{-.05em}i.& $\{\Psi_1^0,\Psi_4\}$& $\{\Psi_1^0,\Psi_4\}$&$0<\#<\infty$&$\{\Psi_4\}$&exceptional I\hspace{-.1em}I\hspace{-.1em}I\\ 
Case v\hspace{-.05em}i\hspace{-.05em}i\hspace{-.05em}i.& $\{\Psi_3,\Psi_2^0\}$& $\{\Psi_3\}$&$0<\#<\infty$&$\emptyset$&exceptional I\hspace{-.1em}I\hspace{-.1em}I\\  
Case i\hspace{-.05em}x.& $\{\Psi_5,\Psi_4\}$& $\{\Psi_4\}$&$0<\#<\infty$&$\{\Psi_5,\Psi_4\}$&exceptional I\hspace{-.1em}I\hspace{-.1em}I\\  
Case x.& $\{\Psi_5,\Psi_4\}$& $\{\Psi_4\}$&$0<\#<\infty$&$\{\Psi_4\}$&exceptional I\hspace{-.1em}I\hspace{-.1em}I\\  
Case x\hspace{-.05em}i.& $\{\Psi_3,\Psi_6\}$& $\{\Psi_3\}$&$0<\#<\infty$&$\emptyset$&exceptional I\hspace{-.1em}I\hspace{-.1em}I\\ 
Case x\hspace{-.05em}i\hspace{-.05em}i.& $\{\Psi_3,\Psi_4\}$& $\{\Psi_3,\Psi_4\}$&$0<\#<\infty$&$\{\Psi_4\}$&exceptional I\hspace{-.1em}I\hspace{-.1em}I\\ 
\hline
\end{tabular}
\caption{Bases of the resulting eigenspaces (Proposition~\ref{12.12.19.6.33}).}
\label{13.3.6.0.29}
\end{center}
\end{table}
\end{proposition}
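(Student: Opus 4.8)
The plan is to prove the two assertions in order: first the closed formula (\ref{13.3.8.16.2}) together with the invertibility it yields, and then the twelve-case tabulation of eigenspace bases. For the formula I would begin from the expression $r_0=Tq_1T$ with $q_1=Sm_2S-Sm_1(S+m_0^\dagger)m_1S$ furnished by Lemma~\ref{12.12.19.4.4}, and exploit that, by (\ref{12.12.19.8.42}), the projection $T$ has range $T\mathcal K=\mathop{\mathrm{Ker}}\widetilde P\cap\mathop{\mathrm{Ker}}M_0$, so that $M_0T=TM_0=0$ and $T\le S\le Q$. Absorbing $S$ into $T$ and discarding every term of $m_2$ flanked by an $M_0$ adjacent to $T$ collapses $Tm_2T$ to $TM_2T$. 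The crucial point is that the correction term drops out on $T\mathcal K$: for $\Phi\in T\mathcal K$ the vanishing moments $\langle\mathbf 1,v\Phi\rangle=\langle\mathbf n,v\Phi\rangle=0$ hold, since $z\Phi=-G_0^0v\Phi\in E\subset\mathcal L^{\beta-2}$ forces them through Lemma~\ref{12.11.24.18.24}. Feeding these into the explicit even kernel $G_1^0(n)=\tfrac14|n|^2-\tfrac1{16}$ from (\ref{12.12.19.8.26}) shows $G_1^0v\Phi$ is a constant multiple of $\mathbf 1$, so $M_1\Phi=v^*G_1^0v\Phi\in\mathbb C\Phi_1$ and $m_1T\Phi=QM_1\Phi=0$ because $Q\Phi_1=0$. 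Hence $r_0=TM_2T$.

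To recognise $TM_2T$ as $-Tz^*zT$ I would use the same two vanishing moments together with the quadratic resolvent identity. Differentiating $R_0(z)=(H_0-z)^{-1}$ and substituting $z=-\kappa^2$ gives $R_0(\kappa)^2=-\tfrac1{2\kappa}\tfrac{d}{d\kappa}R_0(\kappa)$, and matching the $\kappa^0$ coefficients yields $(G_0^0)^2=-G_2^0-G_{-1}^0G_1^0-G_1^0G_{-1}^0$. Since $zT\Phi=-G_0^0v\Phi$ — the first two terms of (\ref{12.12.27.16.4}) vanish as $M_0\Phi=0$ — and $G_{-1}^0$ is convolution with the constant $\tfrac12$, both cross terms annihilate $v\Phi,v\Phi'$ through the vanishing of $\langle\mathbf 1,v\Phi\rangle$; consequently $\langle zT\Phi,zT\Phi'\rangle=\langle v\Phi,(G_0^0)^2v\Phi'\rangle=-\langle\Phi,M_2\Phi'\rangle$ for $\Phi,\Phi'\in T\mathcal K$. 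Reading $Tz^*zT$ as the $\ell^2$-Gram operator $(zT)^*(zT)$ of $zT$ (legitimate because $zT$ maps into the finite-dimensional $E\subset\ell^2$) then gives $Tz^*zT=-TM_2T=-r_0$, which is (\ref{13.3.8.16.2}). Invertibility is immediate: by (\ref{12.11.26.19.10}) the map $w$ is a left inverse of $z$ on $\mathop{\mathrm{Ker}}\widetilde QM_0\supset T\mathcal K$, so $z$ is injective on $T\mathcal K$ and $\langle\Phi,(-r_0)\Phi\rangle=\|zT\Phi\|^2>0$ for $0\neq\Phi\in T\mathcal K$; thus $r_0$ is negative definite, invertible in $\mathcal B(T\mathcal K)$, and the iteration stops.

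For the tables I would feed the kernel characterisations of Corollary~\ref{13.1.16.2.51} — that $\widetilde{\mathcal E}$, $\mathcal E$, $E$, $\widetilde{\mathcal E}_{\mathrm{qs}}$ are the $z$-images of $\mathop{\mathrm{Ker}}\widetilde QM_0$, $\mathop{\mathrm{Ker}}P\cap\mathop{\mathrm{Ker}}QM_0$, $\mathop{\mathrm{Ker}}\widetilde P\cap\mathop{\mathrm{Ker}}M_0$, $\mathop{\mathrm{Ker}}M_0$ (up to the $\mathop{\mathrm{Ker}}v^*$ pieces) — into the $3\times3$ block description of $M_0$ from Remark~\ref{13.3.20.7.9}. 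Since $q_0$ is now non-invertible the corner block is nontrivial, so $E=z(T\mathcal K)\neq\{0\}$ with $0<\dim E<\infty$, which is exactly why every case is exceptional of the second or third kind. The vectors $\Psi_3,\dots,\Psi_6=z\Phi_3,\dots,z\Phi_6$ were constructed so that the relations (\ref{13.3.9.14.47b}) pin down precisely when each lies in the various kernels (e.g.\ $\Phi_4\in\mathop{\mathrm{Ker}}M_0$ always, while $\Phi_3\in\mathop{\mathrm{Ker}}M_0$ iff $\Phi_1=0$ or $\Phi_3=0$), so that bases modulo $E$ can be read off directly. I would then run through the twelve vanishing patterns of $(\Phi_1,\Phi_2,\Phi_3,\Phi_4,\Delta)$ in Table~\ref{13.3.6.0.29a}, check that the parenthesised entries are forced by the definitions of $\Phi_3,\Phi_4,\Delta$ and by the prior non-invertibility of $P,m_0,q_0$, and assign each threshold type from Definition~\ref{13.8.4.20.7} via the computed $\mathcal E$ and $E$.

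The conceptual heart is the identity (\ref{13.3.8.16.2}): one must see that the $m_1$-correction in $q_1$ disappears on $T\mathcal K$ and that $M_2$ is exactly the Gram form of $z$ there, both resting on the two vanishing moments and the quadratic resolvent identity. Once it is in place the invertibility is a one-line positivity argument and the iteration terminates. By contrast the tables carry the bulk of the labour but are essentially mechanical given Corollary~\ref{13.1.16.2.51} and (\ref{13.3.9.14.47b}); the only delicate point there is the consistency bookkeeping, namely verifying that each listed sign pattern is genuinely attainable and that the automatically determined entries are indeed forced.
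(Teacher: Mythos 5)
Your proposal is correct and, apart from one step, follows the paper's own route: the reduction $r_0=Tq_1T\to TM_2T$ via $TM_0=M_0T=0$ and $m_1T=0$ (these are the paper's identities (\ref{12.12.27.14.14}), derived there from the kernels (\ref{12.12.19.8.26}) just as you do), the identification with $-Tz^*zT$ through the vanishing moments of $v\Phi$, $\Phi\in T\mathcal K$, the invertibility from injectivity of $z$ on $T\mathcal K\subset\mathop{\mathrm{Ker}}\widetilde QM_0$ plus positivity of the Gram form (this is exactly how the paper's Corollary~\ref{12.12.28.10.55} works), and the tables from Corollary~\ref{13.1.16.2.51}, (\ref{13.3.9.14.47b}) and the five-block decomposition (\ref{13.3.8.16.21}) of $M_0$, in which $T\mathcal K$ is inert. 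The genuinely different step is the identity $\langle x_1,G_2^0x_2\rangle=-\langle G_0^0x_1,G_0^0x_2\rangle$ on vanishing-moment vectors: the paper has it as Lemma~\ref{12.12.27.15.3}, proved by Fourier analysis (the distributional derivative of $\mathrm{p.v.}\cot\tfrac{\theta}{2}$, dominated convergence) and available to cite inside the proposition's proof, whereas you re-derive it from $R_0(\kappa)^2=-\tfrac1{2\kappa}\tfrac{d}{d\kappa}R_0(\kappa)$ by matching $\kappa^0$ coefficients. Your route is more elementary and avoids distributions, but it needs two patches to be rigorous: (a) the displayed relation $(G_0^0)^2=-G_2^0-G_{-1}^0G_1^0-G_1^0G_{-1}^0$ is not an operator identity — the convolution kernels of $G_{-1}^0G_1^0$, and of $(G_0^0)^2$ itself, give divergent sums — it holds only as a quadratic-form identity on vectors with $\langle\mathbf 1,x\rangle=\langle\mathbf n,x\rangle=0$, which is how you apply it; and (b) matching coefficients in the product requires a topology in which the error terms pair: with vanishing moments $G_1^0x_j$ is a nonzero constant sequence, not in $\ell^2$, so the weighted norms of Proposition~\ref{12.11.9.6.23} do not control pairings of two remainders, and one must supply, e.g., the estimate $\|R_0(\kappa)x_j-G_0^0x_j\|_{\ell^2}=O(\kappa^{1/2})$, or a Cauchy-integral justification of term-wise differentiation of the asymptotic expansion of the analytic function $\kappa\mapsto\langle x_1,R_0(\kappa)x_2\rangle$. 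That is precisely the technical content packaged in Lemma~\ref{12.12.27.15.3} (whose $\mathcal L^4$ hypothesis is the reason $\beta\ge 4$ is assumed), and since the lemma precedes the proposition you may also simply invoke it; with either choice, the rest of your argument — the positivity, the forcing of the parenthesised table entries, and the case-by-case reading via (\ref{13.3.9.14.47b}) — goes through.
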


The finiteness of the inversion iteration
discussed in Section~\ref{12.11.24.5.48}
also follows directly from Proposition~\ref{12.12.19.6.33}.
We state it as a corollary here.
The proof is straightforward, if we combine 
Proposition~\ref{12.12.19.6.33} with (\ref{12.12.19.8.42}) and (\ref{12.12.19.6.58}).
\begin{corollary}\label{12.12.28.10.55}
Under the assumption of Proposition~\ref{12.12.19.6.33}
the identities $r_0^\dagger r_0=r_0 r_0^\dagger=T$ hold, i.e., 
$r_0$ is invertible in $\mathcal B(T\mathcal K)$,
and hence the pseudoinverse $r(\kappa)^{\dagger}$ 
can be computed by the Neumann series.
\end{corollary}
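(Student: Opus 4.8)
The plan is to read off the invertibility of $r_0$ directly from the explicit formula (\ref{13.3.8.16.2}), namely $r_0=-Tz^*zT$, rather than from the a priori bound exploited in Section~\ref{12.11.24.5.48}. Recall that, by the convention fixed in Section~\ref{12.12.19.1.54}, the assertion ``$r_0$ is invertible in $\mathcal B(T\mathcal K)$'' means precisely the pair of identities $r_0^\dagger r_0=r_0r_0^\dagger=T$, so it suffices to produce a bounded two-sided inverse of $r_0$ on $T\mathcal K$. Since $T$ is the orthogonal projection onto $S\mathcal K\cap\mathop{\mathrm{Ker}}q_0$, the identification (\ref{12.12.19.8.42}) gives $T\mathcal K=\mathop{\mathrm{Ker}}\widetilde P\cap\mathop{\mathrm{Ker}}M_0\cong E$, which is finite-dimensional by Proposition~\ref{13.1.16.1.59}. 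On a finite-dimensional space invertibility is equivalent to injectivity, so I would reduce the claim to showing that the self-adjoint operator $-r_0=Tz^*zT$ has trivial kernel on $T\mathcal K$.

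First I would compute the associated quadratic form. For $\Phi\in T\mathcal K$ we have $T\Phi=\Phi$, and using self-adjointness of $T$,
\begin{equation*}
\langle\Phi,-r_0\Phi\rangle=\langle\Phi,Tz^*zT\Phi\rangle=\|zT\Phi\|^2=\|z\Phi\|^2 .
\end{equation*}
The point to verify is that this form is well-defined and strictly positive for $\Phi\neq0$. By (\ref{12.12.19.6.58}) the map $z$ sends $T\mathcal K=\mathop{\mathrm{Ker}}\widetilde P\cap\mathop{\mathrm{Ker}}M_0$ onto $E$, and since $\mathop{\mathrm{Ker}}M_0\subset\mathop{\mathrm{Ker}}\widetilde QM_0$, the injectivity of $z|_{\mathop{\mathrm{Ker}}\widetilde QM_0}$ established after Proposition~\ref{12.12.27.16.6} (recorded as the isomorphism in Corollary~\ref{13.1.16.2.51}) shows this map is an isomorphism; in particular $z\Phi=0$ forces $\Phi=0$. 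Moreover $z(T\mathcal K)=E\subset\mathcal L^{\beta-2}\subset\mathcal H$, since $\beta\ge4$ gives $\beta-2\ge0$ and hence $\ell^{1,\beta-2}\subset\ell^2$; thus $\|z\Phi\|$ is the genuine $\mathcal H$-norm and the form above makes sense. Consequently $\langle\Phi,-r_0\Phi\rangle=\|z\Phi\|^2>0$ for every nonzero $\Phi\in T\mathcal K$, i.e.\ $-r_0$ is positive definite on $T\mathcal K$.

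This positivity furnishes a bounded inverse of $r_0$ in $\mathcal B(T\mathcal K)$, which is exactly $r_0^\dagger r_0=r_0r_0^\dagger=T$, matching the invertibility already asserted in Proposition~\ref{12.12.19.6.33}. Finally, since $r(\kappa)=r_0+\mathcal O(\kappa)$ has an invertible leading coefficient, the last conclusion follows from the general discussion after Proposition~\ref{12.11.9.3.28}: the Neumann series for $r(\kappa)^\dagger$ converges and terminates the inversion iteration, which is precisely the finiteness statement we want. I expect the only genuinely delicate point to be the interpretation of $z^*$ in (\ref{13.3.8.16.2}) --- one must use that $z$ lands in $\mathcal L^{\beta-2}\subset\mathcal H$ on the relevant subspace so that $z^*z$ is the Hilbert-space Gram operator --- while everything else is a direct consequence of the isomorphisms of Section~\ref{12.12.19.2.5}.
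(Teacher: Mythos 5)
Your proof is correct and is essentially the paper's own argument: the paper's (sketched) proof of this corollary likewise combines the formula $r_0=-Tz^*zT$ from Proposition~\ref{12.12.19.6.33} with the identifications (\ref{12.12.19.8.42}) and (\ref{12.12.19.6.58}) --- that is, the injectivity of $z$ on $T\mathcal K=\mathop{\mathrm{Ker}}\widetilde P\cap\mathop{\mathrm{Ker}}M_0$ --- to see that $-r_0$ is a positive definite Gram operator on the finite-dimensional space $T\mathcal K\cong E$, hence invertible in $\mathcal B(T\mathcal K)$. The Neumann-series conclusion for $r(\kappa)^\dagger$ then follows exactly as you state.
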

In the remainder  of this subsection we prove 
Propositions~\ref{12.12.19.6.30}--\ref{12.12.19.6.33}.

\begin{proof}[Proof of Proposition~\ref{12.12.19.6.30}]
The first part of the assertion is obvious.

Assume $P$ is invertible in $\mathcal B(\mathcal K)$,
Then by Corollary~\ref{13.1.16.2.51} it immediately follows that 
\begin{align*}
E=\{0\}.
\end{align*}
In order to identify the remaining 
eigenspaces $\widetilde{\mathcal E}$, $\mathcal E$ and $\widetilde{\mathcal E}_{\mathrm{qs}}$
we separate the cases
and formally apply the block decomposition argument for $M_0$.
Here, in fact, the decomposition argument is 
not necessary, but this goes along with the following proofs.
As in Remark~\ref{13.3.20.7.9},
we can write the corresponding block matrix in the following manner.
Then noting Corollary~\ref{13.1.16.2.51} and Lemma~\ref{13.3.9.3.47},
we can complete Table~\ref{13.3.6.0.26} with ease.

\smallskip
\noindent
\textit{Case i.} 
\begin{align*}
\mathcal K=\{0\},\quad
M_0=0.
\end{align*}
\smallskip
\noindent
\textit{Cases i\hspace{-.05em}i and i\hspace{-.05em}i\hspace{-.05em}i.} 
\begin{align*}
\mathcal K=\mathbb C \Phi_1,\quad
M_0=
\left(\begin{array}{c}
?
\end{array}\right).
\end{align*}
\end{proof}
\begin{proof}[Proof of Proposition~\ref{12.12.19.6.31}]
The first identity of (\ref{12.12.19.7.0}) is due to the definition $m_0=QM_0Q$.
The latter is also already known by the isomorphism (\ref{12.12.19.6.57})
and the injectivity of $z$ on $\mathop{\mathrm{Ker}}\widetilde QM_0$ 
due to Proposition~\ref{12.12.27.16.6}.

Let us assume that the operator $m_0$ is invertible in $\mathcal B(Q\mathcal K)$.
By Corollary~\ref{13.1.16.2.51} it follows 
\begin{align*}
E=\{0\}.
\end{align*}
We investigate the remaining eigenspaces employing the block decomposition argument
as in the previous proof.
We write
\begin{align*}
\mathcal K
&=\mathbb C \Phi_1\oplus S^\perp Q\mathcal K,\quad
M_0=
\left(\begin{array}{cc}
?&?\\
?&*
\end{array}\right),
\end{align*}
and proceed to Cases i--v\hspace{-.05em}i.
We note that for Cases i--i\hspace{-.05em}i\hspace{-.05em}i,
compared to the previous proof,
there appear extra column and row concerning $S^\perp Q\mathcal K$-factor,
but these do not essentially affect the argument
and we obtain the same list of eigenfunctions as before.
Hence, we omit Cases~i--i\hspace{-.05em}i\hspace{-.05em}i,
and consider only Cases~i\hspace{-.05em}v--v\hspace{-.05em}i
without distinguishing whether $S^\perp Q\mathcal K$ is trivial
by the same reason.
Once we write down the matrix expressions for $M_0$,
then the assertion is straightforward
by Corollary~\ref{13.1.16.2.51}, Lemma~\ref{13.3.9.3.47} and (\ref{13.3.9.14.47b}).

\smallskip
\noindent
\textit{Case i\hspace{-.05em}v}.
\begin{align*}
\mathcal K
&=S^\perp Q\mathcal K,\quad
M_0=
\left(\begin{array}{c}
*
\end{array}\right).
\end{align*}

\smallskip
\noindent
\textit{Cases v and v\hspace{-.05em}i}.
\begin{align*}
\mathcal K
&=\mathbb C \Phi_1\oplus S^\perp Q\mathcal K,\quad
M_0=
\left(\begin{array}{ccc}
?&?\\
?&*
\end{array}\right).
\end{align*}
\end{proof}

\begin{proof}[Proof of Proposition~\ref{12.12.19.6.32}]
In order to prove (\ref{12.12.19.8.41})
we use the expressions of $q_0$ and $m_1$ 
in Lemma~\ref{12.12.19.4.4}.
Noting that $QM_0S$ and $SM_0Q$ are trivial by the definition of $S$,
we can write
\begin{align*}
q_0=SM_1S-\gamma SM_0PM_0S.
\end{align*}
We substitute (\ref{11.2.1.1.40}) for $M_0$ and $M_1$ above,
and use the kernel expressions (\ref{12.12.19.8.26})
for $G_0^0$ and $G_1^0$, respectively.
Then by $Sv^*\mathbf 1=Qv^*\mathbf 1=0$ we obtain 
\begin{align*}
q_0
=-\tfrac12 \langle Sv^*\mathbf n,\cdot\rangle Sv^*\mathbf n
-\tfrac{\gamma^2}2 \langle SM_0v^*\mathbf 1,\cdot\rangle SM_0v^*\mathbf 1.
\end{align*}
Thus (\ref{12.12.19.8.41}) is verified.

Next, we prove (\ref{12.12.19.8.42}).
First note that 
\begin{align*}
S\mathcal K=\mathop{\mathrm{Ker}}m_0
=\mathop{\mathrm{Ker}} P\cap \mathop{\mathrm{Ker}}  QM_0
\supset \mathop{\mathrm{Ker}} \widetilde P
\cap \mathop{\mathrm{Ker}}  M_0,
\end{align*}
and hence it suffices to consider only $\Phi \in S\mathcal K$.
By (\ref{12.12.19.8.41}) it follows that, 
for $\Phi \in S\mathcal K$,
the identity $q_0(\Phi)=0$ holds if and only if
\begin{align}
\langle SM_0v^*\mathbf 1,\Phi\rangle =\langle Sv^*\mathbf n,\Phi\rangle=0.
\label{12.12.19.8.57}
\end{align}
Note that this equivalence holds 
whether $Sv^*\mathbf n$ and $SM_0v^*\mathbf 1$ are 
linearly independent or not.
The condition (\ref{12.12.19.8.57}) for $\Phi \in S\mathcal K$ 
is equivalent to 
\begin{align*}
\Phi\in \mathop{\mathrm{Ker}} \widetilde P\cap \mathop{\mathrm{Ker}}  M_0,
\end{align*}
which implies the first identity of (\ref{12.12.19.8.42}).
The second isomorphism is due to (\ref{12.12.19.6.58}).

Now we assume $q_0$ is invertible in $\mathcal B(S\mathcal K)$.
By (\ref{12.12.19.8.42}) it follows 
\begin{align*}
E=\{0\}.
\end{align*}
The remaining eigenspaces are studied by the block decomposition argument,
again.
Note that by the above argument we have a further decomposition
\begin{align*}
S\mathcal K
&=\mathbb C \Phi_3
\oplus \mathbb C \Phi_4,
\end{align*}
and also that 
\begin{align}
M_0\Phi_4=0.\label{13.3.20.14.36}
\end{align}
By this and Lemma~\ref{13.3.9.3.47}
we can calculate the matrix expressions for $M_0$, and
complete Table~\ref{13.3.6.0.28} as in the previous proofs.
Note that we do not have to distinguish whether
$S^\perp Q\mathcal K$ is trivial or not, 
as before.

\smallskip
\noindent
\textit{Case v\hspace{-.05em}i\hspace{-.05em}i}.
\begin{align*}
\mathcal K
=
S^\perp Q\mathcal K\oplus
\mathbb C \Phi_4,\quad
M_0=
\left(\begin{array}{cc}
*&0\\
0&0
\end{array}\right).
\end{align*}

\smallskip
\noindent
\textit{Cases v\hspace{-.05em}i\hspace{-.05em}i\hspace{-.05em}i and x\hspace{-.05em}i}.
\begin{align*}
\mathcal K
&=\mathbb C \Phi_1\oplus S^\perp Q\mathcal K\oplus\mathbb C \Phi_3,\quad
M_0=
\left(\begin{array}{ccc}
?&?&*\\
?&*&0\\
*&0&0
\end{array}\right).
\end{align*}

\smallskip
\noindent
\textit{Case i\hspace{-.05em}x and x}.
\begin{align*}
\mathcal K
&=P\mathcal K\oplus S^\perp Q\mathcal K\oplus\mathbb C \Phi_4,\quad
M_0=
\left(\begin{array}{ccc}
?&?&0\\
?&*&0\\
0&0&0
\end{array}\right).
\end{align*}

\smallskip
\noindent
\textit{Case x\hspace{-.05em}i\hspace{-.05em}i}.
\begin{align*}
\mathcal K
&=\mathbb C \Phi_1\oplus S^\perp Q\mathcal K\oplus
\mathbb C \Phi_3
\oplus \mathbb C \Phi_4,\quad
M_0=
\left(\begin{array}{cccc}
?&?&*&0\\
?&*&0&0\\
*&0&0&0\\
0&0&0&0
\end{array}\right).
\end{align*}
\end{proof}

Before the proof of Proposition~\ref{12.12.19.6.33}
we show the following lemma:
\begin{lemma}\label{12.12.27.15.3}
Let $x_j\in \mathcal L^4$ for $j=1,2$
satisfy
\begin{align}
\langle \mathbf 1, x_j\rangle =\langle \mathbf n,x_j\rangle=0.
\label{12.12.27.7.13}
\end{align}
Then $G_0^0x_j\in \mathcal L^2$ for $j=1,2$, and
\begin{align}
\langle x_1,G_2^0x_2\rangle =-\langle G_0^0x_1,G^0_0x_2\rangle.
\label{13.3.8.14.53}
\end{align}
\end{lemma}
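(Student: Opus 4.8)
The plan is to reduce the identity (\ref{13.3.8.14.53}) to the kernel relation $H_0G_2^0=-G_0^0$ via a discrete integration by parts, the whole difficulty lying in the justification of the boundary terms.

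First I would dispose of the membership claim and, crucially, extract the sharp pointwise decay I will need. Since $x_j\in\mathcal L^4$ satisfies $\langle\mathbf 1,x_j\rangle=\langle\mathbf n,x_j\rangle=0$, Lemma~\ref{12.11.24.18.24} with $s=4$ already gives $G_0^0x_j\in\mathcal L^{2}$. Under these moment conditions the representation (\ref{12.11.24.16.17}) collapses to $(G_0^0x_j)[n]=-\sum_{k\ge n}(k-n)x_j[k]$ for $n\ge 0$ (and symmetrically, via (\ref{12.11.24.16.18}), for $n\le 0$), so that for $n\ge 1$ one has $|(G_0^0x_j)[n]|\le\sum_{k\ge n}k\,|x_j[k]|\le |n|^{-3}\sum_{k\ge n}(1+k^2)^2|x_j[k]|$, using $k\,n^3\le k^4\le(1+k^2)^2$ for $k\ge n$. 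As $x_j\in\ell^{1,4}$ the tail sum tends to $0$, whence $(G_0^0x_j)[n]=o(|n|^{-3})$ as $|n|\to\infty$. This sharp decay, rather than mere membership in $\mathcal L^2$, is the key input.

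Next I would record the algebraic facts. Regarding $G_0^0$ and $G_2^0$ as convolutions with the explicit kernels (\ref{12.12.19.8.26}), a one-line computation of the second difference of $G_2^0(n)=-\tfrac1{12}|n|^3+\tfrac1{12}|n|$ yields $H_0G_2^0=-G_0^0$ (equivalently, this is the coefficient of $\kappa^2$ in the relation $(H_0+\kappa^2)R_0(\kappa)=1$). Since $H_0$ is a local three-term operator it commutes with convolution by $G_2^0$, so applied to $x_2$ this becomes the genuine sequence identity $H_0(G_2^0x_2)=-G_0^0x_2$. Moreover $H_0G_0^0x_1=x_1$ by Lemma~\ref{13.1.18.6.0}, as $x_1\in\mathcal L^4\subset\mathcal L$.

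Finally, setting $u=G_0^0x_1$ and $w=G_2^0x_2$ and using $x_1=H_0u$, the plan is to write $\langle x_1,G_2^0x_2\rangle=\langle H_0u,w\rangle$ and transfer $H_0$ onto $w$. I would invoke the telescoping identity $\overline{(H_0u)[n]}\,w[n]-\overline{u[n]}\,(H_0w)[n]=W[n]-W[n-1]$, valid for the three-term operator $H_0$, with Wronskian $W[n]=\overline{u[n]}\,w[n+1]-\overline{u[n+1]}\,w[n]$; summing over $|n|\le N$ leaves only $W[N]-W[-N-1]$. The hard part is showing these boundary terms vanish in the limit: here the decay $u[n]=o(|n|^{-3})$ against the bound $|w[n]|=\mathcal O(|n|^3)$ (from $w\in\ell^{\infty,-3}$) forces $W[\pm N]=o(1)$, so both terms tend to $0$. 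Passing to $N\to\infty$—the two full sums converging, the left because $x_1\in\ell^{1,4}$ pairs with $w\in\ell^{\infty,-3}$ and the right because $u$ and $H_0w=-G_0^0x_2$ both lie in $\mathcal L^2\subset\ell^2$—gives $\langle H_0u,w\rangle=\langle u,H_0w\rangle=-\langle G_0^0x_1,G_0^0x_2\rangle$, which is (\ref{13.3.8.14.53}). I expect the boundary estimate to be the only genuine obstacle, precisely because the three pieces $\sum_n\overline{u[n\pm1]}\,w[n]$ diverge individually and it is only the cubic-versus-$o(|n|^{-3})$ balance in $W$ that rescues the argument—exactly why the hypothesis is $x_j\in\mathcal L^4$ and not merely $\mathcal L^3$.
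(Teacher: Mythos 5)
Your proposal is correct, but it takes a genuinely different route from the paper. The paper's proof stays in Fourier space: using the expansion of Proposition~\ref{12.11.9.6.23} with $N=2$ (this is where $\mathcal L^4$ enters there) and the vanishing of $\langle x_1,G_{-1}^0x_2\rangle$ and $\langle x_1,G_1^0x_2\rangle$, it writes $\langle x_1,G_2^0x_2\rangle=\lim_{\kappa\to+0}\kappa^{-2}\langle x_1,(R_0(\kappa)-G_0^0)x_2\rangle$, identifies $\mathcal F(G_0^0x_j)(\theta)=\hat x_j(\theta)/(4\sin^2(\theta/2))$ through the distributional identity $\mathcal F^{-1}\bigl(-(8\pi)^{-1/2}\tfrac{d}{d\theta}\mathrm{p.v.}\cot\tfrac\theta2\bigr)[n]=-\tfrac12|n|$, and concludes by dominated convergence, the moment conditions appearing as $\hat x_j(0)=\hat x_j'(0)=0$. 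You instead work entirely in real space: the kernel identity $H_0G_2^0=-G_0^0$ (which is indeed the $\kappa^2$-coefficient of $(H_0+\kappa^2)R_0(\kappa)=1$, and which your interchange of $H_0$ with the absolutely convergent convolution legitimately upgrades to the sequence identity $H_0(G_2^0x_2)=-G_0^0x_2$), followed by the Wronskian summation by parts, with the moment conditions entering through the sharpened decay $(G_0^0x_j)[n]=o(|n|^{-3})$ that kills the boundary terms against the $\mathcal O(|n|^3)$ growth of $G_2^0x_2$. I checked the delicate points — the $o(|n|^{-3})$ estimate from the tail bound $k\,n^3\le(1+k^2)^2$, the telescoping identity, and the absolute convergence of both full sums — and they all hold. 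Your argument is more elementary (no distributions, no principal values, no limit in $\kappa$) and makes visibly transparent why $\mathcal L^4$ rather than $\mathcal L^3$ is required; the paper's argument, on the other hand, is shorter given the Fourier machinery already set up in Section~\ref{12.11.8.1.0}, and produces along the way the formula (\ref{13.3.8.15.24}), which is of independent use.
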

\begin{proof}
The assertion $G_0^0x_j\in\mathcal L^2$ for $j=1,2$
follows from Lemma~\ref{12.11.24.18.24},
so that the right-hand side of (\ref{13.3.8.14.53}) makes sense.
By the kernel expressions (\ref{12.12.19.8.26}) 
and the assumption (\ref{12.12.27.7.13}) we have
\begin{align*}
\langle x_1,G_{-1}^0x_2\rangle=\langle x_1,G_{1}^0x_2\rangle=0,
\end{align*}
and hence by Proposition~\ref{12.11.9.6.23} with $N=2$,
where the large weight $4$ is required,
\begin{align}
\langle x_1,G_2^0x_2\rangle 
=\lim_{\kappa\to +0}\tfrac1{\kappa^2}
\langle x_1,(R_0(\kappa)-G_0^0)x_2\rangle.
\label{12.12.27.11.35}
\end{align}
We now use the Fourier transform to compute the right-hand side.
We already know the Fourier transform of $R(\kappa)$ by (\ref{12.12.27.12.55}),
and thus consider that of $G_0^0$. 
We claim
\begin{align}
\langle x_1,G_0^0x_2\rangle
=\!\!\int_{\mathbb T}
\frac{\overline{\hat x_1(\theta)}\hat x_2(\theta)}{4\sin^2(\theta/2)}
\,d\theta.
\label{12.12.27.11.17}
\end{align}
Note that the right-hand side is actually a convergent integral due to
\begin{align}
\hat{x}_j\in C^4(\mathbb T),\quad  \hat{x}_j(0)=0,\quad
\hat{x}_j'(0)=0.\label{12.12.27.11.30}
\end{align}
We consider the distribution $f\in\mathcal D'(\mathbb T)$ defined by
\begin{align*}
f(\theta)=-(8\pi)^{-1/2}
\tfrac{d}{d\theta}\mathrm{p.v.}\cot\tfrac{\theta}2,
\end{align*}
and then, using Cauchy's integral formula, we can verify 
\begin{align*}
\mathcal F^{-1}(f)[n]=-\tfrac12|n|=G_0^0(n).
\end{align*}
Hence, for rapidly decreasing the sequences $y_j$, $j=1,2$, 
or equivalently, $\hat y_j\in C^\infty(\mathbb T)$, 
we obtain 
\begin{align}
\langle y_1,G_0^0y_2\rangle
=\sqrt{2\pi }\langle \hat y_1,\hat y_2\cdot f\rangle
=\tfrac12\mathrm{p.v.}\!\!\int_{\mathbb T}
\bigl(\overline{\hat y_1(\theta)}\hat y_2(\theta)\bigr)'\cot\tfrac{\theta}2
\,d\theta.
\label{12.12.27.11.0}
\end{align}
By the density argument the identity (\ref{12.12.27.11.0})
extends to $y_j\in \mathcal L$, $j=1,2$.
If we set $y_j=x_j$ of the lemma,
then we obtain the claim (\ref{12.12.27.11.17}) by integration by parts.
We note that we can drop `$\mathrm{p.v.}$' using (\ref{12.12.27.11.30}).

We note that from the claim (\ref{12.12.27.11.17})
and the density argument
a slightly general identity follows :
\begin{align}
\mathcal F(G_0^0x_j)(\theta)
=\frac{\hat x_j(\theta)}{4\sin^2(\theta/2)}.\label{13.3.8.15.24}
\end{align}
By (\ref{12.12.27.11.35}), (\ref{12.12.27.12.55}) and (\ref{12.12.27.11.17}) we can write
\begin{align*}
\langle x_1,G_2^0x_2\rangle
=-\lim_{\kappa\to +0}
\int_{\mathbb T}
\frac{\overline{\hat x_1(\theta)}\hat x_2(\theta)}{
4\sin^2(\theta/2)(4\sin^2(\theta/2)+\kappa^2)}\,
d\theta.
\end{align*}
By (\ref{12.12.27.11.30}) we can apply the dominated convergence theorem,
and then by (\ref{13.3.8.15.24}) we obtain the assertion.
\end{proof}

\begin{proof}[Proof of Proposition~\ref{12.12.19.6.33}]
We first note
\begin{align}
TM_0=0,\quad TM_1Q=0.\label{12.12.27.14.14}
\end{align}
In fact, we can rewrite using 
$1_{\mathcal K}=P+Q$, 
(\ref{11.2.1.1.40}) and (\ref{12.12.19.8.26})
\begin{align*}
TM_0=\tfrac{\gamma}2\langle v^*\mathbf 1,\cdot\rangle TM_0v^*\mathbf 1
+TM_0Q,\quad
TM_1Q=-\tfrac12 \langle Qv^*\mathbf n,\cdot\rangle Tv^*\mathbf n,
\end{align*}
and then (\ref{12.12.27.14.14}) immediately follow.
By Lemma~\ref{12.12.19.4.4}, 
(\ref{12.12.27.14.14}) and its adjoint
we can compute
\begin{align*}
r_0=Tv^*G_2^0vT.
\end{align*}
Since the image $vT(\mathcal K)\subset \mathcal L^\beta$ is orthogonal to 
both $\mathbf 1$ and $\mathbf n$ with respect to the $\ell^2$-inner product,
we can apply Lemma~\ref{12.12.27.15.3} and then obtain
\begin{align*}
r_0=-Tv^*G_0^0G_0^0vT=-Tz^*zT.
\end{align*}
Here we replaced $-G_0^0v$ by $z$ using the first identity of 
(\ref{12.12.27.14.14}). Thus we obtain (\ref{13.3.8.16.2}).

Now we investigate the eigenspaces.
We already know 
\begin{align*}
E\neq \{0\},\quad 0< d<\infty.
\end{align*}
As in the previous proofs, we apply the block decomposition argument.
We decompose
\begin{align*}
\mathcal K
&=\mathbb C \Phi_1\oplus S^\perp Q\mathcal K\oplus
\mathbb C \Phi_3
\oplus \mathbb C \Phi_4\oplus T\mathcal K,
\end{align*}
and
\begin{align}
M_0=
\left(\begin{array}{ccccc}
?&?&*&0&0\\
?&*&0&0&0\\
*&0&0&0&0\\
0&0&0&0&0\\
0&0&0&0&0
\end{array}\right)
\label{13.3.8.16.21}
\end{align}
accordingly.
The representation (\ref{13.3.8.16.21}) can be verified, e.g., by 
Lemma~\ref{13.3.9.3.47}, (\ref{13.3.20.14.36}) and (\ref{12.12.27.14.14}).
Fortunately,
the space $T\mathcal K$ does not have any effects under the action of $M_0$, 
and hence we can reduce Cases~i--x\hspace{-.05em}i\hspace{-.05em}i
to those of the previous propositions.
Thus we are done.
\end{proof}

\section{Expansion of $R(\kappa)$}\label{12.12.17.20.56}

\subsection{The case: $P$ is invertible}

\begin{theorem}\label{12.12.1.12.38}
Suppose that $\beta\ge 1$ in Assumption~\ref{12.11.8.1.9},
and that $P$ defined by (\ref{12.11.24.22.25}) is invertible in $\mathcal B(\mathcal K)$.
Let $N\in [-1,\beta-2]$ be any integer.
Then as $\kappa\to 0$ with $\real\kappa>0$,
the resolvent $R(\kappa)$ has the expansion in $\mathcal B^{N+2}$:
\begin{align}
R(\kappa)=\sum_{j=-1}^N\kappa^j G_j+\mathcal O(\kappa^{N+1}),\quad
G_j\in \mathcal B^{j+1},
\label{13.3.11.4.27}
\end{align}
and the coefficients $G_j$ can be computed explicitly.
\end{theorem}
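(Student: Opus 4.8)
The plan is to combine the first reduction of Proposition~\ref{12.11.13.19.5} with the explicit expansions of $R_0(\kappa)$ and $M(\kappa)$ already in hand, so that the entire problem collapses to expanding the middle factor $M(\kappa)^{-1}$ in (\ref{12.11.8.4.1}). First I would extract the structural consequence of the hypothesis. Since $G^0_{-1}(n)=\tfrac12$ by (\ref{12.12.19.8.26}), the coefficient $M_{-1}=v^*G^0_{-1}v$ from (\ref{11.2.1.1.40}) equals $\tfrac12\langle\Phi_1,\cdot\rangle\Phi_1=\gamma^\dagger P$, with $\Phi_1=v^*\mathbf1$ and $\gamma,P$ as in (\ref{12.11.24.22.25}); thus the only singular term of (\ref{11.1.27.14.40}) is carried by $P$. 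By Proposition~\ref{12.12.19.6.30} the invertibility of $P$ forces $\mathcal K=\mathbb C\Phi_1$, so $M(\kappa)$ acts on a one-dimensional space, $P=1_{\mathcal K}$, and $M_{-1}=\gamma^\dagger 1_{\mathcal K}$ is invertible with $\gamma^\dagger\neq0$.

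Next I would invert. Writing $\kappa M(\kappa)=M_{-1}+\kappa\bigl(M_0+\kappa M_1+\dots+\kappa^{N}M_{N}\bigr)+\mathcal O(\kappa^{N+1})$ through Proposition~\ref{12.11.27.11.18}, which is legitimate precisely when $N\le\beta-2$, the invertibility of the leading coefficient lets the Neumann series converge for small $\real\kappa>0$. This produces an expansion of $(\kappa M(\kappa))^{-1}$, and hence of $M(\kappa)^{-1}=\kappa(\kappa M(\kappa))^{-1}=\mathcal O(\kappa)$, whose coefficients $B_k$ ($k\ge1$) are explicit polynomials in $M_{-1}^{-1}=\gamma 1_{\mathcal K}$ and the $M_j$. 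In particular $M(\kappa)^{-1}$ carries no singular part, which is exactly why the inversion iteration terminates at this stage and why the only singular contribution to $R(\kappa)$ will come from $R_0(\kappa)$ itself.

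I would then substitute into (\ref{12.11.8.4.1}) and multiply the three asymptotic series. Using Proposition~\ref{12.11.9.6.23} for the two outer factors and $M(\kappa)^{-1}=\sum_{k\ge1}\kappa^k B_k$ for the middle one, the term $R_0(\kappa)vM(\kappa)^{-1}v^*R_0(\kappa)$ contributes to the coefficient of $\kappa^j$ the finite sum $\sum_{a+k+c=j}G^0_a v B_k v^* G^0_c$ with $a,c\ge-1$ and $k\ge1$. The constraint $k\ge1$ raises the net order by one at the middle slot, so the lowest surviving power is $\kappa^{-1}$ and one reads off $G_j=G^0_j-\sum_{a+k+c=j}G^0_a v B_k v^* G^0_c$, which is manifestly computable.

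The step I expect to be the real work is the operator-class bookkeeping, namely verifying $G_j\in\mathcal B^{j+1}$ and that the tail is $\mathcal O(\kappa^{N+1})$ in $\mathcal B^{N+2}$. This is where $N\le\beta-2$ enters decisively: it gives $v(\mathcal K)\subset\mathcal L^\beta\subset\mathcal L^{N+2}$ and, dually, that $v^*$ acts on $(\mathcal L^{N+2})^*$, so each factor can be read in the correct weighted space. Combined with $G^0_j\in\mathcal B^{j+1}$ and the elementary inclusions $\mathcal L^{j+1}\subset\mathcal L^{c+1}$ and $(\mathcal L^{a+1})^*\subset(\mathcal L^{j+1})^*$ valid for $a,c\le j$, every summand $G^0_a v B_k v^* G^0_c$ lands in $\mathcal B^{j+1}\subset\mathcal B^{N+2}$, so $G_j\in\mathcal B^{j+1}$; applying the same mapping estimates to the tails of the three expansions, together with the remainder bounds of Propositions~\ref{12.11.9.6.23} and~\ref{12.11.27.11.18}, yields the uniform $\mathcal O(\kappa^{N+1})$ error. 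Here the finite dimensionality of $\mathcal K$ makes the Neumann step trivial, so no genuine analytic difficulty remains beyond this bookkeeping.
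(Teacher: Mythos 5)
Your proposal is correct and takes essentially the same route as the paper's own proof: invertibility of $P$ makes the leading ($\kappa^{-1}$-order) coefficient of $M(\kappa)$ invertible on $\mathcal K=\mathbb C\Phi_1$, the Neumann series then gives $M(\kappa)^{-1}=\mathcal O(\kappa)$ with explicit coefficients, and substituting into (\ref{12.11.8.4.1}) and collecting powers yields (\ref{13.3.11.4.27}) with $G_j=G_j^0-\sum_{j_1+j_2+j_3=j}G_{j_1}^0vA_{j_2}v^*G_{j_3}^0$, $j_1,j_3\ge-1$, $j_2\ge1$. The only differences are cosmetic: you invert $\kappa M(\kappa)$ rather than $M(\kappa)$ directly, and you spell out the weighted-space mapping bookkeeping for $G_j\in\mathcal B^{j+1}$ that the paper leaves implicit.
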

\begin{proof}
Since the leading operator of the expansion of $M(\kappa)$ is invertible,
we can expand $M(\kappa)^{-1}$ employing the Neumann series.
Using Proposition~\ref{12.11.27.11.18} for any integer $N\in [-1,\beta-2]$, 
we have the expansion in $\mathcal B(\mathcal K)$:
\begin{align}
M(\kappa)^{-1}=\sum_{j=1}^{N+2}\kappa^jA_j+\mathcal O(\kappa^{N+3}),\quad
A_j\in\mathcal B(\mathcal K).
\label{12.12.1.7.6}
\end{align}
The coefficients $A_j$ can be written explicitly in terms of $M_j$:
\begin{align}
A_1&=\gamma ,\quad 
A_{j}=-\sum_{k=1}^{j-1}(-\gamma)^{k+1}
\sum_{\genfrac{}{}{0pt}{}{j_1\ge 0,\dots,j_k\ge 0}{ j_1+\cdots+j_k=j-k-1}}\prod_{l=1}^kM_{j_l}
\quad\mbox{for } j\ge 2.\label{12.12.1.7.18}
\end{align}
Let us substitute the expansions (\ref{12.12.1.7.6}) and (\ref{12.12.1.7.7}) 
into (\ref{12.11.8.4.1}).
Then for the same $N$ as above we have the expansion
in $\mathcal B^{N+2}$:
\begin{align*}
R(\kappa)=\sum_{j=-1}^N\kappa^jG^0_j
-\sum_{j=-1}^N\kappa^j\sum_{\genfrac{}{}{0pt}{}{j_1\ge -1,j_2\ge 1,j_3\ge -1}{j_1+j_2+j_3=j}}G_{j_1}^0vA_{j_2}v^*G_{j_3}^0
+\mathcal O(\kappa^{N+1}).
\end{align*}
Hence, we obtain the expansion of $R(\kappa)$ by putting for $j\ge -1$
\begin{align}
G_j=G_j^0-\sum_{\genfrac{}{}{0pt}{}{j_1\ge -1,j_2\ge 1,j_3\ge -1}{j_1+j_2+j_3=j}}
G_{j_1}^0vA_{j_2}v^*G_{j_3}^0\in \mathcal B^{j+1}.\label{12.12.28.17.4}
\end{align}
Hence, we are done.
\end{proof}

We next discuss the relation between the coefficients and the eigenspaces.
By (\ref{12.12.28.17.4}) and (\ref{12.12.1.7.18})
\begin{align}
\begin{split}
G_{-1}&=G_{-1}^0-G_{-1}^0vA_1v^*G_{-1}^0\\
&=G_{-1}^0-\gamma G_{-1}^0vv^*G_{-1}^0,
\end{split}\label{12.12.28.18.4}\\
\begin{split}
G_{0}&=G_0^0
-G_{-1}^0vA_2v^*G_{-1}^0
-G_{-1}^0vA_1v^*G_{0}^0
-G_{0}^0vA_1v^*G_{-1}^0\\
&=G_0^0+\gamma^2 G_{-1}^0vM_0v^*G_{-1}^0-\gamma G_{-1}^0vv^*G_{0}^0-\gamma G_{0}^0vv^*G_{-1}^0.
\end{split}\label{12.12.28.18.5}
\end{align}

\begin{theorem}\label{13.3.11.5.15}
Under the assumption of Theorem~\ref{12.12.1.12.38}
the coefficient $G_{-1}$ is of the form
(\ref{13.3.11.2.46}) (up to the factor $i$ due to a different convention).
\end{theorem}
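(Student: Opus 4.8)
The plan is to substitute the explicit rank-one form of $G_{-1}^0$ into the identity \eqref{12.12.28.18.4} already derived, and then read off the answer in the two cases $\Phi_1=0$ and $\Phi_1\neq 0$. First I would record that, by Proposition~\ref{12.11.9.6.23}, the coefficient $G_{-1}^0$ is convolution with the constant $G_{-1}^0(n)=\tfrac12$, i.e.\ the rank-one operator
\begin{equation*}
G_{-1}^0=\tfrac12\langle\mathbf 1,\cdot\rangle\mathbf 1=\tfrac12\langle\Psi_1^0,\cdot\rangle\Psi_1^0 .
\end{equation*}
Since $\Phi_1=v^*\mathbf 1=v^*\Psi_1^0$, this yields $v^*G_{-1}^0=\tfrac12\langle\mathbf 1,\cdot\rangle\Phi_1$ and $G_{-1}^0v=\tfrac12\langle\Phi_1,\cdot\rangle\mathbf 1$, and hence
\begin{equation*}
G_{-1}^0vv^*G_{-1}^0=\tfrac14\|\Phi_1\|^2\langle\mathbf 1,\cdot\rangle\mathbf 1 .
\end{equation*}

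Inserting these into \eqref{12.12.28.18.4} gives
\begin{equation*}
G_{-1}=\Bigl(\tfrac12-\tfrac{\gamma}{4}\|\Phi_1\|^2\Bigr)\langle\mathbf 1,\cdot\rangle\mathbf 1 ,
\end{equation*}
and I would then split according to $\Phi_1$, using $\gamma=2\|\Phi_1\|^{\dagger2}$ from \eqref{12.11.24.22.25}. If $\Phi_1\neq0$ (Cases i\hspace{-.05em}i and i\hspace{-.05em}i\hspace{-.05em}i of Table~\ref{13.3.6.0.26a}), then $\tfrac{\gamma}{4}\|\Phi_1\|^2=\tfrac12$, so $G_{-1}=0$, which matches $\mathcal E=\{0\}$ (the regular case, where the sum in \eqref{13.3.11.2.46} is empty). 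If $\Phi_1=0$ (Case i), then $\gamma=0$ and
\begin{equation*}
G_{-1}=\tfrac12\langle\mathbf 1,\cdot\rangle\mathbf 1=\langle\Psi,\cdot\rangle\Psi,\qquad \Psi=\tfrac1{\sqrt2}\mathbf 1 ,
\end{equation*}
where $\Psi$ spans $\mathcal E=\mathbb C\Psi_1^0$; indeed $v^*\Psi_1^0=\Phi_1=0$ forces $H\Psi_1^0=vU\Phi_1=0$, so $\Psi_1^0\in\mathcal E$, which is the exceptional-I case. This reproduces exactly the bases listed in Table~\ref{13.3.6.0.26}.

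Finally I would reconcile the conventions. Throughout this case $E=\{0\}$, hence $d_0=0$, the coefficient $G_{-2}$ vanishes, and the congruence modulo $\langle E,\cdot\rangle E$ in \eqref{13.3.11.2.46} is vacuous; so the claim amounts to writing $G_{-1}$ as $\sum_j\langle\Psi_j,\cdot\rangle\Psi_j$ over a basis of $\mathcal E$. The factor $i$ is accounted for purely by the change of variable: since $z^{-1/2}=-i\kappa^{-1}$, the $\kappa^{-1}$-coefficient computed here is $-i$ times the $z^{-1/2}$-coefficient of \eqref{13.3.11.2.46}, which cancels the explicit $i$ there. I do not anticipate a genuine obstacle, as the whole argument is a single rank-one computation; the only point demanding care is the bookkeeping of $\gamma$ through the pseudoinverse $\|\Phi_1\|^{\dagger2}$ and the cross-check that the vanishing versus non-vanishing of $G_{-1}$ lines up precisely with $\mathcal E=\{0\}$ versus $\mathcal E=\mathbb C\Psi_1^0$ in Tables~\ref{13.3.6.0.26a} and \ref{13.3.6.0.26}.
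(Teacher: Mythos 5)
Your proposal is correct and takes essentially the same route as the paper: the paper's proof likewise rests on the rank-one evaluation of (\ref{12.12.28.18.4}), giving $G_{-1}=2^{-1}\langle \Psi_1^0,\cdot\rangle \Psi_1^0$ in Case i and $G_{-1}=0$ in Cases i\hspace{-.05em}i and i\hspace{-.05em}i\hspace{-.05em}i, followed by a comparison with Table~\ref{13.3.6.0.26}. Your write-up merely makes explicit the bookkeeping (the pseudoinverse in $\gamma$, the normalization $\Psi=\tfrac{1}{\sqrt2}\mathbf 1$, and the $\kappa$ versus $z^{1/2}$ factor of $i$) that the paper leaves implicit.
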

\begin{remark}
If $V\neq0$ is local, then we always have $G_{-1}=0$, see 
Section~\ref{local-P-inv}.
\end{remark}
\begin{proof}
We already have the list of the eigenfunctions
in Table~\ref{13.3.6.0.26} for all Cases~i--i\hspace{-.05em}i\hspace{-.05em}i in Table~\ref{13.3.6.0.26a}.
Then since $G_{-1}=2^{-1}\langle \Psi_1^0,\cdot\rangle \Psi_1^0$ for Case~i and $G_{-1}=0$ for Cases~i\hspace{-.05em}i and i\hspace{-.05em}i\hspace{-.05em}i,
the assertion follows immediately.
\end{proof}

Let us consider $G_0$.
It is obvious that we have 
\begin{align*}
G_{0}
\equiv 
G_0^0
+\langle \Psi_5,\cdot \rangle  \Psi_1^0
+\langle \Psi_1^0,\cdot \rangle  \Psi_5
\mod \mathcal B^0.
\end{align*}
By a direct computation, e.g., with $v^*G_{-1}^0v=\gamma^\dagger$, $P=1_{\mathcal K}$ and $\Phi_1=\Phi_5$,
we obtain
\begin{align}
(1_{\mathcal L^*}+G_0^0V)G_0
&=G_0^0+\gamma^2 G_{-1}^0vM_0v^*G_{-1}^0-\gamma G_{-1}^0vv^*G_{0}^0-\gamma G_{0}^0vv^*G_{-1}^0\notag\\
&\quad+G_0^0VG_0^0+\gamma G_0^0vUPM_0v^*G_{-1}^0-G_0^0vUPv^*G_{0}^0\notag\\
&\quad-\gamma G_0^0VG_{0}^0vv^*G_{-1}^0\notag\\
&=G_0^0
+ \langle \Psi_5,\cdot \rangle \Psi_1^0.
\label{13.1.19.14.50}
\end{align}
By (\ref{13.1.19.14.50}) and Lemma~\ref{13.1.18.6.0} it immediately follows that 
\begin{align*}
HG_0=1_{\mathcal L}.
\end{align*}
A similar adjoint computation applies to $G_0H$ with Lemma~\ref{13.1.18.6.0}, 
and verifies $G_0H=1_{\mathcal L}$.
If $\widetilde{\mathcal E}_{\mathrm{qs}}=\{0\}$, then 
$1_{\mathcal L^*}+G_0^0V$ is invertible
and it follows by (\ref{13.1.19.14.50}) and (\ref{13.3.9.14.47}) that
\begin{align*}
\begin{split}
G_0
=(1_{\mathcal L^*}+G_0^0V)^{-1}G_0^0
+\Delta^{\dagger}
\langle \Psi_5,\cdot\rangle \Psi_5.
\end{split}
\end{align*}
Even if $\widetilde{\mathcal E}_{\mathrm{qs}}\neq \{0\}$,
we can still modify the argument.
Set 
\begin{align*}
\begin{split}
\pi_1&=1_{\mathcal L}-\langle \Psi_1^0,\cdot\rangle V\Psi_5,
\quad
\pi_2=1_{\mathcal L}-\langle \Psi_5,\cdot\rangle V\Psi_1^0.
\end{split}
%\label{13.3.13.3.10}
\end{align*}
Since $\langle \Psi_1^0,V\Psi_5\rangle=\langle \Psi_5,V\Psi_1^0\rangle = 1\neq 0$,
the operators $\pi_1$ and $\pi_2$ are in fact projections on the Banach space $\mathcal L$.
Then noting 
$\mathop{\mathrm{Ran}}(1_{\mathcal L}+G_0^0V)=\pi_1^*\mathcal L$
and $\mathop{\mathrm{Ker}}(1_{\mathcal L^*}+G_0^0V)=\mathbb C\Psi_5$
by (\ref{13.3.6.9.36}) and Table~\ref{13.3.6.0.26},
we can write (\ref{13.1.19.14.50}) as
\begin{align}
(1_{\mathcal L^*}+G_0^0V)\pi_2^*G_0=\pi_1^*G_0^0.
\label{13.8.2.17.00}
\end{align}
Now recall the Riesz--Schauder theorem.
Since $\langle \Psi_5,V\Psi_1^0\rangle\neq 0$,
we can invert $\pi_1^*(1_{\mathcal L^*}+G_0^0V)\pi_2^*$
between the projected spaces. 
We may write (\ref{13.8.2.17.00}) as
\begin{align*}
\pi_2^*G_0
=[\pi_1^*(1_{\mathcal L^*}+G_0^0V)\pi_2^*]^{-1}\pi_1^*G_0^0.
\end{align*}
On the other hand, we can compute from (\ref{12.12.28.18.5}) with $M_0=0$ and $U=\pm 1_{\mathcal K}$ that 
\begin{align*}
(1-\pi_2^*)G_0=\langle \Psi_1^0,\cdot\rangle\Psi_5.
\end{align*}
Hence we obtain the expression
\begin{align*}
G_0
=[\pi_1^*(1_{\mathcal L^*}+G_0^0V)\pi_2^*]^{-1}\pi_1^*G_0^0+\langle \Psi_1^0,\cdot\rangle\Psi_5
.
\end{align*}
We gather the above results as a theorem.
The adjoint statements are obtained similarly.

\begin{theorem}\label{13.3.11.4.29}
Under the assumption of Theorem~\ref{12.12.1.12.38}
the coefficient $G_0$ satisfies
\begin{align*}
G_{0}
\equiv 
G_0^0
+\langle \Psi_5,\cdot \rangle  \Psi_1^0
+\langle \Psi_1^0,\cdot \rangle  \Psi_5
\mod \mathcal B^0
\end{align*}
and 
\begin{align*}
HG_0=G_0H=1_{\mathcal L}.
\end{align*}
Moreover, with the above notation, 
if $\widetilde{\mathcal E}_{\mathrm{qs}}=\{0\}$, then
\begin{align}
\begin{split}
G_0
&=(1_{\mathcal L^*}+G_0^0V)^{-1}G_0^0
+\Delta^{\dagger}
\langle \Psi_5,\cdot\rangle \Psi_5\\
&=G_0^0(1_{\mathcal L}+VG_0^0)^{-1}
+\Delta^{\dagger}
\langle \Psi_5,\cdot\rangle \Psi_5,
\end{split}\nonumber
%\label{13.1.19.14.51}
\intertext{and, if $\widetilde{\mathcal E}_{\mathrm{qs}}\neq \{0\}$, then }
\begin{split}
G_0
&=[\pi_1^*(1_{\mathcal L^*}+G_0^0V)\pi_2^*]^{-1}\pi_1^*G_0^0
+\langle \Psi_1^0,\cdot\rangle\Psi_5\\
&=G_0^0\pi_1[\pi_2(1_{\mathcal L}+VG_0^0)\pi_1]^{-1}
+
\langle \Psi_5,\cdot\rangle\Psi_1^0.
\end{split}
\label{13.1.19.14.52}
\end{align}
\end{theorem}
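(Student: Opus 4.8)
The plan is to exploit that, with $P$ invertible, the entire expansion of $M(\kappa)^{-1}$ is produced by the Neumann series of Theorem~\ref{12.12.1.12.38}, so that $G_0$ is already available in closed form through (\ref{12.12.28.18.5}); everything then reduces to reorganizing that single expression. First I would substitute $G_{-1}^0=\tfrac12\langle\mathbf 1,\cdot\rangle\mathbf 1$, which gives $v^*G_{-1}^0v=\gamma^\dagger 1_{\mathcal K}$ and $G_{-1}^0v=\tfrac12\langle\Phi_1,\cdot\rangle\Psi_1^0$, and isolate the part of $G_0$ that grows at infinity. Matching the growing directions against the bases listed in Table~\ref{13.3.6.0.26} should identify this part as $\langle\Psi_5,\cdot\rangle\Psi_1^0+\langle\Psi_1^0,\cdot\rangle\Psi_5$, with the remainder landing in $\mathcal B^0$; this establishes the first assertion.

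For the Green-operator identity the key step is the intermediate relation (\ref{13.1.19.14.50}), $(1_{\mathcal L^*}+G_0^0V)G_0=G_0^0+\langle\Psi_5,\cdot\rangle\Psi_1^0$, which I would derive from (\ref{12.12.28.18.5}) by a direct computation using $v^*G_{-1}^0v=\gamma^\dagger$, $P=1_{\mathcal K}$ and the fact that $m_0=0$ when $Q=0$ (so $\Phi_5=\Phi_1^*$). Applying $H_0$ on the left and using $H_0G_0^0=1_{\mathcal L}$ from Lemma~\ref{13.1.18.6.0} turns $H_0(1_{\mathcal L^*}+G_0^0V)$ into $H$, while $H_0\Psi_1^0=H_0\mathbf 1=0$ annihilates the rank-one term; hence $HG_0=1_{\mathcal L}$. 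The identity $G_0H=1_{\mathcal L}$ then follows from the symmetric left-hand computation, again invoking Lemma~\ref{13.1.18.6.0}.

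The explicit formulas split according to $\widetilde{\mathcal E}_{\mathrm{qs}}$. When $\widetilde{\mathcal E}_{\mathrm{qs}}=\{0\}$, Proposition~\ref{13.1.16.1.59} makes $1_{\mathcal L^*}+G_0^0V$ invertible, so inverting (\ref{13.1.19.14.50}) gives $G_0=(1_{\mathcal L^*}+G_0^0V)^{-1}G_0^0+\langle\Psi_5,\cdot\rangle(1_{\mathcal L^*}+G_0^0V)^{-1}\Psi_1^0$. Here $\Phi_3=0$, so (\ref{13.3.9.14.47}) gives $(1_{\mathcal L^*}+G_0^0V)\Psi_5=\Delta\Psi_1^0$, and the correction term collapses to $\Delta^\dagger\langle\Psi_5,\cdot\rangle\Psi_5$ under the pseudoinverse convention (\ref{13.3.5.22.28}). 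The adjoint formula with $1_{\mathcal L}+VG_0^0$ follows by running the same argument on the other side.

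The real obstacle is the case $\widetilde{\mathcal E}_{\mathrm{qs}}\neq\{0\}$, where $1_{\mathcal L^*}+G_0^0V$ is no longer invertible but only Fredholm, with one-dimensional kernel $\mathbb C\Psi_5$ by (\ref{13.3.6.9.36}) and Table~\ref{13.3.6.0.26}. The plan is to introduce the rank-one idempotents $\pi_1,\pi_2$, verify they are genuine projections from $\langle\Psi_1^0,V\Psi_5\rangle=\langle\Psi_5,V\Psi_1^0\rangle=1$, rewrite (\ref{13.1.19.14.50}) as $(1_{\mathcal L^*}+G_0^0V)\pi_2^*G_0=\pi_1^*G_0^0$, and invoke the Riesz--Schauder theorem to invert the compression $\pi_1^*(1_{\mathcal L^*}+G_0^0V)\pi_2^*$ between the complementary ranges, thereby solving for $\pi_2^*G_0$. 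The complementary piece $(1-\pi_2^*)G_0$ I would read off directly from (\ref{12.12.28.18.5}) specialized to $M_0=0$, obtaining $\langle\Psi_1^0,\cdot\rangle\Psi_5$; summing the two pieces yields (\ref{13.1.19.14.52}), and the adjoint formula follows symmetrically. Throughout, the delicate point is not any single estimate but the bookkeeping in the weighted classes $\mathcal B^s$ — confirming $H_0G_0^0V=V$, that the kernel and range identifications for the Fredholm operator hold in the correct topology, and that $\pi_1,\pi_2$ act boundedly on $\mathcal L$.
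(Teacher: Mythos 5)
Your proposal is correct and follows essentially the same route as the paper's own proof: you read the $\mathrm{mod}\ \mathcal B^0$ statement and the key identity $(1_{\mathcal L^*}+G_0^0V)G_0=G_0^0+\langle \Psi_5,\cdot\rangle \Psi_1^0$ off the explicit formula (\ref{12.12.28.18.5}), deduce $HG_0=G_0H=1_{\mathcal L}$ via Lemma~\ref{13.1.18.6.0}, invert using (\ref{13.3.9.14.47}) and the pseudoinverse convention when $\widetilde{\mathcal E}_{\mathrm{qs}}=\{0\}$, and otherwise use the projections $\pi_1,\pi_2$ with the Riesz--Schauder theorem together with the direct computation of $(1-\pi_2^*)G_0$ from (\ref{12.12.28.18.5}) specialized to $M_0=0$, exactly as the paper does. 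The only (cosmetic) deviation is your normalization $\Phi_5=\Phi_1^*$ in the case $Q=0$, which is in fact more precise than the paper's shorthand $\Phi_1=\Phi_5$.
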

\begin{remarks}\label{13.3.11.5.22}
\begin{enumerate}
\item
The coefficient $G_0$ appears in the expansion (\ref{13.3.11.4.27})
for $N\ge 0$, so that we seemingly need $\beta\ge 2$,
but do not in fact, because that is required to dominate the error term. 
If we take (\ref{12.12.28.18.5}) as the definition of $G_0$,
we only need $\beta\ge 1$. Theorem~\ref{13.3.11.4.29} reads in this sense.

\item
As in Lemma~\ref{13.1.18.6.0}, the composition $G_0H$ makes sense 
also on the extended space $\mathbb C\mathbf n\oplus 
\mathbb C|\mathbf n|\oplus
\mathbb C\mathbf 1\oplus
\mathbb C\mbox{\boldmath$\sigma $}
\oplus\mathcal L$,
and we can compute the action on $\mathbb C\mathbf n\oplus 
\mathbb C|\mathbf n|\oplus
\mathbb C\mathbf 1\oplus
\mathbb C\mbox{\boldmath$\sigma $}$
by a direct computation employing (\ref{12.12.28.18.5}).
We here write down only the consequences, omitting the detail:
\begin{align*}
G_{0}H\Psi_1^0
&=\|\Phi_1\|^{\dagger 2}\|\Phi_1\|^2\Psi_1^0,&
G_{0}H\Psi_2^0&=0,\\
G_{0}H\mbox{\boldmath$\sigma$}&=\mbox{\boldmath$\sigma$},&
G_{0}H|\mathbf n|
&=|\mathbf n|-2\Psi_5.
\end{align*}
\end{enumerate}
\end{remarks}

\subsection{The case: $P$ is not invertible, and $m_0$ is invertible}
\begin{theorem}\label{12.12.1.12.39}
Suppose that $\beta\ge 2$ in Assumption~\ref{12.11.8.1.9},
that $P$ defined by (\ref{12.11.24.22.25}) is not invertible in $\mathcal B(\mathcal K)$,
and that $m_0$ defined by (\ref{12.11.27.8.11}) is invertible in $\mathcal B(Q\mathcal K)$.
Let $N\in [0,\beta-2]$ be any integer.
Then as $\kappa\to 0$ with $\real\kappa>0$,
the resolvent $R(\kappa)$ has the expansion in $\mathcal B^{N+2}$:
\begin{align}
R(\kappa)=\sum_{j=-1}^N\kappa^j G_j+\mathcal O(\kappa^{N+1}),\quad
G_j\in \mathcal B^{j+1},\label{12.12.1.3.7}
\end{align}
and the coefficients $G_j$ can be computed explicitly.
\end{theorem}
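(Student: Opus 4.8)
The plan is to follow the same route as in the proof of Theorem~\ref{12.12.1.12.38}, except that the leading coefficient $\gamma^\dagger P/\kappa$ of $M(\kappa)$ in (\ref{12.11.13.6.20}) is no longer invertible, so a single Neumann series will not suffice and I must feed in the inversion formula of Proposition~\ref{12.11.9.3.28}. Since $P$ is assumed not invertible, Section~\ref{12.11.24.5.48} already supplies the identity (\ref{12.11.11.12.35}),
\begin{align*}
M(\kappa)^{-1}
=\kappa(Q+\kappa M(\kappa))^{-1}
+(Q+\kappa M(\kappa))^{-1}m(\kappa)^{\dagger}(Q+\kappa M(\kappa))^{-1},
\end{align*}
with $Q=1_{\mathcal K}-P$ and $m(\kappa)\in\mathcal B(Q\mathcal K)$ expanded as in (\ref{12.11.27.8.11}). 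First I would expand the two outer factors: the leading term of $Q+\kappa M(\kappa)$ is $Q+\gamma^\dagger P$, which is invertible with inverse $Q+\gamma P$, so a Neumann series gives an expansion of $(Q+\kappa M(\kappa))^{-1}$ in non-negative powers of $\kappa$ in $\mathcal B(\mathcal K)$. For the middle factor I would invoke the hypothesis that $m_0$ is invertible in $\mathcal B(Q\mathcal K)$: then $m(\kappa)^\dagger=m(\kappa)^{-1}$, and the Neumann series based on (\ref{12.11.27.8.11}) yields $m(\kappa)^{-1}=m_0^\dagger+\mathcal O(\kappa)$, again with only non-negative powers.

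Combining these in (\ref{12.11.11.12.35}) produces an expansion
\begin{align*}
M(\kappa)^{-1}=\sum_{j=0}^{N+2}\kappa^jB_j+\mathcal O(\kappa^{N+3}),\quad B_j\in\mathcal B(\mathcal K),
\end{align*}
carrying \emph{no} negative powers of $\kappa$, with coefficients $B_j$ that are explicit polynomials in $M_0,M_1,\dots$, $\gamma$, $P$ and $m_0^\dagger$. I would then substitute this together with the expansion (\ref{12.12.1.7.7}) of $R_0(\kappa)$ into the first resolvent identity (\ref{12.11.8.4.1}) and collect powers of $\kappa$, just as in (\ref{12.12.28.17.4}). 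The leading term of $R_0(\kappa)$ is $\kappa^{-1}G^0_{-1}=\tfrac12\kappa^{-1}\langle\mathbf 1,\cdot\rangle\mathbf 1$, so a priori the product $R_0(\kappa)vM(\kappa)^{-1}v^*R_0(\kappa)$ carries a $\kappa^{-2}$ contribution, whose coefficient is a multiple of $\langle\Phi_1,B_0\Phi_1\rangle\langle\mathbf 1,\cdot\rangle\mathbf 1$ with $\Phi_1=v^*\mathbf 1$.

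The crux of the argument, and the step I expect to require the most care, is checking that this $\kappa^{-2}$ singularity cancels, so that the expansion of $R(\kappa)$ genuinely begins at $\kappa^{-1}$ as asserted. This is a structural fact rather than a computation: the leading coefficient is $B_0=m_0^\dagger=Qm_0^\dagger Q$, hence supported on $Q\mathcal K$, whereas $\Phi_1\in P\mathcal K=(Q\mathcal K)^\perp$, so $Q\Phi_1=0$ and $\langle\Phi_1,B_0\Phi_1\rangle=0$ (equivalently $PM(\kappa)^{-1}P=\mathcal O(\kappa)$). Once this cancellation is in hand, defining $G_j$ as the coefficient of $\kappa^j$ gives the stated expansion, and the mapping properties $G_j\in\mathcal B^{j+1}$ follow exactly as in Theorem~\ref{12.12.1.12.38} by degree bookkeeping: each summand $G^0_{j_1}vB_{j_2}v^*G^0_{j_3}$ has $j_1+j_2+j_3=j$ with $j_1,j_3\ge-1$ and $j_2\ge 0$ (the difference from the $P$-invertible case, where $j_2\ge1$), while $G^0_{j_i}\in\mathcal B^{j_i+1}$ is convolution with a polynomial of degree $j_i+1$ in $|n|$ and $v\in\mathcal B(\mathcal K,\mathcal L^\beta)$ supplies the required decay, the assumption $\beta\ge 2$ and the constraint $N\le\beta-2$ being precisely what is needed to dominate the remainder in $\mathcal B^{N+2}$ up to order $\kappa^{N+1}$. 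The explicit forms of $G_{-1}$ and $G_0$ in terms of the eigenfunctions would then be extracted afterwards, paralleling the discussion following Theorem~\ref{12.12.1.12.38} and using Tables~\ref{13.3.6.0.27a}--\ref{13.3.6.0.27}.
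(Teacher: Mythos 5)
Your overall route---writing $M(\kappa)^{-1}$ via (\ref{12.11.11.12.35}), expanding $(Q+\kappa M(\kappa))^{-1}$ and $m(\kappa)^{\dagger}=m(\kappa)^{-1}$ by Neumann series, and using the structural fact that the leading coefficient $m_0^{\dagger}=Qm_0^{\dagger}Q$ annihilates $\Phi_1$ (this is exactly the paper's identity (\ref{12.11.30.6.34})) to kill the $\kappa^{-2}$ singularity---is the right one, and it is essentially the paper's. However, there is a genuine gap in your remainder bookkeeping. You claim $M(\kappa)^{-1}=\sum_{j=0}^{N+2}\kappa^jB_j+\mathcal O(\kappa^{N+3})$ with explicit $B_j$. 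This is not available under the hypotheses $\beta\ge 2$, $N\le\beta-2$: the coefficients of $M(\kappa)^{-1}$ at order $j$ require $m(\kappa)^{\dagger}$, hence $m(\kappa)$, hence $M(\kappa)$, expanded to order $j$, while Proposition~\ref{12.11.27.11.18} only provides the expansion of $M(\kappa)$ up to order $\beta-2=N$; indeed $M_j=v^*G_j^0v$ need not even be defined for $j>\beta-1$, and the $\mathcal O(\kappa^{N+1})$ remainder of $M(\kappa)$ is not itself expandable, so an asymptotic expansion of $M(\kappa)^{-1}$ beyond order $N$ need not exist at all. The order $N+2$ is not a luxury in your scheme: since $R_0(\kappa)=\kappa^{-1}G_{-1}^0+\mathcal O(1)$ sits on \emph{both} sides in (\ref{12.11.8.4.1}), an $\mathcal O(\kappa^{J+1})$ remainder for $M(\kappa)^{-1}$ only gives an $\mathcal O(\kappa^{J-1})$ remainder for $R(\kappa)$, so you need $J=N+2$ to reach $\mathcal O(\kappa^{N+1})$. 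With the $J=N$ that is actually available, your argument leaves an error of $\mathcal O(\kappa^{N-1})$, two orders short; as written it proves the theorem only for $N\le\beta-4$.

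The repair is precisely what the paper does: do not collapse $M(\kappa)^{-1}$ into a single series. Substitute (\ref{12.11.11.12.35}) into (\ref{12.11.8.4.1}) keeping the factored form (\ref{12.11.30.5.11}), expand $(Q+\kappa M(\kappa))^{-1}$ to order $N+1$ and $m(\kappa)^{\dagger}$ to order $N$ (both available), and then apply the vanishing identities $G_{-1}^0vA_0Q=0$ and $QA_0v^*G_{-1}^0=0$ not only to the formal $\kappa^{-2}$ coefficient but to the \emph{error cross-terms}: every term of order $\kappa^{N-1}$ or $\kappa^{N}$ in the product---including those containing a remainder factor---carries one of these vanishing factors, because $m(\kappa)^{\dagger}$ and its remainder are sandwiched by $Q$, so that, e.g., $Q(Q+\kappa M(\kappa))^{-1}v^*R_0(\kappa)=\mathcal O(1)$ rather than $\mathcal O(\kappa^{-1})$. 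Your observation that $B_0$ is supported on $Q\mathcal K$ while $Q\Phi_1=0$ is exactly this ingredient; it just has to be deployed on the remainders, where the explicit higher coefficients you invoked do not exist.
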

\begin{proof}
By Proposition~\ref{12.11.13.19.5} and (\ref{12.11.11.12.35}) we can write
\begin{align}
\begin{split}
R(\kappa)={}&R_0(\kappa)-\kappa R_0(\kappa)v(Q+\kappa M(\kappa))^{-1}v^*R_0(\kappa)\\
&-R_0(\kappa)v(Q+\kappa M(\kappa))^{-1}m(\kappa)^\dagger(Q+\kappa M(\kappa))^{-1}v^*R_0(\kappa).
\end{split}\label{12.11.30.5.11}
\end{align}
We compute the expansions of $m(\kappa)^\dagger$ and $(Q+\kappa M(\kappa))^{-1}$ 
employing the Neumann series.
Let $N\in [0,\beta-2]$ be any integer as in the assertion.
Then we have the expansions in $\mathcal B(\mathcal K)$
and $\mathcal B(Q\mathcal K)$:
\begin{align}
(Q+\kappa M(\kappa))^{-1}&
=\sum_{j=0}^{N+1}\kappa^j A_j+\mathcal O(\kappa^{N+2}),\quad
A_j\in\mathcal B(\mathcal K),\label{12.11.30.5.13}\\
m(\kappa)^\dagger
&=\sum_{j=0}^{N}\kappa^jB_j+\mathcal O(\kappa^{N+1}),\quad
B_j\in\mathcal B(Q\mathcal K),\label{12.11.30.5.12}
\end{align}
where
\begin{align}
A_0&=Q+\gamma P,&
A_{j}&=\sum_{k=1}^{j}
\sum_{\genfrac{}{}{0pt}{}{j_1\ge 0,\dots,j_k\ge 0}{j_1+\cdots+j_k=j-k}}A_0
\prod_{l=1}^k(-M_{j_l}A_0)\quad\mbox{for } j\ge 1,\label{12.11.30.7.20}\\
B_0&=m_0^\dagger, &
B_{j}&=\sum_{k=1}^{j}
\sum_{\genfrac{}{}{0pt}{}{j_1\ge 1,\dots,j_k\ge 1}{j_1+\cdots+j_k=j}}B_0
\prod_{l=1}^k(-m_{j_l}B_0)
\quad \mbox{for } j\ge 1,\label{12.11.30.7.19}
\end{align}
respectively.
We substitute (\ref{12.12.1.7.7}), (\ref{12.11.30.5.13}) and (\ref{12.11.30.5.12})
into (\ref{12.11.30.5.11}),
and then in the topology of $\mathcal B^{N+2}$
\begin{align}
\begin{split}
R(\kappa)={}&\sum_{j=-1}^N\kappa^jG_j^0
-
\sum_{j=-1}^{N}\kappa^j\sum_{\genfrac{}{}{0pt}{}{j_1\ge -1, j_2\ge 0,j_3\ge -1}{j_1+j_2+j_3=j-1}}
G_{j_1}^0vA_{j_2}v^*G_{j_3}^0
\\
&-
\sum_{j=-2}^{N}\kappa^j\sum_{\genfrac{}{}{0pt}{}{j_1\ge -1, j_2\ge 0,j_3\ge 0,j_4\ge 0,j_5\ge -1}{j_1+j_2+j_3+j_4+j_5=j}}
G_{j_1}^0vA_{j_2}B_{j_3}A_{j_4}v^*G_{j_5}^0
+\mathcal O(\kappa^{N+1}).
\end{split}\label{12.12.1.8.0}
\end{align}
Here we note that the error term is not $\mathcal O(\kappa^{N-1})$ or $\mathcal O(\kappa^{N})$,
because 
the possible error terms of order $\kappa^{N-1}$ or $\kappa^N$
contain the factors $G_{-1}v^*A_0Q$ or $QA_0vG_{-1}^0$,
both of which vanish:
\begin{align}
G_{-1}vA_0Q=0,\quad QA_0v^*G_{-1}^0=0.\label{12.11.30.6.34}
\end{align}
The same fact (\ref{12.11.30.6.34})
also guarantees that the coefficient of $\kappa^{-2}$ to the right of (\ref{12.12.1.8.0}) vanishes, 
and furthermore that, 
if we set for $j\ge -1$
\begin{align}
\begin{split}
G_j={}&G_j^0-\sum_{\genfrac{}{}{0pt}{}{j_1\ge -1, j_2\ge 0,j_3\ge -1}{j_1+j_2+j_3=j-1}}G_{j_1}^0vA_{j_2}v^*G_{j_3}^0
\\
&-\sum_{\genfrac{}{}{0pt}{}{j_1\ge -1, j_2\ge 0,j_3\ge 0,j_4\ge 0,j_5\ge -1}{j_1+j_2+j_3+j_4+j_5=j}}
G_{j_1}^0vA_{j_2}B_{j_3}A_{j_4}v^*G_{j_5}^0,
\end{split}\label{12.12.1.8.3}
\end{align}
then $G_j\in \mathcal B^{j+1}$.
Hence we obtain the desired expansion (\ref{12.12.1.3.7}).
\end{proof}

We next investigate the coefficients.
We can write by (\ref{12.12.1.8.3}), (\ref{12.11.30.6.34}),
(\ref{12.11.30.7.20}) and (\ref{12.11.30.7.19})
\begin{align}
\begin{split}
G_{-1}&=G_{-1}^0-G_{-1}^0vA_0v^*G_{-1}^0\\
&=G_{-1}^0-\gamma G_{-1}^0vv^*G_{-1}^0,
\end{split}
\label{13.1.19.16.22} \\
\begin{split}
G_{0}&=G_0^0
-G_{-1}^0vA_1v^*G_{-1}^0
-G_{-1}^0vA_0v^*G_{0}^0
-G_{0}^0vA_0v^*G_{-1}^0\\
&\phantom{={}}
-(G_{-1}^0vA_1+G_{0}^0vA_0)B_0(A_1v^*G_{-1}^0+A_0v^*G_{0}^0)\\
&=
G_0^0
+\gamma^2 G_{-1}^0vM_0v^*G_{-1}^0
-\gamma G_{-1}^0vv^*G_0^0
-\gamma G_0^0vv^*G_{-1}^0
\\
&\phantom{={}}
-(\gamma G_{-1}^0vM_0-G_0^0v)m_0^\dagger(\gamma M_0v^*G_{-1}^0-v^*G_{0}^0).
\end{split}
\label{13.1.19.16.23}
\end{align}
We note that parts of expressions 
(\ref{13.1.19.16.22}), (\ref{13.1.19.16.23})
coincide with those of (\ref{12.12.28.18.4}), (\ref{12.12.28.18.5}), respectively,
because parts of (\ref{12.12.1.8.3}) and (\ref{12.12.28.17.4})
are the same.
But we have to be careful since 
the definitions (\ref{12.11.30.7.20}) and (\ref{12.12.1.7.18}) of $A_j$ 
are different.

\begin{theorem}\label{13.3.11.5.16}
Under the assumption of Theorem~\ref{12.12.1.12.39}
the coefficient $G_{-1}$ is of the form (\ref{13.3.11.2.46})
(up to the factor $i$ due to a different convention).
\end{theorem}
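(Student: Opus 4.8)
The plan is to evaluate the explicit expression (\ref{13.1.19.16.22}) for $G_{-1}$ directly, reducing it to a scalar multiple of a single rank-one operator, and then to read off the eigenspace $\mathcal E$ from Table~\ref{13.3.6.0.27}.

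First I would use the kernel expressions (\ref{12.12.19.8.26}), which give that $G_{-1}^0$ has constant kernel $G_{-1}^0(n) = \tfrac12$; hence $G_{-1}^0 = \tfrac12\langle \Psi_1^0, \cdot\rangle \Psi_1^0$ with $\Psi_1^0 = \mathbf 1$. Since $\Phi_1 = v^*\mathbf 1$ by (\ref{12.11.24.22.25}), it follows that $v^* G_{-1}^0 = \tfrac12\langle \mathbf 1, \cdot\rangle \Phi_1$, and a short computation yields $G_{-1}^0 v v^* G_{-1}^0 = \tfrac12\|\Phi_1\|^2 G_{-1}^0$. Substituting into (\ref{13.1.19.16.22}) and recalling $\gamma = 2\|\Phi_1\|^{\dagger 2}$ gives
\[
G_{-1} = \bigl(1 - \|\Phi_1\|^{\dagger 2}\|\Phi_1\|^2\bigr) G_{-1}^0 .
\]

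The remaining step is a case distinction on whether $\Phi_1$ vanishes, matched against Table~\ref{13.3.6.0.27}. If $\Phi_1 \neq 0$ the scalar factor is $1-1=0$, so $G_{-1} = 0$; and by Cases ii, iii, v, vi we have $\mathcal E = \{0\}$, so the right-hand side of (\ref{13.3.11.2.46}) is an empty sum and both sides vanish. If $\Phi_1 = 0$ then $\gamma = 0$ and $G_{-1} = G_{-1}^0 = \tfrac12\langle \Psi_1^0, \cdot\rangle \Psi_1^0$; this is exactly Cases i and iv, where $E = \{0\}$ (so $d_0 = 0$) and $\mathcal E = \mathbb C\Psi_1^0$. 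Choosing $\Psi_1 = 2^{-1/2}\Psi_1^0$, a basis of $\mathcal E/E = \mathcal E$, I obtain $G_{-1} = \langle \Psi_1, \cdot\rangle \Psi_1$. Since $E = \{0\}$, the congruence mod $\langle E, \cdot\rangle E$ in (\ref{13.3.11.2.46}) is trivial, and the factor $i$ is precisely the discrepancy between the present $\kappa$-expansion and the $z$-expansion of (\ref{13.3.11.2.46}), coming from $z^{-1/2} = -i\kappa^{-1}$.

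I do not expect a genuine obstacle here: the content is the elementary observation that $G_{-1}$ is a scalar multiple of the rank-one operator $\langle \mathbf 1, \cdot\rangle \mathbf 1$, collapsing to zero exactly when $\mathbf 1 \notin \widetilde{\mathcal E}$ (equivalently $\Phi_1 = v^*\mathbf 1 \neq 0$). The only care needed is to track the normalization $2^{-1/2}$ and the conventional factor $i$, and to confirm from Table~\ref{13.3.6.0.27} that $\mathcal E$ and $E$ depend on $\Phi_1$ precisely as claimed.
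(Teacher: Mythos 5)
Your proposal is correct and follows essentially the same route as the paper: the paper's (very terse) proof likewise evaluates the explicit formula (\ref{13.1.19.16.22}) --- obtaining $G_{-1}=2^{-1}\langle\Psi_1^0,\cdot\rangle\Psi_1^0$ when $\Phi_1=0$ and $G_{-1}=0$ when $\Phi_1\neq0$ --- and then reads off $\mathcal E$ and $E$ from Tables~\ref{13.3.6.0.27a} and \ref{13.3.6.0.27}. Your write-up merely makes explicit the rank-one computation $G_{-1}^0vv^*G_{-1}^0=\tfrac12\|\Phi_1\|^2G_{-1}^0$ and the normalization $\Psi_1=2^{-1/2}\Psi_1^0$ that the paper leaves as ``obvious.''
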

\begin{proof}
With the bases listed in Table~\ref{13.3.6.0.27}
for Cases i--v\hspace{-.05em}i in Table~\ref{13.3.6.0.27a}
the assertion is obvious as that of Theorem~\ref{13.3.11.5.15}.
\end{proof}

\begin{theorem}\label{13.3.11.4.30}
Under the assumption of Theorem~\ref{12.12.1.12.39}
the coefficient $G_0$ satisfies
\begin{align}
G_{0}
\equiv 
G_0^0
+\langle \Psi_5,\cdot \rangle  \Psi_1^0
+\langle \Psi_1^0,\cdot \rangle  \Psi_5
\mod \mathcal B^0
\label{13.8.3.1.33}
\end{align}
and 
\begin{align}
HG_0=G_0H=1_{\mathcal L}.
\label{13.1.19.16.14}
\end{align}
Moreover, 
if $\widetilde{\mathcal E}_{\mathrm{qs}}=\{0\}$, then 
\begin{align}
\begin{split}
G_0
&=(1_{\mathcal L^*}+G_0^0V)^{-1}G_0^0
+\Delta^{\dagger}
\langle \Psi_5,\cdot\rangle \Psi_5\\
&=G_0^0(1_{\mathcal L}+VG_0^0)^{-1}
+\Delta^{\dagger}
\langle \Psi_5,\cdot\rangle \Psi_5,
\end{split}\label{13.3.12.6.25}
\intertext{and, if $\widetilde{\mathcal E}_{\mathrm{qs}}\neq \{0\}$, then }
\begin{split}
G_0
&=[\pi_1^*(1_{\mathcal L^*}+G_0^0V)\pi_2^*]^{-1}\pi_1^*G_0^0
+\langle \Psi_1^0,\cdot\rangle\Psi_5\\
&=G_0^0\pi_1[\pi_2(1_{\mathcal L}+VG_0^0)\pi_1]^{-1}
+\langle \Psi_5,\cdot\rangle\Psi_1^0,
\end{split}
\label{13.1.19.14.52b}
\end{align}
where $\pi_1$, $\pi_2$ are projections in $\mathcal L$ defined by
\begin{align}
\pi_1&=1_{\mathcal L}-\langle \Psi_1^0,\cdot\rangle V\Psi_5,
\quad
\pi_2=1_{\mathcal L}-\langle \Psi_5,\cdot\rangle V\Psi_1^0,
\label{13.8.15.4.50}
\end{align}
and $[\pi_1^*(1_{\mathcal L^*}+G_0^0V)\pi_2^*]^{-1}$, $[\pi_2(1_{\mathcal L}+VG_0^0)\pi_1]^{-1}$
are inverses in the projected spaces given by the Riesz--Schauder theorem.
\end{theorem}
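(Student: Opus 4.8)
The plan is to follow, step for step, the logical structure of the proof of Theorem~\ref{13.3.11.4.29}, the only changes being that the explicit formula (\ref{12.12.28.18.5}) for $G_0$ (valid when $P$ is invertible) is replaced by the formula (\ref{13.1.19.16.23}) that holds under the present hypotheses, and that the invertibility of $m_0$ in $\mathcal B(Q\mathcal K)$ now forces $S=0$. The latter means $m_0^\dagger$ is a genuine two-sided inverse on $Q\mathcal K$, i.e.\ $m_0m_0^\dagger=m_0^\dagger m_0=Q$, and, since $\Phi_3=2SM_0\Phi_1^*=0$, relation (\ref{13.3.9.14.47b}) reduces to $M_0\Phi_5=\Delta\Phi_1$. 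These two simplifications are what make the general argument go through exactly as before.

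The central computation is the identity
\begin{align*}
(1_{\mathcal L^*}+G_0^0V)G_0=G_0^0+\langle \Psi_5,\cdot\rangle \Psi_1^0,
\end{align*}
the exact analogue of (\ref{13.1.19.14.50}). First I would substitute (\ref{13.1.19.16.23}) and expand $1_{\mathcal L^*}+G_0^0V=1_{\mathcal L^*}+G_0^0vUv^*$, simplifying the operator products by means of the relations $v^*G_{-1}^0v=\gamma^\dagger P$ and $v^*G_0^0v=M_0-U$ and the definition $m_0=QM_0Q$. The term carrying $m_0^\dagger$ is precisely where the present case genuinely differs from the $P$-invertible one; collecting it against the remaining terms and using $M_0\Phi_5=\Delta\Phi_1$ together with the definition $\Phi_5=(1_{\mathcal K}-m_0^\dagger M_0)\Phi_1^*$ should collapse everything into the single rank-one remainder, the surviving block reorganizing into $z\Phi_5=\Psi_5$. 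I expect this algebraic collapse to be the main obstacle: it is a bookkeeping of several operator products whose cancellations hinge delicately on $m_0^\dagger m_0=Q$ and on recognizing the definitions (\ref{12.12.27.16.4}) of $z$ and of $\Psi_5$ inside the residual term.

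Once this identity is in hand, the congruence (\ref{13.8.3.1.33}) is read off from the non-$\mathcal B^0$ part of (\ref{13.1.19.16.23}), and (\ref{13.1.19.16.14}) follows as in the previous proof: applying $H_0$ on the left to the central identity, and using $H_0G_0^0=1_{\mathcal L}$ from Lemma~\ref{13.1.18.6.0} together with $H_0\Psi_1^0=H_0\mathbf 1=0$, yields $HG_0=1_{\mathcal L}$; the companion identity $G_0(1_{\mathcal L}+VG_0^0)=G_0^0+\langle\Psi_1^0,\cdot\rangle\Psi_5$, obtained by the same computation, combined with $G_0^0H_0=1_{\mathcal L}$ on $\mathcal L$ and $H_0\Psi_1^0=0$, gives $G_0H=1_{\mathcal L}$.

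For the explicit formulas I would split according to $\widetilde{\mathcal E}_{\mathrm{qs}}$. If $\widetilde{\mathcal E}_{\mathrm{qs}}=\{0\}$, then $1_{\mathcal L^*}+G_0^0V$ is invertible in $\mathcal B(\mathcal L^*)$ by Proposition~\ref{13.1.16.1.59} (equivalently Theorem~\ref{13.3.4.21.12}); inspecting Cases i--vi of Table~\ref{13.3.6.0.27a} shows this is exactly the situation $\Delta\neq 0$, so (\ref{13.3.9.14.47}) with $\Phi_3=0$ reads $(1_{\mathcal L^*}+G_0^0V)\Psi_5=\Delta\Psi_1^0$, whence $(1_{\mathcal L^*}+G_0^0V)^{-1}\Psi_1^0=\Delta^\dagger\Psi_5$; inverting the central identity then produces the first line of (\ref{13.3.12.6.25}), and the second follows by applying the same reasoning to the companion identity. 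If $\widetilde{\mathcal E}_{\mathrm{qs}}\neq\{0\}$, then $1_{\mathcal L^*}+G_0^0V$ has kernel $\mathbb C\Psi_5$, and I would introduce the projections $\pi_1,\pi_2$ of (\ref{13.8.15.4.50}), rewrite the central identity as $(1_{\mathcal L^*}+G_0^0V)\pi_2^*G_0=\pi_1^*G_0^0$ exactly as in (\ref{13.8.2.17.00}), invert $\pi_1^*(1_{\mathcal L^*}+G_0^0V)\pi_2^*$ between the projected spaces by the Riesz--Schauder theorem (using $\langle\Psi_5,V\Psi_1^0\rangle\neq 0$), and recover the complementary part $(1-\pi_2^*)G_0=\langle\Psi_1^0,\cdot\rangle\Psi_5$ directly from (\ref{13.1.19.16.23}); this yields (\ref{13.1.19.14.52b}), the adjoint expression being obtained symmetrically.
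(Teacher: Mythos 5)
Your proposal follows the paper's own proof essentially step for step: the central identity $(1_{\mathcal L^*}+G_0^0V)G_0=G_0^0+\langle \Psi_5,\cdot\rangle\Psi_1^0$ (the paper's (\ref{13.1.20.20.14})) is obtained by direct computation from (\ref{13.1.19.16.23}) using that $m_0$ invertible in $\mathcal B(Q\mathcal K)$ forces $S=0$, hence $m_0^\dagger m_0=m_0m_0^\dagger=Q$, $\Phi_3=0$ and $M_0\Phi_5=\Delta\Phi_1$; the congruence (\ref{13.8.3.1.33}) and the identities $HG_0=G_0H=1_{\mathcal L}$ are then read off exactly as in the paper (via Lemma~\ref{13.1.18.6.0} and the adjoint computation); and the two explicit formulas are derived, respectively, by inverting the central identity and by the Riesz--Schauder argument with the projections $\pi_1,\pi_2$, including the recovery of $(1-\pi_2^*)G_0=\langle \Psi_1^0,\cdot\rangle\Psi_5$ from (\ref{13.1.19.16.23}). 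This is the paper's argument.

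There is, however, one false claim in your treatment of the case $\widetilde{\mathcal E}_{\mathrm{qs}}=\{0\}$: this condition is \emph{not} equivalent to $\Delta\neq 0$. By Table~\ref{13.3.6.0.27a}, Cases~i and~iv have $\Phi_1=0$, hence automatically $\Delta=0$, and yet $\widetilde{\mathcal E}_{\mathrm{qs}}=\{0\}$ there; the correct statement under the present hypotheses is that $\widetilde{\mathcal E}_{\mathrm{qs}}=\{0\}$ if and only if $\Phi_1=0$ or $\Delta\neq 0$. Consequently your intermediate inference $(1_{\mathcal L^*}+G_0^0V)^{-1}\Psi_1^0=\Delta^\dagger\Psi_5$ fails in Cases~i and~iv: its left-hand side equals $\Psi_1^0\neq 0$ (since $V\Psi_1^0=vUv^*\mathbf 1=0$ when $\Phi_1=0$), while the right-hand side vanishes. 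The gap is harmless and trivially repaired: when $\Phi_1=0$ one has $\Phi_1^*=0$, hence $\Phi_5=(1_{\mathcal K}-m_0^\dagger M_0)\Phi_1^*=0$ and $\Psi_5=0$, so the rank-one term in the central identity disappears and (\ref{13.3.12.6.25}) reduces to $G_0=(1_{\mathcal L^*}+G_0^0V)^{-1}G_0^0$, which follows at once; when instead $\Delta\neq 0$ your argument applies verbatim. Note that the paper avoids this case split altogether by verifying (\ref{13.3.12.6.25}) directly from (\ref{13.1.20.20.14}) and (\ref{13.3.9.14.47}), which hold in all cases. With this patch your proposal is complete and coincides with the paper's proof.
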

\begin{proof}
Note that by Lemma~\ref{12.11.24.18.24} 
$G_0^0vm_0^\dagger v^*G_{0}^0\in \mathcal B^0$.
Then by the expression (\ref{13.1.19.16.23}) the formula (\ref{13.8.3.1.33}) follows.

We compute directly using the expression (\ref{13.1.19.16.23}):
\begin{align*}
\begin{split}
&(1_{\mathcal L^*}+G_0^0V)G_0\\
&=
G_0^0
+\gamma^2 G_{-1}^0vM_0v^*G_{-1}^0
-\gamma G_{-1}^0vv^*G_0^0
-\gamma G_0^0vv^*G_{-1}^0
-G_0^0vm_0^\dagger v^*G_{0}^0
\\&\phantom{={}}
+\gamma G_0^0vm_0^\dagger M_0v^*G_{-1}^0
+\gamma G_{-1}^0vM_0m_0^\dagger 
v^*G_{0}^0
-\gamma^2 G_{-1}^0vM_0m_0^\dagger M_0v^*G_{-1}^0
\\&\phantom{={}}
+\gamma G_0^0vv^*G_{-1}^0
+G_0^0vm_0^\dagger v^*G_{0}^0
-\gamma G_0^0vm_0^\dagger M_0v^*G_{-1}^0\\
&=
G_0^0
-\gamma G_{-1}^0v(1_{\mathcal K}-M_0m_0^\dagger) v^*G_0^0
+\gamma^2 G_{-1}^0v(M_0-M_0m_0^\dagger M_0)v^*G_{-1}^0.
\end{split}
\end{align*}
If we note $Q(M_0-M_0m_0^\dagger M_0)=0$, then we obtain
\begin{align}
&(1_{\mathcal L^*}+G_0^0V)G_0
=G_0^0
+\langle\Psi_5,\cdot\rangle\Psi_1^0.
\label{13.1.20.20.14}
\end{align}
Thus, it follows that $HG_0=1_{\mathcal L}$ by (\ref{13.1.20.20.14}) and Lemma~\ref{13.1.18.6.0},
and that $G_0H|_{\mathcal L}=1_{\mathcal L}$ by the adjoint argument.
These show (\ref{13.1.19.16.14}).

If $\widetilde{\mathcal E}_{\mathrm{qs}}= \{0\}$, then $1_{\mathcal L^*}+G_0^0V$ is invertible,
and (\ref{13.3.12.6.25}) can be easily verified by 
(\ref{13.1.20.20.14}) and (\ref{13.3.9.14.47}).
Now assume $\widetilde{\mathcal E}_{\mathrm{qs}}\neq \{0\}$.
We note, since $\langle \Psi_1^0,V\Psi_5\rangle=\langle \Psi_5,V\Psi_1^0\rangle =1\neq 0$,
the operators $\pi_1$ and $\pi_2$ defined by (\ref{13.8.15.4.50}) are in fact projections.
Repeating the arguments in the previous subsection, 
we can write using the Riesz--Schauder theorem that 
\begin{align*}
\pi_2^*G_0=[\pi_1^*(1_{\mathcal L^*}+G_0^0V)\pi_2^*]^{-1}\pi_1^*G_0^0.
\end{align*}
Hence, it suffices to compute $(1-\pi_2^*)G_0$.
Let us note by a direct computation with (\ref{13.1.19.16.23}) that 
\begin{align*}
VG_0
&= \langle \Psi_1^0,\cdot\rangle v\Phi_5+vm_0^\dagger v^*G_{0}^0.
\end{align*}
Then it follows that 
\begin{align*}
(1-\pi_2^*)G_0=\langle \Psi_1^0,\cdot \rangle \Psi_5,
\end{align*}
and hence we obtain the former identity of (\ref{13.1.19.14.52b}).
By the adjoint computation we can verify the latter, and we are done.
\end{proof}
\begin{remarks}\label{13.3.21.16.40}
\begin{enumerate}
\item
We take (\ref{13.1.19.16.23}) as the definition of $G_0$,
cf.\ Remark~\ref{13.3.11.5.22}.

\item
The action of $G_0H$ on the extended space $\mathbb C\mathbf n\oplus 
\mathbb C|\mathbf n|\oplus
\mathbb C\mathbf 1\oplus
\mathbb C\mbox{\boldmath$\sigma $}\oplus \mathcal L$
is more complicated than before.
\end{enumerate}
\end{remarks}

\subsection{The case: $P$, $m_0$ are not invertible, and $q_0$ is invertible}

\begin{theorem}\label{12.12.1.12.40}
Suppose that $\beta\ge 3$ in Assumption~\ref{12.11.8.1.9},
that 
$P$ and $m_0$ defined by (\ref{12.11.24.22.25}) and (\ref{12.11.27.8.11})
are not invertible in $\mathcal B(\mathcal K)$ and $\mathcal B(Q\mathcal K)$, respectively, 
and that $q_0$ defined by (\ref{12.12.1.1.50}) is invertible in 
$\mathcal B(S\mathcal K)$.
Let $N\in [-1,\beta-4]$ be any integer.
Then as $\kappa\to 0$ with $\real\kappa>0$,
the resolvent $R(\kappa)$ has the expansion in 
$\mathcal B^{N+3}$:
\begin{align}
R(\kappa)=\sum_{j=-1}^N\kappa^j G_j+\mathcal O(\kappa^{N+1}),\quad
G_j\in \mathcal B^{j+2},
\label{12.12.1.12.47}
\end{align}
and the coefficients $G_j$ can be computed explicitly.
\end{theorem}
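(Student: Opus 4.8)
The plan is to follow the same three–layer Neumann expansion strategy already used in Theorems~\ref{12.12.1.12.38} and~\ref{12.12.1.12.39}, now carried one iteration deeper. Under the present hypotheses $P$ and $m_0$ are not invertible but $q_0$ is, so the inversion scheme of Section~\ref{12.11.24.5.48} terminates at the $q_0$-stage: formula~(\ref{12.11.27.6.30}) holds with $q(\kappa)^\dagger$ now computable by a Neumann series, since $q_0$ being invertible in $\mathcal B(S\mathcal K)$ means $(S+q(\kappa))^\dagger$ contributes no further singularity. First I would assemble, via Proposition~\ref{12.11.13.19.5} together with the iterated substitutions~(\ref{12.11.11.12.35}) and~(\ref{12.11.27.6.30}), a closed expression for $R(\kappa)$ as a sum of terms built from $R_0(\kappa)v$, the resolvents $(Q+\kappa M(\kappa))^{-1}$, $(S+m(\kappa))^\dagger$, and the pseudoinverse $q(\kappa)^\dagger$, sandwiched between $v^*R_0(\kappa)$.

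Next I would expand each ingredient by the Neumann series. From Proposition~\ref{12.11.9.6.23} we have the expansion~(\ref{12.12.1.7.7}) of $R_0(\kappa)$ in $\mathcal B^{N+2}$; the operators $(Q+\kappa M(\kappa))^{-1}$ and $(S+m(\kappa))^\dagger$ have invertible leading terms ($Q+\gamma P$ and, on $S^\perp Q\mathcal K$, the invertible block of $m_0$), so each expands as a power series in $\kappa$ with nonnegative exponents, exactly as in~(\ref{12.11.30.5.13})--(\ref{12.11.30.7.19}). The genuinely new ingredient is $q(\kappa)^\dagger$: by~(\ref{12.11.27.4.31}) it is $\mathcal O(\kappa^{-1})$, and since $q_0$ is invertible in $\mathcal B(S\mathcal K)$ its pseudoinverse admits a Neumann expansion $q(\kappa)^\dagger=\kappa^{-1}\sum_{j\ge 0}\kappa^j C_j$ with $C_0=q_0^\dagger$ and higher $C_j$ expressed through $q_0^\dagger$ and the $q_j$ of~(\ref{12.12.1.1.50}). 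I would then substitute all these expansions into the assembled formula and collect powers of $\kappa$, defining $G_j$ as the coefficient of $\kappa^j$ and reading off the operator class $\mathcal B^{j+2}$ from the degrees $G^0_{j_i}\in\mathcal B^{j_i+1}$ of the factors flanking the kernel.

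The main obstacle, as in the previous two theorems, is bookkeeping the orders so that (i) the most singular surviving term is genuinely $\kappa^{-1}$ (not worse), and (ii) the error term is $\mathcal O(\kappa^{N+1})$ in $\mathcal B^{N+3}$ rather than something larger. Here the deeper layer introduces a factor $\kappa^{-1}$ from $q(\kappa)^\dagger$ flanked by two copies each of $(S+m(\kappa))^\dagger$ and $(Q+\kappa M(\kappa))^{-1}$, so the naive leading order would be $\kappa^{-2}$ from the $R_0(\kappa)v\cdots v^*R_0(\kappa)$ sandwich combined with this $\kappa^{-1}$. The key cancellations are the analogue of~(\ref{12.11.30.6.34}): the factors $G^0_{-1}vA_0Q$ and $QA_0v^*G^0_{-1}$ vanish, and one further needs the orthogonality $Tv^*G^0_{-1}=0$ (equivalently $Tv^*\mathbf 1=0$, since $T\le S\le Q$ kills $v^*\mathbf 1$) and the vanishing~(\ref{12.12.27.14.14}) to eliminate the would-be $\kappa^{-2}$ contribution. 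I would verify these vanishing relations first, as they are exactly what promotes the leading singularity from $\kappa^{-2}$ down to $\kappa^{-1}$ and simultaneously certifies the stated error order.

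The remaining steps are then routine: having shown the coefficient of $\kappa^{-2}$ is zero and identified the $\kappa^{-1}$ coefficient, one defines
\begin{align*}
G_j=G_j^0-\!\!\sum G_{j_1}^0vA_{j_2}v^*G_{j_3}^0
-\!\!\sum G_{j_1}^0v(\cdots)v^*G_{j_5}^0-\!\!\sum G_{j_1}^0v(\cdots)C_{\cdot}(\cdots)v^*G_{j_k}^0,
\end{align*}
summing over admissible index tuples, and checks $G_j\in\mathcal B^{j+2}$ by adding the weight contributions of the outer $G^0$ factors. Since every intermediate series converges and the index sums are finite for each fixed $j\le N\le\beta-4$ (the weight $\beta\ge 3$, and $\beta\ge N+4$, guaranteeing all $G^0_{j_i}$ and the error bound are well defined), the expansion~(\ref{12.12.1.12.47}) follows, with the $G_j$ explicit in terms of $M_j$, $m_j$, $q_j$ and the pseudoinverses $q_0^\dagger$, $m_0^\dagger$ and the projections $P$, $Q$, $S$, $T$.
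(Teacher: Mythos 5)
Your overall strategy is the paper's own: assemble $R(\kappa)$ from Proposition~\ref{12.11.13.19.5} together with the iterated substitutions (\ref{12.11.11.12.35}) and (\ref{12.11.27.6.30}), expand $(Q+\kappa M(\kappa))^{-1}$, $(S+m(\kappa))^\dagger$ and $q(\kappa)^\dagger$ by Neumann series, and use the cancellations (\ref{12.11.30.6.34}) to control the singular part and the error term. However, the one ingredient that is genuinely new in this case is stated incorrectly, and as written it breaks the proof. You claim that, since $q_0$ is invertible, $q(\kappa)^\dagger$ admits a Neumann expansion $q(\kappa)^\dagger=\kappa^{-1}\sum_{j\ge0}\kappa^jC_j$ with $C_0=q_0^\dagger$. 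This is false: precisely because $q_0$ is invertible in $\mathcal B(S\mathcal K)$ and $q(\kappa)=q_0+\kappa q_1+\mathcal O(\kappa^2)$, the Neumann series gives $q(\kappa)^\dagger=q_0^\dagger-\kappa q_0^\dagger q_1q_0^\dagger+\dots=\mathcal O(1)$, i.e.\ it begins at order $\kappa^0$; a family with invertible leading coefficient cannot have a pseudoinverse blowing up like $\kappa^{-1}$. The estimate (\ref{12.11.27.4.31}) you cite is only an a priori worst-case bound, valid before invertibility of $q_0$ is imposed, not the actual leading behavior here. The sole source of the $\kappa^{-1}$ singularity of $R(\kappa)$ is the explicit prefactor $\tfrac1\kappa$ already present in (\ref{12.11.27.6.30}), i.e.\ in the last term of the assembled formula (\ref{12.11.30.5.11b}). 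If one runs your bookkeeping literally --- that prefactor, your extra $\kappa^{-1}$ from $q(\kappa)^\dagger$, and $\kappa^{-1}$ from each flanking $R_0(\kappa)$ --- then after the cancellations (\ref{12.11.30.6.34}) the surviving leading term is $-\kappa^{-2}zq_0^\dagger z^*$ in the paper's notation, a nonzero $\kappa^{-2}$ singularity (note $S\neq0$ and $z$ is injective on $S\mathcal K$), contradicting the very expansion (\ref{12.12.1.12.47}) you are proving. In the correct computation this term $-zq_0^\dagger z^*$ sits at order $\kappa^{-1}$, cf.\ (\ref{13.3.13.3.2}); a genuine $\kappa^{-2}$ term, $-zr_0^\dagger z^*$, occurs only in Theorem~\ref{12.12.1.12.41}, where $q_0$ is \emph{not} invertible.

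Two lesser points. Your appeal to $Tv^*G_{-1}^0=0$ and to (\ref{12.12.27.14.14}) is vacuous in this theorem: invertibility of $q_0$ in $\mathcal B(S\mathcal K)$ forces $T=0$, and the paper's proof uses only (\ref{12.11.30.6.34}); the $T$-identities belong to the next, deeper case. Also, (\ref{12.11.30.6.34}) is needed not only for your items (i)--(ii) but equally for the claim $G_j\in\mathcal B^{j+2}$: in the triple-sum formula for $G_j$ (the paper's (\ref{12.12.1.12.46})) the extremal index tuples with $j_1=j+2$, $j_7=-1$ (or symmetrically) would a priori only give $\mathcal B^{j+3}$, and these are exactly the terms killed by (\ref{12.11.30.6.34}). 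Once the expansion of $q(\kappa)^\dagger$ is corrected to start at $\kappa^0$ with $C_0=q_0^\dagger$, the remainder of your outline coincides with the paper's proof and goes through.
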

\begin{proof}
By Proposition~\ref{12.11.13.19.5}, (\ref{12.11.11.12.35}) and (\ref{12.11.27.6.30}) 
we can write
\begin{align}
\begin{split}
R(\kappa)={}&R_0(\kappa)-\kappa R_0(\kappa)v(Q+\kappa M(\kappa))^{-1}v^*R_0(\kappa)\\
&-R_0(\kappa)v(Q+\kappa M(\kappa))^{-1}(S+m(\kappa))^\dagger
(Q+\kappa M(\kappa))^{-1}v^*R_0(\kappa)\\
&-\tfrac1{\kappa}R_0(\kappa)v(Q+\kappa M(\kappa))^{-1}
(S+m(\kappa))^\dagger\\
&\phantom{{}+{}}{}\cdot q(\kappa)^\dagger(S+m(\kappa))^\dagger(Q+\kappa M(\kappa))^{-1}v^*R_0(\kappa).
\end{split}\label{12.11.30.5.11b}
\end{align}
We compute the expansions of 
$(Q+\kappa M(\kappa))^{-1}$, $(S+m(\kappa))^\dagger$ and $q(\kappa)^\dagger$
employing the Neumann series.
Let $N\in [-1,\beta-4]$ be any integer.  
Then we have the expansions in 
$\mathcal B(\mathcal K)$, $\mathcal B(Q\mathcal K)$ and 
$\mathcal B(S\mathcal K)$:
\begin{align}
(Q+\kappa M(\kappa))^{-1}&
=\sum_{j=0}^{N+3}\kappa^j A_j+\mathcal O(\kappa^{N+4}),\quad
A_j\in\mathcal B(\mathcal K),\label{12.11.30.5.13b}\\
(S+m(\kappa))^\dagger&
=\sum_{j=0}^{N+2}\kappa^j B_j+\mathcal O(\kappa^{N+3}),\quad
B_j\in\mathcal B(Q\mathcal K),\label{12.11.30.5.14b}\\
q(\kappa)^\dagger
&=\sum_{j=0}^{N+1}\kappa^jC_j+\mathcal O(\kappa^{N+2}),\quad
C_j\in\mathcal B(S\mathcal K),\label{12.11.30.5.15b}
\end{align}
where
\begin{align*}
A_0&=Q+\gamma P,&
A_{j}&=\sum_{k=1}^{j}
\sum_{\genfrac{}{}{0pt}{}{j_1\ge 0,\dots,j_k\ge 0}{j_1+\cdots+j_k=j-k}}A_0
\prod_{l=1}^k(-M_{j_l}A_0)\quad\mbox{for } j\ge 1,%\label{12.11.30.7.20b}
\\
B_0&=S+m_0^\dagger,&
B_{j}&=\sum_{k=1}^{j}
\sum_{\genfrac{}{}{0pt}{}{j_1\ge 1,\dots,j_k\ge 1}{j_1+\cdots+j_k=j}}B_0
\prod_{l=1}^k(-m_{j_l}B_0)\quad\mbox{for } j\ge 1,%\label{12.11.30.7.21b}
\\
C_0&=q_0^\dagger,&
C_{j}&=\sum_{k=1}^{j}
\sum_{\genfrac{}{}{0pt}{}{j_1\ge 1,\dots,j_k\ge 1}{ j_1+\cdots+j_k=j}}C_0\prod_{l=1}^k(-q_{j_l}C_0)
\quad \mbox{for } j\ge 1,%\label{12.11.30.7.22b}
\end{align*}
respectively.
We substitute (\ref{12.11.30.5.13b})--(\ref{12.11.30.5.15b}) and
\begin{align*}
R_0(\kappa)=\sum_{j=-1}^{N+1}G_j^0+\mathcal O(\kappa^{N+2})\quad\mbox{ in }
\mathcal B^{N+3}
\end{align*}
into (\ref{12.11.30.5.11b}), and then obtain
in the topology of $\mathcal B^{N+3}$
\begin{align*}
R(\kappa)=\sum_{j=-1}^{N}\kappa^jG_j+\mathcal O(\kappa^{N+1}),
\end{align*}
where for $j\ge -1$
\begin{align}
\begin{split}
G_j={}&
G_j^0
-\sum_{\genfrac{}{}{0pt}{}{j_1\ge -1, j_2\ge 0,j_3\ge -1}{j_1+j_2+j_3=j-1}}G_{j_1}^0vA_{j_2}v^*G_{j_3}^0
\\&
-\sum_{\genfrac{}{}{0pt}{}{j_1\ge -1, j_2\ge 0,j_3\ge 0,j_4\ge 0,j_5\ge -1}{j_1+j_2+j_3+j_4+j_5=j}}
G_{j_1}^0vA_{j_2}B_{j_3}A_{j_4}v^*G_{j_5}^0
\\&
-\sum_{\genfrac{}{}{0pt}{}{j_1\ge -1, j_2\ge 0,\dots,j_6\ge 0,j_7\ge -1}{ j_1+\dots+j_7=j+1}}
G_{j_1}^0vA_{j_2}B_{j_3}C_{j_4}B_{j_5}A_{j_6}v^*G_{j_7}^0.
\end{split}\label{12.12.1.12.46}
\end{align}
Here, as in the proof of Theorem~\ref{12.12.1.12.39}, 
we have used the identities
(\ref{12.11.30.6.34}) to guarantee that the error term is $\mathcal O(\kappa^{N+1})$
and also that the coefficient of $\kappa^{-2}$ vanishes.
By (\ref{12.12.1.12.46}) and (\ref{12.11.30.6.34}) again we obtain
$G_j\in \mathcal B^{j+2}$
and hence the expansion (\ref{12.12.1.12.47}) is verified.
\end{proof}

By (\ref{12.12.1.12.46}) and (\ref{12.11.30.6.34})
we can write
\begin{align}
\begin{split}
G_{-1}&=G_{-1}^0-G_{-1}^0vA_0v^*G_{-1}^0\\
&\phantom{={}}-(G_{-1}^0vA_1+G_0^0vA_0)B_0C_0B_0(A_1v^*G_{-1}^0+A_0v^*G_0^0).
\end{split}\label{13.3.11.6.7}\\
\begin{split}
G_{0}&=G_{0}^0-G_{-1}^0vA_1v^*G_{-1}^0-G_{-1}^0vA_0v^*G_{0}^0-G_{0}^0vA_0v^*G_{-1}^0\\
&\phantom{={}}-(G_{-1}^0vA_1+G_0^0vA_0)B_0(A_1v^*G_{-1}^0+A_0v^*G_0^0)\\
&\phantom{={}}-(G_{-1}^0vA_1+G_0^0vA_0)B_0C_1B_0(A_1v^*G_{-1}^0+A_0v^*G_0^0)\\
&\phantom{={}}-(G_{-1}^0vA_1+G_0^0vA_0)B_0C_0B_1(A_1v^*G_{-1}^0+A_0v^*G_0^0)\\
&\phantom{={}}-(G_{-1}^0vA_1+G_0^0vA_0)B_1C_0B_0(A_1v^*G_{-1}^0+A_0v^*G_0^0)\\
&\phantom{={}}-(G_{-1}^0vA_1+G_0^0vA_0)B_0C_0B_0
(A_2v^*G_{-1}^0+A_1v^*G_0^0+A_0v^*G_{1}^0)\\
&\phantom{={}}-(G_{-1}^0vA_2+G_0^0vA_1+G_1^0vA_0)B_0C_0B_0
(A_1v^*G_{-1}^0+A_0v^*G_0^0).
\end{split}\label{13.3.11.6.8}
\end{align}
We rewrite the expressions (\ref{13.3.11.6.7})
and (\ref{13.3.11.6.8}) substituting
\begin{align*}
A_0&=Q+\gamma P,\quad
A_1=-(Q+\gamma P)M_0(Q+\gamma P),\\
A_2&=-(Q+\gamma P)M_1(Q+\gamma P)+(Q+\gamma P)M_0(Q+\gamma P)M_0(Q+\gamma P),\\
B_0&=S+m_0^\dagger,\quad
B_1=-(S+m_0^\dagger)m_1(S+m_0^\dagger),\quad
C_0=q_0^\dagger,\quad
C_1=-q_0^\dagger q_1q_0^\dagger.
\end{align*}
We note that 
$SM_0Q=QM_0S=0$, $q_0=Sm_1S$ and $z=\gamma G_{-1}^0vM_0-G_0^0v$ on $S\mathcal K$.
Then after some computations,
\begin{align}
G_{-1}
&=G_{-1}^0-\gamma G_{-1}^0vv^*G_{-1}^0-zq_0^{\dagger}z^*,\label{13.3.13.3.2}\\
\begin{split}
G_{0}&=G_{0}^0+\gamma^2 G_{-1}^0vM_0v^*G_{-1}^0
-\gamma G_{-1}^0vv^*G_{0}^0-\gamma G_{0}^0vv^*G_{-1}^0\\
&\phantom{={}}-(\gamma G_{-1}^0vM_0-G_0^0v)m_0^\dagger(\gamma M_0v^*G_{-1}^0-v^*G_0^0)
+zSz^*
+z q_0^\dagger q_1q_0^\dagger z^*\\
&\phantom{={}}
+z q_0^\dagger m_1m_0^\dagger(\gamma M_0v^*G_{-1}^0-v^*G_0^0)
+(\gamma G_{-1}^0vM_0-G_0^0v)m_0^\dagger m_1q_0^\dagger z^*
\\
&\phantom{={}}
+zq_0^\dagger v^*G_{1}^0(1_{\mathcal L}-\gamma vv^*G_{-1}^0)
+\gamma zq_0^\dagger M_0P(\gamma M_0v^*G_{-1}^0-v^*G_0^0)\\
&\phantom{={}}
+(1_{\mathcal L^*}-\gamma G_{-1}^0vv^*) G_1^0vq_0^\dagger z^*
+\gamma (\gamma G_{-1}^0vM_0-G_0^0v)PM_0q_0^\dagger z^*.
\end{split}\label{13.3.13.3.3}
\end{align}
In the proofs of the following theorems we need an explicit expression of $q_0^\dagger$.
Let us compute it here.
By (\ref{12.12.19.8.41}) we have
\begin{align*}
q_0=&-\tfrac{1}2\bigl[
\bigl(1
+|\langle \Phi_3^*,\Phi_2\rangle|^2 \bigr) \langle \Phi_3,\cdot\rangle \Phi_3
+\langle \Phi_4,\cdot \rangle\Phi_4\\
&\phantom{{}-\tfrac{1}2\bigl[}
+\overline{\langle \Phi_3^*,\Phi_2\rangle} \langle \Phi_3,\cdot \rangle\Phi_4
+\langle \Phi_3^*,\Phi_2\rangle \langle\Phi_4,\cdot \rangle\Phi_3\bigr].
\end{align*}
Since $\Phi_3$ and $\Phi_4$ are orthogonal, we can compute the inverse $q_0^\dagger$,
using the $2\times 2$ matrix inverse if $\Phi_3\neq 0$ and $\Phi_4\neq 0$,
and the $1\times 1$ matrix inverse otherwise.
At last after some computations we have the following two cases:
If $\Phi_4\neq 0$, then
\begin{align}
\begin{split}
q_0^\dagger
&=-2\bigl[\langle \Phi_3^*,\cdot\rangle \Phi_3^*
+\bigl(1+|\langle \Phi_3^*,\Phi_2\rangle|^2 \bigr) \langle \Phi_4^*,\cdot \rangle\Phi_4^*
\\&\phantom{={}-2\bigl[}
-\overline{\langle \Phi_3^*,\Phi_2\rangle}\langle \Phi_3^*,\cdot \rangle\Phi_4^*
-\langle \Phi_3^*,\Phi_2\rangle\langle\Phi_4^*,\cdot \rangle\Phi_3^*\bigr],
\end{split}\label{13.8.3.3.33}
\intertext{and, if $\Phi_4=0$, then}
q_0^\dagger&=-2\bigl(1+|\langle \Phi_3^*,\Phi_2\rangle|^2 \bigr)^{-1}
\langle \Phi_3^*,\cdot\rangle \Phi_3^*.\label{13.8.3.3.34}
\end{align}

\begin{theorem}\label{13.3.11.5.17}
Under the assumption of Theorem~\ref{12.12.1.12.40}
the coefficient $G_{-1}$ is of the form (\ref{13.3.11.2.46})
(up to the factor $i$ due to a different convention).
\end{theorem}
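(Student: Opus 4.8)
The plan is to work directly from the explicit expression
\begin{align*}
G_{-1}=G_{-1}^0-\gamma G_{-1}^0vv^*G_{-1}^0-zq_0^\dagger z^*
\end{align*}
in (\ref{13.3.13.3.2}), and to show that it collapses to a sum $\sum_j\langle\Psi_j,\cdot\rangle\Psi_j$ over a basis of $\mathcal E$. Since $q_0$ is assumed invertible in $\mathcal B(S\mathcal K)$, Proposition~\ref{12.12.19.6.32} gives $E=\{0\}$, hence $d_0=0$; thus $G_{-2}=0$ and the target relation (\ref{13.3.11.2.46}) reduces (in the $\kappa$-convention, which absorbs the factor $i$) to the assertion that $G_{-1}=\sum_j\langle\Psi_j,\cdot\rangle\Psi_j$ for some basis $\{\Psi_j\}$ of $\mathcal E=\mathcal E/E$.

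First I would dispose of the two leading terms. By (\ref{12.12.19.8.26}) the convolution kernel of $G_{-1}^0$ is the constant $\tfrac12$, so $G_{-1}^0=\tfrac12\langle\Psi_1^0,\cdot\rangle\Psi_1^0$ with $\Psi_1^0=\mathbf 1$. A one-line computation then gives $G_{-1}^0vv^*G_{-1}^0=\tfrac14\|\Phi_1\|^2\langle\Psi_1^0,\cdot\rangle\Psi_1^0$, and, using $\gamma=2\|\Phi_1\|^{\dagger2}$ from (\ref{12.11.24.22.25}),
\begin{align*}
G_{-1}^0-\gamma G_{-1}^0vv^*G_{-1}^0
=\bigl(\tfrac12-\tfrac12\|\Phi_1\|^{\dagger2}\|\Phi_1\|^2\bigr)\langle\Psi_1^0,\cdot\rangle\Psi_1^0,
\end{align*}
which is $0$ if $\Phi_1\neq0$ and $\tfrac12\langle\Psi_1^0,\cdot\rangle\Psi_1^0$ if $\Phi_1=0$. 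In the latter case $\Psi_1^0\in\mathop{\mathrm{Ker}}v^*|_{\mathbb C\Psi_1^0}\subset\mathcal E$ by (\ref{12.12.19.6.57}), so this term is already of the desired rank-one form with a vector in $\mathcal E$.

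The heart of the argument is the term $-zq_0^\dagger z^*$, and here I would avoid the case analysis of Table~\ref{13.3.6.0.28a} by exploiting positivity. From (\ref{12.12.19.8.41}) the operator $q_0$ is a negative sum of rank-one projections, hence $q_0\le0$; since it is invertible in $\mathcal B(S\mathcal K)$ it is in fact negative definite on $S\mathcal K$, so $-q_0^\dagger\ge0$ there. Diagonalizing $-q_0^\dagger=\sum_k\lambda_k\langle f_k,\cdot\rangle f_k$ with $\lambda_k>0$ and $\{f_k\}$ an orthonormal basis of $S\mathcal K$, conjugation by $z$ yields
\begin{align*}
-zq_0^\dagger z^*=\sum_k\langle\Psi_k,\cdot\rangle\Psi_k,\qquad \Psi_k=\sqrt{\lambda_k}\,zf_k\in z(S\mathcal K).
\end{align*}
By (\ref{12.12.19.7.0}) we have $S\mathcal K=\mathop{\mathrm{Ker}}P\cap\mathop{\mathrm{Ker}}QM_0\subset\mathop{\mathrm{Ker}}\widetilde QM_0$, on which $z$ is injective by (\ref{12.11.26.19.10}); hence the $\Psi_k$ are linearly independent and span $z(S\mathcal K)$, and they lie in $\mathcal E$ by (\ref{12.12.19.6.57}).

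Finally I would assemble the pieces using the direct-sum decomposition $\mathcal E=z(S\mathcal K)\oplus\mathop{\mathrm{Ker}}v^*|_{\mathbb C\Psi_1^0}$ from (\ref{12.12.19.6.57}), noting that the second summand equals $\mathbb C\Psi_1^0$ exactly when $\Phi_1=0$ and is $\{0\}$ otherwise. Combining the two computations above, $G_{-1}$ is the sum of the terms $\langle\Psi_k,\cdot\rangle\Psi_k$ together with (when $\Phi_1=0$) the single term $\langle\tfrac1{\sqrt2}\Psi_1^0,\cdot\rangle\tfrac1{\sqrt2}\Psi_1^0$; by the direct sum these vectors are independent and their number equals $\dim\mathcal E$, so they form a basis of $\mathcal E$ and $G_{-1}$ has the form (\ref{13.3.11.2.46}). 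The main obstacle is the positivity step: recognizing that $-q_0^\dagger\ge0$ is what packages $-zq_0^\dagger z^*$ as a genuine sum of rank-one squares and thereby replaces the six-case verification by one uniform argument; the remaining care lies in checking injectivity of $z$ on $S\mathcal K$, so that the resulting vectors form an honest basis rather than merely a spanning set.
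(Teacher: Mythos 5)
Your proof is correct, and it reaches the conclusion by a genuinely different (and arguably cleaner) route than the paper's. Both arguments start from the same expression (\ref{13.3.13.3.2}), and both dispose of $G_{-1}^0-\gamma G_{-1}^0vv^*G_{-1}^0$ in the same way, but at the key term $-zq_0^\dagger z^*$ the paper proceeds concretely: it inserts the explicit formulas (\ref{13.8.3.3.33}) and (\ref{13.8.3.3.34}) for $q_0^\dagger$ (obtained by inverting a $2\times 2$ or $1\times 1$ matrix) and then checks, case by case against Tables~\ref{13.3.6.0.28a} and \ref{13.3.6.0.28}, that the resulting rank-one terms are built from the listed basis vectors $\Psi_1^0,\Psi_3,\Psi_4$ of $\mathcal E$. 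You instead exploit the structural fact from (\ref{12.12.19.8.41}) that $q_0\le 0$, so that invertibility makes $-q_0^\dagger$ positive definite on $S\mathcal K$; diagonalizing and conjugating by $z$, which is injective on $S\mathcal K\subset\mathop{\mathrm{Ker}}\widetilde QM_0$, packages $-zq_0^\dagger z^*$ as a sum of rank-one terms over a basis of $z(S\mathcal K)$, and the direct sum (\ref{12.12.19.6.57}) then assembles everything into a basis of $\mathcal E$. What your approach buys: it avoids the six-case verification, it handles uniformly the cross terms appearing in (\ref{13.8.3.3.33}) (which in the paper's route still require an implicit completion of squares, i.e.\ a change of basis, to reach the diagonal form (\ref{13.3.11.2.46})), and it makes transparent why $G_{-1}$ is positive semi-definite. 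What the paper's route buys: the coefficient is expressed explicitly in the named vectors $\Psi_3$, $\Psi_4$, $\Psi_1^0$, which is the form used elsewhere in Section~\ref{12.12.17.20.56}. Two small points you should make explicit: first, $\dim S\mathcal K\le 2$ (immediate from (\ref{12.12.19.8.41}) and invertibility of $q_0$, since $q_0$ has rank at most two), which is what justifies the diagonalization; second, the one-line verification that $S\mathcal K\subset\mathop{\mathrm{Ker}}\widetilde QM_0$ (for $\Phi\in S\mathcal K$ one has $M_0\Phi=PM_0\Phi\in\mathbb C\Phi_1\subset\widetilde P\mathcal K$, so $\widetilde QM_0\Phi=0$), which is what lets you quote (\ref{12.11.26.19.10}) for the injectivity of $z$.
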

\begin{proof}
The assertion is obvious by (\ref{13.3.13.3.2}),
(\ref{13.8.3.3.33}), (\ref{13.8.3.3.34}), Tables~\ref{13.3.6.0.28a} and \ref{13.3.6.0.28}.
\end{proof}

\begin{theorem}\label{13.3.11.4.31}
Under the assumption of Theorem~\ref{12.12.1.12.40}
the coefficient $G_0$ satisfies
\begin{align}
HG_0=G_0H=1_{\mathcal L}.
\label{13.1.19.16.15}
\end{align}
Moreover, if $\widetilde{\mathcal E}_{\mathrm{qs}}=\{0\}$, then
there exists a computable constant $\Delta_1\in \mathbb R$ and $\Delta_2\in\mathbb C$ such that 
\begin{align}
\begin{split}
G_0
&=(1_{\mathcal L^*}+G_0^0V)^{-1}G_0^0
+\Delta_1\langle \Psi_3,\cdot\rangle \Psi_3
+\Delta_2\langle \Psi_3,\cdot\rangle \Psi_6
+\overline{\Delta}_2\langle \Psi_6,\cdot\rangle \Psi_3\\
&=G_0^0(1_{\mathcal L}+VG_0^0)^{-1}
+\Delta_1\langle \Psi_3,\cdot\rangle \Psi_3
+\Delta_2\langle \Psi_3,\cdot\rangle \Psi_6
+\overline{\Delta}_2\langle \Psi_6,\cdot\rangle \Psi_3.
\end{split}\label{13.3.12.6.25a}
\end{align}
\end{theorem}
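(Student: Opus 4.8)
The plan is to follow the pattern established in the proofs of Theorems~\ref{13.3.11.4.29} and \ref{13.3.11.4.30}: first reduce both operator identities in (\ref{13.1.19.16.15}) to a single computation of $(1_{\mathcal L^*}+G_0^0V)G_0$, and then, under the extra hypothesis $\widetilde{\mathcal E}_{\mathrm{qs}}=\{0\}$, invert $1_{\mathcal L^*}+G_0^0V$ to extract the stated closed form for $G_0$. Throughout I would take the explicit expression (\ref{13.3.13.3.3}) as the \emph{definition} of $G_0$ (cf.\ the convention of Remark~\ref{13.3.21.16.40}), so that no enlargement of the lower bound on $\beta$ is needed for this argument.

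First I would compute $(1_{\mathcal L^*}+G_0^0V)G_0$ directly from (\ref{13.3.13.3.3}). The simplification rests on identities already assembled: $SM_0Q=QM_0S=0$ and $q_0=Sm_1S$ from Lemma~\ref{13.3.9.3.47} and Lemma~\ref{12.12.19.4.4}; the relation $z=\gamma G_{-1}^0vM_0-G_0^0v$ on $S\mathcal K$; the action (\ref{13.3.9.14.47b}) of $M_0$ on $\Phi_3,\dots,\Phi_6$; and the explicit form (\ref{13.8.3.3.33})--(\ref{13.8.3.3.34}) of $q_0^\dagger$. The target of this step is to show
\[
(1_{\mathcal L^*}+G_0^0V)G_0=G_0^0+F,
\]
where $F$ is a finite-rank operator whose range lies in the two-dimensional space $\mathbb C\Psi_1^0\oplus\mathbb C\Psi_2^0$ on which $H_0$ vanishes (cf.\ Lemma~\ref{13.1.18.6.0}). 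Once this is in hand the first assertion is immediate: applying $H_0$ on the left and using $H_0(1_{\mathcal L^*}+G_0^0V)=H$ on $\mathcal L$ together with $H_0G_0^0=1_{\mathcal L}$ and $H_0\Psi_1^0=H_0\Psi_2^0=0$ gives $HG_0=1_{\mathcal L}$, and the companion identity $G_0H=1_{\mathcal L}$ follows from the parallel adjoint computation, exactly as in the derivation of (\ref{13.1.20.20.14}).

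For the closed form I would restrict to $\widetilde{\mathcal E}_{\mathrm{qs}}=\{0\}$. By Theorem~\ref{13.3.4.21.12} this is equivalent to invertibility of $1_{\mathcal L^*}+G_0^0V$, and by Tables~\ref{13.3.6.0.28a}--\ref{13.3.6.0.28} it forces $\Phi_4=0$ and $\Phi_3\neq0$ (Cases viii and xi). In this situation (\ref{13.3.9.14.47}) collapses to the clean relations
\[
(1_{\mathcal L^*}+G_0^0V)\Psi_3=\tfrac12\|\Phi_3\|^2\Psi_1^0,\qquad
(1_{\mathcal L^*}+G_0^0V)\Psi_6=\Psi_2^0,
\]
so that $(1_{\mathcal L^*}+G_0^0V)^{-1}\Psi_1^0\in\mathbb C\Psi_3$ and $(1_{\mathcal L^*}+G_0^0V)^{-1}\Psi_2^0=\Psi_6$. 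Applying $(1_{\mathcal L^*}+G_0^0V)^{-1}$ to the identity $(1_{\mathcal L^*}+G_0^0V)G_0=G_0^0+F$ and substituting these relations rewrites $F$ entirely in terms of $\Psi_3$ and $\Psi_6$; reading off the coefficients yields the first line of (\ref{13.3.12.6.25a}) with computable $\Delta_1,\Delta_2$. The reality of $\Delta_1$ and the conjugate pairing $\Delta_2,\overline{\Delta}_2$ are forced by the self-adjointness of $G_0$, i.e.\ $\langle G_0x,y\rangle=\langle x,G_0y\rangle$, which makes the correction Hermitian. The second (adjoint) line follows by computing $G_0(1_{\mathcal L}+VG_0^0)$ in the same way, or by taking adjoints.

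The main obstacle is the bookkeeping in the first step: the expression (\ref{13.3.13.3.3}) is long, and one must track carefully which of the many cross terms involving $P$, $m_0^\dagger$, $q_0^\dagger$, $M_0$, $M_1$ and the projections $Q,S$ survive. The essential cancellations come from $Q(M_0-M_0m_0^\dagger M_0)=0$ and $SM_0Q=0$ (which collapse the $m_0^\dagger$-blocks just as in the $m_0$-invertible case) together with the fact that $z$ maps $S\mathcal K$ into the threshold eigenfunctions, so that every surviving term lands in $\mathbb C\Psi_1^0\oplus\mathbb C\Psi_2^0$ and is therefore annihilated by $H_0$. I expect no conceptual difficulty beyond verifying that this leftover rank-two piece indeed has range in the null space of $H_0$; once that is confirmed the remainder of the argument is formal.
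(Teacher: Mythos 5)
Your treatment of the first assertion (\ref{13.1.19.16.15}) is essentially the paper's own proof: the paper likewise takes (\ref{13.3.13.3.3}) as the definition of $G_0$, computes $(1_{\mathcal L^*}+G_0^0V)G_0$ explicitly (its formula (\ref{13.3.14.1.55})), observes that every correction term has range in $\mathbb C\Psi_1^0\oplus\mathbb C\Psi_2^0$ (the terms beginning with $G_{-1}^0v$ land in $\mathbb C\Psi_1^0$, and the one term beginning with $G_1^0vq_0^\dagger$ lands in $\mathbb C\Psi_1^0\oplus\mathbb C\Psi_2^0$ because $q_0^\dagger=Sq_0^\dagger S$ and $G_1^0v$ maps $S\mathcal K$ into $\mathbb C\mathbf 1+\mathbb C\mathbf n$), and then applies $H_0$ together with the adjoint argument. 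One caution on your bookkeeping: the identity $Q(M_0-M_0m_0^\dagger M_0)=0$ that you cite as an ``essential cancellation'' is valid only when $m_0$ is invertible; under the present assumption $m_0m_0^\dagger=Q-S$, so $Q(M_0-M_0m_0^\dagger M_0)=SM_0\neq 0$, and these surviving $SM_0$-terms are precisely what generates the $\Psi_3$-corrections.

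The genuine gap is in your passage to (\ref{13.3.12.6.25a}). Writing $(1_{\mathcal L^*}+G_0^0V)G_0=G_0^0+F$ with $F=\langle f_1,\cdot\rangle\Psi_1^0+\langle f_2,\cdot\rangle\Psi_2^0$, left multiplication by $(1_{\mathcal L^*}+G_0^0V)^{-1}$ converts only the range vectors ($\Psi_1^0\mapsto$ a multiple of $\Psi_3$, $\Psi_2^0\mapsto\Psi_6$); the functionals $f_1,f_2$ are untouched, so this does \emph{not} rewrite $F$ entirely in terms of $\Psi_3$ and $\Psi_6$, and there are as yet no coefficients to read off. Self-adjointness of $K=G_0-(1_{\mathcal L^*}+G_0^0V)^{-1}G_0^0$ (legitimate, since $(1_{\mathcal L^*}+G_0^0V)^{-1}G_0^0=G_0^0(1_{\mathcal L}+VG_0^0)^{-1}$ is self-adjoint) does force $f_1,f_2\in\mathbb C\Psi_3+\mathbb C\Psi_6$ with a Hermitian coefficient matrix, but that only yields
\begin{align*}
K=\Delta_1\langle\Psi_3,\cdot\rangle\Psi_3+\Delta_2\langle\Psi_3,\cdot\rangle\Psi_6
+\overline{\Delta}_2\langle\Psi_6,\cdot\rangle\Psi_3+\Delta_3\langle\Psi_6,\cdot\rangle\Psi_6,
\quad \Delta_1,\Delta_3\in\mathbb R,\ \Delta_2\in\mathbb C,
\end{align*}
which has one term more than the theorem asserts; the real content at this stage is $\Delta_3=0$, and nothing in your argument addresses it. The paper obtains this from a \emph{two-sided} computation: it evaluates $(1_{\mathcal L^*}+G_0^0V)G_0(1_{\mathcal L}+VG_0^0)$ and finds $G_0^0+G_0^0VG_0^0$ plus a Hermitian combination of $\langle\Psi_1^0,\cdot\rangle\Psi_1^0$, $\langle\Psi_1^0,\cdot\rangle\Psi_2^0$, $\langle\Psi_2^0,\cdot\rangle\Psi_1^0$ with \emph{no} $\langle\Psi_2^0,\cdot\rangle\Psi_2^0$ term, and then inverts on both sides, so that both factors of each rank-one term are converted simultaneously via (\ref{13.3.9.14.47}). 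To repair your one-sided route you must identify $f_2$ explicitly: the only source of $\Psi_2^0$ in the range of $F$ is the term $(1_{\mathcal L^*}-\gamma G_{-1}^0vv^*)G_1^0vq_0^\dagger z^*$ of (\ref{13.3.14.1.55}), for which $f_2$ is a multiple of $zq_0^\dagger S\Phi_2$; since $\Phi_4=0$ makes the range of $q_0^\dagger$ equal to $\mathbb C\Phi_3^*$ by (\ref{13.8.3.3.34}), this lies in $\mathbb C\Psi_3$, killing the $\langle\Psi_6,\cdot\rangle\Psi_6$ term. That computation is exactly the missing step. (A smaller inaccuracy: the relation $(1_{\mathcal L^*}+G_0^0V)^{-1}\Psi_2^0=\Psi_6$ fails in Case~viii of Table~\ref{13.3.6.0.28a}, where $\Phi_2=0$ forces $\Psi_6=0$; there the same computation gives $f_2=0$, so the $\Psi_6$-terms simply drop out.)
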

\begin{proof}
Let us compute $(1_{\mathcal L^*}+G_0^0V)G_0$ using (\ref{13.3.13.3.3}).
Here we omit the details, but describe the outline.
We first use the identities
\begin{align}
(1_{\mathcal L^*}+G_0^0V)(\gamma G_{-1}^0vM_0-G_0^0v)
&=
-G_0^0vUQM_0+\gamma G_{-1}^0vM_0,\label{13.8.3.4.41}\\
(1_{\mathcal L^*}+G_0^0V)zS
&=\gamma G_{-1}^0vM_0S,\label{13.8.3.4.42}\\
(1_{\mathcal L^*}+G_0^0V)(1_{\mathcal L^*}-\gamma G_{-1}^0vv^*)
&=
1_{\mathcal L^*}-\gamma G_{-1}^0vv^*+G_0^0vUQv^*.\label{13.8.3.4.43}
\end{align}
Then the contributions from the first five terms of (\ref{13.3.13.3.3})
can be directly computed.
As for the contributions from the remaining eight terms,
we use Lemma~\ref{12.12.19.4.4}, e.g., $m_0=QM_0Q$,
$m_0m_0^\dagger=Q-S$, 
$m_1=QM_1Q-QM_0(Q+\gamma P)M_0Q$, $Sm_0=0$, $q_0=Sm_1S$ and so on.
At last we obtain
\begin{align}
(1_{\mathcal L^*}+G_0^0V)G_{0}
&=G_0^0
+\gamma G_{-1}^0v(1_{\mathcal K}-M_0m_0^\dagger) (\gamma M_0v^*G_{-1}^0-v^*G_0^0)\notag\\
&\quad+\gamma G_{-1}^0vM_0Sz^*
+\gamma G_{-1}^0vM_0 q_0^\dagger q_1q_0^\dagger z^*\notag\\
&\quad+\gamma G_{-1}^0vM_0q_0^\dagger m_1m_0^\dagger(\gamma M_0v^*G_{-1}^0-v^*G_0^0)\notag\\
&\quad+\gamma G_{-1}^0vM_0m_0^\dagger m_1q_0^\dagger z^*
+\gamma G_{-1}^0vM_0q_0^\dagger v^*G_{1}^0(1_{\mathcal L}-\gamma vv^*G_{-1}^0)\notag\\
&\quad
+\gamma^2 G_{-1}^0vM_0q_0^\dagger M_0P(\gamma M_0v^*G_{-1}^0-v^*G_0^0)\notag\\
&\quad
+(1_{\mathcal L^*}-\gamma G_{-1}^0v^*v)G_1^0v q_0^\dagger z^*
+\gamma^2 G_{-1}^0vM_0 PM_0q_0^\dagger z^*.
\label{13.3.14.1.55}
\end{align}
The above computation can be followed rather easier if we focus on the terms with the factor $G_0^0$ to their left.
Hence by (\ref{13.3.14.1.55}) and Lemma~\ref{13.1.18.6.0}
we can deduce (\ref{13.1.19.16.15}).

Next, assume that $1_{\mathcal L^*}+G_0^0V$ is invertible.
In this case we further compute 
$(1_{\mathcal L^*}+G_0^0V)G_{0}(1_{\mathcal L}+VG_0^0)$.
We use (\ref{13.3.14.1.55}) and the adjoints of (\ref{13.8.3.4.41})--(\ref{13.8.3.4.43}),
and proceed similarly to the above.
After some computations, we obtain
\begin{align*}
(1_{\mathcal L^*}+G_0^0V)&G_{0}(1_{\mathcal L}+VG_0^0)
=
G_0^0+G_0^0VG_0^0
+\gamma^2 G_{-1}^0v(M_0-M_0m_0^\dagger M_0) v^*G_{-1}^0
\\&
+\gamma^2 G_{-1}^0vM_0S  M_0v^*G_{-1}^0
+\gamma^2 G_{-1}^0vM_0 q_0^\dagger q_1q_0^\dagger  M_0v^*G_{-1}^0
\\&
+\gamma^2 G_{-1}^0vM_0q_0^\dagger m_1m_0^\dagger  M_0v^*G_{-1}^0
+\gamma^2 G_{-1}^0vM_0m_0^\dagger m_1q_0^\dagger  M_0v^*G_{-1}^0
\\&
+\gamma^3 G_{-1}^0vM_0q_0^\dagger M_0P  M_0v^*G_{-1}^0
+\gamma^3 G_{-1}^0vM_0 PM_0q_0^\dagger  M_0v^*G_{-1}^0
\\&
+\gamma G_{-1}^0vM_0q_0^\dagger v^*G_{1}^0(1_{\mathcal L}-\gamma vv^*G_{-1}^0)
\\
&
+(1_{\mathcal L^*}-\gamma G_{-1}^0v^*v)G_1^0v q_0^\dagger (\gamma M_0v^*G_{-1}^0).
\end{align*}
Hence we can write for some $\alpha\in\mathbb R$ and $\beta\in\mathbb C$
\begin{align*}
(1_{\mathcal L^*}+G_0^0V)G_{0}(1_{\mathcal L}+VG_0^0)
&=
G_0^0+G_0^0VG_0^0\\
&\quad+\alpha\langle \Psi_1^0,\cdot\rangle \Psi_1^0
+\beta\langle \Psi_1^0,\cdot\rangle \Psi_2^0
+\overline{\beta}\langle \Psi_2^0,\cdot\rangle \Psi_1^0.
\end{align*}
We note that by Tables~\ref{13.3.6.0.28a} and \ref{13.3.6.0.28}
we necessarily have $\Phi_4=0$.
Then by (\ref{13.3.9.14.47}) we can verify obtain (\ref{13.3.12.6.25a}).
\end{proof}
\begin{remark}\label{13.8.18.5.25}
Remarks similar to Remarks~\ref{13.3.21.16.40}
hold also for this case.
By (\ref{13.3.13.3.3}) we have 
\begin{align*}
G_{0}-G_{0}^0&\equiv
-\gamma G_{-1}^0vv^*G_{0}^0
+zq_0^\dagger v^*G_{1}^0
-\gamma zq_0^\dagger M_0 v^*G_0^0+\mathrm{h.c.} \mod \mathcal B^0,
\end{align*}
and hence, using (\ref{13.8.3.3.33}) and (\ref{13.8.3.3.34}),
we can express $G_{0}-G_{0}^0$ modulo $\mathcal B^0$ as in the former subsections.
However, the expression seems very long and we do not elaborate it here.
We also note that we do not investigate the case $\widetilde{\mathcal E}_{\mathrm{qs}}\neq \{0\}$.
\end{remark}

\subsection{The case: $P$, $m_0$, $q_0$ are not invertible}

\begin{theorem}\label{12.12.1.12.41}
Suppose that $\beta\ge 4$ in Assumption~\ref{12.11.8.1.9},
and that 
$P$, $m_0$ and $q_0$ defined by (\ref{12.11.24.22.25}), (\ref{12.11.27.8.11})
and (\ref{12.12.1.1.50}) are not invertible in $\mathcal B(\mathcal K)$,
$\mathcal B(Q\mathcal K)$ and $\mathcal B(S\mathcal K)$, respectively.
Let $N\in [-2,\beta-6]$ be any integer.
Then as $\kappa\to 0$ with $\real\kappa>0$,
the resolvent $R(\kappa)$ has the expansion in 
$\mathcal B^{N+4}$:
\begin{align}
R(\kappa)=\sum_{j=-2}^N\kappa^j G_j+\mathcal O(\kappa^{N+1}),\quad
G_j\in \mathcal B^{j+3},
\label{12.12.1.20.44}
\end{align}
and the coefficients $G_j$ can be computed explicitly.
\end{theorem}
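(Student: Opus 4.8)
The plan is to follow the substitution-and-collection scheme already used in the proofs of Theorems~\ref{12.12.1.12.38}--\ref{12.12.1.12.40}, but now carrying the \emph{full} four-fold iteration, since by hypothesis all three leading operators $P$, $m_0$ and $q_0$ fail to be invertible. First I would start from the resolvent identity (\ref{12.11.8.4.1}) of Proposition~\ref{12.11.13.19.5} and substitute successively the inversion formulas (\ref{12.11.11.12.35}), (\ref{12.11.27.6.30}) and (\ref{12.11.27.6.43}). This expresses $M(\kappa)^{-1}$ through the four regularized inverses $(Q+\kappa M(\kappa))^{-1}$, $(S+m(\kappa))^\dagger$, $(T+q(\kappa))^\dagger$ and $r(\kappa)^\dagger$; collecting the accumulated $\tfrac1\kappa$ factors, one finds that $M(\kappa)^{-1}$ is a sum of four terms carrying the prefactors $\kappa$, $1$, $\tfrac1\kappa$ and $\tfrac1{\kappa^2}$, the last and most singular being the symmetric sandwich with $r(\kappa)^\dagger$ at its centre, flanked in turn by $(T+q(\kappa))^\dagger$, $(S+m(\kappa))^\dagger$ and $(Q+\kappa M(\kappa))^{-1}$ on each side. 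The crucial structural input is that, by Proposition~\ref{12.12.19.6.33} and Corollary~\ref{12.12.28.10.55}, the leading operator $r_0$ is invertible in $\mathcal B(T\mathcal K)$, so that $r(\kappa)^\dagger$ expands in nonnegative integer powers of $\kappa$ via the Neumann series; the same holds for the other three blocks, whose leading operators $Q+\gamma P$, $S+m_0^\dagger$ and $T+q_0^\dagger$ are invertible on the relevant subspaces.

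Next I would write out these four Neumann expansions to the order dictated by $N\in[-2,\beta-6]$, with coefficients $A_j\in\mathcal B(\mathcal K)$, $B_j\in\mathcal B(Q\mathcal K)$, $C_j\in\mathcal B(S\mathcal K)$ and $D_j\in\mathcal B(T\mathcal K)$ (so $A_0=Q+\gamma P$, $B_0=S+m_0^\dagger$, $C_0=T+q_0^\dagger$, $D_0=r_0^\dagger$), and combine them with the expansion of $R_0(\kappa)$ from Proposition~\ref{12.11.9.6.23}. Substituting everything into the nested expression for $R(\kappa)$ and collecting powers of $\kappa$ then produces the coefficients $G_j$ as finite multi-index sums of sandwiched products $G_{j_1}^0\, v\,(\cdots)\,v^*\,G_{j_k}^0$, the most singular of which carries the seven-factor kernel $A_{\cdot}B_{\cdot}C_{\cdot}D_{\cdot}C_{\cdot}B_{\cdot}A_{\cdot}$ between $v$ and $v^*$. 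Because of the two explicit prefactors $\tfrac1\kappa$, $\tfrac1{\kappa^2}$ and the $\kappa^{-1}$ singularity of $R_0(\kappa)$ on each side, the naive worst power is $\kappa^{-2}$, consistent with (\ref{12.12.1.20.44}) and with the \emph{a priori} bound (\ref{12.11.24.4.17}).

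The delicate point --- and the step I expect to be the main obstacle --- is to show that the expansion genuinely begins at $\kappa^{-2}$ and no earlier, and that the remainder is truly $\mathcal O(\kappa^{N+1})$. As in the proof of Theorem~\ref{12.12.1.12.40}, this rests on the vanishing identities (\ref{12.11.30.6.34}), namely $G_{-1}^0 v A_0 Q=0$ and $Q A_0 v^* G_{-1}^0=0$, which force the cancellation of every contribution that would otherwise be more singular than $\kappa^{-2}$ or that would spoil the claimed error order. Propagating these identities through the two additional $C$- and $D$-layers, and confirming that the putative coefficients of $\kappa^{-3},\kappa^{-4},\dots$ all vanish, is where the index bookkeeping becomes heavy; this is the heart of the computation.

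Finally I would read off the operator class of each coefficient. Using $G_j^0\in\mathcal B^{j+1}$, the boundedness of $v$ and $v^*$ between the weighted spaces, and the boundedness of $A_j,B_j,C_j,D_j$ on their respective subspaces, a term-by-term weight count yields $G_j\in\mathcal B^{j+3}$, matching the stated expansion. In particular the leading coefficient $G_{-2}$ arises from the single most singular term built on $r_0^\dagger=D_0$ and, via (\ref{13.3.8.16.2}), should reproduce the identification of $-G_{-2}$ as the orthogonal projection onto $E$ promised in Theorem~\ref{13.3.4.21.10}.
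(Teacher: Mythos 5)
Your proposal is correct and follows essentially the same route as the paper's proof: the nested substitution of (\ref{12.11.11.12.35}), (\ref{12.11.27.6.30}), (\ref{12.11.27.6.43}) into (\ref{12.11.8.4.1}), Neumann expansions of the four blocks with leading terms $A_0=Q+\gamma P$, $B_0=S+m_0^\dagger$, $C_0=T+q_0^\dagger$, $D_0=r_0^\dagger$ (the last justified by Corollary~\ref{12.12.28.10.55}), the vanishing identities (\ref{12.11.30.6.34}) to confirm the expansion starts at $\kappa^{-2}$ and the remainder is $\mathcal O(\kappa^{N+1})$, and a weight count giving $G_j\in\mathcal B^{j+3}$. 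The only slip is the remark that the ``naive'' worst power is $\kappa^{-2}$ --- naively the $\tfrac1{\kappa^2}$ prefactor together with the two $\kappa^{-1}$ factors from $R_0(\kappa)$ would give $\kappa^{-4}$, and it is exactly the identities (\ref{12.11.30.6.34}) that eliminate these more singular contributions, as you in fact acknowledge in the following paragraph.
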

\begin{proof}
By Proposition~\ref{12.11.13.19.5}, (\ref{12.11.11.12.35}), (\ref{12.11.27.6.30}) 
and (\ref{12.11.27.6.43}) we can write
\begin{align}
R(\kappa)&=R_0(\kappa)-\kappa R_0(\kappa)v(Q+\kappa M(\kappa))^{-1}v^*R_0(\kappa)\notag\\
&\quad-R_0(\kappa)v(Q+\kappa M(\kappa))^{-1}Q(S+m(\kappa))^\dagger
(Q+\kappa M(\kappa))^{-1}v^*R_0(\kappa)\notag\\
&\quad-\tfrac1{\kappa}R_0(\kappa)v(Q+\kappa M(\kappa))^{-1}
(S+m(\kappa))^\dagger\notag\\
&\qquad\cdot (T+q(\kappa))^\dagger(S+m(\kappa))^\dagger(Q+\kappa M(\kappa))^{-1}v^*R_0(\kappa)\notag\\
&\quad-\tfrac1{\kappa^2}R_0(\kappa)v(Q+\kappa M(\kappa))^{-1}Q(S+m(\kappa))^\dagger(T+q(\kappa))^\dagger\notag\\
&\qquad\cdot r(\kappa)^\dagger(T+q(\kappa))^\dagger
(S+m(\kappa))^\dagger(Q+\kappa M(\kappa))^{-1}v^*R_0(\kappa).
\label{b12.11.30.5.11b}
\end{align}
We compute the expansions of $(Q+\kappa M(\kappa))^{-1}$, $(S+m(\kappa))^\dagger$,
$(T+q(\kappa))^\dagger$ and $r(\kappa)^\dagger$
employing the Neumann series.
Let $N\in [-2,\beta-6]$ be any integer.
Then we have the expansions in 
$\mathcal B(\mathcal K)$, $\mathcal B(Q\mathcal K)$ and $\mathcal B(S\mathcal K)$:
\begin{align}
(Q+\kappa M(\kappa))^{-1}&
=\sum_{j=0}^{N+5}\kappa^j A_j+\mathcal O(\kappa^{N+6}),\quad
A_j\in\mathcal B(\mathcal K),\label{b12.11.30.5.13b}\\
(S+m(\kappa))^\dagger&
=\sum_{j=0}^{N+4}\kappa^j B_j+\mathcal O(\kappa^{N+5}),\quad
B_j\in\mathcal B(Q\mathcal K),\label{b12.11.30.5.14b}\\
(T+q(\kappa))^\dagger
&=\sum_{j=0}^{N+3}\kappa^jC_j+\mathcal O(\kappa^{N+4}),\quad
C_j\in\mathcal B(S\mathcal K),\label{b12.11.30.5.15b}\\
r(\kappa)^\dagger
&=\sum_{j=0}^{N+2}\kappa^jD_j+\mathcal O(\kappa^{N+3}),\quad
D_j\in\mathcal B(T\mathcal K),\label{b12.11.30.5.16b}
\end{align}
\begin{align*}
A_0&=Q+\gamma P,&
A_{j}&=\sum_{k=1}^{j}
\sum_{\genfrac{}{}{0pt}{}{j_1\ge 0,\dots,j_k\ge 0}{j_1+\cdots+j_k=j-k}}A_0
\prod_{l=1}^k(-M_{j_l}A_0)\quad\mbox{for } j\ge 1,%\label{b12.11.30.7.20b}
\\
B_0&=S+m_0^\dagger,&
B_{j}&=\sum_{k=1}^{j}
\sum_{\genfrac{}{}{0pt}{}{j_1\ge 1,\dots,j_k\ge 1}{j_1+\cdots+j_k=j}}B_0
\prod_{l=1}^k(-m_{j_l}B_0)\quad\mbox{for } j\ge 1,%\label{b12.11.30.7.21b}
\\
C_0&=T+q_0^\dagger,&
C_{j}&=\sum_{k=1}^{j}
\sum_{\genfrac{}{}{0pt}{}{j_1\ge 1,\dots,j_k\ge 1}{j_1+\cdots+j_k=j}}C_0\prod_{l=1}^k(-q_{j_l}C_0)
\quad \mbox{for } j\ge 1,%\label{b12.11.30.7.22b}
\\
D_0&=r_0^\dagger,&
D_{j}&=\sum_{k=1}^{j}
\sum_{\genfrac{}{}{0pt}{}{j_1\ge 1,\dots,j_k\ge 1}{j_1+\cdots+j_k=j}}D_0\prod_{l=1}^k(-r_{j_l}D_0)
\quad \mbox{for } j\ge 1,%\label{b12.11.30.7.23b}
\end{align*}
respectively.
We substitute the expansions (\ref{b12.11.30.5.13b})--(\ref{b12.11.30.5.16b})
and
\begin{align*}
R_0(\kappa)=\sum_{j=-1}^{N+2}G_j^0+\mathcal O(\kappa^{N+3})\quad\mbox{ in }
\mathcal B^{N+4}
\end{align*}
into (\ref{b12.11.30.5.11b}), and then obtain
in the topology of $\mathcal B^{N+4}$
\begin{align*}
R(\kappa)=\sum_{j=-2}^{N}\kappa^jG_j+\mathcal O(\kappa^{N+1}),
\end{align*}
where for $j\ge -2$ with $G_{-2}^0=0$
\begin{align}
\begin{split}
G_j={}&
G_j^0
-\sum_{\genfrac{}{}{0pt}{}{j_1\ge -1, j_2\ge 0,j_3\ge -1}{j_1+j_2+j_3=j-1}}G_{j_1}^0vA_{j_2}v^*G_{j_3}^0
\\&
-\sum_{\genfrac{}{}{0pt}{}{j_1\ge -1, j_2\ge 0,j_3\ge 0,j_4\ge 0,j_5\ge -1}{j_1+j_2+j_3+j_4+j_5=j}}
G_{j_1}^0vA_{j_2}B_{j_3}A_{j_4}v^*G_{j_5}^0
\\&
-\sum_{\genfrac{}{}{0pt}{}{j_1\ge -1, j_2\ge 0,\dots,j_6\ge 0,j_7\ge -1}{j_1+\dots+j_7=j+1}}
G_{j_1}^0vA_{j_2}B_{j_3}C_{j_4}B_{j_5}A_{j_6}v^*G_{j_7}^0.
\\&
-\sum_{\genfrac{}{}{0pt}{}{j_1\ge -1, j_2\ge 0,\dots,j_8\ge 0,j_9\ge -1}{j_1+\dots+j_9=j+2}}
G_{j_1}^0vA_{j_2}B_{j_3}C_{j_4}D_{j_5}C_{j_6}B_{j_7}A_{j_8}v^*G_{j_9}^0.
\end{split}\label{b12.12.1.12.46}
\end{align}
As in the proof of Theorem~\ref{12.12.1.12.39}, we have used the
identities in (\ref{12.11.30.6.34}) to guarantee that the error term is $\mathcal O(\kappa^{N+1})$.
By (\ref{b12.12.1.12.46}) and (\ref{12.11.30.6.34})  we obtain
$G_j\in \mathcal B^{j+3}$, 
and hence the expansion (\ref{12.12.1.20.44}) is obtained.
\end{proof}

We have by (\ref{b12.12.1.12.46}) and (\ref{12.11.30.6.34})
\begin{align}
G_{-2}&=
-(G_{-1}^0vA_1+G_0^0vA_0)B_0C_0D_0C_0B_0(A_1v^*G_{-1}^0+A_0v^*G_0^0),\label{13.8.4.21.47}\\
\begin{split}
G_{-1}&=G_{-1}^{0}-G_{-1}^0vA_0v^*G_{-1}^0
\\&\phantom{={}}
-(G_{-1}^0vA_1+G_0^0vA_0)B_0C_0B_0(A_1v^*G_{-1}^0+A_0v^*G_0^0)
\\
&\phantom{={}}-(G_{-1}^0vA_1+G_0^0vA_0)(B_1C_0D_0C_0B_0+B_0C_1D_0C_0B_0
+B_0C_0D_1C_0B_0
\\&\phantom{={}-{}}
+B_0C_0D_0C_1B_0
+B_0C_0D_0C_0B_1)(A_1v^*G_{-1}^0+A_0v^*G_0^0)\\
&\phantom{={}}-(G_{-1}^0vA_1+G_0^0vA_0)B_0C_0D_0C_0B_0
(A_0v^*G_{1}^0+A_1v^*G_0^0+A_2v^*G_{-1}^0)\\
&\phantom{={}}-(G_1^0vA_0+G_0^0vA_1+G_{-1}^0vA_2)B_0C_0D_0C_0B_0
(A_1v^*G_{-1}^0+A_0v^*G_0^0).
\end{split}\label{13.8.4.21.48}
\end{align}
The coefficient $G_0$ will be considered separately.
We proceed using Lemma~\ref{12.12.19.4.4} and 
\begin{align}
\begin{split}
A_0&=Q+\gamma P,\quad
A_1=-(Q+\gamma P)M_0(Q+\gamma P),\\
A_2&=-(Q+\gamma P)M_1(Q+\gamma P)+(Q+\gamma P)M_0(Q+\gamma P)M_0(Q+\gamma P),\\
B_0&=S+m_0^\dagger,\quad
B_1=-(S+m_0^\dagger)m_1(S+m_0^\dagger),\\
C_0&=T+q_0^\dagger,\quad
C_1=-(T+q_0^\dagger)q_1(T+q_0^\dagger),\quad
D_0=r_0^\dagger,\quad
D_1=-r_0^\dagger r_1r_0^\dagger.
\end{split}\label{13.3.24.19.55}
\end{align}
First we substitute (\ref{13.3.24.19.55}) to (\ref{13.8.4.21.47}) and (\ref{13.8.4.21.48}),
and then we compute it particularly noting 
\begin{align*}
(G_{-1}^0vA_1+G_0^0vA_0)S=-zS,\quad
Tm_1=0,\quad
TQM_1Q=0,\quad
TM_0=0.
\end{align*}
After some computations we obtain the expressions
\begin{align}
G_{-2}&=
-zr_0^\dagger z^*,\label{13.8.4.20.31}\\
\begin{split}
G_{-1}&=G_{-1}^{0}-\gamma G_{-1}^0vv^*G_{-1}^0
+z(T+r_0^\dagger r_1r_0^\dagger
)z^*
+z(-q_0^\dagger+q_0^\dagger q_1r_0^\dagger
+r_0^\dagger q_1 q_0^\dagger)z^*
\\&\phantom{={}}
+zr_0^\dagger v^*G_{1}^0(1-\gamma vv^*G_{-1}^0)
+(1-\gamma G_{-1}^0vv^*)G_1^0vr_0^\dagger z^*.
\end{split}\label{13.8.4.20.32}
\end{align}

\begin{theorem}\label{13.3.11.5.18}
Under the assumption of Theorem~\ref{12.12.1.12.41}
the coefficients $G_{-2}$ and $G_{-1}$ are of the form (\ref{13.3.11.2.46})
(up to the factors $-1$ and $i$, respectively, 
due to different conventions).
\end{theorem}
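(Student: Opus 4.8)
The plan is to read off $G_{-2}$ and $G_{-1}$ directly from the explicit expressions \eqref{13.8.4.20.31} and \eqref{13.8.4.20.32}, converting the intermediate operators $z$, $q_0^\dagger$, $r_0^\dagger$ into the eigenspace language of Section~\ref{12.12.19.2.5} by means of the isomorphism $z(T\mathcal K)=E$ and the identity $r_0=-Tz^*zT$ of \eqref{13.3.8.16.2}.

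I would dispose of $G_{-2}$ first, where a closed, case-free argument is available. By \eqref{12.12.19.8.42} and \eqref{12.12.19.6.58} together with the injectivity of $z$ from Proposition~\ref{12.12.27.16.6}, the restriction $z|_{T\mathcal K}\colon T\mathcal K\to E$ is an isomorphism, and $T\mathcal K\subset\mathop{\mathrm{Ker}}M_0$ places $E\subset\mathcal L^{\beta-2}\subset\mathcal L$. Then \eqref{13.3.8.16.2} exhibits $-r_0$ as the Gram operator of $z|_{T\mathcal K}$ for the $\ell^2$-pairing, so $-r_0$ is positive definite on $T\mathcal K$ and $r_0^\dagger$ is its inverse by Corollary~\ref{12.12.28.10.55}. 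Choosing an orthonormal basis $\{f_j\}$ of $T\mathcal K$ and symmetrically orthonormalising the vectors $zf_j$ yields an orthonormal basis $\{\Psi_j\}_{j=1}^{d_0}$ of $E$ for which a short computation gives $G_{-2}=-zr_0^\dagger z^*=\sum_{j=1}^{d_0}\langle\Psi_j,\cdot\rangle\Psi_j$, the orthogonal projection onto $E$. This is \eqref{13.3.11.2.46} for $G_{-2}$, the sign being exactly the $\kappa$-versus-$z$ convention factor $-1$.

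For $G_{-1}$ the guiding observation is that any summand $zAz^*$ of \eqref{13.8.4.20.32} with $A=TAT$ lies in $\langle E,\cdot\rangle E$, since $z(T\mathcal K)=E$; this removes the entire block $z(T+r_0^\dagger r_1r_0^\dagger)z^*$ at once. The genuinely new surviving term is $-zq_0^\dagger z^*$: because $q_0$ of \eqref{12.12.19.8.41} is supported on $S\mathcal K$ with $\mathop{\mathrm{Ker}}q_0\cap S\mathcal K=T\mathcal K$, its pseudoinverse is carried by $S\mathcal K\ominus T\mathcal K$, which $z$ sends isomorphically onto a complement of $E$ in $\mathcal E$, cf.\ \eqref{12.12.19.6.57}. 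Orthonormalising as for $G_{-2}$, and using the explicit forms \eqref{13.8.3.3.33}--\eqref{13.8.3.3.34} of $q_0^\dagger$, identifies $-zq_0^\dagger z^*\equiv\sum_{j=d_0+1}^{d}\langle\Psi_j,\cdot\rangle\Psi_j\bmod\langle E,\cdot\rangle E$ with $\{\Psi_j\}$ a basis of $\mathcal E/E$, which is \eqref{13.3.11.2.46} for $G_{-1}$ up to the convention factor $i$.

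The main obstacle is the remaining mixed terms of \eqref{13.8.4.20.32}: the pieces $z(q_0^\dagger q_1 r_0^\dagger+r_0^\dagger q_1 q_0^\dagger)z^*$ and the two cross terms $zr_0^\dagger v^*G_1^0(1-\gamma vv^*G_{-1}^0)$ and its adjoint, each of which has only one leg in $E$ and is therefore not manifestly in $\langle E,\cdot\rangle E$. The key to controlling the $G_1^0$ cross terms is that every $\Phi\in T\mathcal K=\mathop{\mathrm{Ker}}\widetilde P\cap\mathop{\mathrm{Ker}}M_0$ satisfies $\langle\mathbf 1,v\Phi\rangle=\langle\mathbf n,v\Phi\rangle=0$ (precisely the hypothesis of Lemma~\ref{12.12.27.15.3} used to compute $r_0$), so that the kernel $G_1^0(n)=\tfrac14|n|^2-\tfrac1{16}$ forces $G_1^0v\Phi\in\mathbb C\mathbf 1$; combined with $G_{-1}^0=\tfrac12\langle\mathbf 1,\cdot\rangle\mathbf 1$ this makes these cross terms vanish when $\Phi_1\neq0$ and otherwise collapse onto $\mathbb C\Psi_1^0\subset\mathcal E$, where they are absorbed into the $\mathcal E/E$ contribution already recorded. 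To settle the $q_1$-terms cleanly and to keep the case distinctions transparent, I would finally check the reduced form against the explicit bases of Tables~\ref{13.3.6.0.29a} and~\ref{13.3.6.0.29}, exactly as in the proof of Theorem~\ref{13.3.11.5.17}, confirming in each of Cases i--xii that $G_{-1}$ modulo $\langle E,\cdot\rangle E$ is the asserted sum over the listed basis of $\mathcal E/E$.
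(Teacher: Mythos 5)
Your argument for $G_{-2}$ is correct and coincides with the paper's: by \eqref{13.8.4.20.31} and \eqref{13.3.8.16.2}, $G_{-2}=zT(Tz^*zT)^\dagger Tz^*$ is the orthogonal projection onto $E=z(T\mathcal K)$. For $G_{-1}$ you also begin the way the paper does --- discarding $z(T+r_0^\dagger r_1r_0^\dagger)z^*$, isolating $-zq_0^\dagger z^*$, and noting that $G_1^0v\Phi\in\mathbb C\mathbf 1$ for $\Phi\in T\mathcal K$ while the factor $1_{\mathcal L}-\gamma vv^*G_{-1}^0$ annihilates $\mathbf 1$ when $\Phi_1\neq 0$ --- but there is a genuine gap at the decisive step. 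The mixed terms $z(q_0^\dagger q_1r_0^\dagger+r_0^\dagger q_1q_0^\dagger)z^*$ (and, when $\Phi_1=0$, the cross terms $\tfrac14\langle\Psi_1^0,\cdot\rangle zr_0^\dagger v^*\mathbf n^2+\tfrac14\langle zr_0^\dagger v^*\mathbf n^2,\cdot\rangle\Psi_1^0$) have exactly one leg in $E$, so they lie neither in $\langle E,\cdot\rangle E$ nor in any diagonal form $\sum_j\langle\Psi_j,\cdot\rangle\Psi_j$; and they cannot be ``settled by checking against Tables~\ref{13.3.6.0.29a} and \ref{13.3.6.0.29}'', because the tables record only bases of the eigenspaces and carry no information about the operators $q_1$ and $r_0^\dagger$. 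The analogy with the proof of Theorem~\ref{13.3.11.5.17} fails precisely here: in that setting $T\mathcal K=\{0\}$, so no such mixed terms exist and the table comparison is immediate.

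What is missing is the observation that the freedom in \eqref{13.3.11.2.46} is the choice of representatives of $\mathcal E/E$, and that the mixed terms amount exactly to shifting those representatives by elements of $E$ --- a completion of the square modulo $\langle E,\cdot\rangle E$. This is how the paper proceeds: since $zr_0^\dagger q_1q_0^\dagger q_1r_0^\dagger z^*\in\langle E,\cdot\rangle E$, one has
\begin{align*}
-zq_0^\dagger z^*+z(q_0^\dagger q_1r_0^\dagger+r_0^\dagger q_1q_0^\dagger)z^*
\equiv -z(S-r_0^\dagger q_1)q_0^\dagger(S-q_1r_0^\dagger)z^*
\mod \langle E,\cdot \rangle E,
\end{align*}
which is a quadratic form in the vectors $z(S-r_0^\dagger q_1)\Phi_3^*$ and $z(S-r_0^\dagger q_1)\Phi_4^*$, representatives of the same classes as (multiples of) $\Psi_3$, $\Psi_4$ in $\mathcal E/E$; similarly, when $\Phi_1=0$,
\begin{align*}
G_{-1}^0+zr_0^\dagger v^*G_1^0+G_1^0vr_0^\dagger z^*
\equiv\tfrac12\bigl\langle\Psi_1^0+\tfrac12 zr_0^\dagger v^*\mathbf n^2,\cdot\bigr\rangle
\bigl(\Psi_1^0+\tfrac12 zr_0^\dagger v^*\mathbf n^2\bigr)
\mod \langle E,\cdot \rangle E,
\end{align*}
a diagonal term for the shifted representative of the class of $\Psi_1^0$. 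Only after these reductions does the comparison with \eqref{13.8.3.3.33}, \eqref{13.8.3.3.34} and the tables complete the proof. With these two identities inserted your outline becomes the paper's argument; without them the mixed terms remain unaccounted for.
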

\begin{proof}
By (\ref{13.8.4.20.31}) and (\ref{13.3.8.16.2}) we have
\begin{align*}
G_{-2}&=zT(Tz^*zT)^\dagger Tz^*.
\end{align*}
This implies that $G_{-2}$ on $\mathcal H$
is the orthogonal projection onto the eigenspace $E$,
and hence the assertion for $G_{-2}$ follows.

We rewrite (\ref{13.8.4.20.32}) modulo $\langle E,\cdot \rangle E$ as follows:
If $\Phi_1\neq 0$, then 
\begin{align*}
G_{-1}\equiv{}& -z(S- r_0^\dagger q_1)q_0^\dagger (S-q_1r_0^\dagger)z^*,
\intertext{ and, if $\Phi_1= 0$, then}
G_{-1}\equiv {}&
-z(S- r_0^\dagger q_1)q_0^\dagger (S-q_1r_0^\dagger)z^*\\
&+\tfrac12 \bigl\langle\bigl(\Psi_1^0+\tfrac12zr_0^\dagger v^*\mathbf n^2\bigr),\cdot\bigr\rangle
\bigl(\Psi_1^0+\tfrac12zr_0^\dagger v^*\mathbf n^2\bigr)
\end{align*}
where $\mathbf n^2\in (\mathcal L^2)^*$ is 
the sequence whose $n$-th entry is $n^2$.
By (\ref{13.8.3.3.33}), (\ref{13.8.3.3.34}), Tables~\ref{13.3.6.0.29a} and \ref{13.3.6.0.29}
we can see that the assertion for $G_{-1}$ holds.
\end{proof}

Finally we consider $G_0$ with $\beta\ge 5$.
The expression is very long, and we should proceed omitting the details.
By (\ref{b12.12.1.12.46}) we first have, 
removing unnecessary factors from the beginning, 
\begin{align}
\begin{split}
G_0
&
=G_0^{0}-G_{-1}^0vA_1v^*G_{-1}^0
-G_{-1}^0vA_0v^*G_{0}^0-G_0^0vA_0v^*G_{-1}^0
\\&\phantom{={}}
-(G_{-1}^0vA_1+G_0^0v)B_0(A_1v^*G_{-1}^0+v^*G_0^0)
\\&\phantom{={}}
-(G_{-1}^0vA_1+G_0^0v)(B_1C_0+C_1+C_0B_1)(A_1v^*G_{-1}^0+A_0v^*G_0^0)
\\&\phantom{={}}
-(G_{-1}^0vA_1+G_0^0v)C_0(A_2v^*G_{-1}^0+A_1v^*G_0^0+v^*G_{1}^0)
\\&\phantom{={}}
-(G_{-1}^0vA_2+G_0^0vA_1+G_1^0v)C_0(A_1v^*G_{-1}^0+v^*G_0^0)
\\&\phantom{={}}
-(G_{-1}^0vA_1+G_0^0v)(B_2D_0+C_2D_0+D_2+D_0C_2+D_0B_2
\\&\phantom{={}-{}}
+B_1C_1D_0+B_1D_1+B_1D_0C_1+B_1D_0B_1+C_1D_1+C_1D_0C_1
\\&\phantom{={}-{}}
+C_1D_0B_1+D_1C_1+D_1B_1+D_0C_1B_1)(A_1v^*G_{-1}^0+v^*G_0^0)
\\&\phantom{={}}
-(G_{-1}^0vA_1+G_0^0v)(B_1D_0+C_1D_0+D_1
\\&\phantom{={}-{}}
+D_0C_1+D_0B_1)(A_2v^*G_{-1}^0+A_1v^*G_0^0+v^*G_{1}^0)
\\&\phantom{={}}
-(G_{-1}^0vA_2+G_0^0vA_1+G_1^0v)(B_1D_0+C_1D_0+D_1
\\&\phantom{={}-{}}
+D_0C_1+D_0B_1)(A_1v^*G_{-1}^0+v^*G_0^0)
\\&\phantom{={}}
-(G_{-1}^0vA_1+G_0^0v)D_0(A_3v^*G_{-1}^0+A_2v^*G_{0}^0+A_1v^*G_1^0+v^*G_{2}^0)
\\&\phantom{={}}
-(G_{-1}^0vA_2+G_0^0vA_1+G_1^0v)D_0(A_2v^*G_{-1}^0+A_1v^*G_0^0+v^*G_{1}^0)
\\&\phantom{={}}
-(G_{-1}^0vA_3+G_{0}^0vA_2+G_1^0vA_1+G_2^0v)D_0(A_1v^*G_{-1}^0+v^*G_0^0).
\end{split}
\label{13.3.24.22.7}
\end{align}
We are going to compute $HG_0$ using the expressions in Propositions~\ref{12.11.9.6.23}, 
\ref{12.11.27.11.18}, Lemma~\ref{12.12.19.4.4}, (\ref{13.3.24.19.55})
and 
\begin{align}
\begin{split}
A_3
&=-(Q+\gamma P) M_2(Q+\gamma P)+(Q+\gamma P)M_0(Q+\gamma P)M_1(Q+\gamma P)\\
&\phantom{={}}+(Q+\gamma P)M_1(Q+\gamma P)M_0(Q+\gamma P)\\
&\phantom{={}}-(Q+\gamma P)M_0(Q+\gamma P)M_0(Q+\gamma P)M_0(Q+\gamma P),\\
B_2&=-(S+m_0^\dagger)m_2(S+m_0^\dagger)
+(S+m_0^\dagger)m_1(S+m_0^\dagger)m_1(S+m_0^\dagger).
\end{split}
\label{13.8.5.13.41}
\end{align}
On the other hand we will not need expressions of $C_2$ and $D_2$.
We will use only $C_2\in \mathcal B(S\mathcal K)$ and $D_2\in\mathcal B(T\mathcal K)$.
After some computations employing these,
\begin{align}
\begin{split}
&
H(G_0^{0}-G_{-1}^0vA_1v^*G_{-1}^0-G_{-1}^0vA_0v^*G_{0}^0-G_0^0vA_0v^*G_{-1}^0)
\\&
=1_{\mathcal L}+vUQ(A_1v^*G_{-1}^0+v^*G_0^{0}),
\\&
H(G_{-1}^0vA_1+G_0^0v)Q=vUm_0,
\\&
H(G_{-1}^0vA_2+G_0^0vA_1+G_1^0v)Q=vUm_1,
\\&
H(G_{-1}^0vA_3+G_{0}^0vA_2+G_1^0vA_1+G_2^0v)T=(vUm_2-G_0^0v)T,
\end{split}\label{13.8.5.15.2}
\end{align}
where we used also Lemma~\ref{12.12.27.15.3}, the identities $PM_0T=0$ and $QM_0T=0$.
By (\ref{13.8.5.15.2}) we obtain the following slightly simplified formula for $HG_0$.
Here we just gather the terms with the same factors without expanding the parentheses
except for the contribution from the last term of (\ref{13.3.24.22.7}):
\begin{align}
\begin{split}
HG_0
&=
1_{\mathcal L}+G_0^0vD_0(A_1v^*G_{-1}^0+v^*G_0^0)
\\&\phantom{={}}
+vU\Bigl[Q-m_0B_0-m_0(B_1C_0+C_1+C_0B_1)-m_1C_0
\\&\phantom{={}+vU\Bigl[}
-m_0(B_2D_0+C_2D_0+D_2+D_0C_2+D_0B_2
\\&\phantom{={}+vU\Bigl[-m_0(}
+B_1C_1D_0+B_1D_1+B_1D_0C_1+B_1D_0B_1+C_1D_1
\\&\phantom{={}+vU\Bigl[-m_0(}
+C_1D_0C_1+C_1D_0B_1+D_1C_1+D_1B_1+D_0C_1B_1)
\\&\phantom{={}+vU\Bigl[}
-m_1(B_1D_0+C_1D_0+D_1+D_0C_1+D_0B_1)
-m_2D_0
\Bigr]
\\&\phantom{={}+{}}\cdot
(A_1v^*G_{-1}^0+v^*G_0^{0})
\\&\phantom{={}}
-vU\Bigl[m_0C_0+m_0(B_1D_0+C_1D_0+D_1+D_0C_1+D_0B_1)
+m_1D_0
\Bigr]
\\&\phantom{={}-{}}\cdot
(A_2v^*G_{-1}^0+A_1v^*G_0^0+v^*G_{1}^0)
\\&\phantom{={}}
-vUm_0D_0(A_3v^*G_{-1}^0+A_2v^*G_{0}^0+A_1v^*G_1^0+v^*G_{2}^0).
\end{split}
\label{13.3.24.22.16}
\end{align}
Next, we use the operator identities
\begin{align*}
Sm_0=0,\quad Tm_1=0,\quad TM_0=0
\end{align*}
and their adjoints, cf.\ (\ref{12.12.27.14.14}), and then we can further simplify (\ref{13.3.24.22.16}):
\begin{align}
\begin{split}
HG_0
&=
1_{\mathcal L}+G_0^0vD_0v^*G_0^0
\\&\phantom{={}}
+vU\Bigl[Q-m_0B_0-m_0B_1C_0-m_1C_0-m_0B_2D_0
\\&\phantom{={}+vU\Bigl[}
-m_0B_1C_1D_0-m_1C_1D_0-m_2D_0\Bigr]
(A_1v^*G_{-1}^0+v^*G_0^{0}).
\end{split}\label{13.8.5.16.52}
\end{align}
Now we substitute the expressions (\ref{13.3.24.19.55}) and (\ref{13.8.5.13.41})
to the terms in the square brackets of (\ref{13.8.5.16.52}),
and then we can verify, after some computations, that they actually cancel out.
Since $G_0^0vD_0v^*G_0^0=-G_{-2}$, finally it follows that
\begin{align*}
HG_0=1_{\mathcal L}-G_{-2}.
\end{align*}
The adjoint computation verifies $G_0H=1_{\mathcal L}-G_{-2}$,
and hence we obtain the following theorem:
\begin{theorem}\label{13.3.11.4.32}
Under the assumption of Theorem~\ref{12.12.1.12.41} with $\beta\ge 5$
the coefficient $G_0$ satisfies
\begin{align*}
HG_0=G_0H=1_{\mathcal L}-G_{-2}.
%\label{13.1.19.16.16}
\end{align*}
\end{theorem}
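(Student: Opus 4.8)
The underlying principle is simply the resolvent equation: for $0 < \real\kappa < \kappa_0$ one has $(H+\kappa^2)R(\kappa) = R(\kappa)(H+\kappa^2) = 1$, and inserting the expansion $R(\kappa) = \kappa^{-2}G_{-2}+\kappa^{-1}G_{-1}+G_0+\cdots$ of Theorem~\ref{12.12.1.12.41} and matching the coefficient of $\kappa^0$ formally yields $HG_0 + G_{-2} = 1$ and $G_0 H + G_{-2} = 1$. This is the identity to be proved, but it cannot be read off in this way: Theorem~\ref{12.12.1.12.41} only controls $G_j$ for $N \le \beta-6$, so at the stated weight $\beta \ge 5$ the coefficient $G_0$ is not reached with a controlled error and must instead be taken to be the operator defined by the explicit formula \eqref{13.3.24.22.7}. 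The plan is therefore to verify $HG_0 = 1_{\mathcal L} - G_{-2}$ by computing $HG_0$ directly from \eqref{13.3.24.22.7}, which has the further virtue of producing a clean identity of operators on $\mathcal L$.

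First I would apply $H = H_0 + vUv^*$ to the blocks $G_{j_1}^0 v(\cdots)v^* G_{j_3}^0$ making up \eqref{13.3.24.22.7}, using the two basic facts $H_0 G_0^0 = 1_{\mathcal L}$ from Lemma~\ref{13.1.18.6.0} and $v^* G_0^0 v = M_0 - U$. Acting with $H$ on the leading blocks collapses them into the reduction formulas \eqref{13.8.5.15.2}, which re-express $H(G_{-1}^0 vA_1 + G_0^0 v)Q$, $H(G_{-1}^0 vA_2 + G_0^0 vA_1 + G_1^0 v)Q$ and the corresponding $T$-block in terms of $m_0, m_1, m_2$, each with a single factor $vU$ pulled to the left. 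Substituting these produces the repackaged form \eqref{13.3.24.22.16}, in which $HG_0$ appears as $1_{\mathcal L}$ plus a block $G_0^0 v r_0^\dagger(\cdots)$ plus several brackets of the shape $vU[\,\cdots\,](\cdots)$.

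The substance of the proof is the cancellation of these brackets. The block-triangular relations $Sm_0 = 0$, $Tm_1 = 0$, $TM_0 = 0$ and their adjoints, cf.\ \eqref{12.12.27.14.14}, kill most of the mixed $B_jC_kD_l$ contributions and collapse \eqref{13.3.24.22.16} to the short form \eqref{13.8.5.16.52}, namely $1_{\mathcal L} + G_0^0 v r_0^\dagger v^* G_0^0$ plus a single bracket $vU[\,\cdots\,](A_1 v^* G_{-1}^0 + v^* G_0^0)$. Feeding the Neumann-coefficient formulas \eqref{13.3.24.19.55} and \eqref{13.8.5.13.41} into that bracket, a direct check shows that every term cancels; hence $HG_0 = 1_{\mathcal L} + G_0^0 v r_0^\dagger v^* G_0^0$. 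Since $z = -G_0^0 v$ on $T\mathcal K$, the surviving term equals $z r_0^\dagger z^* = -G_{-2}$ by \eqref{13.8.4.20.31}, so $HG_0 = 1_{\mathcal L} - G_{-2}$, and the adjoint computation gives $G_0 H = 1_{\mathcal L} - G_{-2}$. I expect the main obstacle to be precisely this cancellation: one must track how the pseudoinverses $m_0^\dagger, q_0^\dagger, r_0^\dagger$ interact with the projections $S, T$ and with $m_1, m_2, q_1$, and confirm that the telescoping of the Neumann coefficients is \emph{exact}, leaving no residual term in $\mathcal B^0$.
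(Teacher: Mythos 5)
Your proposal is correct and follows essentially the same route as the paper: taking the explicit formula \eqref{13.3.24.22.7} as the definition of $G_0$, applying $H$ via the reduction identities \eqref{13.8.5.15.2} to reach \eqref{13.3.24.22.16}, collapsing it with $Sm_0=0$, $Tm_1=0$, $TM_0=0$ to \eqref{13.8.5.16.52}, checking the exact cancellation of the bracket using \eqref{13.3.24.19.55} and \eqref{13.8.5.13.41}, identifying the surviving term $G_0^0vr_0^\dagger v^*G_0^0=zr_0^\dagger z^*=-G_{-2}$, and concluding by the adjoint computation. Your opening observation that the resolvent equation cannot be invoked directly at $\beta\ge 5$ (since the expansion only reaches $N\le\beta-6$) is also the correct reading of the paper's convention that $G_0$ is defined by its explicit formula.
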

\begin{remark}
A remark similar to Remarks~\ref{13.8.18.5.25} 
hold also for this case.
We note that in this case we always have 
$\widetilde{\mathcal E}_{\mathrm{qs}}\neq \{0\}$.
\end{remark}

\appendix

\section{The threshold $\lambda=4$}\label{13.3.23.11.49}
The discrete Schr\"{o}dinger operators considered in the paper 
have thresholds at both $\lambda=0$ and $\lambda=4$. There is a simple relation between 
the two thresholds that makes it possible to reduce the analysis at 
threshold $\lambda=4$ to the one given for threshold $\lambda=0$. 
We define for any sequence $x$
\begin{equation*}
(Jx)[n]=(-1)^nx[n].
\end{equation*}
We see that $J$ satisfies $J^2=1$
and is a bounded self-adjoint and unitary operator 
on $\mathcal H$. 
Furthermore, if a potential $V$ satisfies 
Assumption~\ref{12.11.8.1.9}, then so does
the conjugation $V_J=JVJ^{-1}$. 
A straightforward computation yields for such $V$
\begin{equation}\label{reflect}
J(H_0+V)J^{-1}=-(H_0-V_J-4).
\end{equation}
Thus for the resolvents we have
\begin{equation*}
J(H_0+V-z)^{-1}J^{-1}=-(H_0-V_J-(4-z))^{-1}.
\end{equation*}
We note that for a multiplicative potential $V$ we have $V_J=V$.

\section{Examples}\label{12.11.8.2.35}

\subsection{Local and non-local potentials}\label{localV}

We provide a prototype of the potentials satisfying Assumption~\ref{12.11.8.1.9}.
This obviously includes \textit{non-local} operators of finite ranks.
\begin{proposition}\label{13.3.23.21.6}
Let $\beta\ge 1$ be any real number, and 
$v_j\in \mathcal L^\beta$, $j=1,2,\dots$, 
be at most a countable number of linearly independent vectors with
\begin{align}
\sum_{j}\|v_j\|_{\mathcal L^\beta}^2<\infty.
\label{13.3.23.21.7}
\end{align}
Then for any $\sigma_j\in \{\pm 1\}$ the operator series
\begin{align*}
V=\sum_{j}\sigma_j\langle v_j,\cdot\rangle v_j
\end{align*}
converge in the uniform topology of 
$\mathcal B((\mathcal L^\beta)^*,\mathcal L^\beta)$
and satisfy Assumption~\ref{12.11.8.1.9} with the same $\beta$. 
\end{proposition}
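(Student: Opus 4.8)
The plan is to realize $V$ directly in the factored form demanded by Assumption~\ref{12.11.8.1.9}, and then to read off every requirement from the single summability hypothesis (\ref{13.3.23.21.7}). Take $\mathcal K=\ell^2$ with standard orthonormal basis $\{e_j\}$ indexed over the same set as the $v_j$, define $v\in\mathcal B(\mathcal K,\mathcal L^\beta)$ on basis vectors by $ve_j=v_j$, and define $U\in\mathcal B(\mathcal K)$ by $Ue_j=\sigma_j e_j$. Since the $\sigma_j\in\{\pm1\}$ are real, $U$ is self-adjoint with $U^2=1_{\mathcal K}$, hence a self-adjoint unitary. A short computation of the adjoint gives $v^*x=\sum_j\langle v_j,x\rangle e_j$ for $x\in(\mathcal L^\beta)^*$, so that $vUv^*x=\sum_j\sigma_j\langle v_j,x\rangle v_j$; thus $V=vUv^*$ reproduces exactly the series in the statement, and $V^*=vU^*v^*=vUv^*=V$ is self-adjoint automatically. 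Everything then reduces to three properties of $v$—boundedness into $\mathcal L^\beta$, compactness into $\mathcal L$, and injectivity—together with norm convergence of the defining series.

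First I would record the norm convergence. Each summand $T_j=\sigma_j\langle v_j,\cdot\rangle v_j$ satisfies, via the duality estimate $|\langle v_j,x\rangle|\le\|v_j\|_{\mathcal L^\beta}\|x\|_{(\mathcal L^\beta)^*}$, the bound $\|T_j\|_{\mathcal B((\mathcal L^\beta)^*,\mathcal L^\beta)}\le\|v_j\|_{\mathcal L^\beta}^2$. Hence $\sum_j\|T_j\|\le\sum_j\|v_j\|_{\mathcal L^\beta}^2<\infty$ by (\ref{13.3.23.21.7}), so the series converges \emph{absolutely} in the Banach space $\mathcal B((\mathcal L^\beta)^*,\mathcal L^\beta)$, which is the first assertion; the same estimate with $\mathcal H$ in place of $\mathcal L^\beta$ (using $\|v_j\|_{\mathcal H}\le\|v_j\|_{\mathcal L^\beta}$) shows $V\in\mathcal B(\mathcal H)$. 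Boundedness of $v$ into $\mathcal L^\beta$ is then one line of Cauchy--Schwarz: for $c=\{c_j\}\in\mathcal K$, $\|vc\|_{\mathcal L^\beta}\le\sum_j|c_j|\,\|v_j\|_{\mathcal L^\beta}\le\|c\|_{\mathcal K}\bigl(\sum_j\|v_j\|_{\mathcal L^\beta}^2\bigr)^{1/2}$.

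For compactness of $v$ from $\mathcal K$ into $\mathcal L$ I would approximate by the finite-rank truncations $v^{(M)}c=\sum_{j\le M}c_jv_j$, whose ranges lie in the finite-dimensional space $\mathrm{span}\{v_1,\dots,v_M\}\subset\mathcal L$. Since $\beta\ge1$ gives $\|\cdot\|_{\mathcal L}\le\|\cdot\|_{\mathcal L^\beta}$, the same Cauchy--Schwarz bound yields $\|v-v^{(M)}\|_{\mathcal B(\mathcal K,\mathcal L)}\le\bigl(\sum_{j>M}\|v_j\|_{\mathcal L^\beta}^2\bigr)^{1/2}\to0$, so $v$ is a norm-limit of finite-rank operators and hence $v\in\mathcal C(\mathcal K,\mathcal L)$ (the identical argument even gives $v\in\mathcal C(\mathcal K,\mathcal L^\beta)$). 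Consequently $V=vUv^*$ automatically lies in $\mathcal B((\mathcal L^\beta)^*,\mathcal L^\beta)\cap\mathcal C(\mathcal L^*,\mathcal L)$, by composing the mapping properties of $v$ with those of its adjoint $v^*\in\mathcal C(\mathcal L^*,\mathcal K)$.

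The one point that is not a mere consequence of the summability (\ref{13.3.23.21.7}), and the only place where the hypothesis of linear independence is used, is the injectivity of $v$; this is the step I expect to require the most care. If $vc=0$ for some $c\in\mathcal K$, then $\sum_jc_jv_j=0$ in $\mathcal L^\beta$, and linear independence of $\{v_j\}$ forces $c=0$. Here one must read independence strongly enough to exclude a nontrivial $\ell^2$-relation $\sum_jc_jv_j=0$ with infinitely many nonzero coefficients: this is automatic when the index set is finite, and is the substantive content of the hypothesis in the countable case. With injectivity established, $\mathcal K$, $v$ and $U$ satisfy all the requirements of Assumption~\ref{12.11.8.1.9} with the given $\beta$, which completes the proof.
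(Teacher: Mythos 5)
Your proof is correct and is exactly the construction the paper has in mind: the paper omits the proof, remarking only that one should take an abstract Hilbert space $\mathcal K$ with a complete orthonormal system of the same cardinality as $\#\{v_j\}$, which is precisely your $\ell^2$ realization with $ve_j=v_j$, $Ue_j=\sigma_j e_j$, followed by the routine Cauchy--Schwarz and finite-rank-approximation estimates you supply. Your closing observation---that injectivity of $v$ forces one to read ``linearly independent'' strongly enough to exclude nontrivial $\ell^2$-relations $\sum_j c_jv_j=0$ with infinitely many nonzero coefficients, not merely finite ones---is a genuine subtlety that the paper's one-line hint passes over silently, and you handle it correctly by making that reading explicit.
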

The proof is omitted.
We just have to take an abstract Hilbert space $\mathcal K$
with the same cardinality of a complete orthonormal system as $\#\{v_j\}$.
A potential $V$ given in Proposition~\ref{13.3.23.21.6} is 
\textit{local} in the following sense 
if and only if each $v_j$ has support consisting of one point.

\begin{definition}
A symmetric operator $V$ on the Hilbert space $\mathcal H=\ell^2(\mathbb Z)$
is said to be local, if it does not extend supports of functions.
\end{definition}

Since our base space is discrete, a local operator $V$
is always identified with a multiplication operator by some function,
which we shall denote also by $V$.
In particular, a local operator $V$ satisfies Assumption~\ref{12.11.8.1.9}, if 
\begin{align*}
\sum_{n\in\mathbb Z}(1+n^2)^{\beta}|V[n]|<\infty.
\end{align*}

\subsection{Local potentials with $P$ invertible}\label{local-P-inv}
With the following computations we show that for $V\neq0$ local only Case iii. in Proposition~\ref{12.12.19.6.30} occurs. If $V$ is local and of rank one, then up to a translation we have for some $c\neq0$ that
\begin{equation}
V=c\langle e_0,\cdot\rangle e_0,
\end{equation}
where $e_0[n]=1$ for $n=0$, and $e_0[n]=0$ for $n\neq0$.
The factorization is given as follows. We have $\mathcal{K}=\mathbb{C}$, $vz=|c|^{1/2}ze_0$, $z\in\mathcal{K}$, $v^{\ast}x=|c|^{1/2}\langle e_0,x\rangle$, $x\in (\mathcal{L}^s)^{\ast}$, any $s>0$, and $Uz=\mysign c \cdot z$.

Straightforward computations show that a basis for the solutions to $H\Psi=0$ is given by $\{u_1,u_2\}$, where
\begin{align*}
u_1[n]&=\begin{cases}
1, & n\leq0\\
1+cn, & n>0.
\end{cases}\\
u_2[n]&=\begin{cases}
1-cn, & n<0\\
1, & n\geq0.
\end{cases}
\end{align*}
To compare with the results in Table~\ref{13.3.6.0.26} we rewrite the basis as follows. We omit the details, which are straightforward.
\begin{align*}
u_1+u_2&=2\mathbf{1}+c|\mathbf{n}|=(2\mysign c)\Psi_5,\\
u_1-u_2&=c\mathbf{n}=\Psi_2^0.
\end{align*}

\subsection{Construction of threshold resonances}
Here we give examples of discrete Schr\"{o}dinger operator 
with a threshold resonance at $\lambda=0$. 
The technique for finding such examples is very simple. 
We are looking for a multiplicative 
potential $V$ such that there is a sequence 
$x\in(\mathcal L^0)^*$ satisfying
\begin{equation*}
-(x[n+1]+x[n-1]-2x[n])+V[n]x[n]=0.
\end{equation*}
We find such $V$ by first choosing $x$ and then taking
$V$ accordingly to 
\begin{equation*}%\label{discrete-V}
V[n]=\frac{x[n+1]+x[n-1]}{x[n]}-2.
\end{equation*}
Here are some examples of this technique.
\begin{examples}
We can take, e.g.,
\begin{enumerate}
\item 
\begin{equation*}
x[n]=\begin{cases}
2 & \mbox{if }n=0,\\
1 & \mbox{otherwise},
\end{cases}
\qquad
V[n]=\begin{cases}
-1 &\mbox{if } n=0,\\
1 &\mbox{if }  n=\pm 1,\\
0 &\mbox{otherwise};
\end{cases}
\end{equation*}
\item
\begin{equation*}
x[n]=\begin{cases}
3 &\mbox{if } n=0,\\
2 &\mbox{if }  n=\pm 1,\\
1 &\mbox{otherwise},
\end{cases}
\qquad
V[n]=\begin{cases}
-2/3 &\mbox{if } n=0,\\
0 &\mbox{if } n=\pm 1,\\
1 &\mbox{if } n=\pm 2,\\
0 &\mbox{otherwise.}
\end{cases}
\end{equation*}
\end{enumerate}
\end{examples}
Using the results from Appendix~\ref{13.3.23.11.49},
we can also get examples of threshold resonances at the threshold 
$\lambda=4$.
In fact, if we can construct nontrivial 
$x\in(\mathcal L^0)^*$ satisfying
$(H_0-V_J)x=0$,
then \eqref{reflect} implies that $y=J^{-1}x$ satisfies
$$
(H_0+V)y=4y.
$$
We note that these constructions work also in the continuous setting.

\subsection{Construction of threshold eigenvalues}\label{13.3.23.21.47}

As stated in Proposition~\ref{13.3.25.13.0}, 
for a multiplicative potential
there does not exist a decaying eigenfunction at the threshold.
However, we can construct an example of a non-local potential
that possesses linearly independent such
eigenfunctions as finitely many as we want.
We construct it using Proposition~\ref{13.3.23.21.6} in 
such way that $M_0$ has a nontrivial kernel, cf.\ Corollary~\ref{13.1.16.2.51}.

\begin{example}
Let us define the potential $V$ by 
\begin{align*}
V=-\sum_{j=1}^N \langle v_j,\cdot \rangle v_j;
\quad
v_j[n]=\left\{
\begin{array}{ll}
\sqrt2&\mbox{if }n=3j,\\
-1/\sqrt2&\mbox{if }n=3j\pm 1,\\
0&\mbox{otherwise.}
\end{array}
\right.
\end{align*}
Then the linearly independent sequences 
$\Psi_j\in\mathcal L$, $j=1,\dots,N$, given by
\begin{align*}
\Psi_j[n]
=\left\{
\begin{array}{ll}
1&\mbox{if }n=3j,\\
0&\mbox{otherwise}
\end{array}
\right.
\end{align*}
are obviously the decaying eigenfunctions for $H=H_0+V$.
\end{example}

\subsection{Construction of both threshold eigenvalues and resonances}
We now give an example exhibiting the simultaneous occurrence of threshold eigenvalues and resonances (an exceptional point of the third kind). 

\begin{example}
We define the following sequences:
\begin{align*}
\phi_j[n]&=\begin{cases}
-1, & n=4j,\\
\phantom{-}1, & n=4j+1,\\
\phantom{-}0, & \text{otherwise,}
\end{cases}
& j&=0,1,2,
\\
u_j[n]&=\begin{cases}
\phantom{-}1 & n\leq 4j,\\
-1 & n>4j,
\end{cases}
& j&=0,1,2.
\end{align*}
Then we define
\begin{equation*}
Vx=-\sum_{j=0}^2\langle \phi_j, x\rangle \phi_j.
\end{equation*}
With these definitions it is straightforward to verify that
\begin{equation*}
(H_0+V)u_j=0,\quad j=0,1,2.
\end{equation*}
If we define
\begin{equation*}
w[n]=\begin{cases}
1 & n=1,2,3,4,\\
0 & \text{otherwise},
\end{cases}
\end{equation*}
then we have $u_2=\frac12 u_0 + \frac12 u_1 +w$.
\end{example}
This example should be compared with the discussion in the introduction. It is easy to see that the example can be modified to provide a threshold eigenvalue of any finite multiplicity, besides the two linearly independent resonance functions.

\section*{Acknowledgement}
This work was completed while AJ visited the Department of Mathematics, Gakushuin University, Tokyo, Japan. AJ thanks the Department for its support and hospitality.

\end{document}